\documentclass{article}
\pdfoutput=1
\usepackage[T1]{fontenc}
\usepackage[latin9]{inputenc}
\synctex=-1
\usepackage[english]{babel}
\usepackage{algorithm2e}
\usepackage{amsthm}
\usepackage{amsmath}
\usepackage{amssymb}
\usepackage{tikz}
\usetikzlibrary{positioning,shadows,arrows,shapes,arrows,calc,decorations.pathmorphing}
\usepackage{url}
\usepackage{setspace}
\usepackage{graphicx}
\usepackage[unicode=true,pdfusetitle,
 bookmarks=true,bookmarksnumbered=false,bookmarksopen=false,
 breaklinks=false,pdfborder={0 0 1},backref=false,colorlinks=false]
 {hyperref}
 \usepackage{enumerate}

\definecolor{mygray1}{gray}{0.8}
\definecolor{mygray2}{gray}{0.6}

\newcommand{\contr}{\leadsto}
\newcommand{\intcontr}{\contr_{\text{int}}}
\newcommand{\clcontr}{\mapstochar\mathrel{\mspace{0.45mu}}\contr}


\makeatletter

\theoremstyle{plain}
\newtheorem{thm}{\protect\theoremname}
\theoremstyle{definition}
\newtheorem{defn}[thm]{\protect\definitionname}
\theoremstyle{plain}
\newtheorem{prop}[thm]{\protect\propositionname}
\theoremstyle{plain}
\newtheorem{lem}[thm]{\protect\lemmaname}
\theoremstyle{plain}


\RestyleAlgo{ruled}

\makeatother

  \providecommand{\definitionname}{Definition}
  \providecommand{\lemmaname}{Lemma}
  \providecommand{\propositionname}{Proposition}
\providecommand{\theoremname}{Theorem}

\begin{document}

\title{Finding AND-OR Hierarchies in Workflow Nets}

\author{Jacek Sroka\\
Institute of Informatics\\
University of Warsaw, Poland\\
\small \url{sroka@mimuw.edu.pl}
\and Jan Hidders\\
Vrije Universiteit Brussel, Belgium\\
\small \url{jan.hidders@vub.ac.be}
} 

\date{}

\maketitle

\begin{abstract}
This paper presents the notion of AND-OR reduction,
which reduces a WF net to a smaller net by iteratively contracting  certain
well-formed subnets into single nodes until no more such contractions
are possible. This reduction can reveal the hierarchical structure of a WF net,
and since it preserves certain semantical properties such as soundness, it can
help with analysing and understanding why a WF net is sound or not. The reduction
can also be used to verify if a WF net is an AND-OR net. This class of WF nets was introduced
in earlier work, and arguably describes nets that follow good hierarchical design principles.
It is shown that the AND-OR reduction is confluent up to isomorphism, which
means that despite the inherent non-determinism that comes from
the choice of subnets that are contracted, the final result of the reduction is always
the same up to the choice of the identity of the nodes. Based on this result, a polynomial-time algorithm
is presented that computes this unique result of the AND-OR reduction. Finally, it is shown how this
algorithm can be used to verify if a WF net is an AND-OR net.
\end{abstract}

\section{Introduction\label{sect:Introduction}}

Petri nets \cite{Reisig:2008} are one of the most popular and well
studied formalisms for modeling processes. Their graphical notation
is easy to understand, but at the same time concrete and formal, which
allows for reasoning over the complex systems that are being modeled.
Petri nets are especially useful for business processes and business
workflows for which a specific class of Petri nets, called \emph{workflow
nets}, was introduced \cite{Aalst1998workflow,VanDerAalst:2003}.
Even though other notations are used in most industrial process modeling
tools like Business Process Modeling Notation (BPMN) \cite{BPMN},
Business Process Execution Language (BPEL) or Event-driven Process
Chain (EPC) \cite{keller:1992}, the control flow aspect
of the models expressed in those notations can be translated to workflow
nets. At the same time workflow nets are considered to be the goto
formalism for workflow analysis, like detecting possible problems,
e.g., existence of deadlocks or livelocks, and for investigating the
principles of workflow modeling without focusing on a particular language.

Workflow models that lack those problems are called \emph{sound}, 
and the first definition of workflow-net soundness was proposed by van der Aalst
in \cite{Aalst1998workflow}. Quickly several alternative definitions
of soundness, varying in strictness and verification difficulty, emerged. Examples of these are weak soundness \cite{Martens_oncompatibility,Martens:2005:AWS:2136587.2136592},
relaxed soundness \cite{relaxed_soundness}, lazy soundness \cite{puhlmann_BPM2006},
$k$-soundness and generalised soundness \cite{DBLP:conf/apn/HeeSV03,Hee04}, up-to-$k$-soundness
\cite{toornThesis} and substitution soundness \cite{DBLP:journals/is/SrokaH14}.
Informally, the original notion of soundness guarantees two properties of the net. First,
that if we initiate the workflow net correctly, then no matter how
the execution proceeds, we can always end up in a proper final state.
Second, that every subtask can be potentially executed in some run
of the workflow. An overview of the research on the different types of soundness of workflow
nets and their decidability can be found in \cite{journals/fac/AalstHHSVVW11}.

In earlier research \cite{DBLP:journals/is/SrokaH14} we have proposed
a new notion of soundness, namely the substitution soundness or sub-soundness
for short. It is similar to $k$- and {*}-soundness studied in
\cite{DBLP:conf/apn/HeeSV03}, but captures exactly the conditions necessary for building
complex workflow nets by following a structured approach where subsystems
with multiple inputs and outputs are used as building blocks of larger systems. As was shown in that
research, it is not enough for such subsystems to be classically sound
by themselves. It may be the case, for example, that if a sound WF net is used inside
another sound WF net, that the nested WF net is used to execute
several simultaneous computations which can interfere and cause the whole WF net
to become unsound. Or it can be that partial
results, represented by tokens in the output places of the nested WF net, are consumed
prematurely by the containing WF net before the nested WF net has finished properly. 
This is prevented by the notion of substitution soundness (or sub-soundness),
which is informally defined as follows: a WF net is sub-sound iff after initiating it with $k$ tokens in every input place and letting it execute it will always be able to finish by producing $k$ tokens for every output place even if during the run the output tokens are removed by some external transitions.

Although stronger soundness properties may be desirable, they are often also more difficult
to verify.  For this reason, a method is introduced in \cite{DBLP:journals/is/SrokaH14} 
for systematically constructing workflow nets so that they
are guaranteed to satisfy the sub-soundness property. This method is
in principle, and in effect, similar to methods employed in software
engineering, where complexity is tackled by separation of concerns
and encapsulation, and systems are divided into building blocks such as
modules, objects and functions, which in turn can be decomposed further.

We follow those good practices in the context of workflow nets where
they, like in software engineering, allow to avoid common pitfalls. Similarly to general programming
languages, also for workflow nets, patterns and anti-patterns have
been published \cite{VanDerAalst:2003,Trcka:2009:DAD:1573487.1573532}.
Also similarly to general programming, it is beneficial to organise
the workflow models in a structured way. In programming the ideas
of using macros, subroutines, procedures, functions, and later on,
classes, proved that even extremely complex systems can be programmed
and maintained in a practical and effective manner. Such structurisation was successfully
applied to designing complex Petri nets 
\cite{WangWei2009,Suzuki198351,Devillers:1997me,conf/stacs/BestDE93,conf/apn/EsparzaS90,Polyvyanyy2012518}
and workflow nets \cite{wachtel2003,reviewer2b,DBLP:journals/is/SrokaH14}.
As with general programming, the system is composed of small, separated
fragments, which are easier to understand and maintain. Fragments
can include invocations of other fragments, which can include other
nested fragments, and so on. 

The class of nets we introduced in \cite{DBLP:journals/is/SrokaH14}
is called the class of AND-OR nets (see Section~\ref{sect:AND-OR-nets}).
This class is larger and more general than other classes of workflow nets
generated with a similar type of structural approach, as presented for example
in \cite{DBLP:conf/apn/HeeSV03,wachtel2003}. Apart from studying conditions necessary
for structured workflow systems to be {*}-sound, it was shown in  \cite{DBLP:journals/is/SrokaH14} that all
AND-OR nets indeed are sub-sound. In this paper we continue this line of research
and introduce a method to determine the hierarchical structure of a WF net, or parts of it, that was
not necessarily designed in such a structured way. In \cite{DBLP:journals/is/SrokaH14}
the AND-OR nets were defined as all the nets that can be constructed
with a top-down refinement procedure, by using nets of certain basic
classes similar to S/T systems. In this paper we show that at the same time
AND-OR nets that were not necessarily constructed in such a way, can be analysed to
determine a refinement hierarchy with a bottom-up
reduction procedure that contracts subnets of the basic AND-OR classes. Moreover, it is shown
that finding occurrences of such subnets can be done in polynomial
time.

A key result in this paper is that the procedure of contracting
subnets of the basic AND-OR classes is confluent and therefore
the reduction will always return the same result, independent of how
the subnets where selected for contraction. It is shown in this paper that this can be used to turn the procedure
into a polynomial algorithm and therefore a tractable method for 
determining an AND-OR refinement tree. Next to that, it can also be determined 
if a net is an AND-OR by checking if the reduction procedure reduces it to 
a one-node WF net. If a
net is positively identified as an AND-OR net, it is consequently also guaranteed
to be {*}-sound and sub-sound\footnote{It
follows straightforwardly from the definitions of *-soundness and sub-soundness
that the latter implies the first.}, i.e., can be used as a building block
of larger systems. A first example of an application of this result would
be a scenario where a process modeller constructs a complex model from
submodels published in some repository. He or she may want to make sure that
the submodels follow good design principles and are sub-sound, which means that they can
be safely used as building blocks of a composite model. The repository
can contain models for subunits in some organisational structure,
e.g., models for faculties of an university or departments of a company
or even models from some global repository of socially shared workflows,
which appear in e-science~\cite{myExperiment}.

The reduction algorithm can not only be used for AND-OR nets, but also for 
the analysis of {*}-soundness and sub-soundness of general workflow nets. It 
can help the user with finding problems causing unsoundness. More concretely,
if the result of AND-OR verification is negative, then the reduction algorithm
stops without reaching a one-node WF net. This resulting net can
serve as a condensed version of the original net and point the user to the
source of the problem in the design. Note that a WF net
may be not an AND-OR net, but still be {*}-sound or sub-sound. We conjecture,
but have not proven, that to verify {*}-soundness or sub-soundness
of an arbitrary net, it is enough to verify {*}-soundness or sub-soundness
of the net resulting from AND-OR verification procedure. The contractions
used in our algorithm would have to be proven to preserve {*}-soundness
or sub-soundness, similarly as for example rules of \cite{murata_reductions}
preserve liveness and boundedness. This would give a symmetric and
probably similarly laborious result to \cite{DBLP:journals/is/SrokaH14},
where it was shown that substitutions of AND-OR nets into AND-OR nets
preserve sub-soundness,
from which it follows that they also preserve {*}-soundness.
The reduced net, resulting from AND-OR net verification procedure,
could then undergo a proper soundness verification with similar methods
as in \cite{verification_thesis,Verbeek01a}. Furthermore, limiting
the size of the verified net with hierarchical methods can be helpful
for users struggling with understanding the reasons for unsoundness
of workflow nets. That this is often a problem, even when using automated
verification tools, is for example reported in \cite{Flender_visualisation_of_soundness}.

Finally, as a byproduct of a successful reduction, a tree structure
describing the nesting of the fragments of the net can be determined. As with similar
methods \cite{wachtel2003,wachtel2006,PChPGBL13}, which deal with
workflow net class which is a proper subclass of AND-OR nets, such
a tree structure can be used for modeling recovery regions or determining
sound markings, or just for better understanding the structure of the workflow
net and its properties. The latter can for example help with determining
how parts of the workflow can best be distributed to independent organisational
units or to different servers in case of workflows representing computations,
e.g., as in scientific workflows.


In related work \cite{graph_decomposition} a set of heuristics was proposed to find appropriate decomposition boundaries, which results in a refinement tree for a given graph. Our approach, however concentrates specifically on workflow nets which are generalised to allow multiple inputs and outputs, and it is closely tied in a well understood manner to their semantics and soundness properties.

The outline of the remainder of this paper is as follows. In Section~\ref{sect:Basic-terminology-and} the basic terminology of WF nets and their semantics is introduced. In Section~\ref{sect:AND-OR-nets} the class of AND-OR nets is introduced, based on the notions of place and transition substitution, where a node is replaced with a WF net. In Section~\ref{sect:AND-OR-reduction} the notion of AND-OR reduction is introduced, which is based on the notion of contraction, where certain well-formed subnets of WF nets are contracted into single nodes. It is discussed here how this reduction process is confluent in that it returns a unique result up to the choice of the identity of the nodes. This is based on the observation that the process is locally confluent, but since the proof of this observation is quite involved, it is presented separately in Section~\ref{sect:confluence-proof}. In Section~\ref{sect:Algorithm-AND-OR-nets} a concrete polynomial algorithm for computing the result of the AND-OR reduction is presented, and it is shown how it can be used to verify if a WF net is an AND-OR net. Finally, in Section~\ref{sect:summary} a summary of the results is given, and potential future research directions are discussed.

\section{Basic terminology and definitions\label{sect:Basic-terminology-and}}

Let $S$ be a set. A bag (multiset) $m$ over $S$ is a function $m:S\rightarrow\mathbb{N}$.
We use $+$ and $-$ for the sum and the difference of two bags and
$=$, $<$, $>$, $\le$, $\ge$ for comparisons of bags, which are
defined in the standard way. We overload the set notation, writing
$\emptyset$ for the empty bag and $\in$ for the element inclusion.
We list elements of bags between brackets, e.g. $m=[p^{2},q]$ for
a bag $m$ with $m(p)=2$, $m(q)=1$, and $m(x)=0$ for all $x\notin\{p,q\}$.
The shorthand notation $k.m$ is used to denote the sum of $k$ bags
$m$. The size of a bag $m$ over $S$ is defined as $|m|=\Sigma_{s\in S}m(s)$.

\begin{defn}
[Petri net] A \emph{Petri net} is a tuple $N=(P,T,F)$ where $P$ is
a finite set of places, $T$ is a finite set of transitions such that
$P\cap T=\emptyset$ and $F\subseteq(T\times P)\cup(P\times T)$ the
set of flow edges.
\end{defn}

We will refer to the elements of $P \cup T$ also as \emph{nodes} in Petri net. We say that the \emph{type} of a node is \emph{place} or \emph{transition} if it is in $P$ or $T$, respectively.

A path in a net is a non-empty sequence $(n_{1},...,n_{m})$ of nodes
where for all $i$ such that $1\leq i\leq n-1$ it holds that $(n_{i},n_{i+1})\in F$.
Markings are states (configurations) of a net and the set of markings
of $N=(P,T,F)$ is the set of all bags over $P$ and is denoted as $\mathbf{M_{N}}$.
Given a transition $t\in T$, the preset $\bullet t$ and the postset
$t\bullet$ of $t$ are the sets $\{p\mid(p,t)\in F\}$ and $\{p\mid(t,p)\in F\}$,
respectively. In a similar fashion we write $\bullet p$ and $p\bullet$ for
pre- and postsets of places, respectively. To emphasise the fact that
the preset (postset) is considered within some net $N$, we write $\bullet_{N}a$,
$a\bullet_{N}$. 
We overload this notation by letting $\bullet a$ ($a \bullet$) also 
denote the bags of nodes that (1) contain all nodes in the preset (postset) of $a$ 
exactly once and (2) contains no other nodes.
A transition $t\in T$ is said to be enabled at marking $m$ iff $\bullet t \leq m$.
For a net $N=(P,T,F)$ with markings $m_{1}$ and $m_{2}$ and a transition
$t\in T$ we write $m_{1}\stackrel{t}{\longrightarrow}_{N}m_{2}$,
if $t$ is enabled at $m_{1}$ and $m_{2}=m_{1}-\bullet t+t\bullet$.
For a sequence of transitions $\sigma=(t_{1},\ldots,t_{n})$ we write
$m_{1}\stackrel{\sigma}{\longrightarrow}_{N}m_{n+1}$, if $m_{1}\stackrel{t_{1}}{\longrightarrow}_{N}m_{2}\stackrel{t_{2}}{\longrightarrow}_{N}\ldots\stackrel{t_{n}}{\longrightarrow}_{N}m_{n+1}$,
and we write $m_{1}\stackrel{*}{\longrightarrow}_{N}m_{n+1}$, if there
exists such a sequence $\sigma\in T^{*}$. We will write $m_{1}\stackrel{t}{\longrightarrow}m_{2}$,
$m_{1}\stackrel{\sigma}{\longrightarrow}m_{n+1}$ and $m_{1}\stackrel{*}{\longrightarrow}m_{n+1}$,
if $N$ is clear from the context.

We now introduce the notion of Workflow net, which is a Petri net where certain places and transitions are marked as input and output nodes.

\begin{defn}
[I/O net] An \emph{I/O net} is a tuple $N=(P,T,F,I,O)$
where $(P,T,F)$ is a Petri net with a non-empty set
$I\subseteq P \cup T$ of \emph{input places} and a non-empty set $O\subseteq P \cup T$
of \emph{output places}. 
\end{defn}

In our setting we will restrict ourselves to I/O nets where input and output nodes are either all places, or all transitions.

\begin{defn}
[I/O consistent I/O net] An I/O net $N=(P,T,F,I,O)$ is called \emph{I/O consistent} if
$I \cup O \subseteq P$ or $I \cup O \subseteq T$.
\end{defn}

As is usual for Petri nets that model workflows, we will also require that all nodes in the net can be reached from an input node, and from all nodes in the net an output node can be reached.

\begin{defn}
[Well-connected I/O net] An I/O net $N=(P,T,F,I,O)$
is called \emph{well-connected} if (1) every node in $P \cup T$ is
reachable by a path from at least one node in $I$ and (2) from every
node in $P \cup T$ we can reach at least one node in $O$. 
\end{defn}

The formal definition of Workflow net is then as follows.

\begin{defn}
[Workflow net] A \emph{workflow net} (WF net) is a I/O net $N=( P,T,F,I,O)$ that is I/O consistent
and well-connected.
\end{defn}

If $I \cup O \subseteq P$, then
we call $N$ a \emph{place workflow net} (pWF net), and if $I \cup O \subseteq T	$,  then
a \emph{transition workflow net} (tWF net). The \emph{I/O type} of a WF net is
the type of its input and output nodes, i.e., it is \emph{place} if it is pWF net, and \emph{transition} if it is a tWF net.

Note that input places can have incoming edges in a workflow net, and that output places can have outgoing edges. We will refer to the nodes in $I \cup O$  as the \emph{interface nodes} of the net. We will call a workflow net a \emph{one-input} workflow net if
$I$ contains one element, and a \emph{one-output} workflow net if
$O$ contains one element. Often, as in \cite{Aalst1998workflow},
workflow nets are restricted to one-input one-output place workflow
nets. We generalise this in two ways: first by allowing also nets with input
and output transitions rather than input and output places, and second by allowing multiple input and
output places/transitions. For these generalised workflow nets we
define the corresponding one-input one-output pWF net as follows.
The \emph{place-completion} of a tWF net $N=(P,T,F,I,O)$
is denoted as $\mathbf{pc}(N)$ and is a one-input one-output pWF
net that is constructed from $N$ by adding places $p_{i}$ and $p_{o}$
such that $p_{i}\bullet=I$ and $\bullet p_{o}=O$ and setting the
input set and output set as $\{p_{i}\}$ and $\{p_{o}\}$, respectively.
This is illustrated in Figure~\ref{fig:place-compl} (a). In such
diagrams we will indicate nodes in $I$ with an unconnected incoming
arrow and nodes in $O$ with an unconnected outgoing arrow. The \emph{transition-completion}
of a pWF net $N=( P,T,F,I,O)$ is denoted as $\mathbf{tc}(N)$
and is a one-input one-output tWF net that is constructed from $N$
by adding transitions $t_{i}$ and $t_{o}$ such that $t_{i}\bullet=I$
and $\bullet t_{o}=O$ and setting the input set and output set as
$\{t_{i}\}$ and $\{t_{o}\}$, respectively. This is illustrated in
Figure~\ref{fig:place-compl} (b). 

\begin{figure}[htb]
\begin{center}
\resizebox{0.8\textwidth}{!}{%
\begin{tikzpicture}
    \tikzstyle{transition} = [rectangle,draw,minimum width=0.47cm, minimum height=0.47cm,fill=white]
    \tikzstyle{place} = [circle,draw,minimum width=0.5cm, minimum height=0.5cm,fill=white,inner sep=0.08cm]

\node (ps-res-cl2) [cloud, draw,cloud puffs=20,cloud puff arc=100, aspect=2,fill=white,minimum width=3cm, minimum height=2.5cm,fill=lightgray] {\Large $N$};

\node (ps-res-t1) [transition,above left=-0.5cm and -0.6cm of ps-res-cl2,fill=white] {};
\node (ps-res-t2) [transition,below left=-0.5cm and -0.6cm of ps-res-cl2,fill=white] {};
\node (ps-res-t3) [transition,above right=-0.4cm and -0.7cm of ps-res-cl2,fill=white] {};
\node (ps-res-t4) [transition,right=-0.7cm of ps-res-cl2,fill=white] {};
\node (ps-res-t5) [transition,below right=-0.4cm and -0.7cm of ps-res-cl2,fill=white] {};

\node (ps-res-pi) [place,left=0.7cm of ps-res-cl2,fill=white] {$p_i$};
\node (ps-res-po) [place,right=0.7cm of ps-res-cl2,fill=white] {$p_o$};

\path (ps-res-pi) edge[-latex] (ps-res-t1) ; 
\path (ps-res-pi) edge[-latex] (ps-res-t2) ; 
\path (ps-res-t3) edge[-latex] (ps-res-po) ; 
\path (ps-res-t4) edge[-latex] (ps-res-po) ; 
\path (ps-res-t5) edge[-latex] (ps-res-po) ; 

\node (ppi) [left=0.3cm of ps-res-pi] {} ;
\path (ppi) edge[-latex] (ps-res-pi);

\node (ppo) [right=0.3cm of ps-res-po] {} ;
\path (ppo) edge[latex-] (ps-res-po);

\node [above=0.2cm of ps-res-cl2] {$\textbf{pc}(N)$ if $N$ is a tWF net} ;
\node [below=0.3cm of ps-res-cl2] {(a)} ;

\node (ts-res-cl2) [cloud, draw,cloud puffs=20,cloud puff arc=100, aspect=2,fill=white,minimum width=3cm,minimum height=2.5cm,fill=lightgray, right=4cm of ps-res-cl2] {\Large $N$};

\node (ts-res-t1) [place,above left=-0.5cm and -0.6cm of ts-res-cl2,fill=white] {};
\node (ts-res-t2) [place,below left=-0.5cm and -0.6cm of ts-res-cl2,fill=white] {};
\node (ts-res-t3) [place,above right=-0.4cm and -0.7cm of ts-res-cl2,fill=white] {};
\node (ts-res-t4) [place,right=-0.7cm of ts-res-cl2,fill=white] {};
\node (ts-res-t5) [place,below right=-0.4cm and -0.7cm of ts-res-cl2,fill=white] {};

\node (ts-res-ti) [transition,left=0.7cm of ts-res-cl2,fill=white] {$t_i$};
\node (ts-res-to) [transition,right=0.7cm of ts-res-cl2,fill=white] {$t_o$};

\path (ts-res-ti) edge[-latex] (ts-res-t1) ; 
\path (ts-res-ti) edge[-latex] (ts-res-t2) ; 
\path (ts-res-t3) edge[-latex] (ts-res-to) ; 
\path (ts-res-t4) edge[-latex] (ts-res-to) ; 
\path (ts-res-t5) edge[-latex] (ts-res-to) ; 

\node (pti) [left=0.3cm of ts-res-ti] {} ;
\path (pti) edge[-latex] (ts-res-ti);

\node (pto) [right=0.3cm of ts-res-to] {} ;
\path (pto) edge[latex-] (ts-res-to);

\node [above=0.2cm of ts-res-cl2] {$\textbf{tc}(N)$ if $N$ is a pWF net} ;
\node [below=0.3cm of ts-res-cl2] {(b)} ;

\end{tikzpicture}
}
\end{center}
\caption{\label{fig:place-compl}A place-completed tWF net and transition-completion pWF net (from \cite{DBLP:journals/is/SrokaH14})}
\end{figure}
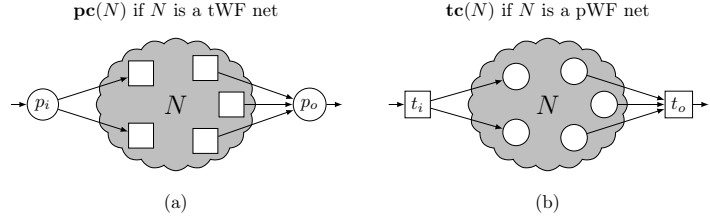

In this paper we discuss a particular kind of soundness, namely the
soundness that guarantees the reachability of a proper final state
\cite{DBLP:journals/is/SrokaH14}. We generalise this for the case
where (1) there can be more than one input place and (2) these contain
one or more tokens in the initial marking. We also provide a generalisation
of soundness for tWF nets, which intuitively states that after $k$
firings of all input transitions the computation can end in an empty
marking while firing each of the output transitions exactly $k$ times.

\begin{defn}
[$k$ and *-soundness] A pWF net $N=( P,T,F,I,O)$ is
said to be \emph{$k$-sound} if for each marking $m$ such that $k.I\stackrel{*}{\longrightarrow}m$
it holds that $m\stackrel{*}{\longrightarrow}k.O$. We call $N$ \emph{{*}-sound}
if it is $k$-sound for all $k\geq1$. We say that these properties
hold for tWF net $N$ if they hold for $\mathbf{pc}(N)$.
\end{defn}

By definition place-completion does not affect the {*}-soundness.
However, as we observed in \cite{DBLP:journals/is/SrokaH14,10.1109/ACSD.2011.26},
for transition-completion this is only true in one direction as every
pWF net $N$ is {*}-sound if $\mathbf{tc}(N)$ is {*}-sound but not
vice versa.

\section{Definition of AND-OR nets\label{sect:AND-OR-nets}}

In this section we introduce the AND-OR nets and show that they are
{*}-sound. For that we recall some definitions and a result from \cite{DBLP:journals/is/SrokaH14,10.1109/ACSD.2011.26}
where we defined a method for constructing complex nets by substituting
their nodes with other nets - places by pWF nets and transitions by
tWF nets.

\begin{defn}[Place substitution, Transition substitution]
Consider two \emph{disjoint} WF nets $N$ and $M$, i.e., if $N=(P,T,F,I,O)$
and $M=(P',T',F',I',O')$, then $(P\cup T)\cap(P'\cup T')=\emptyset$.

\emph{Place substitution:} If $p$ is a place in $N$ and $M$ is
a pWF net, then we define the result of substituting $p$ in $N$
with $M$, denoted as $N\otimes_{p}M$, as the net that is obtained
if in $N$ we remove $p$ and the edges in which it participates and
replace it with the net $M$ and edges such that $\bullet p'=\bullet p$
for each input place $p'\in I'$ of $M$ and $p'\bullet=p\bullet$
for each output place $p'\in O'$ of $M$. If $p\in I$ then $p$
is replaced in the set of input nodes of the resulting net with $I'$,
i.e., the input set of $N\otimes_{p}M$ is $\left(I\setminus\{p\}\right)\cup I'$,
and if $p\in O$ then $p$ is replaced in the set of output nodes
of the resulting net with $O'$, i.e., the output set of $N\otimes_{p}M$
is $\left(O\setminus\{p\}\right)\cup O'$. Otherwise, the input and
output sets of $N\otimes_{p}M$ are the same as the respective sets
for $N$.

\emph{Transition substitution:} Likewise, if $t$ is a transition
in $N$ and $M$ is a tWF net, then we define the result of substituting
$t$ in $N$ with $M$, denoted as $N\otimes_{t}M$, as the net that
is obtained if in $N$ we remove $t$ and the edges in which it participates
and replace it with the net $M$ and edges such that $\bullet t'=\bullet t$
for each input transition $t'\in I'$ of $M$ and $t'\bullet=t\bullet$
for each output transition $t'\in O'$ of $M$. If $t\in I$ then
$t$ is replaced in the set of input nodes of the resulting net with
$I'$, i.e., the input set of $N\otimes_{t}M$ is $\left(I\setminus\{t\}\right)\cup I'$,
and if $t\in O$ then $t$ is replaced in the set of output nodes
of the resulting net with $O'$, i.e., the output set of $N\otimes_{t}M$
is $\left(O\setminus\{t\}\right)\cup O'$. Otherwise, the input and
output sets of $N\otimes_{t}M$ are the same as the respective sets
for $N$. 
\end{defn}

The results of a place substitution and transition substitution are illustrated in Figure~\ref{fig:subst-illustr} (a) and (b), respectively. It is not hard to see that for all WF nets $N$ and $M$ and $n$ a node in $N$ such that $N\otimes_{n}M$ is defined, it is again a WF net. It also holds for all WF nets $A$, $B$ and $C$ that $(A \otimes_{n} B) \otimes_{m} C = A \otimes_{n} (B \otimes_{m} C)$, and $(A \otimes_{n} B)\otimes_{m}C = (A\otimes_{n}C) \otimes_{m} B$ if $n$ and $m$ are different nodes in $A$.

\begin{figure}[htb]
\begin{center}
\resizebox{\textwidth}{!}{%
\begin{tikzpicture}
    \tikzstyle{transition} = [rectangle,draw,minimum width=0.47cm, minimum height=0.47cm,fill=white]
    \tikzstyle{place} = [circle,draw,minimum width=0.5cm, minimum height=0.5cm,fill=white,inner sep=0.08cm]

\node (ps-res-cl1) [cloud, draw,cloud puffs=20,cloud puff arc=100, fill=mygray1, fill opacity=0.5,
  minimum width=6.5cm,minimum height=4cm] {};    
\node (ps-res-cl2) [cloud, draw,cloud puffs=20,cloud puff arc=-100, fill=white,
  minimum width=3.7cm,minimum height=2cm] at (ps-res-cl1.center) {};
\node (ps-res-cl3) [cloud, draw,cloud puffs=15,cloud puff arc=120, aspect=1.5,fill=mygray2,  fill opacity=0.5, text opacity=1,
  minimum width=2.3cm,minimum height=1.8cm] at (ps-res-cl2.center) {$M$};

\node (ps-res-p1) [place,above left=-0.5cm and -0.5cm of ps-res-cl3,fill=white] {};
\node (ps-res-p2) [place,below left=-0.5cm and -0.5cm of ps-res-cl3,fill=white] {};
\node (ps-res-p3) [place,above right=-0.5cm and -0.5cm of ps-res-cl3,fill=white] {};
\node (ps-res-p4) [place,below right=-0.5cm and -0.5cm of ps-res-cl3,fill=white] {};

\node (ps-res-t1) [transition,above left=-0.4cm and 0.5cm of ps-res-cl2,fill=white] {};
\node (ps-res-t2) [transition,below left=-0.4cm and 0.5cm of ps-res-cl2,fill=white] {};
\node (ps-res-t3) [transition,above right=0.1cm and 0.1cm of ps-res-cl2,fill=white] {};
\node (ps-res-t4) [transition,right=0.2cm of ps-res-cl2,fill=white] {};
\node (ps-res-t5) [transition,below right=0.1cm and 0.1cm of ps-res-cl2,fill=white] {};

\path (ps-res-t1) edge[-latex] (ps-res-p1) ; 
\path (ps-res-t1) edge[-latex] (ps-res-p2) ; 
\path (ps-res-t2) edge[-latex] (ps-res-p1) ; 
\path (ps-res-t2) edge[-latex] (ps-res-p2) ; 
\path (ps-res-p3) edge[-latex] (ps-res-t3) ; 
\path (ps-res-p3) edge[-latex] (ps-res-t4) ; 
\path (ps-res-p3) edge[-latex] (ps-res-t5) ; 
\path (ps-res-p4) edge[-latex] (ps-res-t3) ; 
\path (ps-res-p4) edge[-latex] (ps-res-t4) ; 
\path (ps-res-p4) edge[-latex] (ps-res-t5) ; 

\node [above=0cm of ps-res-cl1] {$N \otimes_{p} M$} ;

\node (ps-in-cl) [cloud, draw,cloud puffs=15,cloud puff arc=120, aspect=1.5,fill=mygray2,  fill opacity=0.5, text opacity=1,
  minimum width=2.3cm, minimum height=1.8cm] at (-3.5cm, 2.5cm) {$M$};
\node (ps-in-p1) [place,above left=-0.5cm and -0.5cm of ps-in-cl,fill=white] {};
\node (ps-in-p2) [place,below left=-0.5cm and -0.5cm of ps-in-cl,fill=white] {};
\node (ps-in-p3) [place,above right=-0.5cm and -0.5cm of ps-in-cl,fill=white] {};
\node (ps-in-p4) [place,below right=-0.5cm and -0.5cm of ps-in-cl,fill=white] {};

\foreach \n in {p1, p2}{
  \node (ps-pre-\n) [left=0.3cm of ps-in-\n] {};
  \path (ps-pre-\n) edge[-latex] (ps-in-\n) ; 
}

\foreach \n in {p3, p4}{
  \node (ps-post-\n) [right=0.3cm of ps-in-\n] {};
  \path (ps-post-\n) edge[latex-] (ps-in-\n) ; 
}

\node [below=0.6cm of ps-res-cl1] {(a)} ;

\node (ts-res-cl1) [cloud, draw,cloud puffs=20,cloud puff arc=100, fill=mygray1, fill opacity=0.5,
  minimum width=6.5cm,minimum height=4cm, right=2cm of ps-res-cl1] {};    
\node (ts-res-cl2) [cloud, draw,cloud puffs=20,cloud puff arc=-100, fill=white,
  minimum width=3.7cm, minimum height=2cm] 
  at (ts-res-cl1.center) {};
\node (ts-res-cl3) [cloud, draw,cloud puffs=15,cloud puff arc=120, fill=mygray2,  fill opacity=0.5, text opacity=1,
  minimum width=2.3cm,minimum height=1.8cm] 
  at (ts-res-cl2.center) {$M$};

\node (ts-res-p1) [transition,above left=-0.5cm and -0.5cm of ts-res-cl3,fill=white] {};
\node (ts-res-p2) [transition,below left=-0.5cm and -0.5cm of ts-res-cl3,fill=white] {};
\node (ts-res-p3) [transition,above right=-0.5cm and -0.5cm of ts-res-cl3,fill=white] {};
\node (ts-res-p4) [transition,below right=-0.5cm and -0.5cm of ts-res-cl3,fill=white] {};

\node (ts-res-t1) [place,above left=-0.4cm and 0.5cm of ts-res-cl2,fill=white] {};
\node (ts-res-t2) [place,below left=-0.4cm and 0.5cm of ts-res-cl2,fill=white] {};
\node (ts-res-t3) [place,above right=0.1cm and 0.1cm of ts-res-cl2,fill=white] {};
\node (ts-res-t4) [place,right=0.2cm of ts-res-cl2,fill=white] {};
\node (ts-res-t5) [place,below right=0.1cm and 0.1cm of ts-res-cl2,fill=white] {};

\path (ts-res-t1) edge[-latex] (ts-res-p1) ; 
\path (ts-res-t1) edge[-latex] (ts-res-p2) ; 
\path (ts-res-t2) edge[-latex] (ts-res-p1) ; 
\path (ts-res-t2) edge[-latex] (ts-res-p2) ; 
\path (ts-res-p3) edge[-latex] (ts-res-t3) ; 
\path (ts-res-p3) edge[-latex] (ts-res-t4) ; 
\path (ts-res-p3) edge[-latex] (ts-res-t5) ; 
\path (ts-res-p4) edge[-latex] (ts-res-t3) ; 
\path (ts-res-p4) edge[-latex] (ts-res-t4) ; 
\path (ts-res-p4) edge[-latex] (ts-res-t5) ; 

\node [above=0cm of ts-res-cl1] {$N \otimes_{t} M$} ;

\node (ts-in-cl) [cloud, draw,cloud puffs=15,cloud puff arc=120, aspect=1.5,fill=mygray2,  fill opacity=0.5, text opacity=1,
  minimum width=2.3cm, minimum height=1.8cm] at (12cm, 2.5cm) {$M$};
\node (ts-in-t1) [transition,above left=-0.5cm and -0.5cm of ts-in-cl,fill=white] {};
\node (ts-in-t2) [transition,below left=-0.5cm and -0.5cm of ts-in-cl,fill=white] {};
\node (ts-in-t3) [transition,above right=-0.5cm and -0.5cm of ts-in-cl,fill=white] {};
\node (ts-in-t4) [transition,below right=-0.5cm and -0.5cm of ts-in-cl,fill=white] {};

\foreach \n in {t1, t2}{
  \node (ts-pre-\n) [left=0.3cm of ts-in-\n] {};
  \path (ts-pre-\n) edge[-latex] (ts-in-\n) ; 
}

\foreach \n in {t3, t4}{
  \node (ts-post-\n) [right=0.3cm of ts-in-\n] {};
  \path (ts-post-\n) edge[latex-] (ts-in-\n) ; 
}

\node [below=0.6cm of ts-res-cl1] {(b)} ;

\node (ts-inp-cl1) [cloud, draw,cloud puffs=20,cloud puff arc=100, aspect=2,fill=mygray1,  fill opacity=0.5,
  minimum width=6.5cm,minimum height=4cm, above=1.5cm of ts-res-cl1] {};    
\node (ts-inp-cl2) [cloud, draw,cloud puffs=20,cloud puff arc=-100, aspect=2,fill=white,
  minimum width=3.7cm,minimum height=2cm] at (ts-inp-cl1.center) {};

\node (ts-inp-p) [transition,fill=white] at (ts-inp-cl1.center) {$t$};

\node (ts-inp-t1) [place,above left=-0.4cm and 0.5cm of ts-inp-cl2,fill=white] {};
\node (ts-inp-t2) [place,below left=-0.4cm and 0.5cm of ts-inp-cl2,fill=white] {};
\node (ts-inp-t3) [place,above right=0.1cm and 0.1cm of ts-inp-cl2,fill=white] {};
\node (ts-inp-t4) [place,right=0.2cm of ts-inp-cl2,fill=white] {};
\node (ts-inp-t5) [place,below right=0.1cm and 0.1cm of ts-inp-cl2,fill=white] {};

\path (ts-inp-t1) edge[-latex] (ts-inp-p) ; 
\path (ts-inp-t1) edge[-latex] (ts-inp-p) ; 
\path (ts-inp-t2) edge[-latex] (ts-inp-p) ; 
\path (ts-inp-t2) edge[-latex] (ts-inp-p) ; 
\path (ts-inp-p) edge[-latex] (ts-inp-t3) ; 
\path (ts-inp-p) edge[-latex] (ts-inp-t4) ; 
\path (ts-inp-p) edge[-latex] (ts-inp-t5) ; 
\path (ts-inp-p) edge[-latex] (ts-inp-t3) ; 
\path (ts-inp-p) edge[-latex] (ts-inp-t4) ; 
\path (ts-inp-p) edge[-latex] (ts-inp-t5) ; 

\node [above=0cm of ts-inp-cl1] {$N$} ;

\node (ps-inp-cl1) [cloud, draw,cloud puffs=20,cloud puff arc=100, aspect=2,fill=mygray1,  fill opacity=0.5,
  minimum width=6.5cm,minimum height=4cm, above=1.5cm of ps-res-cl1] {};    
\node (ps-inp-cl2) [cloud, draw,cloud puffs=20,cloud puff arc=-100, aspect=2,fill=white,
  minimum width=3.7cm,minimum height=2cm] at (ps-inp-cl1.center) {};

\node (ps-inp-p) [place,fill=white] at (ps-inp-cl1.center) {$p$};

\node (ps-inp-t1) [transition,above left=-0.4cm and 0.5cm of ps-inp-cl2,fill=white] {};
\node (ps-inp-t2) [transition,below left=-0.4cm and 0.5cm of ps-inp-cl2,fill=white] {};
\node (ps-inp-t3) [transition,above right=0.1cm and 0.1cm of ps-inp-cl2,fill=white] {};
\node (ps-inp-t4) [transition,right=0.2cm of ps-inp-cl2,fill=white] {};
\node (ps-inp-t5) [transition,below right=0.1cm and 0.1cm of ps-inp-cl2,fill=white] {};

\path (ps-inp-t1) edge[-latex] (ps-inp-p) ; 
\path (ps-inp-t1) edge[-latex] (ps-inp-p) ; 
\path (ps-inp-t2) edge[-latex] (ps-inp-p) ; 
\path (ps-inp-t2) edge[-latex] (ps-inp-p) ; 
\path (ps-inp-p) edge[-latex] (ps-inp-t3) ; 
\path (ps-inp-p) edge[-latex] (ps-inp-t4) ; 
\path (ps-inp-p) edge[-latex] (ps-inp-t5) ; 
\path (ps-inp-p) edge[-latex] (ps-inp-t3) ; 
\path (ps-inp-p) edge[-latex] (ps-inp-t4) ; 
\path (ps-inp-p) edge[-latex] (ps-inp-t5) ; 

\node [above=0cm of ps-inp-cl1] {$N$} ;

\end{tikzpicture}
}
\end{center}
\caption{\label{fig:subst-illustr}Illustration of place substitution and transition substitution (adapted from \cite{DBLP:journals/is/SrokaH14})}
\end{figure}
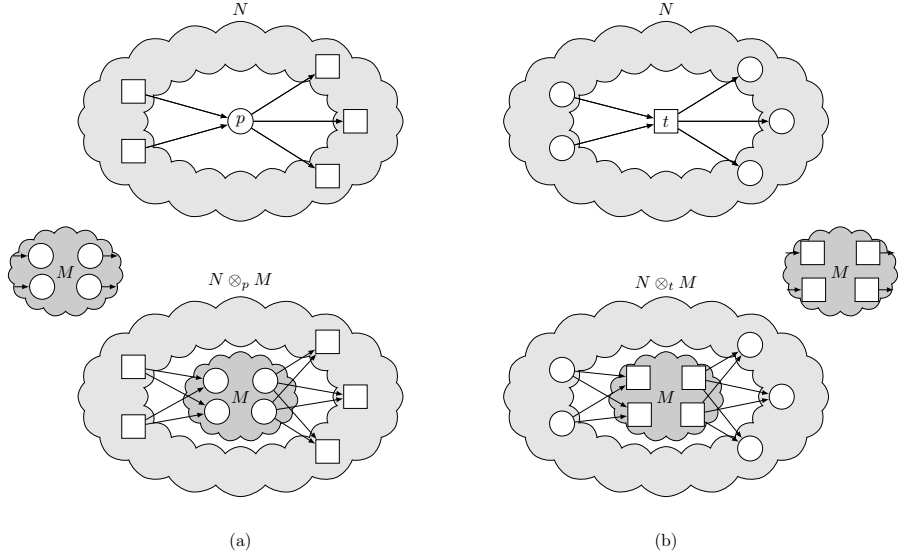

We will generate nets by starting from some basic classes of nets
and allowing substitutions of places with pWF nets and transitions
with tWF nets. 

\begin{defn}[Substitution closure]
Given a class $C$ of nets we define the substitution closure of
$C$, denoted as $\mathbf{S}(C)$, as the closure of $C$
under transition substitution and place substitution,
i.e., the smallest superclass $\mathbf{S}(C)$ of $C$ that satisfies the following two rules for every two disjoints nets $N$
and $M$ in $\mathbf{S}(C)$: 

(1) if $M$ is a pWF net and $p$ a place in $N$ then $N\otimes_{p}M$
is a net in $\mathbf{S}(C)$ and 

(2) if $M$ is a tWF net and $t$ a transition in $N$ then $N\otimes_{t}M$
is a net in $\mathbf{S}(C)$. 
\end{defn}

The motivation for using the concept of substitution closure to define interesting classes of nets is that it makes it straightforward to show for such a class that its nets have a certain property, such as a certain kind of soundness, by showing that (1) this property holds for the nets in the initial class and (2) this property is preserved by substitution. As was argued in earlier work~\cite{DBLP:journals/is/SrokaH14}, *-soundness is not preserved by substitution, but substitution does preserve a stronger property called \emph{substitution soundness}. The intuition behind this notion of soundness is that the execution of a net, when started with the same number of tokens in all input places, should always be able to finish properly, even if the execution is interfered with by removing at some step an identical number of tokens from all output places. More precisely, if we start the net if $k$ token in all its input places, and somewhere during the execution remove $k'$ tokens from all the output places, then the net should be able to finish with $k - k'$ tokens in its output places. 

\begin{defn}[substitution soundness] 
A pWF net $N=(P,T,F,I,O)$ is said to be \emph{substitution sound} if for all $0 \leq k' \leq k$ and every marking $m'$ of $N$ such that $k.I \stackrel{*}{\longrightarrow} m'+k'.O$ it holds that $m' \stackrel{*}{\longrightarrow} (k-k').O$. A tWF net is said to be substitution sound if its place-completion is substitution sound.
\end{defn}

The choice for the initial classes is made such that these are indeed substitution sound. Their intuition is based on that of acyclic marked graphs (T-nets) and state machines (S-nets),
as considered in \cite{DBLP:conf/apn/HeeSV03}. The basic idea of T-nets is that during executions no choices are made about who consumes a token and each transition will fire in the execution of a workflow. This is captured by a syntactic restriction that says that places have exactly one incoming edge and one outgoing edge. We will call this the \emph{AND-property} of a WF net.

\begin{defn}[AND property] 
A WF net $N=(P,T,F,I,O)$ is said to have the \emph{AND property} if for every place $p\in P$ it holds that (1) $p\in I\wedge|\bullet p|=0$
or $p\notin I\wedge|\bullet{p}|=1$ and (2) $p\in O\wedge|p\bullet|=0$ or $p\notin O\wedge|p\bullet|=1$. 
\end{defn}
Note that in this definition being an input node is considered equivalent to having an extra incoming edge, and being an output node equivalent to having an extra outgoing edge.

The basic idea of S-nets is that they represent a state machine, with the state represented by a single token that is in one of the places. This is captured by a restriction that says that transitions have exactly one incoming edge and one outgoing edge. We will call this the \emph{OR-property} of a net.

\begin{defn}[OR property] 
A WF net $N=(P,T,F,I,O)$ is said to have the \emph{OR property} if for every transition $t \in T$ it holds that (1) $t \in I\wedge|\bullet p|=0$
or $t \notin I \wedge |\bullet t| = 1$ and (2) $t \in O \wedge |t \bullet| = 0$ or $t \notin O \wedge |t \bullet| = 1$. 
\end{defn}

Although all WF nets with the OR property are substitution sound, it is unfortunately not the case that all WF nets with the AND property are substitution sound. Consider for example an AND net containing a transition $t$ with a loop containing a place $p$. Note that because of the AND property place $p$ can have no other edges except those of the loop. Therefore, the transition $t$ can never fire, since that would require a token in $p$, but such a token can only be generated by firing $t$. To remedy this, we will define AND nets as WF nets that not only satisfy the AND property but are also acyclic, which leads to the following definitions of AND and OR nets.

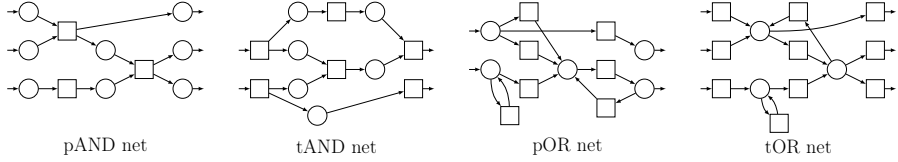
\begin{figure}[htb]
\begin{center}
\resizebox{\textwidth}{!}{%
\begin{tikzpicture}
    \tikzstyle{transition} = [rectangle,draw,minimum width=0.47cm, minimum height=0.47cm,fill=white]
    \tikzstyle{place} = [circle,draw,minimum width=0.5cm, minimum height=0.5cm,fill=white,inner sep=0.08cm]


\foreach \num / \pos in {1/{(1, 0)}, 2/{(1, 1.5)}, 3/{(3, 0.5)}}{
  \node (pand-t\num) [transition] at \pos {};
}

\foreach \num / \pos in {1/{(0, 0)}, 2/{(0, 1)}, 3/{(0, 2)}, 4/{(2, 0)}, 5/{(2, 1)}, 6/{(4, 0)}, 7/{(4, 1)}, 8/{(4, 2)}}{
  \node (pand-p\num) [place] at \pos {};
}

\foreach \n in {p1, p2, p3}{
  \node (pand-pre-\n) [left=0.3cm of pand-\n] {};
  \path (pand-pre-\n) edge[-latex] (pand-\n) ; 
}

\foreach \n in {p6, p7, p8}{
  \node (pand-post-\n) [right=0.3cm of pand-\n] {};
  \path (pand-post-\n) edge[latex-] (pand-\n) ; 
}

\foreach \x / \y in {
  p3/t2, p2/t2, p1/t1, t2/p8, t2/p5, t1/p4, p5/t3,
  p4/t3, t3/p7, t3/p6} {
  \path (pand-\x) edge[-latex] (pand-\y);
}

\node at (2, -1.5) {\Large pAND net} ;


\foreach \num / \pos in {1/{(6, 0)},2/{(6, 1)},3/{(8, 0.5)},4/{(8, 2)},5/{(10, 0)},6/{(10, 1)}}{
  \node (tand-t\num) [transition] at \pos {};
}

\foreach \num / \pos in {1/{(7, 0)}, 2/{(7, 1)}, 3/{(7, 2)}, 4/{(7.5, -0.7)}, 5/{(9, 0.5)}, 6/{(9, 2)}}{
  \node (tand-p\num) [place] at \pos {};
}

\foreach \n in {t1, t2}{
  \node (tand-pre-\n) [left=0.3cm of tand-\n] {};
  \path (tand-pre-\n) edge[-latex] (tand-\n) ; 
}

\foreach \n in {t5, t6}{
  \node (tand-post-\n) [right=0.3cm of tand-\n] {};
  \path (tand-post-\n) edge[latex-] (tand-\n) ; 
}

\foreach \x / \y in {t1/p4, t2/p2, t2/p3, t1/p1, p1/t3, p3/t4, p2/t3, p4/t5, t4/p6, t3/p5, p6/t6, p5/t6}{
  \path (tand-\x) edge[-latex] (tand-\y);
}

\node at (8, -1.5) {\Large {tAND} net} ;


\foreach \num / \pos in {1/{(12.5, -0.7)}, 2/{(13, 0)}, 3/{(13, 1)}, 4/{(13, 2)}, 5/{(15, -0.5)}, 6/{(15, 0.5)}, 7/{(15, 1.5)}}{
  \node (por-t\num) [transition] at \pos {};
}

\foreach \num / \pos in {1/{(12, 0.5)}, 2/{(12, 1.5)}, 3/{(14, 0.5)}, 4/{(16, 0)}, 5/{(16, 1)}}{
  \node (por-p\num) [place] at \pos {};
}

\foreach \n in {p1, p2}{
  \node (por-pre-\n) [left=0.3cm of por-\n] {};
  \path (por-pre-\n) edge[-latex] (por-\n) ; 
}

\foreach \n in {p5, p4}{
  \node (por-post-\n) [right=0.3cm of por-\n] {};
  \path (por-post-\n) edge[latex-] (por-\n) ; 
}

\path (por-p1) edge[-latex, bend right=20] (por-t1);
\path (por-t1) edge[-latex, bend right=20] (por-p1);

\foreach \x / \y in {p1/t2, p2/t3, p2/t7, t4/p3, t3/p3, t2/p3, p2/t4, t7/p5, t6/p4, t5/p3, p3/t6, p4/t5} {
  \path (por-\x) edge[-latex] (por-\y);
}

\node at (14, -1.5) {\Large {pOR} net} ;


\foreach \num / \pos in {1/{(18,2)}, 2/{(18,1)}, 3/{(18,0)}, 4/{(20,2)}, 5/{(20,1)}, 6/{(20,0)}, 7/{(22,2)}, 8/{(22,1)}, 9/{(22,0)}, 10/{(19.5, -0.9)}}{
  \node (tor-t\num) [transition] at \pos {};
}

\foreach \num / \pos in {1/{(19,1.5)}, 2/{(19,0)}, 3/{(21,0.5)}}{
  \node (tor-p\num) [place] at \pos {};
}

\foreach \n in {t1, t2, t3}{
  \node (tor-pre-\n) [left=0.3cm of tor-\n] {};
  \path (tor-pre-\n) edge[-latex] (tor-\n) ; 
}

\foreach \n in {t7, t8, t9}{
  \node (tor-post-\n) [right=0.3cm of tor-\n] {};
  \path (tor-post-\n) edge[latex-] (tor-\n) ; 
}

\foreach \x / \y in {t1/p1, t2/p1, t3/p2, p1/t5, p2/t6, t4/p1, t5/p3, t6/p3, p3/t4, p3/t8, p3/t9} {
  \path (tor-\x) edge[-latex] (tor-\y);
}

\path (tor-p1) edge[-latex, bend right=10] (tor-t7);
\path (tor-p2) edge[-latex, bend right=20] (tor-t10);
\path (tor-t10) edge[-latex, bend right=20] (tor-p2);

\node at (20, -1.5) {\Large tOR net} ;

\end{tikzpicture}
}
\end{center}
\caption{\label{fig:And-or-net}Examples of a pAND, tAND, pOR and tOR nets
(adapted from \cite{DBLP:journals/is/SrokaH14})}
\end{figure}

\begin{defn}[AND net]
An \emph{AND net} is an acyclic WF net that satisfies the AND property. An AND net that is a pWF net is
called a pAND net, and if it is a tWF net it is called a tAND net. 
\end{defn}

\begin{defn}[OR net]
An \emph{OR net} is a WF net that satisfies the OR property.
An OR net that is a pWF net is called a pOR net, and if it is a tWF
net it is called a tOR net. 
\end{defn}


For some examples of AND and OR nets see Figure~\ref{fig:And-or-net}. 
We will use names to identify the classes of nets we have just defined. The
classes of pOR nets, pAND nets, tOR and tAND nets will be denoted as 
$\mathbf{pOR}$, $\mathbf{pAND}$, $\mathbf{tOR}$ and $\mathbf{tAND}$.
In addition we will prefix the name with $\mathbf{11}$ if the class contains
only nets with one input node and one output node. So $\mathbf{11pOR}$ is the class
of one-input one-output pOR nets, and $\mathbf{11tAND}$ is the class
of one-input one-output tAND nets.

Since our aim is to generate substitution sound nets, which as we will show are also {*}-sound, we will restrict the tAND nets 
to one-input one-output nets, so the class $\mathbf{11tAND}$. The reason is that
although all pAND nets are substitution sound, it is not true that all tAND nets are substitution sound, as
is illustrated by the tAND net in Figure~\ref{fig:And-or-net}. Recall that the soundness
of a tWF net is defined as the soundness of its place-completion which adds a single
place before the input transitions and a single place after the output transitions.
So if we put a single token in the first place in this place-completion, then only one of the first
transitions can fire, after which the net can no longer correctly finish.  A similar problem occurs
in the pOR net in Figure~\ref{fig:And-or-net}. If it starts with a token in each of its input places, 
then it cannot finish correctly if it moves the token of the
upper input place to the place in the middle of the net. Both types of problems are solved if we
only consider one-input one-output versions of tAND and pOR nets, which are all
substitution sound nets. Hence we only consider the classes illustrated in Figure~\ref{fig:And-or-net11}, 
which we will refer to as the  \emph{basic
AND-OR classes}, and the class of nets that is obtained by combining them by substitution 
we will call the class of \emph{AND-OR nets}.

\begin{figure}[htb]
\begin{center}
\resizebox{\textwidth}{!}{%
\begin{tikzpicture}
    \tikzstyle{transition} = [rectangle,draw,minimum width=0.47cm, minimum height=0.47cm,fill=white]
    \tikzstyle{place} = [circle,draw,minimum width=0.5cm, minimum height=0.5cm,fill=white,inner sep=0.08cm]


\foreach \num / \pos in {1/{(1, 0)}, 2/{(1, 1.5)}, 3/{(3, 0.5)}}{
  \node (pand-t\num) [transition] at \pos {};
}

\foreach \num / \pos in {1/{(0, 0)}, 2/{(0, 1)}, 3/{(0, 2)}, 4/{(2, 0)}, 5/{(2, 1)}, 6/{(4, 0)}, 7/{(4, 1)}, 8/{(4, 2)}}{
  \node (pand-p\num) [place] at \pos {};	
}

\foreach \n in {p1, p2, p3}{
  \node (pand-pre-\n) [left=0.3cm of pand-\n] {};
  \path (pand-pre-\n) edge[-latex] (pand-\n) ; 
}

\foreach \n in {p6, p7, p8}{
  \node (pand-post-\n) [right=0.3cm of pand-\n] {};
  \path (pand-post-\n) edge[latex-] (pand-\n) ; 
}

\foreach \x / \y in {
  p3/t2, p2/t2, p1/t1, t2/p8, t2/p5, t1/p4, p5/t3,
  p4/t3, t3/p7, t3/p6} {
  \path (pand-\x) edge[-latex] (pand-\y);
}

\node at (2, -1.5) {\Large pAND net} ;


\foreach \num / \pos in {2/{(6, 1)},3/{(8, 0.5)},4/{(8, 2)},6/{(10, 1)}}{
  \node (tand-t\num) [transition] at \pos {};
}

\foreach \num / \pos in {1/{(7, 0)}, 2/{(7, 1)}, 3/{(7, 2)}, 5/{(9, 0.5)}, 6/{(9, 2)}}{
  \node (tand-p\num) [place] at \pos {};
}

\foreach \n in {t2}{
  \node (tand-pre-\n) [left=0.3cm of tand-\n] {};
  \path (tand-pre-\n) edge[-latex] (tand-\n) ; 
}

\foreach \n in {t6}{
  \node (tand-post-\n) [right=0.3cm of tand-\n] {};
  \path (tand-post-\n) edge[latex-] (tand-\n) ; 
}

\foreach \x / \y in {t2/p2, t2/p3, t2/p1, p1/t3, p3/t4, p2/t3, t4/p6, t3/p5, p6/t6, p5/t6}{
  \path (tand-\x) edge[-latex] (tand-\y);
}

\node at (8, -1.5) {\Large 11tAND net} ;


\foreach \num / \pos in {1/{(12.5, -0.7)}, 2/{(13, 0)}, 3/{(13, 1)}, 4/{(14, 2)}, 5/{(15, 0)}, 6/{(15, 1)}}{
  \node (por-t\num) [transition] at \pos {};
}

\foreach \num / \pos in {1/{(12, 0.5)}, 3/{(14, 0.5)}, 4/{(16, 0.5)}}{
  \node (por-p\num) [place] at \pos {};
}

\foreach \n in {p1}{
  \node (por-pre-\n) [left=0.3cm of por-\n] {};
  \path (por-pre-\n) edge[-latex] (por-\n) ; 
}

\foreach \n in {p4}{
  \node (por-post-\n) [right=0.3cm of por-\n] {};
  \path (por-post-\n) edge[latex-] (por-\n) ; 
}

\path (por-p1) edge[-latex, bend right=20] (por-t1);
\path (por-t1) edge[-latex, bend right=20] (por-p1);

\foreach \x / \y in {p1/t2, p1/t3, t3/p3, t2/p3, t6/p4, t5/p3, p3/t6, p4/t5} {
  \path (por-\x) edge[-latex] (por-\y);
}

\path (por-t4) edge[-latex, bend right=20] (por-p1);
\path (por-p4) edge[-latex, bend right=20] (por-t4);

\node at (14, -1.5) {\Large 11pOR net} ;


\foreach \num / \pos in {1/{(18,2)}, 2/{(18,1)}, 3/{(18,0)}, 4/{(20,2)}, 5/{(20,1)}, 6/{(20,0)}, 7/{(22,2)}, 8/{(22,1)}, 9/{(22,0)}, 10/{(19.5, -0.9)}}{
  \node (tor-t\num) [transition] at \pos {};
}

\foreach \num / \pos in {1/{(19,1.5)}, 2/{(19,0)}, 3/{(21,0.5)}}{
  \node (tor-p\num) [place] at \pos {};
}

\foreach \n in {t1, t2, t3}{
  \node (tor-pre-\n) [left=0.3cm of tor-\n] {};
  \path (tor-pre-\n) edge[-latex] (tor-\n) ; 
}

\foreach \n in {t7, t8, t9}{
  \node (tor-post-\n) [right=0.3cm of tor-\n] {};
  \path (tor-post-\n) edge[latex-] (tor-\n) ; 
}

\foreach \x / \y in {t1/p1, t2/p1, t3/p2, p1/t5, p2/t6, t4/p1, t5/p3, t6/p3, p3/t4, p3/t8, p3/t9} {
  \path (tor-\x) edge[-latex] (tor-\y);
}

\path (tor-p1) edge[-latex, bend right=10] (tor-t7);
\path (tor-p2) edge[-latex, bend right=20] (tor-t10);
\path (tor-t10) edge[-latex, bend right=20] (tor-p2);

\node at (20, -1.5) {\Large tOR net} ;

\end{tikzpicture}
}

\end{center}
\caption{\label{fig:And-or-net11}Examples from classes \textbf{pAND},
\textbf{11tAND}, \textbf{11pOR} and \textbf{tOR} (from \cite{DBLP:journals/is/SrokaH14})}
\end{figure}
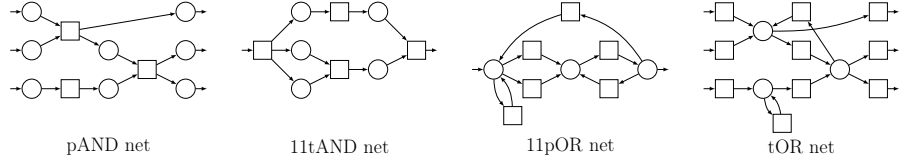

\begin{defn}[AND-OR net]
The class $\mathbf{S}(\mathbf{pAND}\cup\mathbf{11tAND}\cup\mathbf{11pOR}\cup\mathbf{tOR})$
is called the class of AND-OR nets. 
\end{defn}

An example of the generation of an AND-OR net is shown in Figure~\ref{fig:And-or-net-example},
with on the left the hierarchical decomposition and on the right the
resulting net.

\begin{figure}[htb]
\begin{center}
\resizebox{\textwidth}{!}{%
\begin{tikzpicture}
    \tikzstyle{transition} = [rectangle,draw,minimum width=0.47cm, minimum height=0.47cm,fill=white]
    \tikzstyle{place} = [circle,draw,minimum width=0.5cm, minimum height=0.5cm,fill=white,inner sep=0.08cm]


\node (b1) [cloud, draw, thick, fill=lightgray, fill opacity=0.5, minimum height=3cm, minimum width=4cm, cloud puffs=20,cloud puff arc=110] at (2.5, 3) {};
\node (b2) [cloud, draw, thick, fill=lightgray, fill opacity=0.5, minimum height=3.5cm, minimum width=7cm, cloud puffs=30,cloud puff arc=110] at (9, 1.5) {};
\node (b3) [cloud, draw, thick, fill=white, minimum height=2.7cm, minimum width=3.8cm, cloud puffs=20,cloud puff arc=110] at (9, 1.5) {};

\foreach \num / \pos in {1/{(0,0)}, 2/{(0,2)}, 3/{(0,4)}, 4/{(2.5,4)}, 5/{(2.5, 2)}, 6/{(5, 2)}, 7/{(8, 2)}, 8/{(8, 1)}, 9/{(10, 2)}, 10/{(10, 1)}, 11/{(13,0)}, 12/{(13,2)}, 13/{(13,4)}}{
  \node (p\num) [place, fill=white] at \pos {};
}

\foreach \num / \pos in {1/{(1.5, 3.5)}, 2/{(1.5, 2.5)}, 3/{(2.5, 3)}, 4/{(3.5, 3.5)}, 5/{(3.5, 2.5)}, 6/{(6.5, 2)}, 7/{(6.5, 1)}, 8/{(9, 2)}, 9/{(9, 1)}, 10/{(11.3, 0.8)}, 11/{(11.5, 2)}}{
  \node (t\num) [transition, fill=white] at \pos {};
}

\foreach \n in {p1, p2, p3, t1, t2, t6, t7, p7, p8}{
  \node (pre-\n) [left=0.3cm of \n] {};
  \path (pre-\n) edge[-latex] (\n) ; 
}

\foreach \n in {t4, t5, p9, p10, t11, p13, p12, p11}{
  \node (post-\n) [right=0.3cm of \n] {};
  \path (post-\n) edge[latex-] (\n) ; 
}

\foreach \x / \y in {p3/b1, p2/b1, t1/p4, p4/t3, t3/p5, t2/p5, p4/t4, p5/t5, b1/p6, p1/b2, p6/b2, t6/b3, t7/b3, p7/t8, p8/t9, t8/p9, t9/p10, b3/t11, b2/p12, b2/p11, b1/p13}{
  \path (\x) edge[-latex] (\y);
}

\path (t10.north) edge[latex-, bend right=22] (b3.east);
\path (b3) edge[latex-, bend right=22] (t10.south);


\foreach \num / \pos in {1/{(16,0)}, 2/{(16,2)}, 3/{(16,4)}, 4/{(18.5, 4)}, 5/{(18.5, 2)}, 6/{(21, 2)}, 7/{(24, 2)}, 8/{(24, 1)}, 9/{(26, 2)}, 10/{(26, 1)}, 11/{(29,0)}, 12/{(29,2)}, 13/{(29,4)}}{
  \node (fp\num) [place, fill=white] at \pos {};
}

\foreach \num / \pos in {1/{(17.5, 3.5)}, 2/{(17.5, 2.5)}, 3/{(18.5, 3)}, 4/{(19.5, 3.5)}, 5/{(19.5, 2.5)}, 6/{(22.5, 2)}, 7/{(22.5, 1)}, 8/{(25, 2)}, 9/{(25, 1)}, 10/{(27.3, 0.8)}, 11/{(27.5, 2)}}{
  \node (ft\num) [transition, fill=white] at \pos {};
}

\foreach \n in {p1, p2, p3}{
  \node (fpre-\n) [left=0.3cm of f\n] {};
  \path (fpre-\n) edge[-latex] (f\n) ; 
}

\foreach \n in {p13, p12, p11}{
  \node (fpost-\n) [right=0.3cm of f\n] {};
  \path (fpost-\n) edge[latex-] (f\n) ; 
}

\foreach \x / \y in {p3/t1, p3/t2, p2/t1, p2/t2, t1/p4, p4/t3, t3/p5, t2/p5, p4/t4, p5/t5, t4/p6, t5/p6, p1/t6, p1/t7, p6/t6, p6/t7, t6/p7, t6/p8, t7/p7, t7/p8, p7/t8, p8/t9, t8/p9, t9/p10, p9/t11, p10/t11, t11/p12, t11/p11, t4/p13, t5/p13}{
  \path (f\x) edge[-latex] (f\y);
}

\path (ft10) edge[-latex, bend left=56] (fp7);
\path (ft10) edge[-latex, bend left=50] (fp8);
\path (fp9) edge[-latex, bend left] (ft10);
\path (fp10) edge[-latex, bend left] (ft10);

\end{tikzpicture}
}

\end{center}
\caption{\label{fig:And-or-net-example}An example of the generation of an
AND-OR net (adapted from \cite{DBLP:journals/is/SrokaH14})}
\end{figure}
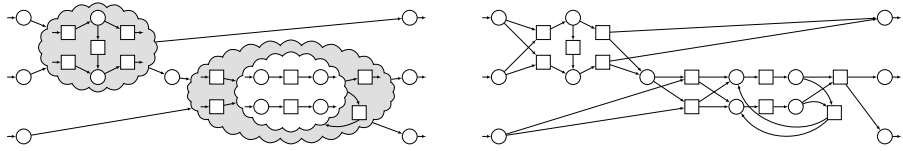


%
%

\section{The AND-OR reduction \label{sect:AND-OR-reduction}}

We now proceed with presenting the AND-OR net verification procedure.
Informally, the procedure can be described as reversing the generation
process. This means that we try to find subnets that might have been
generated by a substitution and reverse the substitution. This process
is repeated until we can find no more such subnets, and if then the
resulting net is a single node, the original net is concluded
to be an AND-OR net.

\begin{defn}[Subnet]
A subnet in workflow net $M=( P_{M},T_{M},F_{M},I_{M},O_{M})$ is identified by a non-empty set of nodes $S \subseteq P_M \cup T_M$. With $S$ we associate a net $M[S] = ( P_{S},T_{S},F_{S},I_{S},O_{S})$ that is the restriction of $M$ to the nodes in $S$, i.e.,
\begin{itemize}
  \item $P_S = P_M \cap S$,
  \item $T_S = T_M \cap S$,
  \item $F_S = F_M \cap (S \times S)$,
  \item $I_S = (I_M \cap S) \cup \{ n_2 \mid (n_1, n_2) \in F_M, n_1 \in ((P_M \cup T_M) \setminus S), n_2 \in S) \}$ and
  \item $O_S = (O_M \cap S) \cup  \{ n_1 \mid (n_1, n_2) \in F_M, n_1 \in S, n_2 \in ((P_M \cup T_M) \setminus S) \}$.
\end{itemize}
A subnet consisting of exactly one node will be called a \emph{trivial subnet}.
\end{defn}

Note that the input nodes of $M[S]$ are not just the nodes in $S$ that are input nodes of $M$ but also contains the nodes in $S$ that have in $M$ incoming edges from outside $S$. Analogously the output nodes are the nodes in $S$ that are output nodes of $M$ or have in $M$ an outgoing edge to a node outside $S$.

Not every subnet of a WF net is again itself a WF net, but it will always be an I/O net.

\begin{lem}[subnets are I/O nets] \label{lem:subnet-is-IOnet}
Let $S$ be a non-empty subset of nodes of a WF net $M$, then $M[S]$ is an I/O net.
\end{lem}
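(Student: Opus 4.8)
The plan is to verify directly that $M[S]=(P_S,T_S,F_S,I_S,O_S)$ meets the three requirements in the definition of an I/O net: that $(P_S,T_S,F_S)$ is a Petri net, and that $I_S$ and $O_S$ are non-empty subsets of $P_S\cup T_S$. Note that we do \emph{not} need to establish I/O consistency or well-connectedness of $M[S]$ here — only that it is an I/O net — which is what makes the statement true for arbitrary non-empty $S$.

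First I would dispatch the purely set-theoretic parts. Finiteness of $P_S$ and $T_S$ is inherited from $P_M$ and $T_M$ since $P_S=P_M\cap S$ and $T_S=T_M\cap S$; disjointness follows from $P_S\cap T_S\subseteq P_M\cap T_M=\emptyset$; and intersecting $F_M\subseteq(T_M\times P_M)\cup(P_M\times T_M)$ with $S\times S$ distributes to give exactly $F_S\subseteq(T_S\times P_S)\cup(P_S\times T_S)$, so $(P_S,T_S,F_S)$ is a Petri net. For the inclusions $I_S,O_S\subseteq P_S\cup T_S$, observe that $P_S\cup T_S=(P_M\cup T_M)\cap S=S$ because $S\subseteq P_M\cup T_M$; both summands in the definitions of $I_S$ and of $O_S$ consist of nodes that lie in $S$, so these inclusions are immediate.

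The substantive step is showing $I_S\neq\emptyset$ and $O_S\neq\emptyset$, and here I would invoke that $M$, being a WF net, is well-connected. Pick any node $n\in S$, which exists since $S$ is non-empty. By well-connectedness there is a path $(n_1,\dots,n_k)$ in $M$ with $n_1\in I_M$ and $n_k=n$. If every node of this path lies in $S$, then $n_1\in I_M\cap S\subseteq I_S$. Otherwise, let $j$ be the largest index with $n_j\notin S$; since $n_k=n\in S$ we have $j<k$, and by maximality of $j$ we get $n_{j+1}\in S$, while $(n_j,n_{j+1})\in F_M$ with $n_j\notin S$, so $n_{j+1}$ matches the second summand of the definition of $I_S$ and hence $n_{j+1}\in I_S$. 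Either way $I_S\neq\emptyset$. For $O_S$ the argument is symmetric: take a path from $n$ to a node of $O_M$; if it stays inside $S$ it yields an element of $O_M\cap S$, and otherwise its first exit edge from $S$ — taking the smallest $j$ with $n_j\notin S$, which forces $j>1$ and $n_{j-1}\in S$ — yields $n_{j-1}\in O_S$.

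I expect the only (modest) obstacle to be exactly this boundary-crossing argument: correctly locating the edge where a connecting path last enters, or first leaves, $S$, and checking that the relevant endpoint genuinely lies in $S$ so that the frontier node is actually added to $I_S$ or $O_S$. The rest is routine bookkeeping with intersections.
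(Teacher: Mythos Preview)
Your proof is correct and follows essentially the same approach as the paper: verify the Petri-net axioms by routine set manipulations, and establish non-emptiness of $I_S$ and $O_S$ by using well-connectedness of $M$ to locate where a path from (respectively, to) an interface node of $M$ crosses the boundary of $S$. Your version is slightly more explicit---you additionally check $I_S,O_S\subseteq S$ and trace the path with indices rather than doing a case split on whether $S\cap I_M=\emptyset$---but these are stylistic refinements, not a different argument.
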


\begin{proof}
Let $M=( P_{M},T_{M},F_{M},I_{M},O_{M})$. For $M[S]=( P_{S},T_{S},F_{S},I_{S}, O_{S})$ to be an I/O net the following must be shown. First the defining properties of a Petri net: (i) $P_{S} \cap T_{S} = \emptyset$, (ii) $F_{S} \subseteq (T_{S} \times P_{S} ) \cup (P_{S} \times T_{S})$. Furthermore the additional defining properties of an I/O net: (iii) $I_{S} \neq \emptyset$ and $O_{S} \neq \emptyset$. We can show these as follows:
\begin{enumerate}[(i)]

  \item Since $P_M \cap T_M = \emptyset$, and by definition $P_S = P_M \cap N$ and $T_S = T_M \cap N$, it follows that $P_S \cap T_S = \emptyset$.

  \item Since $F_M \subseteq (P_M \times T_M) \cup (T_M \times P_M)$, it follows that 
  \begin{align*} 
     F_S &=  F_M \cap (N \times N) \\
             &\subseteq ((P_M \times T_M) \cup (T_M \times P_M)) \cap (N \times N) \\ 
             &=  ((P_M \times T_M) \cap (N \times N)) \cup ((T_M \times P_M) \cap (N \times N)) \\
             &=  ((P_M \cap N) \times (T_M \cap N)) \cup ((T_M \cap N) \times (P_M\cap N)) \\
             &= (P_S \times T_S) \cup (T_S \cup P_S)             
\end{align*}

  \item We show that $I_{S} \neq \emptyset$. In the case where $S \cap I_M \neq \emptyset$, this is trivially satisfied. In the case where $S \cap I_M = \emptyset$ there will be a node in $S$ that has in $M$ an incoming edge from outside $S$, since in $M$ all nodes must be reachable from an input node of $M$ and $S$ is non-empty. This node will hence be an input node of $M[S]$. In a similar fashion it can be shown that  $O_{S} \neq \emptyset$.

\end{enumerate}
\end{proof}

In addition, it can also be shown that every subnet is well-connected.

\begin{lem}[well-connectedness of subnets] \label{lem:subnet-is-well-connected}
Let $S$ be a non-empty subset of nodes of a WF net $M$, then $M[S]$ is well-connected.
\end{lem}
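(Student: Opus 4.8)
The plan is to show that in $M[S]$, every node is reachable by a path from some input node of $M[S]$, and that from every node some output node of $M[S]$ is reachable. I would prove the first property and note that the second follows by a symmetric argument (reversing all edges and swapping the roles of $I$ and $O$, which is exactly how the definitions of $I_S$ and $O_S$ are dual).

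First I would fix an arbitrary node $n \in S$. Since $M$ is well-connected, there is a path $(n_1, \ldots, n_k = n)$ in $M$ with $n_1 \in I_M$. Let $j$ be the largest index such that $n_j \notin S$, if such an index exists; otherwise the whole path lies in $S$. In the latter case, $n_1 \in I_M \cap S \subseteq I_S$, and the path itself witnesses reachability in $M[S]$ (all its edges are in $F_S = F_M \cap (S\times S)$). In the former case, consider the suffix $(n_{j+1}, \ldots, n_k)$: all of these nodes are in $S$ by maximality of $j$, and the edge $(n_j, n_{j+1}) \in F_M$ has $n_j \notin S$ and $n_{j+1} \in S$, so by the definition of $I_S$ we get $n_{j+1} \in I_S$. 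The suffix is then a path in $M[S]$ from an input node to $n$, since each consecutive pair $(n_i, n_{i+1})$ for $j+1 \le i < k$ lies in $S \times S$ and hence in $F_S$.

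For the second property, I would observe that reversing the direction of every edge of $M$ turns $M$ into another I/O net in which the roles of $I_M$ and $O_M$ are interchanged, and that this operation commutes with taking the subnet $M[S]$ (because $F_S = F_M \cap (S \times S)$ is symmetric under edge reversal, and the defining conditions for $O_S$ become those for $I_S$). Well-connectedness of $M$ gives that every node reaches an output node, which is precisely reachability from an input node in the reversed net; applying the argument of the previous paragraph there yields, back in $M[S]$, that every node of $S$ reaches a node in $O_S$.

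The only mild obstacle is bookkeeping about the interface sets: one must be careful that the witnessing node $n_{j+1}$ (the first node of the path that enters $S$ from outside) really does land in $I_S$ according to the given definition, and that a node already in $I_M \cap S$ is handled by the trivial first case. There is no genuine difficulty — the argument is a routine ``take a path in the big net, chop it at the last exit from $S$'' construction — but it is worth stating the edge cases ($n \in I_S$ already, or the whole path inside $S$) explicitly so the induction-free argument is clean.
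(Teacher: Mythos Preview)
Your argument is correct and essentially the same as the paper's: both take a path in $M$ guaranteed by well-connectedness, cut it at the last point where it crosses the boundary of $S$, and observe that the remaining segment lies in $M[S]$ and starts (or ends) at an interface node by definition of $I_S$ (or $O_S$). The only cosmetic difference is that the paper proves the output-node direction first (using the longest prefix inside $S$) whereas you prove the input-node direction first (using the largest index outside $S$), but the two formulations are identical up to reversal.
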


\begin{proof}
We first show that every node in $S$ has in $M[S]$ a path to an output node of $M[S]$. Since $M$ is well-connected, every node in $S$ will have a path to an output node of $M$. There either (i) is a node in that path that is not in $S$ or (ii) all nodes of the path are in $S$: 
\begin{enumerate}[(i)]

  \item Consider the longest prefix of the path that contains only nodes in $S$. The final node will be an output node of $M[S]$, since it is followed by a node not in $S$. Moreover, this path is present in $M[S]$ since it only contains nodes from $S$.
  
  \item The final node of the path is an output node of $M$, and therefore also an output node of $M[S]$.

\end{enumerate}

In a similar fashion it can be shown that to every node in $S$ there is in $M[S]$ a path from an input node of $M[S]$, except that the edges and the paths are reversed and the set of output nodes is replaced with the set of input nodes.
\end{proof}

Since the definition of a WF net is an I/O net that is well-connected and I/O consistent, it follows immediately from the previous two lemmas,  Lemma~\ref{lem:subnet-is-IOnet} and Lemma~\ref{lem:subnet-is-well-connected},  that a subnet of WF net is a WF net iff it is I/O consistent.

\begin{thm} \label{thm:subnet-wf-char}
Let $S$ be a non-empty set of nodes of a WF net $M$, then $M[S]$ is a WF net iff it is I/O consistent.
\end{thm}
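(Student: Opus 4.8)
The plan is to observe that this statement is an immediate corollary of the two preceding lemmas together with the definition of a WF net, so the work will essentially be bookkeeping rather than any genuinely new argument. Recall that a WF net is defined as an I/O net that is both \emph{I/O consistent} and \emph{well-connected}. Lemma~\ref{lem:subnet-is-IOnet} already establishes that $M[S]$ is an I/O net for any non-empty $S$, and Lemma~\ref{lem:subnet-is-well-connected} already establishes that $M[S]$ is well-connected. The only remaining ingredient needed to conclude ``$M[S]$ is a WF net'' is therefore precisely I/O consistency, which is what the biconditional asserts.

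Concretely, I would argue the two directions separately. For the forward direction, if $M[S]$ is a WF net then by the very definition of WF net it is in particular I/O consistent, so there is nothing to prove. For the backward direction, assume $M[S]$ is I/O consistent. By Lemma~\ref{lem:subnet-is-IOnet}, $M[S]$ is an I/O net; by Lemma~\ref{lem:subnet-is-well-connected}, it is well-connected; and by assumption it is I/O consistent. Since these three properties together are exactly the definition of a WF net, we conclude that $M[S]$ is a WF net.

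There is no real obstacle here: the substantive content — showing that the restriction construction $M[S]$ always yields something with non-empty input and output sets, with edges of the correct bipartite form, and with the well-connectedness property inherited from $M$ via the prefix/suffix-of-a-path trick — was already carried out in the proofs of the two lemmas. The one point to be careful about is simply that the definition of WF net I am invoking lists precisely ``I/O net'' $+$ ``I/O consistent'' $+$ ``well-connected'' and nothing more, so that no further verification (e.g.\ of any additional structural condition) is required; a quick check against the Definition of Workflow net in Section~\ref{sect:Basic-terminology-and} confirms this. Hence the proof is a one-line citation of the two lemmas and the definition.
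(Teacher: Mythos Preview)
Your proposal is correct and matches the paper's approach exactly: the paper likewise dispatches the theorem in one sentence, observing that since a WF net is by definition an I/O net that is I/O consistent and well-connected, the result follows immediately from Lemma~\ref{lem:subnet-is-IOnet} and Lemma~\ref{lem:subnet-is-well-connected}.
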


We now proceed with defining the notion of \emph{contraction}, i.e., a contraction of a subnet $S$ of a WF net $M$ into a single new node $n$ such that (1) all edges that departed from outside $S$ and arrived in an input node of $S$ become edges arriving in $n$, (2) all edges that arrived in a node outside $S$ and left from an output node of $S$ become edges leaving from $n$. Moreover, if $S$ contained an input node of $M$ then $n$ becomes also an input node, and if $S$ contained an output node of $M$ then $n$ becomes also an output node. Formally, the definition is as follows.

\begin{defn}[Contraction]
Given a WF net $M=( P_{M},T_{M},F_{M},I_{M},O_{M})$
and a subnet $S$ such that $M[S] = ( P_{S},T_{S},F_{S},I_{S},O_{S})$ is a pWF net  (tWF net), we define the result of contracting $S$
into a new place (transition) node $n$ as $M'=( P_{M'},T_{M'},F_{M'},I_{M'},O_{M'})$
where
\begin{itemize}
  \item  $P_{M'}=(P_{M}\setminus S)\cup\{n\}$, ($P_{M'}=P_{M}\setminus S$),
  \item $T_{M'}  =T_{M}\setminus S$, ($T_{M'}=(T_{M}\setminus S)\cup\{n\}$),
  \item \begin{tabbing} 
$F_{M'} = $ \= $\{ (n_{1},n_{2}) \mid (n_{1},n_{2}) \in F_{M}, n_{1} \not\in S,n_{2}\not\in S \} \ \cup$ \\ 
  \>  $\{ (n_{1},n) \mid (n_{1}, n_{2}) \in F_{M}, n_1 \not\in S, n_{2}\in I_S \} \ \cup$ \\
  \> $\{ (n,n_{2}) \mid (n_{1}, n_{2}) \in F_{M}, n_{1}\in O_S, n_2 \not\in S \}$,
\end{tabbing}
  \item $I_{M'} = 
    \begin{cases}
  I_{M} & \text{ if } S \cap I_{M} = \emptyset \\      
  (I_{M}\setminus S)\cup\{n\} & \text{ otherwise, }   
    \end{cases}$
  \item $O_{M'} =
    \begin{cases}
      O_{M} & \text{ if } S \cap O_{M} = \emptyset \\
     (O_{M}\setminus S)\cup\{n\} & \text{ otherwise.} 
    \end{cases}$
\end{itemize}
\end{defn}

An important property of the contraction is that it preserves paths, as is witnessed by the following lemma.

\begin{lem}[preservation of paths by contraction]
\label{lem:contraction_keeps_paths}
Let $M'$ be the result of contracting in a WF net $M$ a subnet $S$ into $n$. Then, for every path in $M$ there is a similar path in $M'$ where all maximal substrings of nodes in $S$ are replaced with $n$.
\end{lem}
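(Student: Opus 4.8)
The plan is to take an arbitrary path in $M$ and show that replacing every maximal block of consecutive nodes that lie in $S$ by the single node $n$ yields a valid path in $M'$. First I would fix a path $\pi = (n_1, \ldots, n_m)$ in $M$ and decompose it into maximal blocks: there are indices $0 = j_0 < j_1 < \cdots < j_k = m$ so that within each block $(n_{j_{i-1}+1}, \ldots, n_{j_i})$ either all nodes lie in $S$ or all nodes lie outside $S$, and consecutive blocks alternate. Contracting collapses each $S$-block to a single occurrence of $n$; let $\pi'$ denote the resulting sequence of nodes, which is non-empty since $\pi$ was. It remains to check that every consecutive pair in $\pi'$ is an edge of $F_{M'}$.

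The key step is a case analysis on the three sources of consecutive pairs in $\pi'$, matching the three clauses in the definition of $F_{M'}$. (a) If a pair $(n_\ell, n_{\ell+1})$ in $\pi'$ comes from a pair of consecutive nodes in $\pi$ that both lie outside $S$, then $(n_\ell, n_{\ell+1}) \in F_M$ with $n_\ell, n_{\ell+1} \notin S$, so it is in $F_{M'}$ by the first clause. (b) If the pair is $(n_\ell, n)$ where $n_\ell \notin S$ and $n$ replaces an $S$-block beginning at some node $n_{\ell+1} \in S$, then $(n_\ell, n_{\ell+1}) \in F_M$ with $n_\ell \notin S$; since $n_{\ell+1}$ has an incoming edge from outside $S$, it is an input node of $M[S]$, i.e.\ $n_{\ell+1} \in I_S$, so $(n_\ell, n) \in F_{M'}$ by the second clause. (c) Symmetrically, a pair $(n, n_\ell)$ where the $S$-block ends at some $n_j \in S$ with $(n_j, n_\ell) \in F_M$ and $n_\ell \notin S$: then $n_j$ has an outgoing edge to outside $S$, hence $n_j \in O_S$, so $(n, n_\ell) \in F_{M'}$ by the third clause. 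Since the blocks alternate in/out of $S$, every consecutive pair in $\pi'$ falls into exactly one of these three cases, and no pair involves two consecutive blocks both inside $S$ (such blocks would have been merged). Hence $\pi'$ is a path in $M'$, and by construction it agrees with $\pi$ outside $S$ and replaces each maximal $S$-substring with $n$.

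The only mild subtlety — and the thing I would be most careful about — is the bookkeeping at the boundaries of $S$-blocks: one must confirm that the first node of an $S$-block really is in $I_S$ and the last node really is in $O_S$, which is exactly where the definition of the subnet $M[S]$ (specifically the "incoming/outgoing edge from/to outside $S$" clauses of $I_S$ and $O_S$) is used. A secondary point is handling the degenerate cases where an $S$-block is at the very start or end of $\pi$ (so there is no preceding or following outside-node to produce a clause-(b) or clause-(c) edge): these cause no problem, since then $\pi'$ simply starts or ends with $n$ and there is one fewer consecutive pair to verify. The argument requires no induction beyond walking along the finitely many consecutive pairs of $\pi'$, so no genuine obstacle arises; the work is purely in organising the block decomposition cleanly.
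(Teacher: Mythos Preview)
Your proposal is correct and follows essentially the same approach as the paper's proof: both arguments hinge on the three clauses defining $F_{M'}$ and the observation that the first (respectively last) node of a maximal $S$-segment lies in $I_S$ (respectively $O_S$) by the definition of $M[S]$. The paper organises the argument edge-by-edge (mapping each edge of the original path to an edge or one-node path in $M'$ and then concatenating), whereas you organise it block-by-block, but the content is the same.
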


\begin{proof}
Clearly all edges between nodes outside $S$ are not affected by the contraction. What remains to consider are the following possibilities: (i) an edge from outside $S$ to inside $S$, (ii) an edge from inside $S$ to outside $S$ and (iii) an edge from inside $S$ to inside $S$.
\begin{enumerate}[(i)]

  \item If there is in $M$ an edge $(n_1, n_2)$ from outside $S$ to inside $S$, then by the definition of $M[S]$ the node $n_2$ is an input node of $M[S]$, and so by the definition of contraction there is in $M'$ an edge from $n_1$ to $n$.
  
  \item By a similar argument it holds that if there is in $M$ an edge $(n_1, n_2)$ from inside $S$ to outside $S$, then there is in $M'$ an edge from $n$ to $n_2$.
  
  \item For every edge $(n_1, n_2)$ from inside $S$ to inside $S$ there is the corresponding trivial one-node path $(n)$ in $M'$. 

\end{enumerate}
It follows that for each edge in $M$ there is a corresponding path in $M'$ such that if two edges in $M$ have the same begin node or end node, then the corresponding paths also have the same begin node or end node, respectively. Therefore, we can concatenate the corresponding paths in the order of the original edges in the path in $M$, and obtain a path in $M'$.
\end{proof}

With the help of this lemma it can be shown that the result of a contraction of a subnet in a pWF net (tWF net) is always a pWF net (tWF net).

\begin{thm}[correctness of the result of contraction] \label{thm:contraction-correctness}
If $M$ is a pWF net (tWF net), then the result $M'$ of contracting a subnet $S$ of $M$ into a node $n$ is again a a pWF net (tWF net).
\end{thm}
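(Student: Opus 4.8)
The plan is to verify the three defining properties of a pWF net (resp.\ tWF net) for $M'$: that $(P_{M'},T_{M'},F_{M'})$ is a Petri net with disjoint place and transition sets, that $M'$ is I/O consistent, and that $M'$ is well-connected. The I/O type is clear from the construction: the new node $n$ is a place exactly when $M[S]$ is a pWF net, and all other nodes retain their type from $M$, so since $M$ was a pWF net (tWF net) and we only removed nodes and added a single node of the same I/O type as the interface, $M'$ is again a pWF net (tWF net) provided it is a WF net. By Theorem~\ref{thm:subnet-wf-char}-style reasoning (more precisely, by the definition of WF net as an I/O-consistent well-connected I/O net), it suffices to check the three items.

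First I would dispatch the Petri-net axioms and I/O consistency. Disjointness $P_{M'} \cap T_{M'} = \emptyset$ is immediate since we remove the nodes of $S$ from both sets and add $n$ to exactly one of them. For $F_{M'} \subseteq (P_{M'} \times T_{M'}) \cup (T_{M'} \times P_{M'})$, I would go through the three clauses in the definition of $F_{M'}$: edges wholly outside $S$ already respect the bipartition; for an edge $(n_1,n)$ coming from an edge $(n_1,n_2)$ with $n_2 \in I_S$, the node $n_2$ has the I/O type of $M[S]$, hence the same type as $n$, and since $(n_1,n_2)$ respected the bipartition in $M$, so does $(n_1,n)$ in $M'$; the outgoing case is symmetric. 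I/O consistency follows from the observation that $I_{M'} \cup O_{M'} \subseteq (I_M \cup O_M \setminus S) \cup \{n\}$, all of which are places (resp.\ transitions), because $M$ was I/O consistent and $n$ has the appropriate type.

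The main obstacle — and the reason Lemma~\ref{lem:contraction_keeps_paths} was proved first — is well-connectedness. For reachability of every node of $M'$ from an input node: take any node $x \in P_{M'} \cup T_{M'}$. If $x = n$, pick any node $s \in S$ and use a path in $M$ from an input node of $M$ to $s$; by Lemma~\ref{lem:contraction_keeps_paths} this yields a path in $M'$ from the image of that input node (either an unchanged input node of $M$, or $n$ itself if $S$ contained it, in which case $x=n$ is trivially reachable) ending in $n$. If $x \neq n$, then $x$ was a node of $M$ outside $S$, and a path in $M$ from an input node to $x$ maps, again by Lemma~\ref{lem:contraction_keeps_paths}, to a path in $M'$ ending at $x$ (the last node is unchanged since $x \notin S$) starting from an input node of $M'$ (an input node of $M$ maps either to itself or to $n$, and both are in $I_{M'}$). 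The argument that every node of $M'$ reaches an output node of $M'$ is the mirror image, using the reversed statement of the lemma. I would then conclude that $M'$ is an I/O-consistent, well-connected I/O net, i.e.\ a WF net, of the correct I/O type, which is exactly the claim.
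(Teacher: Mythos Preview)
Your approach is essentially the same as the paper's: verify the Petri-net axioms, I/O consistency, well-connectedness, and preservation of I/O type, with Lemma~\ref{lem:contraction_keeps_paths} driving the well-connectedness argument. The well-connectedness part and the bipartition check are fine and match the paper's cases (v)/(vi) and (ii) closely.

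There is, however, a genuine gap in your I/O-consistency argument. You write that ``$n$ has the appropriate type,'' but this is not true in general: if $M$ is a pWF net and $M[S]$ is a tWF net, then $n$ is a \emph{transition}, not a place. What actually saves the argument is that in this mixed case $n$ cannot enter $I_{M'}\cup O_{M'}$ at all: if $S\cap I_M\neq\emptyset$, pick $p\in S\cap I_M$; then $p\in I_S$ by the definition of $M[S]$, forcing $p$ to be a transition (since $M[S]$ is tWF), contradicting $p\in I_M\subseteq P_M$. The paper makes exactly this case split explicit (its item (iv)), and without it your sentence ``all of which are places (resp.\ transitions)'' is unjustified. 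Your opening paragraph compounds the confusion by suggesting that $n$ always carries the I/O type of $M$, which it does not. You also skip the check that $I_{M'}$ and $O_{M'}$ are nonempty (the paper's item (iii)); this is easy, but it is part of being an I/O net and should be stated.
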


\begin{proof}
In order for $M'=( P_{M'},T_{M'},F_{M'},I_{M'}, O_{M'})$ to be a WF net the following must be shown. First, the defining properties of a Petri net: (i) $P_{M'} \cap T_{M'} = \emptyset$, (ii) $F_{M'} \subseteq (T_{M'} \times P_{M'} ) \cup (P_{M'} \times T_{M'})$. Furthermore, the defining properties of a WF net: (iii) $I_{M'} \neq \emptyset$ and $O_{M'} \neq \emptyset$, (iv) $I_{M'} \cup O_{M'} \subseteq P_{M'}$ or $I_{M'} \cup O_{M'} \subseteq T_{M'}$, (v) for every node in $P_{M'} \cup T_{M'}$ there is a path to that node from a node in $I_{M'}$, and (vi) for every node in $P_{M'} \cup T_{M'}$ there is a path from that node to a node in $O_{M'}$. Finally, it can concluded that (vii) the I/O type of $M$ is the same as that of $M'$.
\begin{enumerate}[(i)]

  \item It holds for nodes unequal to $n$, because it holds for $M$, and except for the new node $n$ it holds that $P_{M'}$ and $T_{M'}$ are subsets of $P_M$ and $T_M$, respectively. It also holds for the new node $n$, since it is either added to $P_{M'}$ or $T_{M'}$.

  \item Since this holds for $M$, and nodes outside $S$ and edges between them are not removed, it follows that edges from $M$ that are also present in $M'$ are between nodes in $P_{M'}$ and $T_{M'}$. For an edge in $M'$ between $n$ and another node $n'$ we can argue as follows. There must have been an edge in $M$ between a node $n''$ in $S$ and $n'$. In that case $n''$ must have been an input or output node of $M[S]$, and therefore of the same type as $n$, from which it follows that $n'$ is of a different type as $n$. 

  \item We first consider $I_{M'}$. In the case where $S \cap I_M = \emptyset$, the property holds since it holds for $M$. In the case where $S \cap I_M \neq \emptyset$ there will at least be $n$ in $I_{M'}$.
  
  The proof for $O_{M'}$ proceeds similarly.  

  \item We can show that if $M$ is pWF net, then it holds that $I_{M'}, O_{M'} \subseteq P_{M'}$, and if $M$ is a tWF net then it holds that $I_{M'}, O_{M'} \subseteq T_{M'}$. We consider the two cases:
   
     \begin{description}
     
       \item[$M$ is a pWF net:] We first derive that $I_{M'} \subseteq P_{M'}$ and for this we consider the cases (a) $M[S]$ is a pWF net and (b) $M[S]$ is a tWF net:
       
       \begin{enumerate}[(a)]
        
          \item If $M[S]$ is a pWF net, the argument goes as follows.
            \begin{itemize}

              \item  If $S \cap I_{M} = \emptyset$ then clearly $I_{M'} = I_M \subseteq (P_M \setminus S) \cup \{ n \} = P_{M'}$ since from $I_M \subseteq P_M$ and $S \cap I_{M} = \emptyset$ it follows $I_M \subseteq P_M \setminus S$.

              \item If, on the other hand,  $S \cap I_{M} \neq \emptyset$ then clearly $I_{M'} = (I_M \setminus S) \cup \{ n \} \subseteq (P_M \setminus S) \cup \{ n \} = P_{M'}$ since $I_M \subseteq P_M$.

            \end{itemize}
            
          \item If$M[S]$ is a tWF net, the argument goes as follows.    
            \begin{itemize}

              \item  If $S \cap I_{M} = \emptyset$ then clearly $I_{M'} = I_M \subseteq P_M \setminus S = P_{M'}$ since from $I_M \subseteq P_M$ and $S \cap I_{M} = \emptyset$ it follows $I_M \subseteq P_M \setminus S$. 

              \item If, on the other hand,  $S \cap I_{M} \neq \emptyset$, then we get a contradiction since a node $n' \in S \cap I_{M}$ would be by definition in $I_S$ and therefore a transition, since $M[S]$ is a tWF net, as well as a place in $M$, since $M$ is a pWF net. 

            \end{itemize}

        \end{enumerate}

       By a similar argument, where $I_M$ and $I_{M'}$ are replaced with $O_M$ and $O_{M'}$, respectively, it can be shown that $O_{M'} \subseteq P_{M'}$.
   
     \item[$M$ is a tWF net:] Proceeds similar as in the previous case, except that the cases (a) and (b) are swapped, i.e., the argument of (a) is used for the case where $M[S]$ is a tWF net, and the argument (b) is used where $M[S]$ is a pWF net.
   
   \end{description}
     
 \item A node in $M'$ is either a (a) a node in $M$ or (b) node $n$. In case (a) there would have been in $M$ a path from an input node of $M$ to that node. By Lemma~\ref{lem:contraction_keeps_paths} a similar path exists in $M'$. Moreover, the start node of the similar path will also be an input node of $M'$, since it either is a node in $M$, in which case it will still be an input node of $M'$, or it is replaced with $n$, in which case $n$ will be an input node of $M'$. In case (b) there would have been in $M$ a path from an input node of $M$ to an input node of $M[S]$. By Lemma~\ref{lem:contraction_keeps_paths} a similar path exists in $M'$ which ends in $n$. Also in this case it holds that the begin node of this path is an input node of $M'$.

 \item Is similar to case (v), but with the directions of paths and edges reversed, and the roles of set of input nodes and output nodes interchanged.
 
 
 \item This case follows immediately by the reasoning in (iv).
  
\end{enumerate}
\end{proof}

During the AND-OR reduction we will search for non-empty subset of nodes $S$ in a workflow net $M$ such that their associated net $M[S]$ is in a basic AND-OR class and contract them into a single new node. Such a contraction should be the inverse of a substitution of this new node with $M[S]$ in the sense that if we apply the contraction and then the substitution we should again have the same WF net. However, applying that substitution may not give back $M$ if, for example, not all input nodes of $M[S]$ have in $M$ the same incoming edges from outside $S$, since that is a property that always holds after a substitution. Therefore we define the notion of \emph{well-nestedness} for subnets that captures the properties of input and output nodes that guarantee applying the substitution after contraction always gives back $M$.

\begin{defn}[Well-nested subnet]
A subnet $S$ is said to be well-nested in $M$ if it holds that:
\begin{enumerate}
  \item for any two input nodes $n_1, n_2$ in $M[S]$, i.e., any two nodes $n_1, n_2 \in I_S$, it holds that
    \begin{enumerate}
      \item for every $n_3 \not\in N$ it holds that $(n_3, n_1) \in F_M$ if $(n_3, n_2) \in F_M$ 
      \item $n_1 \in I_M$ if $n_2 \in I_M$
    \end{enumerate}
  \item for any two output nodes $n_1, n_2$ in $M[S]$, i.e., any two nodes $n_1, n_2 \in O_S$, it holds that
    \begin{enumerate}
      \item for every $n_3 \not\in N$ it holds that $(n_1, n_3) \in F_M$ if $(n_2, n_3) \in F_M$ 
      \item $n_1 \in O_M$ if $n_2 \in O_M$
    \end{enumerate}    
\end{enumerate}
\end{defn}
A subnet that is well-nested, and where the associated net $M[S]$ is WF net (pWF, tWF) net will be called a \emph{well-nested WF (pWF, tWF) net}.

%

It is then not hard to see that contractions of well-nested subnets are the inverse of substitutions. To be more precise it holds that if a contraction is followed by a substitution that replaces the new node with the WF net associated with the contracted well-nested subnet,
then the result is the original WF net. Vice versa, it holds that if a substitution is followed by a contraction of the substituted well-nested net, we again obtain the original WF net, up to the choice of the identity of the new node.

\begin{defn}[AND-OR contractible]
We will call a well-nested WF net in $M$ \emph{AND-OR contractible} if (1) it is a well-nested WF
net and (2) is of a basic AND-OR class.
\end{defn}

Recall from Section~\ref{sect:AND-OR-nets} that the basic AND-OR classes are pAND,
11tAND, 11pOR and tOR. Based on this we define the AND-OR contraction relation which will form the basis of the reduction process.

\begin{defn}[AND-OR contraction relation]
The \emph{AND-OR contraction relation} is the binary relation $\contr$ over WF nets such that $M \contr M'$ expresses that there is an AND-OR contractible WF net $N$ in $M$ and that the contraction of $N$ in $M$ can result in $M'$. The reflexive and transitive closure of $\contr$ is denoted as $\contr^*$.  
\end{defn}

An interesting property of the contraction relation is that it has the \emph{local confluence} property, as stated by the following theorem.

\begin{thm}[local confluence of the AND-OR contraction relation] \label{thm:local-confluence}
 For all WF nets $M$, $M_1$ and $M_2$ it holds that if $M \contr M_1$ and $M \contr M_2$, then there is a WF net $M_3$ such that $M_1 \contr^* M_3$ and $M_2 \contr^* M_3$.
\end{thm}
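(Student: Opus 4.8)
The plan is to perform a case analysis on how the node sets $S_1$ and $S_2$ of the two AND-OR contractible subnets $N_1 = M[S_1]$ and $N_2 = M[S_2]$ — whose contractions produce $M_1$ and $M_2$ — sit relative to each other. There are four situations: (0) $S_1 = S_2$, which is immediate since the two results differ only in the identity of the new node and $\contr$ is already defined up to that choice; (a) $S_1 \cap S_2 = \emptyset$; (b) proper containment, say $S_2 \subsetneq S_1$; and (c) proper overlap, i.e.\ $S_1 \cap S_2 \neq \emptyset$ with neither set contained in the other. Throughout, I will exploit that $\contr$ lets us pick the identity of each freshly introduced node, so that in each case the two derivations can be made to terminate in literally the same net $M_3$ (rather than merely isomorphic nets).

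In case (a) the idea is that contracting $S_1$ does not disturb $S_2$. The internal structure of $M_1[S_2]$ equals that of $M[S_2]$, its interface $I_{S_2}, O_{S_2}$ is unchanged (an edge leaving $S_1$ is merely reattached to the new node $n_1$, which is still outside $S_2$, so ``has a neighbour outside $S_2$'' is preserved), and well-nestedness of $S_2$ survives: well-nestedness of $S_1$ forces all its output nodes to have the same outgoing edges, and well-nestedness of $S_2$ forces all its input nodes to have the same incoming edges from any given node outside $S_2$, so the collapse of several parallel $S_1$-to-$S_2$ edges into one $n_1$-to-$S_2$ edge cannot break condition~1 of well-nestedness (and dually for the output side). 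Hence $S_2$ is still AND-OR contractible in $M_1$ and, symmetrically, $S_1$ is still AND-OR contractible in $M_2$; one further $\contr$ step on each side reaches the common $M_3$, which can also be read off directly from the commutation identities for substitutions recalled earlier (contraction being the inverse of substitution for well-nested subnets).

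In case (b) I would first show that $N_2 = M[S_2]$, being well-nested in $M$ and lying inside the basic-class net $N_1$, is also a well-nested subnet of $N_1$ with $N_1[S_2] \cong M[S_2]$. The crucial auxiliary lemma is then that each basic AND-OR class (pAND, 11tAND, 11pOR, tOR) is closed under contracting a well-nested subnet of the appropriate I/O type into a single node: for the AND classes this uses preservation of acyclicity (in an acyclic net there is no path from an output node of $S_2$ back to an input node of $S_2$, so no cycle is created, by Lemma~\ref{lem:contraction_keeps_paths}), and for all classes a local edge-count at the new node together with the interface-uniformity given by well-nestedness. Granting this, after contracting $S_2$ in $M$ the image $(S_1 \setminus S_2) \cup \{ n_2 \}$ is again AND-OR contractible in $M_2$, and contracting it yields a net that — for a suitable choice of node identities — is exactly $M_1$. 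So here $M_3 = M_1$, with $M_1 \contr^* M_1$ trivially and $M_2 \contr M_1$.

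The hard part is case (c), and I expect essentially the entire separate Section~\ref{sect:confluence-proof} to be devoted to it. Here one must use the concrete combinatorics of the four basic classes to show that proper overlap is severely constrained: that $S_1 \cap S_2$, $S_1 \setminus S_2$, $S_2 \setminus S_1$ and $S_1 \cup S_2$ all induce well-connected nets with controlled interfaces; that the classes and I/O types of $N_1$ and $N_2$ must be mutually compatible where they overlap; and ultimately that $S_1 \cup S_2$ (or a small modification of it) is itself AND-OR contractible, so that both $M_1$ and $M_2$ reduce — generally in several $\contr$ steps, which is exactly why the theorem is stated with $\contr^*$ — to the net obtained by contracting $S_1 \cup S_2$. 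Turning this into a rigorous argument seems to demand a meticulous sub-case analysis indexed by the pair of basic classes and by which nodes of $N_1$ and $N_2$ are interface nodes, and that bookkeeping is where I anticipate all the real difficulty to lie.
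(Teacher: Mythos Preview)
Your overall case decomposition is close to the paper's, and cases (0), (a), (b) match the paper's treatment well (case (b) is exactly the paper's case (D), and the closure lemma you describe is precisely what the paper proves via Lemmas~\ref{lem:well-nested-D}--\ref{lem:contractibility-D}). Two substantive points deserve attention.

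First, your case (c) contains a real gap. The paper splits proper overlap into two sub-cases according to whether $M[S_1]$ and $M[S_2]$ have the \emph{same} I/O type (case (C)) or \emph{different} I/O types (case (B)), and the resolutions are quite different. When the I/O types differ, $M[S_1 \cup S_2]$ is not even I/O consistent, so $S_1 \cup S_2$ is not a WF subnet and certainly not AND-OR contractible; your proposed common target ``the net obtained by contracting $S_1 \cup S_2$'' does not exist. What the paper does instead is show that after contracting $S_1$ into $n_1$, the \emph{remainder} $S_2 \setminus S_1$ (not including $n_1$) is still AND-OR contractible of the same basic class as $M[S_2]$ (Lemmas~\ref{lem:well-nested-B}--\ref{lem:contractibility-B}); contracting it into $n_2$ gives a net with two adjacent nodes $n_1, n_2$, and commutativity (Lemma~\ref{lem:commutativity-B}) shows the same net is reached from the other side. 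Only in the same-I/O-type case does one contract $(S_2 \setminus S_1) \cup \{n_1\}$ into a single node $n_3$, which is closer to your ``contract the union'' intuition. Both overlap cases need only one further $\contr$ step, not several.

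Second, the paper introduces a simplification you do not use: it first proves local confluence only for \emph{internal} contractions (subnets avoiding $I_M \cup O_M$), and then lifts this to arbitrary contractions via place/transition completion (Lemma~\ref{lem:simulation_with_internal}). This lets the entire overlap analysis ignore the interaction with $I_M$ and $O_M$, which otherwise complicates the well-nestedness bookkeeping. Your direct attack without this reduction is not wrong in principle, but you would have to carry the ``is in $I_M$ iff \ldots'' clauses of well-nestedness through every sub-case, which the paper avoids.
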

Proof of this theorem will be given in Section~\ref{sect:confluence-proof}.

However, if we consider $\contr$ as the basis of a rewriting procedure where we contract contractible subnets until no more such nets can be found, we have the problem that this procedure will never terminate. This is because every node in a WF net is by itself a contractible subnet. Of course, in that case the contraction does not actually change the WF net, except that it replaces the node with a new node. This could be solved by restricting ourselves to contractions of non-trivial subnets. However, in that case we lose the property of local (and global) confluence, since the rewriting procedure could pick different node identifiers when it contracts a certain subnet and thereby end up with different nets. However, except for the choice of the node identifiers, these nets would be the same, and in that sense the procedure would in fact be confluent. To show this formally we define another contraction relation that holds between equivalence classes of WF nets containing isomorphic WF nets. Here we define WF nets as isomorphic if they are identical up to the identity of the nodes.

\begin{defn}[net isomorphism] 
Given WF nets $M = (P_{M},T_{M},F_{M},I_{M},O_{M})$ and $M' = (P_{M'},T_{M'},F_{M'},I_{M'},O_{M'})$, we say that a function 
$h : (P_M \cup T_M) \to (P_{M'} \cup T_{M'})$  is an \emph{isomorphism} from $M$ to $M'$, denoted as $M \sim_h M'$, if it holds that
\begin{itemize}

\item $P_{M'} = \{ h(n) \mid n \in P_M \}$, $T_{M'} = \{ h(n) \mid n \in T_M \}$, 

\item $F_{M'} = \{ (h(n_1), h(n_2) ) \mid (n_1, n_2) \in F_M \}$, 

\item $I_{M'} = \{ h(n) \mid n \in I_M \}$ and $O_{M'} = \{ h(n) \mid n \in O_M \}$. 

\end{itemize}
We say that $M_1$ and $M_2$ are \emph{isomorphic}, written as $M_1 \sim M_2$, if for some $h$ it holds that $M_1 {\sim_h} M_2$.
\end{defn}

It is easy to verify that the isomorphism relation is an equivalence relation, and so can be used to define equivalence classes. We will let $[M]$ denote the class of WF nets isomorphic to a WF net $M$. The AND-OR contraction relation can be adapted to the equivalence classes of an isomorphism.

\begin{defn}[class-based AND-OR contraction relation]
The \emph{class-based AND-OR contraction relation} is the binary relation $\clcontr$ over WF nets such that $[M] \clcontr [M']$ expresses that (1) $[M] \neq [M']$ and (2) there are WF nets $M_1 \in [M]$ and $M_2 \in [M']$, such that $M_1 \contr M_2$.  The reflexive and transitive closure of $\clcontr$ is denoted as $\clcontr^*$.  
\end{defn}

Note that the class-based contraction relation is irreflexive, and therefore does not consider contractions of one-node WF nets, which always result in an isomorphic net. So only contractions of subnets with two or more nodes are considered, which necessarily produces a WF net that is not isomorphic, since it strictly reduces the number of nodes. It follows that $\clcontr$ is \emph{well-founded}.

\begin{prop}[well-foundedness of class-based contraction] \label{prop:well-founded}
	The relation $\clcontr$ has no infinite sequences $[M_1] \clcontr [M_2] \clcontr [M_3]  \clcontr \ldots$.
\end{prop}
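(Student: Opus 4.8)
The plan is to show that each application of $\clcontr$ strictly decreases a natural-number quantity, namely the total number of nodes $|P_M \cup T_M|$ of (any representative of) the equivalence class. First I would observe that this quantity is well-defined on isomorphism classes: if $M \sim_h M'$ then $h$ is a bijection between the node sets, so $|P_M \cup T_M| = |P_{M'} \cup T_{M'}|$, and we may write $\|[M]\|$ for this common value.

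Next I would argue that $[M] \clcontr [M']$ implies $\|[M']\| < \|[M]\|$. By definition of $\clcontr$ there are representatives $M_1 \in [M]$ and $M_2 \in [M']$ with $M_1 \contr M_2$, so $M_2$ is obtained by contracting an AND-OR contractible subnet $S$ of $M_1$ into a single new node $n$. The key step is to note that $S$ must have at least two nodes: if $S$ were trivial (a single node), the contraction would merely rename that node, yielding a net isomorphic to $M_1$, hence $[M_2] = [M_1]$, contradicting condition (1) in the definition of $\clcontr$, which requires $[M] \neq [M']$. So $|S| \geq 2$. By the definition of contraction, the node set of $M_2$ is $(P_{M_1} \cup T_{M_1}) \setminus S$ together with the single new node $n$; hence $\|[M_2]\| = \|[M_1]\| - |S| + 1 \leq \|[M_1]\| - 1 < \|[M_1]\|$. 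Since $\|[M_1]\| = \|[M]\|$ and $\|[M_2]\| = \|[M']\|$, we conclude $\|[M']\| < \|[M]\|$.

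Finally I would conclude well-foundedness: an infinite chain $[M_1] \clcontr [M_2] \clcontr [M_3] \clcontr \ldots$ would induce an infinite strictly decreasing sequence of natural numbers $\|[M_1]\| > \|[M_2]\| > \|[M_3]\| > \cdots$, which is impossible. Hence no such infinite chain exists.

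I do not expect any serious obstacle here; the only point requiring a little care is the one flagged in the paper's own discussion — that trivial (one-node) contractions are exactly the ones excluded by the irreflexivity clause of $\clcontr$, so that every genuine step of $\clcontr$ strictly removes at least one node. Everything else is the standard observation that a strictly decreasing sequence in $\mathbb{N}$ must be finite.
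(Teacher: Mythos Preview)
Your proof is correct and follows exactly the argument the paper gives in the paragraph preceding the proposition: the irreflexivity clause in the definition of $\clcontr$ rules out one-node contractions (which produce isomorphic nets), so every step contracts at least two nodes and hence strictly decreases the node count, from which well-foundedness follows by descent in $\mathbb{N}$. The paper does not give a formal proof of this proposition, so your write-up is simply a careful expansion of its informal justification.
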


Based on the class-based contraction relation we can then define the notion of AND-OR reduction. The well-foundedness of class-based contraction ensures that a procedure that tries to determine a class-based AND-OR reduction by contracting contractible non-trivial subnets of basic AND-OR types, will always terminate.

\begin{defn}[class-based AND-OR reduction]
 A \emph{class-based AND-OR reduction} of a WF net $M$ is a WF net $M'$ such that (1) $[M] \clcontr^* [M']$ and (2) there is no WF net $M''$ such that $[M'] \clcontr [M'']$.
\end{defn}

Another observation that can be made is that the definition of contraction is generic with respect to node identity, i.e., it compares nodes to each other but never to a specific value. It follows that the contraction relation commutes with the isomorphism relation.

\begin{lem}[genericity of contraction] \label{lem:contract-genericity}
If $M_1 \contr M_2$ and $M_1 \sim M'_1$, then there is a WF net $M'_2$ such that $M_2 \sim M'_2$ and $M'_1 \contr M'_2$.
\end{lem}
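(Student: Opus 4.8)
The plan is to transport the contraction witnessed on the left-hand side across the isomorphism $h$. Suppose $M_1 \contr M_2$ is witnessed by a non-empty node set $S$ of $M_1$ such that $M_1[S]$ is AND-OR contractible and $M_2$ is obtained by contracting $S$ into a new node $n$, and suppose $M_1 \sim_h M'_1$. First I would set $S' = \{ h(x) \mid x \in S \}$ and argue that the restriction $h|_S$ is an isomorphism from $M_1[S]$ to $M'_1[S']$. Since $h$ is a bijection preserving node types, edges, and the sets $I_{M_1}, O_{M_1}$, it maps the Petri-net part of $M_1[S]$ isomorphically onto that of $M'_1[S']$; the only mildly delicate point is the induced interface sets, but a node of $S$ lies in $I_S$ iff it is in $I_{M_1}$ or has an incoming edge in $F_{M_1}$ from outside $S$, and both conditions are transported verbatim by $h$ because $h$ is a bijection carrying $F_{M_1}$ onto $F_{M'_1}$ and $I_{M_1}$ onto $I_{M'_1}$ (and likewise for $O_S$). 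Hence $M_1[S] \sim_{h|_S} M'_1[S']$, so in particular $I_{S'} = h(I_S)$ and $O_{S'} = h(O_S)$.

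Next I would observe that AND-OR contractibility is invariant under isomorphism. Being in one of the basic AND-OR classes $\mathbf{pAND}$, $\mathbf{11tAND}$, $\mathbf{11pOR}$, $\mathbf{tOR}$ is defined purely in terms of node types, preset and postset sizes, acyclicity, and the number of interface nodes, all of which are preserved by an isomorphism; so $M'_1[S']$ is in the same basic AND-OR class as $M_1[S]$. Similarly, the four well-nestedness conditions are stated entirely through membership of edges in $F_M$ and of nodes in $I_M$ and $O_M$, quantified over the nodes of $I_S$, $O_S$ and the complement of $S$; since $h$ is a bijection respecting all of these and $S' = h(S)$, $I_{S'} = h(I_S)$, $O_{S'} = h(O_S)$, well-nestedness of $S$ in $M_1$ transfers to well-nestedness of $S'$ in $M'_1$. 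Therefore $S'$ is AND-OR contractible in $M'_1$.

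Now pick a node $n'$ that does not occur in $M'_1$ and has the same type (place or transition) as $n$, which is possible because $M_1[S]$ and $M'_1[S']$ have the same I/O type, and let $M'_2$ be the result of contracting $S'$ into $n'$; by the definition of $\contr$ we have $M'_1 \contr M'_2$, and by Theorem~\ref{thm:contraction-correctness} $M'_2$ is a WF net. It remains to exhibit an isomorphism $g$ from $M_2$ to $M'_2$. I would define $g(n) = n'$ and $g(x) = h(x)$ for every node $x$ of $M_2$ different from $n$, i.e. every $x \in (P_{M_1} \cup T_{M_1}) \setminus S$, and verify the five defining clauses of an isomorphism. Bijectivity and the place/transition clauses are immediate from those for $h$ together with the type choice for $n'$; the edge clause is the real check, splitting the edges of $M_2$ into those between two nodes outside $S$ (handled by $h$), those of the form $(x, n)$, which by the definition of contraction come from an edge $(x, y)$ of $M_1$ with $y \in I_S$, hence from $(h(x), h(y))$ with $h(y) \in I_{S'}$, hence yield $(h(x), n')$ in $M'_2$, and symmetrically those of the form $(n, x)$; and the $I$/$O$ clauses follow because $n \in I_{M_2}$ iff $S \cap I_{M_1} \neq \emptyset$ iff $S' \cap I_{M'_1} \neq \emptyset$ iff $n' \in I_{M'_2}$, using again that $h$ carries $I_{M_1}$ bijectively onto $I_{M'_1}$, and analogously for output sets.

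The main obstacle is purely organisational: carefully matching the case distinctions in the definition of contraction with the clauses of the isomorphism definition and keeping track of the induced interface sets $I_S$, $O_S$ of the subnet. No essentially new idea is needed beyond the observation that every construction in this section inspects node identities only through equality, so all of them commute with isomorphism.
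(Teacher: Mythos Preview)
Your proposal is correct and follows essentially the same approach as the paper: transport the contracted subnet $S$ across the isomorphism to $S' = h(S)$, observe that contractibility and well-nestedness are isomorphism-invariant, and extend $h$ by mapping the new node $n$ to the new node $n'$. The paper's proof is much terser (essentially just ``it is not hard to see'' for the contractibility transfer, plus the description of the extended homomorphism), whereas you have carefully spelled out the verification of the interface sets, the basic-class membership, well-nestedness, and the edge-by-edge check of the resulting isomorphism; but the underlying argument is identical.
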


\begin{proof}
Let us assume that $M_1 \contr M_2$ and $M_1 \sim_h M'_1$. Then it is not hard to see that if a subnet $S$ is contractible in $M_1$, the subset $M'_1[S']$ where $S' = \{ h(n) \mid n \in S \}$ is contractible in $M'_1$. Moreover, the contraction of this subnet will be isomorphic to the result of the contraction of $S$, where the homomorphism is identical to $h$, except that it maps the node that $M_1[S]$ was contracted into, to the node that $M'_1[S']$ is contracted into. 
\end{proof}

A consequence of this is that the local confluence property is inherited from the normal contraction relation.

\begin{thm}[local confluence of the class-based AND-OR contraction relation] \label{thm:class-based-local-confluence}
 For all WF nets $M$, $M_1$ and $M_2$ it holds that if $[M] \contr [M_1]$ and $[M] \contr [M_2]$, then there is a WF net $M_3$ such that $[M_1] \contr^* [M_3]$ and $[M_2] \contr^* [M_3]$.
\end{thm}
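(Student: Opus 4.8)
The plan is to derive Theorem~\ref{thm:class-based-local-confluence} by lifting the local confluence of the ordinary contraction relation (Theorem~\ref{thm:local-confluence}) to equivalence classes, using the genericity lemma (Lemma~\ref{lem:contract-genericity}) as the main bridge. First I would unfold the hypotheses: from $[M]\clcontr[M_1]$ there are, by definition of $\clcontr$, WF nets $A_0\in[M]$ and $A_1\in[M_1]$ with $A_0\contr A_1$; likewise there are $B_0\in[M]$ and $B_1\in[M_2]$ with $B_0\contr B_1$. Since $A_0$ and $B_0$ are both in $[M]$, they are isomorphic, so Lemma~\ref{lem:contract-genericity} applied to $B_0\contr B_1$ together with $B_0\sim A_0$ yields a WF net $B_1'$ with $B_1\sim B_1'$ (hence $B_1'\in[M_2]$) and $A_0\contr B_1'$. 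Now $A_0\contr A_1$ and $A_0\contr B_1'$ are two ordinary contractions from a common net, so Theorem~\ref{thm:local-confluence} supplies a WF net $C$ with $A_1\contr^* C$ and $B_1'\contr^* C$.

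Next I would push these $\contr^*$-chains down to the level of classes. Each single step $X\contr Y$ with $[X]\neq[Y]$ gives $[X]\clcontr[Y]$ directly, and each step $X\contr Y$ with $[X]=[Y]$ can simply be skipped, since $\clcontr^*$ is reflexive; concatenating the surviving steps shows $A_1\contr^* C$ implies $[A_1]\clcontr^*[C]$, and similarly $[B_1']\clcontr^*[C]$. Because $[A_1]=[M_1]$ and $[B_1']=[M_2]$, we obtain $[M_1]\clcontr^*[C]$ and $[M_2]\clcontr^*[C]$, and taking $M_3 := C$ finishes the argument. (A small bookkeeping point: the statement as written uses $\contr^*$ rather than $\clcontr^*$ on the two conclusions; I would either read $\contr^*$ here as the class-level relation or, to be safe, phrase the conclusion with $\clcontr^*$, matching the definition of class-based AND-OR reduction that this theorem is meant to support.)

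The only genuinely delicate point is the application of Lemma~\ref{lem:contract-genericity} to realign the two witnesses over a common net: one must be careful that the lemma transports a \emph{single} contraction step along an isomorphism, not a whole reduction, which is exactly why the argument first moves $B_0\contr B_1$ over to $A_0$ and only then invokes local confluence — invoking local confluence on $A_1$ and $B_1$ directly would not work because they need not share a net. Everything else is routine: the transitive-closure bookkeeping, and the observation that dropping identity-preserving steps is harmless. I expect no further obstacles, since the heavy lifting — local confluence of $\contr$ — is deferred to Section~\ref{sect:confluence-proof} and may be assumed here.
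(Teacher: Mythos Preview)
Your proposal is correct and follows essentially the same route as the paper's proof: unfold the two class-based contractions to witnesses, use Lemma~\ref{lem:contract-genericity} to transport one contraction so that both start from a common net, apply Theorem~\ref{thm:local-confluence}, and then lift the resulting $\contr^*$-chains back to classes by induction on their length. Your explicit handling of isomorphism-preserving steps and your remark about the $\contr^*$ versus $\clcontr^*$ notation are fair observations; the paper glosses over both but the argument is the same.
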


\begin{proof}
Let us assume that $[M] \contr [M_1]$ and $[M] \contr [M_2]$. Then by the definition of class-based contraction there are WF nets $M'$ and $M'_1$ such that $M' \sim M$, $M'_1 \sim M_1$ and $M' \contr M'_1$, and that there  are WF nets $M''$ and $M''_2$ such that $M'' \sim M$, $M''_2 \sim M_2$ and $M'' \contr M''_2$. Since $M' \sim M$ and $M'' \sim M$, it follows that $M' \sim M''$. It then follows by Lemma~\ref{lem:contract-genericity}, from $M' \sim M''$ and $M'' \contr M''_2$, that there is a WF net $M'_2$ such that $M'_2 \sim M''_2$ and $M' \contr M'_2$. It then follows by Theorem~\ref{thm:local-confluence}, from $M' \contr M'_1$ and $M' \contr M'_2$,  that there is a WF net $M_3$ such that $M'_1 \contr^* M_3$ and $M'_2 \contr^* M_3$. By induction on the number of contractions, it follows from the definition of class-based contractions, that $[M'_1] \contr^* [M_3]$ and $[M'_2] \contr^* [M_3]$. Since $M_1 \sim M'_1$ and $M_2 \sim M'_2$, it holds that $[M'_1] = [M_1]$ and $[M'_2] = [M_2]$, and so $[M_1] \contr^* [M_3]$ and $[M_2] \contr^* [M_3]$. 
\end{proof}

Since the class-based contraction relation is well-founded, as stated by Property~\ref{prop:well-founded},
it follows by \emph{Newman's Lemma} (sometimes also called the \emph{diamond Lemma}, see \cite{Huet:1980}) that the procedure is globally confluent.

\begin{thm}[global confluence of the class-based AND-OR contraction relation] \label{thm:class-based-global-confluence}
For all WF nets $M$, $M_1$ and $M_2$ it holds that if $[M] \contr^* [M_1]$ and $[M] \contr^* [M_2]$, then there is a WF net $M_3$ such that $[M_1] \contr^* [M_3]$ and $[M_2] \contr^* [M_3]$.
\end{thm}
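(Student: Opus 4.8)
The plan is to derive this as an immediate instance of \emph{Newman's Lemma}: a well-founded relation that is locally confluent is globally confluent. For the relation $\clcontr$ on isomorphism classes of WF nets, both hypotheses are already in place --- well-foundedness is Proposition~\ref{prop:well-founded}, and local confluence is Theorem~\ref{thm:class-based-local-confluence} --- so the argument reduces to invoking Newman's Lemma \cite{Huet:1980}. In that sense the real content has already been delivered, and the only genuinely non-routine ingredient is Theorem~\ref{thm:class-based-local-confluence}, whose proof is deferred to Section~\ref{sect:confluence-proof}.

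For a self-contained proof I would spell out the usual well-founded induction behind Newman's Lemma. The induction is on $[M]$ ordered by $\clcontr$, which is legitimate precisely because $\clcontr$ has no infinite descending chains (Proposition~\ref{prop:well-founded}). Given $[M] \clcontr^* [M_1]$ and $[M] \clcontr^* [M_2]$: if either reduction has length zero the claim is trivial, so assume $[M] \clcontr [N_1] \clcontr^* [M_1]$ and $[M] \clcontr [N_2] \clcontr^* [M_2]$. First I would apply Theorem~\ref{thm:class-based-local-confluence} to the two initial single steps out of $[M]$ to obtain a WF net $P$ with $[N_1] \clcontr^* [P]$ and $[N_2] \clcontr^* [P]$.

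Then I would use the induction hypothesis twice. Since $[M] \clcontr [N_1]$, the class $[N_1]$ lies strictly below $[M]$, so from $[N_1] \clcontr^* [M_1]$ and $[N_1] \clcontr^* [P]$ the induction hypothesis yields a WF net $Q$ with $[M_1] \clcontr^* [Q]$ and $[P] \clcontr^* [Q]$. Concatenating gives $[N_2] \clcontr^* [P] \clcontr^* [Q]$, and we also have $[N_2] \clcontr^* [M_2]$; since $[N_2]$ is likewise strictly below $[M]$, a second application of the induction hypothesis produces a WF net $M_3$ with $[M_2] \clcontr^* [M_3]$ and $[Q] \clcontr^* [M_3]$. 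Finally $[M_1] \clcontr^* [Q] \clcontr^* [M_3]$, so $M_3$ is the required net. I do not expect any real obstacle in this theorem: the delicate part is entirely contained in the local confluence result, and the only point to keep in mind is that the well-founded induction is carried out on isomorphism classes of WF nets under $\clcontr$, which is exactly what Proposition~\ref{prop:well-founded} licenses.
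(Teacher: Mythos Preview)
Your proposal is correct and matches the paper's approach exactly: the paper also derives this theorem directly from Newman's Lemma \cite{Huet:1980}, citing Proposition~\ref{prop:well-founded} for well-foundedness and Theorem~\ref{thm:class-based-local-confluence} for local confluence. Your additional spelled-out well-founded induction is a faithful unpacking of that lemma and introduces no new ideas or gaps.
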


An immediate consequence of this is that the result of the class-based AND-OR reduction is unique.

\begin{thm}[unique result of class-based AND-OR reduction] \label{thm:unique-result-reduction}
Let $[M_1]$ and $[M_2]$ be two results of the class-based AND-OR reduction of a WF net $M$, then $[M_1] = [M_2]$.
\end{thm}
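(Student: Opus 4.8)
The plan is to obtain this as an immediate corollary of the global confluence result, Theorem~\ref{thm:class-based-global-confluence}, combined with the fact that a "result of the class-based AND-OR reduction" is, by definition, a $\clcontr$-normal form. First I would unfold the hypothesis: since $[M_1]$ and $[M_2]$ are both results of the class-based AND-OR reduction of $M$, we have $[M] \clcontr^* [M_1]$ and $[M] \clcontr^* [M_2]$, and moreover there is no WF net $M''$ with $[M_1] \clcontr [M'']$ and none with $[M_2] \clcontr [M'']$.

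Next I would apply Theorem~\ref{thm:class-based-global-confluence} to the two reduction sequences $[M] \clcontr^* [M_1]$ and $[M] \clcontr^* [M_2]$, obtaining a WF net $M_3$ with $[M_1] \clcontr^* [M_3]$ and $[M_2] \clcontr^* [M_3]$. Finally I would use the normal-form property to collapse these: since $[M_1]$ admits no $\clcontr$-step, a straightforward induction on the length of the $\clcontr^*$-sequence $[M_1] \clcontr^* [M_3]$ (whose only possible case is the zero-step one) yields $[M_1] = [M_3]$, and symmetrically $[M_2] = [M_3]$; hence $[M_1] = [M_2]$.

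I do not expect any genuine obstacle here: the statement is exactly the textbook fact that normal forms are unique in a (globally) confluent abstract rewriting system, and all the substantive work has already been done in establishing well-foundedness (Proposition~\ref{prop:well-founded}), local confluence (Theorem~\ref{thm:class-based-local-confluence}), and therefore, via Newman's Lemma, global confluence. The only point requiring a modicum of care is that $\clcontr^*$ denotes the \emph{reflexive}-transitive closure, so one must explicitly invoke the normal-form property of $[M_1]$ and $[M_2]$ to conclude that the sequences emanating from them are empty, rather than silently assuming it.
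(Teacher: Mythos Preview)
Your proposal is correct and follows essentially the same approach as the paper's own proof: apply global confluence (Theorem~\ref{thm:class-based-global-confluence}) to obtain a common successor $[M_3]$, then use the normal-form property of $[M_1]$ and $[M_2]$ to conclude $[M_1]=[M_3]=[M_2]$. The paper is slightly terser about the last step, but the argument is identical.
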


\begin{proof}
From global confluence, Theorem~\ref{thm:class-based-global-confluence}, it follows that there is a WF net $M_3$ such that $[M_1] \clcontr^* [M_3]$ and $[M_2] \clcontr^* [M_3]$. However, by the definition of the class-based AND-OR reduction, it also holds that there is no WF net $M_4$ such that $[M_1] \clcontr [M_4]$, and also no WF net $M_4$ such that  $[M_2] \clcontr [M_4]$. It then follows that $[M_1] = [M_2]$.
\end{proof}

In an actual implementation of the reduction, the classes would be represented by actual WF nets. Therefore we define an alternative notion of AND-OR reduction that operates on WF nets.

\begin{defn}[net-based AND-OR reduction]
 A \emph{net-based AND-OR reduction} of a WF net $M$ is a WF net $M'$ such that (1) $M$ can be reduced to $M'$ via contractions that contract non-trivial subnets and (2) there is no contractible non-trivial subnet in $M''$.
\end{defn}

We can demonstrate a close relationship between the two types of AND-OR reductions.

\begin{thm}[similarity of AND-OR reductions] \label{thm:similarity-reduction}
Let $M$ and $M'$ be two WF nets, then $[M']$ is the result of applying the class-based AND-OR reduction to $M$ iff there is a WF net $M''$ such that $M''$ is the result of applying the net-based AND-OR reduction to $M$ and $M'' \sim M'$.
\end{thm}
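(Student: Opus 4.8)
The plan is to prove both directions by translating chains of class-based contractions into chains of \emph{non-trivial} net-based contractions and back, using the genericity of contraction (Lemma~\ref{lem:contract-genericity}) to move freely between isomorphic representatives. I would first isolate two auxiliary facts. First, contracting a trivial subnet always produces an isomorphic net: a single node is always AND-OR contractible (a lone place is a pAND net and a lone transition a tOR net, and such a one-node subnet is vacuously well-nested), and the contraction merely renames that node. Second, contracting a \emph{non-trivial} subnet strictly decreases the number of nodes, hence yields a non-isomorphic net. Together these give the key bridging observation: $[N]\clcontr[N']$ holds iff some representative of $[N]$ admits a \emph{non-trivial} contractible subnet whose contraction lies in $[N']$ — the underlying $\contr$-step may be taken non-trivial, since a trivial one would keep us inside $[N]$, contradicting $[N]\neq[N']$ in the definition of $\clcontr$.

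For the forward direction, assume $[M']$ is a result of the class-based AND-OR reduction of $M$, so $[M]\clcontr^{*}[M']$ and there is no $[M'']$ with $[M']\clcontr[M'']$. Unfolding the chain as $[M]=[N_{0}]\clcontr[N_{1}]\clcontr\cdots\clcontr[N_{k}]=[M']$, I would construct by induction on $i$ a concrete chain of non-trivial contractions $M=B_{0}\contr B_{1}\contr\cdots\contr B_{i}$ with $[B_{i}]=[N_{i}]$. In the step, $[B_{i}]=[N_{i}]\clcontr[N_{i+1}]$ gives representatives $A\sim B_{i}$ and $A'\sim N_{i+1}$ with $A\contr A'$ non-trivial; Lemma~\ref{lem:contract-genericity} applied to $A\contr A'$ and $A\sim B_{i}$ produces $B_{i+1}\sim A'$ with $B_{i}\contr B_{i+1}$, and this contraction remains non-trivial because it contracts the isomorphic image $S'=\{h(n)\mid n\in S\}$ of the same subnet, which has the same cardinality. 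This yields $M\contr^{*}B_{k}$ by non-trivial contractions with $[B_{k}]=[M']$. Finally $B_{k}$ has no contractible non-trivial subnet, for otherwise its contraction $C$ satisfies $C\not\sim B_{k}$ and hence $[M']=[B_{k}]\clcontr[C]$, contradicting maximality of $[M']$. So $M'':=B_{k}$ is a net-based AND-OR reduction of $M$ with $M''\sim M'$.

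For the backward direction, assume $M''$ is a net-based AND-OR reduction of $M$ with $M''\sim M'$. Then $M=C_{0}\contr C_{1}\contr\cdots\contr C_{k}=M''$ by non-trivial contractions, so each step gives $[C_{i-1}]\neq[C_{i}]$, hence $[C_{i-1}]\clcontr[C_{i}]$ with the $C_{i}$ themselves as representatives, yielding $[M]\clcontr^{*}[M'']=[M']$. For maximality, suppose $[M']\clcontr[X]$, i.e.\ $[M'']\clcontr[X]$; then some representative $Y\sim M''$ has a non-trivial contractible subnet, and Lemma~\ref{lem:contract-genericity} applied to that contraction together with $Y\sim M''$ shows that $M''$ itself has a non-trivial contractible subnet, contradicting that $M''$ is a net-based AND-OR reduction. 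Hence no such $[X]$ exists and $[M']$ is a result of the class-based AND-OR reduction of $M$.

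The argument is largely bookkeeping; the one step that genuinely needs care is the representative realignment in the forward direction, namely maintaining the invariant $[B_{i}]=[N_{i}]$ along a concrete $\contr$-chain while the abstract $\clcontr$-chain is only specified up to isomorphism. This is precisely what Lemma~\ref{lem:contract-genericity} was designed for, and the only additional point to verify is that genericity transports non-triviality of the contracted subnet along the isomorphism, which it does since the image subnet $S'$ has the same size as $S$.
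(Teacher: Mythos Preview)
Your proof is correct and follows essentially the same approach as the paper: both directions are handled by converting between $\clcontr$-chains and non-trivial $\contr$-chains via the genericity lemma (Lemma~\ref{lem:contract-genericity}), together with the observation that trivial contractions yield isomorphic nets while non-trivial ones strictly shrink the node count. Your version is in fact more carefully written than the paper's, which glosses over the realignment step and speaks of reducing ``some net $M'_0$ which is isomorphic to $M$'' rather than explicitly anchoring the chain at $M$ itself as you do with your $B_i$ induction.
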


\begin{proof}
We first consider the \emph{if}-part. It is clear that if there is a series of contractions that leads from $M$ to $M''$, then $[M] \clcontr^* [M'']$.  Since there are no contractible subnets in $M''$ of two nodes or more, it follows that only contractions of one-node subnets are possible, which leave the net isomorphic. And so, there is no WF net $M_2$ such that $[M''] \clcontr [M_2]$, since the class-based AND-OR contraction is by definition irreflexive.

We now consider the \emph{only-if} part. Assume that $[M] \clcontr^* [M']$. This means that $[M_0] \clcontr [M_1] \clcontr \ldots \clcontr [M_n]$ for some sequence of nets $M_0$, $M_1$, \ldots, $M_n$ where $M_0 \sim M$ and $M_n \sim M'$. Then by using Lemma~\ref{lem:contract-genericity} on the genericity of contraction, for each step in this sequence of class-based contractions it follows that there is a series of net-based contractions that reduces some net $M'_0$ which is isomorphic to $M$ to a WF net $M'_n$ which is isomorphic to $M'$. Moreover, each contraction between two classes will contract two nodes or more, and so the same will hold for this series of net contractions. It remains to observe that as there are no more non-trivial net contractions possible in $M'_n$ which follows from the fact that the class-based reduction stopped at $[M']$ and $M'_n \sim M'$. This completes the proof with $M'_n$ being the postulated $M''$.
\end{proof}

A direct consequence of Theorem~\ref{thm:similarity-reduction} on the similarity of net-based and class-based AND-OR reduction, and Theorem~\ref{thm:unique-result-reduction}  on the uniqueness of the result of the class-based AND-OR reduction, is that the result of the net-based AND-OR reduction is unique up to isomorphism.

\begin{thm}[unique result of net-based AND-OR reduction] \label{thm:unique-result-net-reduction}
Let $M_1$ and $M_2$ be two results of the net-based AND-OR reduction of a WF net $M$, then $M_1 \sim M_2$.
\end{thm}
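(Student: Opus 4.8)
The plan is to obtain this as a direct corollary of Theorem~\ref{thm:similarity-reduction} on the similarity of the two notions of AND-OR reduction, together with Theorem~\ref{thm:unique-result-reduction} on the uniqueness of the class-based reduction. So suppose $M_1$ and $M_2$ are both results of the net-based AND-OR reduction of $M$. First I would lift each of them to the class level: since $\sim$ is reflexive, $M_1$ itself witnesses the right-hand side of the biconditional in Theorem~\ref{thm:similarity-reduction} with the choice $M' = M_1$ and $M'' = M_1$ (that is, $M'' = M_1$ is a net-based reduction of $M$ and $M'' \sim M'$). Hence the ``$\Leftarrow$'' direction of that theorem gives that $[M_1]$ is a result of the class-based AND-OR reduction of $M$. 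Applying the same reasoning to $M_2$, also $[M_2]$ is a result of the class-based AND-OR reduction of $M$.

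Next I would invoke Theorem~\ref{thm:unique-result-reduction}, which asserts that any two results of the class-based AND-OR reduction of a fixed WF net coincide as isomorphism classes. Applied to $[M_1]$ and $[M_2]$, this yields $[M_1] = [M_2]$. Finally, by the definition of the classes $[\cdot]$ (two WF nets lie in the same class precisely when they are isomorphic), the equality $[M_1] = [M_2]$ is equivalent to $M_1 \sim M_1$ being related to $M_2$, i.e., $M_1 \sim M_2$, which is the claim.

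Since every step is just an application of an already-proven statement, there is no real technical obstacle here. The only point requiring a bit of care is to use the correct direction of the biconditional in Theorem~\ref{thm:similarity-reduction} — namely the implication ``if some net-based reduction of $M$ is isomorphic to $M'$ then $[M']$ is a result of the class-based reduction of $M$'' — and to instantiate it with the trivial isomorphism $M_i \sim M_i$ so that the hypothesis is satisfied without extra work.
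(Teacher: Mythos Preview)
Your proof is correct and follows essentially the same approach as the paper: lift each net-based reduction to a class-based reduction via Theorem~\ref{thm:similarity-reduction}, apply the uniqueness Theorem~\ref{thm:unique-result-reduction} to get $[M_1]=[M_2]$, and conclude $M_1\sim M_2$. The paper's version is terser, but the logical structure is identical; your only addition is spelling out explicitly how to instantiate the biconditional of Theorem~\ref{thm:similarity-reduction} with $M''=M'=M_i$, which is a reasonable clarification.
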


\begin{proof}
Assume that $M_1$ and $M_2$ are both results of the net-based AND-OR reduction of a WF net $M$. By Theorem~\ref{thm:similarity-reduction} it follows that $[M_1]$ and $[M_2]$ are both results of the class-based AND-OR reduction of $M$. From the uniqueness of the result of the class-based reduction, as stated by Theorem~\ref{thm:unique-result-reduction}, it follows that $[M_1] = [M_2]$, which implies $M_1 \sim M_2$.
\end{proof}

\section{Local confluence of the AND-OR contraction relation
\label{sect:confluence-proof}}

In this section we show the local confluence of the AND-OR contraction relation, as stated by Theorem~\ref{thm:local-confluence}.
To simplify the presentation of the proof, we will at first restrict
ourselves to contractions that do not involve input and output nodes
of the larger net, i.e., where $S \cap I_{M}=S \cap O_{M}=\emptyset$. It is easy to see that for such contractions 
$I_{M'}=I_{M}$ and $O_{M'}=O_{M}$ if $M'$ is the result of the contraction, which simplifies the definition of the result of the contraction.
We will call such contractions \emph{internal contractions}, and a
well-nested WF net that does not share input nor output nodes with
the larger WF net will be called an \emph{internal well-nested WF
net.} We will call an internal well-nested net that is of one of the basic AND-OR classes \emph{internal
AND-OR contractible}. We let $\intcontr$ denote the restriction of $\contr$ where only internal contractions are used.

The restriction to internal contractions is interesting because the general reduction process can be simulated by (1) taking the place or transition completion of the input net, (2) applying only internal contractions and (3) removing the input an output place while making the nodes in the postset of the input node the new input places and the nodes in the preset of the output place the new output nodes. Each contraction in the reduction process on the original WF net will then correspond to an internal contraction on the completed WF net, and vice versa. It follows that the result is then the place completion of a single transition or a transition completion of a single place iff the result of the reduction process on the original input WF net is a single place or a single transaction.

Since in the remainder of this section we will talk mostly about internal AND-OR contractions, we will refer to internal contractions, internal well-nested subnets, and internal contractible nets simply as \emph{contractions}, \emph{well-nested subnets} and \emph{contractible nets}, respectively, unless stated otherwise.

%

To show local confluence, we will set out to prove that if there are two
subnets that are both contractible, then (1) the contraction
of one does not change the contractibility of the
other one (although its structure might change somewhat because of
the contraction) and (2) the result of contracting the two nets in
either order results in the same WF net. This is straightforward to see if the subnets are completely independent, i.e., they do not overlap and there are no edges that connect a node in one subnet to a node in the other subnet. In this case the contraction of one subnet does not change the other subnet or the edges connected to it, and so does not change the contractibility. If the two subnets are not completely independent we can distinguish four cases, which are illustrated in Figure~\ref{fig:cases-of-proof}. In this figure we use clouds to indicate subnets and rounded rectangles to represent nodes that can be either places or transitions.

\begin{figure}[htb]
\begin{center}
\resizebox{0.8\textwidth}{!}{%
\begin{tikzpicture}
    \tikzstyle{transition} = [rectangle,draw,minimum width=0.47cm, minimum height=0.47cm,fill=white]
    \tikzstyle{place} = [circle,draw,minimum width=0.5cm, minimum height=0.5cm,fill=white,inner sep=0.08cm]
    \tikzstyle{place-transition} = [rounded rectangle,draw,minimum width=1cm, minimum height=0.5cm,fill=white,inner sep=0.08cm]


   \node (case-A-top-left) [cloud,draw,cloud puffs=10,cloud puff arc=100, aspect=1.5,
          fill opacity=0.5, text opacity=1,
          minimum width=2cm,minimum height=1.7cm,fill=mygray1] 
     at (0cm, 0cm) {$S_1$};
   \node (case-A-top-right) [cloud,draw,cloud puffs=10,cloud puff arc=100, aspect=1.5,
          fill opacity=0.5, text opacity=1,
          minimum width=2cm,minimum height=1.7cm,fill=mygray2] 
     at (2.5cm, 0cm) {$S_2$};

  \path (case-A-top-left) edge[-] (case-A-top-right) ; 
  \path (case-A-top-left) edge[-, bend left=20] (case-A-top-right) ; 
  \path (case-A-top-left) edge[-, bend right=20] (case-A-top-right) ; 

  \node (case-A-go-left) [rotate=-135] at (0.25cm, -1.25cm) {\Large $\contr$};

   \node (case-A-left-left) [place-transition,draw,fill=mygray1,fill opacity=0.5, text opacity=1] 
     at (-2cm, -2.5cm) {$n_1$};
   \node (case-A-left-right) [cloud,draw,cloud puffs=10,cloud puff arc=100, aspect=1.5,
          fill opacity=0.5, text opacity=1,
          minimum width=2cm,minimum height=1.7cm,fill=mygray2]
     at (0cm, -2.5cm) {$S_2$};

  \path (case-A-left-left) edge[-, bend left=15] (case-A-left-right) ; 
  \path (case-A-left-left) edge[-, bend right=15] (case-A-left-right) ; 

  \node (case-A-go-right) [rotate=-45] at (2.25cm, -1.25cm) {\Large $\contr$};

   \node (case-A-right-left) [cloud,draw,cloud puffs=10,cloud puff arc=100, aspect=1.5,
          fill opacity=0.5, text opacity=1,
          minimum width=2cm,minimum height=1.7cm,fill=mygray1] 
     at (2.5cm, -2.5cm) {$S_1$};
   \node (case-A-right-right) [place-transition,draw,fill=mygray2,fill opacity=0.5, text opacity=1] 
     at (4.5,-2.5cm) {$n_2$};

  \path (case-A-right-left) edge[-, bend left=15] (case-A-right-right) ; 
  \path (case-A-right-left) edge[-, bend right=15] (case-A-right-right) ; 

  \node (case-A-go-right2) [rotate=-55] at (0.25cm, -3.75cm) {\Large $\contr$};
  \node (case-A-go-left2) [rotate=-135] at (2.25cm, -3.75cm) {\Large $\contr$};

   \node (case-A-bottom-left) [place-transition,draw,fill=mygray1,fill opacity=0.5, text opacity=1] 
     at (0.5cm,-4.5cm) {$n_1$};
   \node (case-A-bottom-right) [place-transition,draw,fill=mygray2,fill opacity=0.5, text opacity=1] 
     at (2cm,-4.5cm) {$n_2$};

  \path (case-A-bottom-left) edge[-] (case-A-bottom-right) ; 

  \node (case-A) at (1.25cm,-5.5cm) {(A) No overlap, but connected};


   \node (case-B-top-left) [cloud,cloud puffs=10,cloud puff arc=100, aspect=1.5,
       minimum width=2cm,minimum height=1.7cm,fill=mygray1, fill opacity=0.5, text opacity=1,draw] 
     at (9cm, 0cm) {$S_1$};
   \node (case-B-top-right) [cloud,draw,cloud puffs=10,cloud puff arc=100, aspect=1.5,
       fill opacity=0.5, text opacity=1,
       minimum width=2cm,minimum height=1.7cm,fill=mygray2] 
     at (10.5cm, 0cm) {$S_2$};

  \node (case-B-go-left) [rotate=-135] at (8.75cm, -1.25cm) {\Large $\contr$};

   \node (case-B-left-right) [cloud,draw,cloud puffs=10,cloud puff arc=100, aspect=1.5,
          fill opacity=0.5, text opacity=1,
          minimum width=2cm,minimum height=1.7cm,fill=mygray2]
     at (8.5cm, -2.5cm) {};
     \node at (8.7cm, -2.5cm) {$S_2 \setminus S_1$};
   \node (case-B-left-right-cut) [draw,cloud,cloud puffs=10,cloud puff arc=100, aspect=1.5,
       minimum width=2cm,minimum height=1.7cm,fill=white,draw=white] 
     at (7cm, -2.5cm) {};
   \node (case-B-left-left) [place-transition,draw,fill=mygray1,fill opacity=0.5, text opacity=1] 
     at (7cm, -2.5cm) {$n_1$};     

  \begin{scope}[]
    \pgftransformshift{\pgfpoint{8.5cm}{-2.5cm}}
    \pgfset{cloud puffs=10,cloud puff arc=100, aspect=1.5,
       minimum width=2cm,minimum height=1.7cm}
    \pgfnode{cloud}{center}{}{nodename}{\pgfusepath{clip}}

   \node [draw,cloud,cloud puffs=10,cloud puff arc=100, aspect=1.5,
       minimum width=2cm,minimum height=1.7cm] 
     at (case-B-left-left.center) {};
  \end{scope}

  \begin{scope}[]
    \pgftransformshift{case-B-left-left.center}
    \pgfset{cloud puffs=10,cloud puff arc=100, aspect=1.5,
       minimum width=2cm,minimum height=1.7cm}
    \pgfnode{cloud}{center}{}{nodename}{\pgfusepath{clip}}
    
    \path (case-B-left-left) edge[-, bend left=30] (case-B-left-right.center) ; 
    \path (case-B-left-left) edge[-] (case-B-left-right.center) ;     
    \path (case-B-left-left) edge[-, bend right=30] (case-B-left-right.center) ; 
  \end{scope}
  
  \node (case-B-go-right) [rotate=-45] at (10.75cm, -1.25cm) {\Large $\contr$};

   \node (case-B-right-left) [cloud,draw,cloud puffs=10,cloud puff arc=100, aspect=1.5,
          fill opacity=0.5, text opacity=1,
          minimum width=2cm,minimum height=1.7cm,fill=mygray1] 
     at (11cm, -2.5cm) {};
   \node at (10.8cm, -2.5cm) {$S_1 \setminus S_2$};
   \node (case-B-right-left-cut) [draw,cloud,cloud puffs=10,cloud puff arc=100, aspect=1.5,
       minimum width=2cm,minimum height=1.7cm,fill=white,draw=white] 
     at (12.5cm, -2.5cm) {};
   \node (case-B-right-right) [place-transition,draw,fill=mygray2,fill opacity=0.5, text opacity=1] 
     at (case-B-right-left-cut.center) {$n_2$};

  \begin{scope}[]
    \pgftransformshift{\pgfpoint{11cm}{-2.5cm}}
    \pgfset{cloud puffs=10,cloud puff arc=100, aspect=1.5,
       minimum width=2cm,minimum height=1.7cm}
    \pgfnode{cloud}{center}{}{nodename}{\pgfusepath{clip}}

   \node [draw,cloud,cloud puffs=10,cloud puff arc=100, aspect=1.5,
       minimum width=2cm,minimum height=1.7cm] 
     at (case-B-right-right.center) {};
  \end{scope}

  \begin{scope}[]
    \pgftransformshift{\pgfpoint{12.5cm}{-2.5cm}}
    \pgfset{cloud puffs=10,cloud puff arc=100, aspect=1.5,
       minimum width=2cm,minimum height=1.7cm}
    \pgfnode{cloud}{center}{}{nodename}{\pgfusepath{clip}}
    
    \path (case-B-right-right) edge[-, bend left=30] (case-B-right-left.center) ; 
    \path (case-B-right-right) edge[-] (case-B-right-left.center) ;     
    \path (case-B-right-right) edge[-, bend right=30] (case-B-right-left.center) ; 
  \end{scope}

  \node (case-B-go-right2) [rotate=-55] at (8.75cm, -3.75cm) {\Large $\contr$};
  \node (case-B-go-left2) [rotate=-135] at (10.75cm, -3.75cm) {\Large $\contr$};

   \node (case-B-bottom-left) [place-transition,draw,fill=mygray1,fill opacity=0.5, text opacity=1] 
     at (9cm,-4.5cm) {$n_1$};
   \node (case-B-bottom-right) [place-transition,draw,fill=mygray2,fill opacity=0.5, text opacity=1] 
     at (10.5cm,-4.5cm) {$n_2$};

  \path (case-B-bottom-left) edge[-] (case-B-bottom-right) ; 

  \node (case-B) at (10cm,-5.5cm) {(B) Overlap, but not nested, different I/O type};


   \node (case-C-top-left) [cloud,cloud puffs=10,cloud puff arc=100, aspect=1.5,
       minimum width=2cm,minimum height=1.7cm,fill=mygray1, fill opacity=0.5, text opacity=1,draw] 
     at (0.5cm, -7.5cm) {$S_1$};
   \node (case-C-top-right) [cloud,draw,cloud puffs=10,cloud puff arc=100, aspect=1.5,
       fill opacity=0.5, text opacity=1,
       minimum width=2cm,minimum height=1.7cm,fill=mygray2] 
     at (2cm, -7.5cm) {$S_2$};

  \node (case-C-go-left) [rotate=-135] at (0.3cm, -8.75cm) {\Large $\contr$};

   \node (case-C-left-right) [cloud,draw,cloud puffs=10,cloud puff arc=100, aspect=1.5,
          fill opacity=0.5, text opacity=1,
          minimum width=2cm,minimum height=1.7cm,fill=mygray2]
     at (0cm, -10cm) {};
   \node at (0.2cm, -10cm) {$S_2 \setminus S_1$};
   \node (case-C-left-right-cut) [draw,cloud,cloud puffs=10,cloud puff arc=100, aspect=1.5,
       minimum width=2cm,minimum height=1.7cm,fill=white,draw=white] 
     at (-1.5cm, -10cm) {};
   \node (case-C-left-left) [place-transition,draw,fill=mygray1,fill opacity=0.5, text opacity=1] 
     at (-1.5cm, -10cm) {$n_1$};     

  \begin{scope}[]
    \pgftransformshift{\pgfpoint{0cm}{-10cm}}
    \pgfset{cloud puffs=10,cloud puff arc=100, aspect=1.5,
       minimum width=2cm,minimum height=1.7cm}
    \pgfnode{cloud}{center}{}{nodename}{\pgfusepath{clip}}

   \node [draw,cloud,cloud puffs=10,cloud puff arc=100, aspect=1.5,
       minimum width=2cm,minimum height=1.7cm] 
     at (case-C-left-left.center) {};
  \end{scope}

  \begin{scope}[]
    \pgftransformshift{case-C-left-left.center}
    \pgfset{cloud puffs=10,cloud puff arc=100, aspect=1.5,
       minimum width=2cm,minimum height=1.7cm}
    \pgfnode{cloud}{center}{}{nodename}{\pgfusepath{clip}}
    
    \path (case-C-left-left) edge[-, bend left=30] (case-C-left-right.center) ; 
    \path (case-C-left-left) edge[-] (case-C-left-right.center) ;     
    \path (case-C-left-left) edge[-, bend right=30] (case-C-left-right.center) ; 
  \end{scope}
  
  \node (case-C-go-right) [rotate=-45] at (2.3cm, -8.75cm) {\Large $\contr$};

   \node (case-C-right-left) [cloud,draw,cloud puffs=10,cloud puff arc=100, aspect=1.5,
          fill opacity=0.5, text opacity=1,
          minimum width=2cm,minimum height=1.7cm,fill=mygray1] 
     at (2.5cm, -10cm) {};
   \node at (2.3cm, -10cm) {$S_1 \setminus S_2$};
   \node (case-C-right-left-cut) [draw,cloud,cloud puffs=10,cloud puff arc=100, aspect=1.5,
       minimum width=2cm,minimum height=1.7cm,fill=white,draw=white] 
     at (4cm, -10cm) {};
   \node (case-C-right-right) [place-transition,draw,fill=mygray2,fill opacity=0.5, text opacity=1] 
     at (case-C-right-left-cut.center) {$n_2$};

  \begin{scope}[]
    \pgftransformshift{\pgfpoint{2.5cm}{-10cm}}
    \pgfset{cloud puffs=10,cloud puff arc=100, aspect=1.5,
       minimum width=2cm,minimum height=1.7cm}
    \pgfnode{cloud}{center}{}{nodename}{\pgfusepath{clip}}

   \node [draw,cloud,cloud puffs=10,cloud puff arc=100, aspect=1.5,
       minimum width=2cm,minimum height=1.7cm] 
     at (case-C-right-right.center) {};
  \end{scope}

  \begin{scope}[]
    \pgftransformshift{\pgfpoint{4cm}{-10cm}}
    \pgfset{cloud puffs=10,cloud puff arc=100, aspect=1.5,
       minimum width=2cm,minimum height=1.7cm}
    \pgfnode{cloud}{center}{}{nodename}{\pgfusepath{clip}}
    
    \path (case-C-right-right) edge[-, bend left=30] (case-C-right-left.center) ; 
    \path (case-C-right-right) edge[-] (case-C-right-left.center) ;     
    \path (case-C-right-right) edge[-, bend right=30] (case-C-right-left.center) ; 
  \end{scope}

  \node (case-C-go-right2) [rotate=-55] at (0.3cm, -11.25cm) {\Large $\contr$};
  \node (case-C-go-left2) [rotate=-135] at (2.25cm, -11.25cm) {\Large $\contr$};

   \node (case-C-bottom-left) [place-transition,draw,fill=mygray1,fill opacity=0.5, text opacity=1] 
     at (1.3cm,-12cm) {$n_3$};

  \node (case-B) at (1.55cm,-13cm) {(C) Overlap, but not nested, same I/O type};


   \node (case-D-top-left) [cloud,cloud puffs=15,cloud puff arc=100,
       minimum width=3.5cm,minimum height=1.7cm,fill=mygray1, fill opacity=0.5, text opacity=1,draw] 
     at (9.75cm, -7.5cm) {};
   \node at (9cm, -7.5cm) {$S_1$};
   \node (case-D-top-right) [cloud,draw,cloud puffs=10,cloud puff arc=100, aspect=1.5,
       fill opacity=0.5, text opacity=1,
       minimum width=1.5cm,minimum height=1.25cm,fill=mygray2] 
     at (10.2cm, -7.5cm) {$S_2$};

  \node (case-D-go-right) [rotate=-45] at (10.75cm, -8.75cm) {\Large $\contr$};

  \path (8.9cm, -8.6cm) edge[->, bend right=50, line width=0.55, line join=round, decorate, decoration={zigzag, segment length=6, amplitude=1.5, post=lineto, post length=3pt}] (8.9cm, -11.4cm);

   \node (case-D-right-left) [cloud,draw,cloud puffs=15,cloud puff arc=100, aspect=1.5,
          fill opacity=0.5, text opacity=1,
          minimum width=3.5cm,minimum height=1.7cm,fill=mygray1] 
     at (11cm, -10cm) {};
   \node at (10.02cm, -10cm) {$S_1 \setminus S_2$};
   \node (case-D-right-left-cut) [draw,cloud,cloud puffs=10,cloud puff arc=100, aspect=1.5,
       minimum width=1.5cm,minimum height=1.25cm,fill=white] 
     at (11.45cm, -10cm) {};
   \node (case-D-right-right) [place-transition,draw,fill=mygray2,fill opacity=0.5, text opacity=1] 
     at (case-D-right-left-cut.center) {$n_2$};

  \begin{scope}[]
    \pgftransformshift{\pgfpoint{11.45cm}{-10cm}}
    \pgfset{cloud puffs=10,cloud puff arc=100, aspect=1.5,
       minimum width=1.5cm,minimum height=1.25cm}
    \pgfnode{cloud}{center}{}{nodename}{\pgfusepath{clip}}
    
    \path (case-D-right-right) edge[-, bend left=30] (-1.2cm, 0cm) ; 
    \path (case-D-right-right) edge[-] (0.9cm, 0cm) ;     
    \path (case-D-right-right) edge[-, bend right=30] (-1.2cm, 0cm) ; 
  \end{scope}

  \node (case-D-go-left2) [rotate=-135] at (10.7cm, -11.25cm) {\Large $\contr$};

   \node (case-D-bottom-left) [place-transition,draw,fill=mygray1,fill opacity=0.5, text opacity=1] 
     at (9.75cm,-12cm) {$n_1$};

  \node (case-B) at (10cm,-13cm) {(D) Nesting};

\end{tikzpicture}
}
\end{center}
\caption{\label{fig:cases-of-proof}The four cases considered in the confluence proof}
\end{figure}
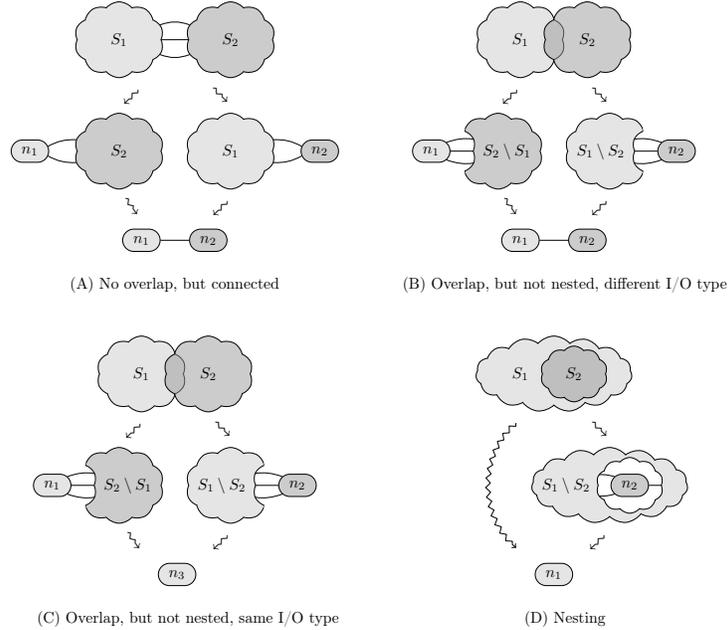

\begin{enumerate}[(A)]

  \item \emph{The subnets $S_1$ and $S_2$ do not share nodes, but nodes in $S_1$ are connected to nodes in $S_2$}: In this case the contraction of one subnet will not change the other subnet, but it might change the edges connected to it. However, as will be shown, this subnet will remain well-nested and therefore contractible. Note that there might be multiple edges between $S_1$ and $S_2$, but after contracting subnets defined by them into $n_1$ and $n_2$, there will be at most one edge from $n_1$ to $n_2$, and vice versa. 
  
  \item  \emph{The subnets $S_1$ and $S_2$ overlap, but are not nested in one another and the I/O type of the two nets is different}: In this case the contraction of one subnet, say $M[S_1]$, removes a part of $S_2$. It must therefore be shown that after the contraction the remainder of $M[S_2]$, i.e., the subnet $S_2 \setminus S_1$ is a contractible net, i.e., it is well-nested and belongs to a basic AND-OR class, or is already a one-node WF net, in which case no further contraction is required. In fact, it will be shown that it will belong to the same basic AND-OR class as $M[S_2]$. Note that since $M[S_1]$ and $M[S_2]$ have different I/O types, it follows that $n_1$ and $n_2$ have different types, which is required in a Petri net for connected nodes.
  
  It must also be shown that the final result with nodes $n_1$ and $n_2$ is the same, independent of whether first $S_1$ or first $S_2$ is contracted. This means that (a) the edges between $n_1$ and $n_2$ must be the same, (b) the edges between $n_1$ and nodes other than $n_2$ must be the same and (c) the edges between $n_2$ and nodes other than $n_1$ must be the same. 
  
  \item \emph{The subnets $S_1$ and $S_2$ overlap, are not nested in one another and the I/O type of the two nets is the same}: In this case we contract in the second step not just the subnet consisting of the remainder of the subnet that was not contracted, say $M'[S_2 \setminus S_1]$, but the total of that remainder plus the new node, which would be the subnet $(S_2 \setminus S_1) \cup  \{ n_1 \}$. Therefore it must be shown that this subnet is a contractible net. In fact, it will be shown that it will belong to the same basic AND-OR class as $M[S_2]$. Note that since $M[S_1]$ and $M[S_2]$ have the same I/O type, we cannot simply contract $S_2 \setminus S_1$ in $M'$, since that would result in a node of the same type as $n_1$.
  
  It must also be shown that the final result with node $n_3$ is the same, independent of whether first $S_1$ or first $S_2$ is contracted. This means that $n_3$ must be participating in the same edges.
  
  \item \emph{The subnet $S_1$ contains the subnet $S_2$}: The first option is to contract $S_1$ into $n_1$, in which case $S_2$ completely disappears. The second option is to contract $S_2$ into $n_2$. It will be shown that after this step the subnet defined by of the remainder of $S_1$ plus the new node $n_2$, i.e., $(S_2 \setminus S_1) \cup  \{ n_2 \}$, is contractible in $M'$. In fact, it will be shown that it will belong to the same basic AND-OR class as $M[S_1]$ in the original workflow net.
  
  It must also be shown that the final result with node $n_1$ is the same, independent of whether first $S_1$ or first $S_2$ is contracted. This means that $n_1$ must be participating in the same edges.

\end{enumerate}
In the following subsections we will discuss the necessary proofs for each of the mentioned cases in detail.

\subsection{Connected non-overlapping subnets}

In this subsection we consider case (A) in Figure~\ref{fig:cases-of-proof}.  Therefore, let us assume that $S_1$ and $S_2$ are contractible subnets in a WF net $M$, and do not share nodes, but are connected by edges. Let us also assume that $M'$ is the result of contracting $S_1$ into the node $n_1$ and $M''$ the result of contracting $S_2$ into the node $n_2$. Recall that we need to show that (1) the contraction of $S_1$ does not affect the contractibility of $S_2$ and (2) we get the same result by contracting $S_1$ and then $S_2$ as with contracting $S_2$ and then $S_1$. This is captured by the following two lemmas.

\begin{lem}[preservation of contractibility]
\label{lem:contractibility-A}
The subnet $S_2$ is contractible in $M'$.
\end{lem}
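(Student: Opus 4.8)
The plan is to show that the defining properties of contractibility, namely well-nestedness and membership in a basic AND-OR class, are preserved when we pass from $M$ to $M'$. Since $S_1$ and $S_2$ are disjoint and $S_1$ is contracted into $n_1$, the node set $S_2$ is unchanged in $M'$, so the first task is to understand how the net $M'[S_2]$ relates to $M[S_2]$. The key observation is that contracting $S_1$ only alters the edges incident to nodes of $S_1$; since $S_2 \cap S_1 = \emptyset$, the only possible changes to $M'[S_2]$ relative to $M[S_2]$ concern which nodes of $S_2$ receive an edge from, or send an edge to, the ``outside''. But the interface nodes $I_{S_2}$ and $O_{S_2}$ are determined by $S_2$ together with the incidence structure of $M$ restricted to $S_2$ and its boundary, and edges wholly inside $S_2$ are untouched. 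I would first argue that $M'[S_2]$ is in fact \emph{identical} to $M[S_2]$ as an I/O net: no edge inside $S_2$ is removed or added, and whether a node $n \in S_2$ has an incoming (outgoing) edge from (to) outside $S_2$ is unaffected — an edge $(m,n)$ with $m \notin S_2$ either has $m \notin S_1$, in which case it survives verbatim, or $m \in S_1 \cap O_{S_1}$, in which case it is replaced by $(n_1, n)$, still an edge into $n$ from outside $S_2$. Hence $I_{S_2}$ and $O_{S_2}$ are the same in $M$ and $M'$, so $M'[S_2] = M[S_2]$, and in particular it still belongs to the same basic AND-OR class.

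The remaining work is to check well-nestedness of $S_2$ in $M'$. Here I would use the characterisation directly: for any two nodes $n_1', n_2' \in I_{S_2}$ (the input interface of $S_2$, which is the same set in $M$ and $M'$) I must show that for every $n_3 \notin S_2$, $(n_3, n_1') \in F_{M'}$ iff $(n_3, n_2') \in F_{M'}$, and that $n_1' \in I_{M'}$ iff $n_2' \in I_{M'}$, and symmetrically for the output interface. The membership-in-$I_{M'}$ clause is immediate since $S_2$ is internal (we are in the internal-contraction setting) so no node of $S_2$ is in $I_{M}$ or $I_{M'}$. For the edge clause, I would split on whether $n_3 = n_1$ (the new node) or $n_3 \neq n_1$. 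If $n_3 \neq n_1$, then $n_3 \notin S_1$, and $(n_3, n_i') \in F_{M'}$ iff $(n_3, n_i') \in F_{M}$ (such edges are copied verbatim), so the biconditional transfers from the well-nestedness of $S_2$ in $M$. If $n_3 = n_1$, then $(n_1, n_i') \in F_{M'}$ iff there is some $m \in O_{S_1}$ with $(m, n_i') \in F_M$; I must show that such an $m$ exists for $n_1'$ iff it exists for $n_2'$. This follows because $O_{S_1}$ is a subset of nodes outside $S_2$, and well-nestedness of $S_2$ in $M$ says the boundary edges into $n_1'$ and $n_2'$ from outside $S_2$ coincide — so if some $m \in O_{S_1}$ has $(m, n_1') \in F_M$, the same $m$ has $(m, n_2') \in F_M$. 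The argument for the output interface $O_{S_2}$ is symmetric, using $I_{S_1}$ in place of $O_{S_1}$.

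The main obstacle I anticipate is being careful about the interaction of the two interfaces — specifically the subtlety that an edge from $S_1$ to $S_2$ contributes simultaneously to $O_{S_1}$ (on the $S_1$ side) and to $I_{S_2}$ (on the $S_2$ side), and after contraction becomes a single edge $(n_1, \cdot)$ regardless of how many original edges there were. The lemma statement here only asks for preservation of contractibility, so the collapsing of multiple edges into one (mentioned in the case-(A) discussion) is actually harmless: contractibility is a property of the \emph{set} of boundary edges, and well-nestedness is precisely the condition that makes this collapse well-defined. I would therefore flag explicitly that $n_1$ behaves, from the point of view of $S_2$, exactly like a single external node all of whose edges to $I_{S_2}$ respect the well-nestedness partition — which is exactly what is needed — and that no new violations can be introduced because $n_1$ is a single node and thus trivially consistent with itself.
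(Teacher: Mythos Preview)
Your proposal is correct and follows essentially the same approach as the paper: first argue that $M'[S_2] = M[S_2]$ as I/O nets (so the basic AND-OR class is preserved), then verify well-nestedness by a case split on whether the external node is the new node $n_1$ or not, reducing in each case to the well-nestedness of $S_2$ in $M$. The paper organizes the well-nestedness case split slightly differently (it fixes two incoming boundary edges and splits on how many of their external endpoints equal $n_1$, giving three cases rather than your two), but the underlying reasoning is the same.
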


\begin{proof}
Recall that a subnet $S$ is contractible in $M$ if the following two properties hold: (C1) $N$ is well-nested and (C2) $M[S]$ is of a basic AND-OR type. Since the contraction of $S_1$ does not change the nodes of $S_2$ or edges between them, it is clear that $M'[S_2]$ has the same places, transitions and edges as $M[S_2]$. Also the set of input nodes is the same, because if in $M$ a node in $S_2$ had an incoming edge from outside $S_2$, it will still have such an edge after the contraction, and for internal subnets this defines whether a node is an input. By a similar argument it holds that the set of output nodes remains the same. Since all the components doe not change, it follows that (C2) is preserved for $M'[S_2]$. So it remains to be shown that property (C1) is preserved.

Since we consider only internal subnets, it is sufficient to show that (1) for all two edges $(n_2, n_3)$ and $(n_4, n_5)$ in $M'$ such that $n_2, n_4 \not\in S_2$ and $n_3, n_5 \in S_2$, there are also the edges $(n_2, n_5)$ and $(n_4, n_3)$ in $M'$, and (2) for all two edges $(n_2, n_3)$ and $(n_4, n_5)$ in $M'$ such that $n_3, n_5 \not\in S_2$ and $n_2, n_4 \in S_2$, there are also the edges $(n_2, n_5)$ and $(n_4, n_3)$ in $M'$.

To show (1) we start with assuming that there are two edges $(n_2, n_3)$ and $(n_4, n_5)$ in $M'$ such that $n_2, n_4 \not\in S_2$ and $n_3, n_5 \in S_2$. Then, we consider the three possible cases were (i) both $n_2$ and $n_4$ are equal to $n_1$, (ii) only one of $n_2$ and $n_4$ is equal to $n_1$ and (iii) both $n_2$ and $n_4$ are unequal to $n_1$.
\begin{enumerate}[(i)]
  \item It then follows that $(n_2, n_3) = (n_1, n_3) = (n_4, n_3)$ and $(n_4, n_5) = (n_1, n_5) = (n_2, n_5)$, and so the edges  $(n_4, n_3)$ and  $(n_2, n_5)$ indeed also exist in $M'$.
  \item We can assume without loss of generality that $n_2 = n_1$ and $n_4 \neq n_1$. The reasoning is then as illustrated in Figure~\ref{fig:well-nested-A} (a). By definition of contraction there is in $M$ an edge $(n_4, n_5)$ and an edge $(n'_2, n_3)$ with $n'_2$ an output node of $M[S_1]$. Since $S_2$ is by assumption well-nested in $M$, there are therefore also the edges $(n'_2, n_5)$ and $(n_4, n_3)$ in $M$. By the definition of contraction it then follows that there are edges $(n_1, n_5)$ and $(n_4, n_3)$ in $M'$.   
  \item  The reasoning is then as illustrated in Figure~\ref{fig:well-nested-A} (b). By definition of contraction there are edges $(n_2, n_3)$ and $(n_4, n_5)$ in $M$. Since $S_2$ is by assumption well-nested in $M$, there are therefore also the edges $(n_2, n_5)$ and $(n_4, n_3)$ in $M$. By the definition of contraction it then follows that there are edges $(n_1, n_5)$ and $(n_4, n_3)$ in $M'$.    
\end{enumerate}

The proof for (2) is similar, but with the direction of the edges reversed and the roles of input and output nodes interchanged.
\end{proof}

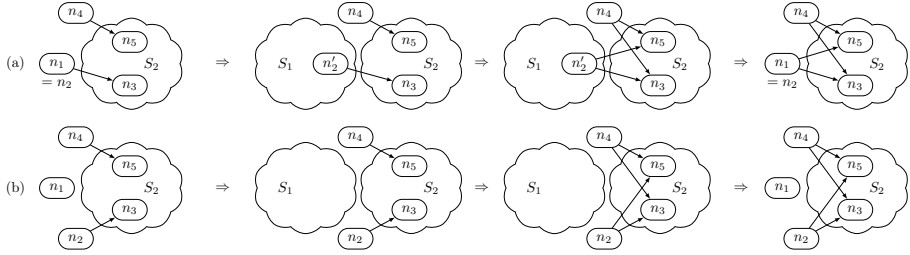
\begin{figure}[htb]
\begin{center}
\resizebox{\textwidth}{!}{%
\begin{tikzpicture}
    \tikzstyle{transition} = [rectangle,draw,minimum width=0.55cm, minimum height=0.55cm,fill=white]
    \tikzstyle{place} = [circle,draw,minimum width=0.55cm, minimum height=0.55cm,fill=white,inner sep=0.08cm]
    \tikzstyle{place-transition} = [rounded rectangle,draw,minimum width=1cm, minimum height=0.5cm,fill=white,inner sep=0.08cm]

\begin{scope} 

\node at (0cm, 0cm) {(a)};

\begin{scope}[shift={(1cm, 0cm)}]
   \node (n1) [place-transition, draw, text opacity=1, fill=white] 
     at (0cm, 0cm) {$n_1$};     
   \node [below=0cm of n1] {$=n_2$};

   \node (S2) [cloud,draw,cloud puffs=10,cloud puff arc=100, aspect=1.5,
          text opacity=1,
          minimum width=2.5cm,minimum height=2.2cm,
          right=0.2cm of n1]
     {};
   \node [right=-1cm of S2] {$S_2$};

   \node (n3) [place-transition, draw, text opacity=1, 
     below left=-0.5cm and -1cm of S2] 
     {$n_3$};     
   \node (n5) [place-transition, draw, text opacity=1, 
     above left=-0.5cm and -1cm of S2] 
     {$n_5$};     

   \node (n4) [place-transition, draw, text opacity=1,
     above left=0.2cm and 0.3cm of S2] 
     {$n_4$};     
     
   \path (n1) edge[-latex] (n3);
   \path (n4) edge[-latex] (n5);
\end{scope}

\node at (5cm, 0cm) {$\Rightarrow$};

\begin{scope}[shift={(7cm, 0cm)}]
   \node (S1) [cloud,draw,cloud puffs=10,cloud puff arc=100, aspect=1.5,
          text opacity=1,
          minimum width=2.5cm,minimum height=2.2cm]
     {};
   \node [left=-1cm of S1] {$S_1$};
   
   \node (S2) [cloud,draw,cloud puffs=10,cloud puff arc=100, aspect=1.5,
          text opacity=1,
          minimum width=2.5cm,minimum height=2.2cm,
          right=0.2cm of S1]
     {};
   \node [right=-1cm of S2] {$S_2$};
   
   \node (n2p) [place-transition, draw, text opacity=1, fill=white,
     right=-1cm of S1] 
     {$n'_2$};     
   \node (n3) [place-transition, draw, text opacity=1, 
     below left=-0.5cm and -1cm of S2] 
     {$n_3$};     
   \node (n5) [place-transition, draw, text opacity=1, 
     above left=-0.5cm and -1cm of S2] 
     {$n_5$};     
   \node (n4) [place-transition, draw, text opacity=1,
     above left=0.2cm and 0.3cm of S2] 
     {$n_4$};     
     
   \path (n2p) edge[-latex] (n3);
   \path (n4) edge[-latex] (n5);
\end{scope}

\node at (11.25cm, 0cm) {$\Rightarrow$};

\begin{scope}[shift={(13cm, 0cm)}]
   \node (S1) [cloud,draw,cloud puffs=10,cloud puff arc=100, aspect=1.5,
          text opacity=1,
          minimum width=2.5cm,minimum height=2.2cm]
     at (0cm, 0cm)     
     {};
   \node [left=-1cm of S1] {$S_1$};

   \node (S2) [cloud,draw,cloud puffs=10,cloud puff arc=100, aspect=1.5,
          text opacity=1,
          minimum width=2.5cm,minimum height=2.2cm,
          right=0.2cm of S1]
     {};
   \node [right=-1cm of S2] {$S_2$};
   
   \node (n2p) [place-transition, draw, text opacity=1, fill=white,
     right=-1cm of S1] 
     {$n'_2$};     
 
   \node (n3) [place-transition, draw, text opacity=1, 
     below left=-0.5cm and -1cm of S2] 
     {$n_3$};     

   \node (n5) [place-transition, draw, text opacity=1, 
     above left=-0.5cm and -1cm of S2] 
     {$n_5$};     
   \node (n4) [place-transition, draw, text opacity=1,
     above left=0.2cm and 0.3cm of S2] 
     {$n_4$};     
     
   \path (n2p) edge[-latex] (n3);
   \path (n4) edge[-latex] (n5);
   \path (n2p) edge[-latex] (n5);
   \path (n4) edge[-latex] (n3);  
\end{scope}

\node at (17.5cm, 0cm) {$\Rightarrow$};

\begin{scope}[shift={(18.5cm, 0cm)}]
   \node (n1) [place-transition, draw, text opacity=1, fill=white] 
     at (0cm, 0cm) {$n_1$};     
   \node [below=0cm of n1] {$=n_2$};

   \node (S2) [cloud,draw,cloud puffs=10,cloud puff arc=100, aspect=1.5,
          text opacity=1,
          minimum width=2.5cm,minimum height=2.2cm,
          right=0.2cm of n1]
     {};
   \node [right=-1cm of S2] {$S_2$};
   
   \node (n3) [place-transition, draw, text opacity=1, 
     below left=-0.5cm and -1cm of S2] 
     {$n_3$};     
   \node (n5) [place-transition, draw, text opacity=1, 
     above left=-0.5cm and -1cm of S2] 
     {$n_5$};     
   \node (n4) [place-transition, draw, text opacity=1,
     above left=0.2cm and 0.3cm of S2] 
     {$n_4$};     
     
   \path (n1) edge[-latex] (n3);
   \path (n4) edge[-latex] (n5);
   \path (n1) edge[-latex] (n5);
   \path (n4) edge[-latex] (n3);   
\end{scope}

\end{scope} 

\begin{scope}[shift={(0cm, -3cm)}] 

\node at (0cm, 0cm) {(b)};

\begin{scope}[shift={(1cm, 0cm)}]
   \node (n1) [place-transition, draw, text opacity=1, fill=white] 
     at (0cm, 0cm) {$n_1$};

   \node (S2) [cloud,draw,cloud puffs=10,cloud puff arc=100, aspect=1.5,
          text opacity=1,
          minimum width=2.5cm,minimum height=2.2cm,
          right=0.2cm of n1]
     {};
   \node [right=-1cm of S2] {$S_2$};
   
  \node (n2) [place-transition, draw, text opacity=1, fill=white,
     below left=0.2cm and 0.3cm of S2] 
     {$n_2$};     
  \node (n3) [place-transition, draw, text opacity=1, 
     below left=-0.5cm and -1cm of S2] 
     {$n_3$};     
   \node (n5) [place-transition, draw, text opacity=1, 
     above left=-0.5cm and -1cm of S2] 
     {$n_5$};     

   \node (n4) [place-transition, draw, text opacity=1,
     above left=0.2cm and 0.3cm of S2] 
     {$n_4$};     
     
   \path (n2) edge[-latex] (n3);
   \path (n4) edge[-latex] (n5);
\end{scope}

\node at (5cm, 0cm) {$\Rightarrow$};

\begin{scope}[shift={(7cm, 0cm)}]
   \node (S1) [cloud,draw,cloud puffs=10,cloud puff arc=100, aspect=1.5,
          text opacity=1,
          minimum width=2.5cm,minimum height=2.2cm]
     at (0cm, 0cm) 
     {};
   \node [left=-1cm of S1] {$S_1$};
   
   \node (S2) [cloud,draw,cloud puffs=10,cloud puff arc=100, aspect=1.5,
          text opacity=1,
          minimum width=2.5cm,minimum height=2.2cm,
          right=0.2cm of S1]
     {};
   \node [right=-1cm of S2] {$S_2$};
   
  \node (n2) [place-transition, draw, text opacity=1, fill=white,
     below left=0.2cm and 0.3cm of S2] 
     {$n_2$};     
  \node (n3) [place-transition, draw, text opacity=1, 
     below left=-0.5cm and -1cm of S2] 
     {$n_3$};     

   \node (n5) [place-transition, draw, text opacity=1, 
     above left=-0.5cm and -1cm of S2] 
     {$n_5$};     

   \node (n4) [place-transition, draw, text opacity=1,
     above left=0.2cm and 0.3cm of S2] 
     {$n_4$};     
     
   \path (n2) edge[-latex] (n3);
   \path (n4) edge[-latex] (n5);
\end{scope}

\node at (11.25cm, 0cm) {$\Rightarrow$};

\begin{scope}[shift={(13cm, 0cm)}]
   \node (S1) [cloud,draw,cloud puffs=10,cloud puff arc=100, aspect=1.5,
          text opacity=1,
          minimum width=2.5cm,minimum height=2.2cm]
     at (0cm, 0cm) 
     {};
   \node [left=-1cm of S1] {$S_1$};
   
   \node (S2) [cloud,draw,cloud puffs=10,cloud puff arc=100, aspect=1.5,
          text opacity=1,
          minimum width=2.5cm,minimum height=2.2cm,
          right=0.2cm of S1]
     {};
   \node [right=-1cm of S2] {$S_2$};
   
  \node (n2) [place-transition, draw, text opacity=1, fill=white,
     below left=0.2cm and 0.3cm of S2] 
     {$n_2$};     
  \node (n3) [place-transition, draw, text opacity=1, 
     below left=-0.5cm and -1cm of S2] 
     {$n_3$};     
 
   \node (n5) [place-transition, draw, text opacity=1, 
     above left=-0.5cm and -1cm of S2] 
     {$n_5$};     

   \node (n4) [place-transition, draw, text opacity=1,
     above left=0.2cm and 0.3cm of S2] 
     {$n_4$};     
     
   \path (n2) edge[-latex] (n3);
   \path (n4) edge[-latex] (n5);
   \path (n2) edge[-latex] (n5);
   \path (n4) edge[-latex] (n3);  
\end{scope}

\node at (17.5cm, 0cm) {$\Rightarrow$};

\begin{scope}[shift={(18.5cm, 0cm)}]
   \node (n1) [place-transition, draw, text opacity=1, fill=white] 
     at (0cm, 0cm) {$n_1$};     

   \node (S2) [cloud,draw,cloud puffs=10,cloud puff arc=100, aspect=1.5,
          text opacity=1,
          minimum width=2.5cm,minimum height=2.2cm,
          right=0.2cm of n1]
     {};
   \node [right=-1cm of S2] {$S_2$};
   
  \node (n2) [place-transition, draw, text opacity=1, fill=white,
     below left=0.2cm and 0.3cm of S2] 
     {$n_2$};      

  \node (n3) [place-transition, draw, text opacity=1, 
     below left=-0.5cm and -1cm of S2] 
     {$n_3$};     

   \node (n5) [place-transition, draw, text opacity=1, 
     above left=-0.5cm and -1cm of S2] 
     {$n_5$};     

   \node (n4) [place-transition, draw, text opacity=1,
     above left=0.2cm and 0.3cm of S2] 
     {$n_4$};     
     
   \path (n2) edge[-latex] (n3);
   \path (n4) edge[-latex] (n5);
   \path (n2) edge[-latex] (n5);
   \path (n4) edge[-latex] (n3);   
\end{scope}

\end{scope} 

\end{tikzpicture}
}
\end{center}
\caption{\label{fig:well-nested-A}Illustration of proof of preservation of well-nestedness in case (A)}
\end{figure}

\begin{lem}[commutativity of contraction]
\label{lem:commutativity-A}
Let $M''_1$ be the result of first contracting $S_1$ into node $n_1$ resulting in $M'_1$ followed by contracting $S_2$ into node $n_2$, and let $M''_2$ be the result of first contracting $S_2$ into node $n_2$ resulting in $M'_2$ followed by contracting $S_1$ into node $n_1$. Then $M''_1 = M''_2$.
\end{lem}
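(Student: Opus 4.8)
The plan is to introduce an auxiliary net $M'''$ that represents contracting $S_1$ and $S_2$ \emph{simultaneously} in $M$, and to show that both $M''_1$ and $M''_2$ coincide with $M'''$. Since $S_1 \cap S_2 = \emptyset$, define a map $\rho$ on the nodes of $M$ by $\rho(a) = n_1$ for $a \in S_1$, $\rho(a) = n_2$ for $a \in S_2$, and $\rho(a) = a$ otherwise. Let $M'''$ be the net whose places are $(P_M \setminus (S_1 \cup S_2))$ together with those $n_i$, $i \in \{1,2\}$, for which $M[S_i]$ is a pWF net, whose transitions are obtained analogously with ``tWF net'', whose input and output sets are $I_M$ and $O_M$ (these are untouched because we use only internal contractions, so $S_i \cap (I_M \cup O_M) = \emptyset$), and whose flow relation is $\{ (\rho(a), \rho(b)) \mid (a,b) \in F_M,\ \rho(a) \neq \rho(b) \}$.

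First I would verify that $M''_1 = M'''$. By Lemma~\ref{lem:contractibility-A} the subnet $S_2$ is contractible in $M'_1$, so the second contraction is defined; and, as established in the proof of that lemma, $M'_1[S_2]$ has the same places, transitions, edges, input nodes and output nodes as $M[S_2]$, because $S_1$ is disjoint from $S_2$ and contracting $S_1$ neither adds nor removes incoming or outgoing edges of nodes of $S_2$ beyond rerouting them through $n_1$. In particular $M'_1[S_2]$ is a pWF net iff $M[S_2]$ is, so the place/transition partition of $M''_1$ is exactly that of $M'''$, and likewise $I_{M''_1} = I_M$, $O_{M''_1} = O_M$. For the flow relation I would trace each edge of $M''_1$ back through the two contraction steps: an edge of $M''_1$ is either an edge of $M$ between two nodes outside $S_1 \cup S_2$, or is incident to $n_1$, or is incident to $n_2$. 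Applying the definition of contraction twice — and using that an edge of $M$ entering $S_i$ necessarily enters a node of $I_{S_i}$ while an edge of $M$ leaving $S_i$ necessarily leaves a node of $O_{S_i}$ — shows that every edge of $M''_1$ equals $(\rho(a), \rho(b))$ for some $(a,b) \in F_M$ with $\rho(a) \neq \rho(b)$, and conversely that every such pair yields an edge of $M''_1$. Hence $F_{M''_1} = F_{M'''}$, so $M''_1 = M'''$.

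Swapping the roles of $S_1$ and $S_2$ and invoking the symmetric instance of Lemma~\ref{lem:contractibility-A} (that $S_1$ is still contractible in $M'_2$), the identical argument gives $M''_2 = M'''$, and therefore $M''_1 = M''_2$.

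The step I expect to be the main obstacle is the flow-relation verification. It requires splitting the edges of $M$ into three kinds: those internal to a single $S_i$ (which collapse to a self-loop $(n_i,n_i)$ and are discarded in either order), those running between $S_1$ and $S_2$ (which all collapse to the single edge $(n_1,n_2)$ or $(n_2,n_1)$, the multiplicities disappearing automatically because the flow relation is a set), and those between some $S_i$ and a node outside $S_1 \cup S_2$ (which become $(n_i,a)$ or $(a,n_i)$). In each case one has to check that the two contraction orders produce exactly the same image under $\rho$. The disjointness $S_1 \cap S_2 = \emptyset$ and the restriction to internal contractions are what keep this case analysis from branching further, and I would invoke both explicitly wherever they are used.
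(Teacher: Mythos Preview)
Your proof is correct. The approach differs from the paper's only in organization: the paper compares $M''_1$ and $M''_2$ directly, checking that they have the same nodes, node types, and edges by a case split on whether an edge is between $n_1$ and $n_2$, between $n_1$ and an old node, or between $n_2$ and an old node. You instead factor through an explicitly defined common target $M'''$ via the quotient map $\rho$, which has the advantage that the case analysis need only be done once (for $M''_1 = M'''$) and the other equality follows by the evident symmetry in $\rho$. The underlying content is the same---disjointness of $S_1$ and $S_2$ makes $\rho_2 \circ \rho_1 = \rho_1 \circ \rho_2 = \rho$, and the flow relation being a set collapses parallel edges automatically---but your packaging is a bit cleaner and is the standard way such commutativity lemmas are proved in rewriting theory.
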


\begin{proof}
Clearly $M''_1$ and $M''_2$ will have the same set of nodes, namely $(N \setminus (S_1 \cup S_2)) \cup \{ n_1, n_2 \}$. The types of the nodes in $M$ will not change, and the types of $n_1$ and $n_2$ are also identical in $M''_1$ and $M''_2$ since they are determined by the I/O type of $M[S_1]$ and $M[S_2]$ respectively, and this is not changed for $M[S_1]$ by the contraction of $M[S_2]$ and vice versa.

That the edges are also the same in $M''_1$ and $M''_2$ can be shown as follows. For edges in these nets between nodes in $M$ it is clear that they are the same since contractions preserve these. We therefore consider the remaining three types of edges in these nets: (i) edges between $n_1$ and $n_2$, (ii) edges between $n_1$ and nodes in $M$ and (iii) edges between $n_2$ and nodes in $M$.
\begin{enumerate}[(i)]

\item For both $M''_1$ and $M''_2$ it holds that they contain an edge $(n_1, n_2)$ iff there is an edge in $M$ between an output node of $M[S_1]$ and an input node of $M[S_2]$ in $M$. So it holds that $M''_1$ contains the edge $(n_1, n_2)$ iff $M''_2$ contains it. By symmetry the same holds for the edge $(n_2, n_1)$.

\item let $n_3$ be a node in $M$. For both $M''_1$ and $M''_2$ it holds that they contain the edge $(n_1, n_3)$ iff there is an edge in $M$ between an output node of $M[S_1]$ and $n_3$. So it holds that $M''_1$ contains the edge $(n_1, n_3)$ iff $M''_2$ contains it.  By a symmetrical argument the same holds for the edge $(n_3, n_1)$.

\item The argument is similar to that of the previous case, but with $S_1$ and $n_1$ replaced with $S_2$ and $n_2$, respectively.

\end{enumerate}
\end{proof}

\subsection{Overlapping, but not nested, subnets with different I/O types}

In this subsection we consider case (B) in Figure~\ref{fig:cases-of-proof}.  Therefore, we assume that $S_1$ and $S_2$ are contractible subnets in $M$ such that (i) they share nodes but not so that one is entirely nested inside the other and (ii) the WF nets $M[S_1]$ and $M[S_2]$ different I/O types. Let us also assume that $M'$ is the result of contracting $S_1$ into the node $n_1$ and $M''$ the result of contracting $S_2$ into the node $n_2$. Recall that we need to show that (1) after contraction of $S_1$ into $n_1$, the subnet $S_2 \setminus S_1$ is contractible and (2) if we contract first $S_1$ into $n_1$ and then $S_2 \setminus S_1$ into $n_2$, the result is the same as when we contract first $S_2$ into $n_2$ and then $S_1 \setminus S_2$ into $n_1$. 

The proof of (1) will be separated into smaller lemmas. Recall that a subnet $S$ is contractible in $M$ if the following two properties hold: (C1) $S$ is well-nested in $M$ and (C2) $M[S]$ is of a basic AND-OR type. There will be a lemma that shows (C1) for $M'[S_2 \setminus S_1]$. The property (C2) will be shown by a set of lemmas that show that the defining properties of basic AND-OR nets are inherited by $M'[S_2 \setminus S_1]$ if they hold for $M[S_2]$. These properties are:
\begin{enumerate}[(i)]
  \item I/O consistency,
  \item the I/O type,
  \item AND property,
  \item OR property,
  \item acyclicity and
  \item one-input one-output property.
\end{enumerate}
Observe that (i) implies that $M'[S_2 \setminus S_1]$ is a WF net, as follows from Theorem~\ref{thm:subnet-wf-char}, which implies that every I/O consistent subnet of a WF net is a WF net, and the fact $M'$ is indeed a WF, as follows from Theorem~\ref{thm:contraction-correctness}, which states that the result of a contraction in a WF net is again a WF net. The inheritance of the properties (ii) through (vi) by $M'[S_2 \setminus S_1]$ from $M[S_2]$ shows that the WF net $M'[S_2 \setminus S_1]$ is of the same basic AND-OR type as $M[S_2]$. We now present the lemmas that state that well-nestedness and each of the defining properties of AND-OR nets are preserved. Note that for properties (i) and (ii) we have only one lemma which states that the the type of the input nodes and of the output nodes is preserved, and therefore implies that both (i) and (ii) are preserved.  For properties (iii) and (iv) we also have one lemma that proves the preservation of each of the properties.

\begin{lem}[preservation of well-nestedness] \label{lem:well-nested-B}
The subnet $S_2 \setminus S_1$ is well-nested in $M'$.
\end{lem}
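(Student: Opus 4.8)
The plan is to argue entirely in the internal setting, so that well-nestedness of $T := S_2 \setminus S_1$ in $M'$ reduces to exactly two symmetric claims: all input nodes of $M'[T]$ have the same set of incoming edges from nodes of $M'$ lying outside $T$, and (dually) all output nodes of $M'[T]$ have the same set of outgoing edges to nodes of $M'$ outside $T$. I would treat only the input side and obtain the output side by reversing all edges and swapping the roles of input and output throughout. Write $A := S_1 \setminus S_2$, $B := S_1 \cap S_2$ and $C := S_2 \setminus S_1 = T$, so that $S_1 = A \cup B$, $S_2 = B \cup C$, all of $A$, $B$, $C$ are non-empty (overlap without nesting), and the nodes of $M'$ are the nodes of $M$ outside $S_1$ together with $n_1$.

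First I would record the consequences of the hypothesis that $M[S_1]$ and $M[S_2]$ have different I/O types: their interface nodes then have different node types, so $(I_{S_1} \cup O_{S_1}) \cap (I_{S_2} \cup O_{S_2}) = \emptyset$; in particular a node of $B$ with an edge leaving $S_1$ lies in $O_{S_1}$ and is an interior node of $M[S_2]$, a node of $B$ lying in $I_{S_2} \cup O_{S_2}$ is an interior node of $M[S_1]$ and hence by the definition of a subnet has all its $M$-edges inside $S_1$, and similarly with edges reversed. I would then characterise the interface data of $M'[T]$: since contraction reroutes every edge from $S_1$ into $C$ through $n_1$ and leaves edges between $C$ and the region outside $S_1 \cup S_2$ untouched, a node $x \in C$ is an input node of $M'[T]$ iff in $M$ it receives an edge from $S_1$ or from a node outside $S_1 \cup S_2$, and its set of incoming edges from outside $T$ in $M'$ is exactly the set of its $M$-predecessors outside $S_1 \cup S_2$, together with $n_1$ iff $x$ has in $M$ at least one predecessor in $S_1$.

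The core is a case split on whether some input node of $M'[T]$ fails to be an input node of $M[S_2]$. If every input node of $M'[T]$ lies in $I_{S_2}$, then by well-nestedness of $S_2$ in $M$ they all share their predecessors outside $S_2$, which settles both the ``outside $S_1 \cup S_2$'' part of the profile and the ``predecessor in $A$'' part; the only thing left is to exclude the situation where one input node of $M[S_2]$ has a predecessor inside $B$ (and hence acquires the $n_1$-edge in $M'$) while another does not, and here I would invoke the basic AND-OR class of $M[S_2]$: it is either 11tAND or 11pOR, in which case $I_{S_2}$ is a singleton and there is nothing to prove, or it is pAND or tOR, in which case the AND- respectively OR-property forbids any input node of $M[S_2]$ from having a predecessor inside $S_2$ at all. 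If instead some input node $x_0$ of $M'[T]$ is an interior node of $M[S_2]$, then its only external incoming edge in $M'$ comes from $n_1$, it has a predecessor $w \in B \cap O_{S_1}$ in $M$, and — by well-nestedness of $S_1$, which makes every node of $O_{S_1}$ a predecessor of $x_0$ — one must have $A \cap O_{S_1} = \emptyset$; well-connectedness of $M[S_1]$ then forces an edge from $A$ into $B$, whose target $b^{*} \in B$ lies in $I_{S_2}$ and, being interior to $M[S_1]$, has no edges to the outside of $S_1 \cup S_2$; well-nestedness of $S_2$ propagates this to all of $I_{S_2}$, so no input node of $M[S_2]$ has a predecessor outside $S_1 \cup S_2$, and hence every input node of $M'[T]$ has exactly the single external incoming edge from $n_1$. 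Either way the input-side condition holds, and the output-side condition is obtained by the dual argument (with $O_{S_1}$ and $O_{S_2}$ in place of $I_{S_1}$ and $I_{S_2}$, and edge directions reversed).

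I expect the second branch of the case split — where a node of $T$ becomes an interface node of $M'[T]$ purely because an edge to or from the overlap $B$ gets rerouted through $n_1$ — to be the main obstacle, since that is where well-nestedness of $S_1$ and of $S_2$ together with well-connectedness of $M[S_1]$ must be combined to show that the overlap is always ``fed'' from $A$ in a way that collapses the entire interface profile to a single $n_1$-edge; the first branch is mostly bookkeeping once the AND/OR degree constraints are brought in. A secondary pitfall to watch is that ``input node of a subnet'' counts incoming edges from anywhere outside the subnet, so one must consistently track the four regions $A$, $B$, $C$ and ``outside $S_1 \cup S_2$'' when translating edges between $M$ and $M'$.
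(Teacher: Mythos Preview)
Your argument is correct, but it is organised differently from the paper's and uses one extra hypothesis.

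The paper proves the edge-wise formulation of well-nestedness directly: given two edges $(n_2,n_3)$ and $(n_4,n_5)$ in $M'$ with $n_3,n_5\in S_2\setminus S_1$ and $n_2,n_4\notin S_2\setminus S_1$, it splits on how many of $n_2,n_4$ equal $n_1$. The only non-trivial case is when exactly one of them, say $n_2$, equals $n_1$; there the paper further splits on whether all output nodes of $M[S_1]$ lie in $S_1\cap S_2$. If not, an output node in $S_1\setminus S_2$ together with well-nestedness of $S_2$ gives the missing edges; if so, well-connectedness of $M[S_1]$ produces an edge from $S_1\setminus S_2$ into $S_1\cap S_2$, whose target is then simultaneously in $I_{S_1}$ and $I_{S_2}$, contradicting the different I/O types.

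Your decomposition instead splits on whether every input node of $M'[S_2\setminus S_1]$ already lies in $I_{S_2}$. Your second branch (some new input node appears) is essentially the paper's contradiction argument run in the forward direction: you deduce $O_{S_1}\subseteq S_1\cap S_2$, find the edge from $A$ into $B$, and use the different I/O types to conclude that the interface profile collapses to $\{n_1\}$. The main methodological difference is your first branch: to rule out the possibility that one input node of $M[S_2]$ picks up an extra predecessor from $O_{S_1}\cap B$ while another does not, you invoke the basic AND-OR class of $M[S_2]$ (singleton input set for 11tAND/11pOR, no internal predecessors for pAND/tOR). This is legitimate since contractibility is part of the ambient hypothesis, but it is not needed: one can instead observe that if any input node of $M'[T]$ has a predecessor in $O_{S_1}$, then either $O_{S_1}\cap A\neq\emptyset$ (and well-nestedness of $S_1$ plus $S_2$ transfers the $n_1$-edge to all of them) or $O_{S_1}\subseteq B$, which your own second-branch argument shows is impossible in this case. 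The paper's proof is thus slightly more general, establishing the lemma from well-nestedness, well-connectedness and the I/O-type mismatch alone, whereas yours trades a bit of generality for a shorter first branch.
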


\begin{proof}
We first show the well-nestedness for input nodes of $M'[S_2 \setminus S_1]$. Since $S_2$ is an internal subnet, a node in $S_2 \setminus S_1$ is an input node of $M'[S_2 \setminus S_1]$ iff there is in $M'$ an edge that ends in that node and starts from a node not in $S_2 \setminus S_1$. So we need to show that if there are two edges $(n_2, n_3)$ and $(n_4, n_5)$ in $M'$ such that $n_2, n_4 \not\in S_2 \setminus S_1$ and $n_3, n_5 \in S_2 \setminus S_1$, then there are edges $(n_2, n_5)$ and $(n_4, n_3)$ in $M'$. Let us assume that such  $(n_2, n_3)$ and $(n_4, n_5)$ exist. We consider the following four cases: (i) both $n_2$ and $n_4$ are equal to $n_1$, (ii)  only one of $n_2$ and $n_4$ is equal to $n_1$, (iii) neither $n_2$ nor $n_4$ are equal to $n_1$.
\begin{enumerate}[(i)]

\item Since $n_2 = n_1 = n_4$, it holds that $(n_4, n_5) = (n_1, n_5) = (n_2, n_5)$ and that $(n_2, n_3) = (n_1, n_3) = (n_4, n_3)$. So it indeed follows that $(n_2, n_5)$ and $(n_4, n_3)$ exist in $M'$.

\item We can assume, without loss of generality, that $n_2 = n_1$ and $n_4 \neq n_1$. The reasoning that follows is illustrated in Figure~\ref{fig:well-nested-B} (a). Then, by definition of contraction, there must be an output node $n_6$ of $M[S_1]$ and an edge $(n_6, n_3)$ in $M$. We now distinguish two sub-cases, namely that (1) all output nodes of $M[S_1]$ are in $S_1 \cap S_2$ and (2) there is at least one output nodes of $M[S_1]$ in $S_1 \setminus S_2$. We consider the two cases.

\begin{enumerate}[(1)]

\item If all output nodes of  $M[S_1]$ are in $S_1 \cap S_2$, then so must $n_6$. Moreover, it also then holds for all nodes in $S_1 \setminus S_2$ that there is a path in $M[S_1]$ from them to a node in $S_1 \cap S_2$. Take one such path, and take the first edge $(n_7, n_8)$ in that path where $n_7 \in S_1 \setminus S_2$ and $n_8 \in S_1 \cap S_2$. Note that such an edge must exist if the path goes from $S_1 \setminus S_2$ to $S_1 \cap S_2$ and contains only nodes from $S_1$. Then, by the well-nestedness of $S_2$ in $M$, there is an edge $(n_4, n_8)$ in $M$, since $n_8$ and $n_5$ are both input nodes of $M[N_2]$. Then, it follows that $n_8$ is an input node of both $M[S_1]$ and $M[S_2]$, because of the incoming edge $(n_4, n_8)$. This leads to a contradiction, since it is assume that the I/O types of $M[S_1]$ and $M[S_2]$ are different, and so they cannot share input nodes.

\item We can assume that $n_6$ that is the output node of $M[S_1]$ in $S_1 \setminus S_2$. Because of the well-nestedness of $M[S_1]$ there is an edge $(n_6, n_3)$ in $M$. It then follows from the well-nestedness of $M[S_2]$ that there are also edges $(n_6, n_5)$ and $(n_4, n_3)$ in $M$. By the definition of contraction it then follows that there are edges $(n_1, n_5)$ and $(n_4, n_3)$ in $M'$. Since $n_2 = n_1$, there is then also an edge $(n_2, n_5)$ in $M'$.

\end{enumerate}

\item So we assume that $n_2 \neq n_1$ and $n_4 \neq n_1$. The reasoning that follows is illustrated in Figure~\ref{fig:well-nested-B} (b). It follows there are edges $(n_2, n_3)$ and $(n_4, n_5)$ in $M$. Since $S_2$ is well-nested, there are edges $(n_2, n_5)$ and $(n_4, n_3)$ in $M$. Since $n_2, n_3, n_4, n_5 \not\in S_1$, such edges are also in $M'$.

\end{enumerate}  
The proof for well-nestedness of the output nodes of $M'[S_2 \setminus S_1]$ is similar to that of the input nodes, except that the direction of the edges is reversed and the set of input nodes is replaced with the set of output nodes.
\end{proof}

\begin{figure}[htb]
\begin{center}
\resizebox{\textwidth}{!}{%
\begin{tikzpicture}
    \tikzstyle{transition} = [rectangle,draw,minimum width=0.55cm, minimum height=0.55cm,fill=white]
    \tikzstyle{place} = [circle,draw,minimum width=0.55cm, minimum height=0.55cm,fill=white,inner sep=0.08cm]
    \tikzstyle{place-transition} = [rounded rectangle,draw,minimum width=1cm, minimum height=0.5cm,fill=white,inner sep=0.08cm]

\begin{scope}[shift={(0cm, -8cm)}] 

\begin{scope}[shift={(1cm, 0cm)}]

   \node (S1) [cloud,cloud puffs=15,cloud puff arc=100, aspect=1.5,
          text opacity=1,
          minimum width=3cm,minimum height=3cm]
    {};

   \node (S2) [cloud,draw,cloud puffs=15,cloud puff arc=100, aspect=1.5,
          text opacity=1,
          minimum width=3cm,minimum height=3cm,
          right=-1.5cm of S1]
    {};
   \node [right=-1.5cm of S2] {$S_2 \setminus S_1$};

   \node (ES1) [cloud, draw, fill=white, color=white, cloud puffs=15,cloud puff arc=100, aspect=1.5,
          text opacity=1,
          minimum width=3cm,minimum height=3cm] at (S1.center)
    {};
    
  \begin{scope}[]
    \pgftransformshift{\pgfpointanchor{S2}{center}}
    \pgfset{cloud puffs=15,cloud puff arc=100, aspect=1.5,
       minimum width=3cm,minimum height=3cm}
    \pgfnode{cloud}{center}{}{nodename}{\pgfusepath{clip}}

   \node [draw,cloud,cloud puffs=15,cloud puff arc=100, aspect=1.5,
       minimum width=3cm,minimum height=3cm] 
     at (S1.center) {};
  \end{scope}
     
   \node (n1) [place-transition, right=-1.5cm of S1] {$n_1$};  
   \node (n3) [place-transition, above right=-0.6cm and -0.6cm of S2] {$n_3$};  
   \node (n5) [place-transition, below right=-0.6cm and -0.6cm of S2] {$n_5$};  

   \node (n4) [place-transition, below right=0.2cm and 0.2cm of S2] {$n_4$};  
   \node (n2) [place-transition, above right=0.2cm and 0.2cm of S2] {$n_2$};  
   
   \path (n2) edge[-latex] (n3);
   \path (n4) edge[-latex] (n5);

\end{scope}

\node at (0cm, 0cm) {(b)};

\node at (5cm, 0cm) {\Large $\Rightarrow$};

\begin{scope}[shift={(7cm, 0cm)}]

   \node (S1) [cloud,draw,cloud puffs=15,cloud puff arc=100, aspect=1.5,
          text opacity=1,
          minimum width=3cm,minimum height=3cm]
    {};
   \node [left=-1cm of S1] {$S_1$};

   \node (S2) [cloud,draw,cloud puffs=15,cloud puff arc=100, aspect=1.5,
          text opacity=1,
          minimum width=3cm,minimum height=3cm,
          right=-1.5cm of S1]
    {};
   \node [right=-1cm of S2] {$S_2$};
     
   \node (n3) [place-transition, above right=-0.6cm and -0.6cm of S2] {$n_3$};  
   \node (n5) [place-transition, below right=-0.6cm and -0.6cm of S2] {$n_5$};  

   \node (n4) [place-transition, below right=0.2cm and 0.2cm of S2] {$n_4$};  
   \node (n2) [place-transition, above right=0.2cm and 0.2cm of S2] {$n_2$};  
   
   \path (n2) edge[-latex] (n3);
   \path (n4) edge[-latex] (n5);

\end{scope}

\node at (11cm, 0cm) {\Large $\Rightarrow$};

\begin{scope}[shift={(13cm, 0cm)}]

   \node (S1) [cloud,draw,cloud puffs=15,cloud puff arc=100, aspect=1.5,
          text opacity=1,
          minimum width=3cm,minimum height=3cm]
    {};
   \node [left=-1cm of S1] {$S_1$};

   \node (S2) [cloud,draw,cloud puffs=15,cloud puff arc=100, aspect=1.5,
          text opacity=1,
          minimum width=3cm,minimum height=3cm,
          right=-1.5cm of S1]
    {};
   \node [right=-1cm of S2] {$S_2$};
     
   \node (n3) [place-transition, above right=-0.6cm and -0.6cm of S2] {$n_3$};  
   \node (n5) [place-transition, below right=-0.6cm and -0.6cm of S2] {$n_5$};  

   \node (n4) [place-transition, below right=0.2cm and 0.2cm of S2] {$n_4$};  
   \node (n2) [place-transition, above right=0.2cm and 0.2cm of S2] {$n_2$};  
   
   \path (n2) edge[-latex] (n3);
   \path (n4) edge[-latex] (n5);
   \path (n2) edge[-latex, bend left=45] (n5);
   \path (n4) edge[-latex, bend right=45] (n3);

\end{scope}

\begin{scope}[shift={(18cm, 0cm)}]

   \node (S1) [cloud,cloud puffs=15,cloud puff arc=100, aspect=1.5,
          text opacity=1,
          minimum width=3cm,minimum height=3cm]
    {};

   \node (S2) [cloud,draw,cloud puffs=15,cloud puff arc=100, aspect=1.5,
          text opacity=1,
          minimum width=3cm,minimum height=3cm,
          right=-1.5cm of S1]
    {};
   \node [right=-1.5cm of S2] {$S_2 \setminus S_1$};

   \node (ES1) [cloud, draw, fill=white, color=white, cloud puffs=15,cloud puff arc=100, aspect=1.5,
          text opacity=1,
          minimum width=3cm,minimum height=3cm] at (S1.center)
    {};
    
  \begin{scope}[]
    \pgftransformshift{\pgfpointanchor{S2}{center}}
    \pgfset{cloud puffs=15,cloud puff arc=100, aspect=1.5,
       minimum width=3cm,minimum height=3cm}
    \pgfnode{cloud}{center}{}{nodename}{\pgfusepath{clip}}

   \node [draw,cloud,cloud puffs=15,cloud puff arc=100, aspect=1.5,
       minimum width=3cm,minimum height=3cm] 
     at (S1.center) {};
  \end{scope}
     
   \node (n1) [place-transition, right=-1.5cm of S1] {$n_1$};  
   \node (n3) [place-transition, above right=-0.6cm and -0.6cm of S2] {$n_3$};  
   \node (n5) [place-transition, below right=-0.6cm and -0.6cm of S2] {$n_5$};  

   \node (n4) [place-transition, below right=0.2cm and 0.2cm of S2] {$n_4$};  
   \node (n2) [place-transition, above right=0.2cm and 0.2cm of S2] {$n_2$};  
   
   \path (n2) edge[-latex] (n3);
   \path (n4) edge[-latex] (n5);
   \path (n2) edge[-latex, bend left=45] (n5);
   \path (n4) edge[-latex, bend right=45] (n3);

\end{scope}

\node at (17cm, 0cm) {\Large $\Rightarrow$};

\end{scope} 

\begin{scope}[shift={(0cm, 0cm)}] 

\begin{scope}[shift={(1cm, 0cm)}]

   \node (S1) [cloud,cloud puffs=15,cloud puff arc=100, aspect=1.5,
          text opacity=1,
          minimum width=3cm,minimum height=3cm]
    {};

   \node (S2) [cloud,draw,cloud puffs=15,cloud puff arc=100, aspect=1.5,
          text opacity=1,
          minimum width=3cm,minimum height=3cm,
          right=-1.5cm of S1]
    {};
   \node [right=-1.5cm of S2] {$S_2 \setminus S_1$};

   \node (ES1) [cloud, draw, fill=white, color=white, cloud puffs=15,cloud puff arc=100, aspect=1.5,
          text opacity=1,
          minimum width=3cm,minimum height=3cm] at (S1.center)
    {};
    
  \begin{scope}[]
    \pgftransformshift{\pgfpointanchor{S2}{center}}
    \pgfset{cloud puffs=15,cloud puff arc=100, aspect=1.5,
       minimum width=3cm,minimum height=3cm}
    \pgfnode{cloud}{center}{}{nodename}{\pgfusepath{clip}}

   \node [draw,cloud,cloud puffs=15,cloud puff arc=100, aspect=1.5,
       minimum width=3cm,minimum height=3cm] 
     at (S1.center) {};
  \end{scope}
     
   \node (n1) [place-transition, right=-1.5cm of S1] {$n_1$};  
   \node [below=0cm of n1] {$=n_2$};

   \node (n3) [place-transition, above right=-0.6cm and -0.6cm of S2] {$n_3$};  
   \node (n5) [place-transition, below right=-0.6cm and -0.6cm of S2] {$n_5$};  

   \node (n4) [place-transition, below right=0.2cm and 0.2cm of S2] {$n_4$};  
   
   \path (n1) edge[-latex] (n3);
   \path (n4) edge[-latex] (n5);

\end{scope}

\node at (0cm, 0cm) {(a)};

\node [rotate=40] at (4.9cm, 1.5cm) {\Large $\Rightarrow$};

\begin{scope}[shift={(7cm, -2.5cm)}]

   \node (S1) [cloud,draw,cloud puffs=15,cloud puff arc=100, aspect=1.5,
          text opacity=1,
          minimum width=3cm,minimum height=3cm]
    {};
   \node [left=-1cm of S1] {$S_1$};

   \node (S2) [cloud,draw,cloud puffs=15,cloud puff arc=100, aspect=1.5,
          text opacity=1,
          minimum width=3cm,minimum height=3cm,
          right=-1.5cm of S1]
    {};
   \node [right=-1cm of S2] {$S_2$};
     
   \node (n6) [place-transition, above left=-0.6cm and -0.6cm of S1] {$n_6$};  
   \node (n3) [place-transition, above right=-0.6cm and -0.6cm of S2] {$n_3$};  
   \node (n5) [place-transition, below right=-0.6cm and -0.6cm of S2] {$n_5$};  

   \node (n4) [place-transition, below right=0.2cm and 0.2cm of S2] {$n_4$}; 

   \path (n6) edge[-latex] (n3);
   \path (n4) edge[-latex] (n5);
   
   \node [below right=0.7 cm and -3 cm of S1] {Not all output nodes of $M[S_1]$ in $S_1 \cap S_2$};
\end{scope}

\node at (11cm, -2.5cm) {\Large $\Rightarrow$};

\begin{scope}[shift={(13cm, -2.5cm)}]

   \node (S1) [cloud,draw,cloud puffs=15,cloud puff arc=100, aspect=1.5,
          text opacity=1,
          minimum width=3cm,minimum height=3cm]
    {};
   \node [left=-1cm of S1] {$S_1$};

   \node (S2) [cloud,draw,cloud puffs=15,cloud puff arc=100, aspect=1.5,
          text opacity=1,
          minimum width=3cm,minimum height=3cm,
          right=-1.5cm of S1]
    {};
   \node [right=-1cm of S2] {$S_2$};
     
   \node (n6) [place-transition, above left=-0.6cm and -0.6cm of S1] {$n_6$};  
   \node (n3) [place-transition, above right=-0.6cm and -0.6cm of S2] {$n_3$};  
   \node (n5) [place-transition, below right=-0.6cm and -0.6cm of S2] {$n_5$};  

   \node (n4) [place-transition, below right=0.2cm and 0.2cm of S2] {$n_4$};  

   \path (n6) edge[-latex] (n3);
   \path (n4) edge[-latex] (n5);
   \path (n6) edge[-latex] (n5);
   \path (n4) edge[-latex, bend right=20] (n3);   
\end{scope}

\begin{scope}[shift={(18cm, -2.5cm)}]

   \node (S1) [cloud,cloud puffs=15,cloud puff arc=100, aspect=1.5,
          text opacity=1,
          minimum width=3cm,minimum height=3cm]
    {};

   \node (S2) [cloud,draw,cloud puffs=15,cloud puff arc=100, aspect=1.5,
          text opacity=1,
          minimum width=3cm,minimum height=3cm,
          right=-1.5cm of S1]
    {};
   \node [right=-1.5cm of S2] {$S_2 \setminus S_1$};

   \node (ES1) [cloud, draw, fill=white, color=white, cloud puffs=15,cloud puff arc=100, aspect=1.5,
          text opacity=1,
          minimum width=3cm,minimum height=3cm] at (S1.center)
    {};
    
  \begin{scope}[]
    \pgftransformshift{\pgfpointanchor{S2}{center}}
    \pgfset{cloud puffs=15,cloud puff arc=100, aspect=1.5,
       minimum width=3cm,minimum height=3cm}
    \pgfnode{cloud}{center}{}{nodename}{\pgfusepath{clip}}

   \node [draw,cloud,cloud puffs=15,cloud puff arc=100, aspect=1.5,
       minimum width=3cm,minimum height=3cm] 
     at (S1.center) {};
  \end{scope}
     
   \node (n1) [place-transition, right=-1.5cm of S1] {$n_1$};  
   \node [below=0cm of n1] {$=n_2$};

   \node (n3) [place-transition, above right=-0.6cm and -0.6cm of S2] {$n_3$};  
   \node (n5) [place-transition, below right=-0.6cm and -0.6cm of S2] {$n_5$};  

   \node (n4) [place-transition, below right=0.2cm and 0.2cm of S2] {$n_4$};  
   
   \path (n1) edge[-latex] (n3);
   \path (n4) edge[-latex] (n5);
   \path (n1) edge[-latex] (n5);
   \path (n4) edge[-latex, bend right=30] (n3);

\end{scope}

\node at (17cm, -2.5cm) {\Large $\Rightarrow$};

\node [rotate=-40] at (4.9cm, -1.5cm) {\Large $\Rightarrow$};

\begin{scope}[shift={(7cm, 2.5cm)}]

   \node (S1) [cloud,draw,cloud puffs=15,cloud puff arc=100, aspect=1.5,
          text opacity=1,
          minimum width=3cm,minimum height=3cm]
    {};
   \node [left=-1cm of S1] {$S_1$};

   \node (S2) [cloud,draw,cloud puffs=15,cloud puff arc=100, aspect=1.5,
          text opacity=1,
          minimum width=3cm,minimum height=3cm,
          right=-1.5cm of S1]
    {};
   \node [right=-1cm of S2] {$S_2$};
     
   \node (n6) [place-transition, above right=-0.9cm and -0.5cm of S1] {$n_6$};  

   \node (n3) [place-transition, above right=-0.6cm and -0.6cm of S2] {$n_3$};  
   \node (n5) [place-transition, below right=-0.6cm and -0.6cm of S2] {$n_5$};  

   \node (n4) [place-transition, below right=0.2cm and 0.2cm of S2] {$n_4$};  

   \path (n6) edge[-latex] (n3);
   \path (n4) edge[-latex] (n5);
   
   \node [below right=0.7 cm and -3 cm of S1] {All output nodes of $M[S_1]$ in $S_1 \cap S_2$};
\end{scope}

\node at (11cm, 2.5cm) {\Large $\Rightarrow$};

\begin{scope}[shift={(13cm, 2.5cm)}]

   \node (S1) [cloud,draw,cloud puffs=15,cloud puff arc=100, aspect=1.5,
          text opacity=1,
          minimum width=3cm,minimum height=3cm]
    {};
   \node [left=-1cm of S1] {$S_1$};

   \node (S2) [cloud,draw,cloud puffs=15,cloud puff arc=100, aspect=1.5,
          text opacity=1,
          minimum width=3cm,minimum height=3cm,
          right=-1.5cm of S1]
    {};
   \node [right=-1cm of S2] {$S_2$};
     
   \node (n6) [place-transition, above right=-0.9cm and -0.5cm of S1] {$n_6$};  

   \node (n3) [place-transition, above right=-0.6cm and -0.6cm of S2] {$n_3$};  
   \node (n5) [place-transition, below right=-0.6cm and -0.6cm of S2] {$n_5$};  

   \node (n4) [place-transition, below right=0.2cm and 0.2cm of S2] {$n_4$};  
   
   \node (n8) [place-transition, below right=-0.9cm and -0.5cm of S1] {$n_8$};  
   \node (n7) [place-transition, below left=-0.6cm and -0.6cm of S1] {$n_7$};  

   \path (n6) edge[-latex] (n3);
   \path (n4) edge[-latex] (n5);
   \path (n7) edge[-latex] (n8);
   
\end{scope}

\node at (17cm, 2.5cm) {\Large $\Rightarrow$};

\begin{scope}[shift={(19cm, 2.5cm)}]

   \node (S1) [cloud,draw,cloud puffs=15,cloud puff arc=100, aspect=1.5,
          text opacity=1,
          minimum width=3cm,minimum height=3cm]
    {};
   \node [left=-1cm of S1] {$S_1$};

   \node (S2) [cloud,draw,cloud puffs=15,cloud puff arc=100, aspect=1.5,
          text opacity=1,
          minimum width=3cm,minimum height=3cm,
          right=-1.5cm of S1]
    {};
   \node [right=-1cm of S2] {$S_2$};
   \node [right=1cm of S2] {\textbf{FALSE}};
     
   \node (n6) [place-transition, above right=-0.9cm and -0.5cm of S1] {$n_6$};  

   \node (n3) [place-transition, above right=-0.6cm and -0.6cm of S2] {$n_3$};  
   \node (n5) [place-transition, below right=-0.6cm and -0.6cm of S2] {$n_5$};  

   \node (n4) [place-transition, below right=0.2cm and 0.2cm of S2] {$n_4$};  
   
   \node (n8) [place-transition, below right=-0.9cm and -0.5cm of S1] {$n_8$};  
   \node (n7) [place-transition, below left=-0.6cm and -0.6cm of S1] {$n_7$};  

   \path (n6) edge[-latex] (n3);
   \path (n4) edge[-latex] (n5);
   \path (n7) edge[-latex] (n8);
   \path (n4) edge[-latex, bend left=20] (n8);   
   
\end{scope}

\end{scope} 

\end{tikzpicture}
}
\end{center}
\caption{\label{fig:well-nested-B}Illustration of proof of preservation of well-nestedness in case (B)}
\end{figure}
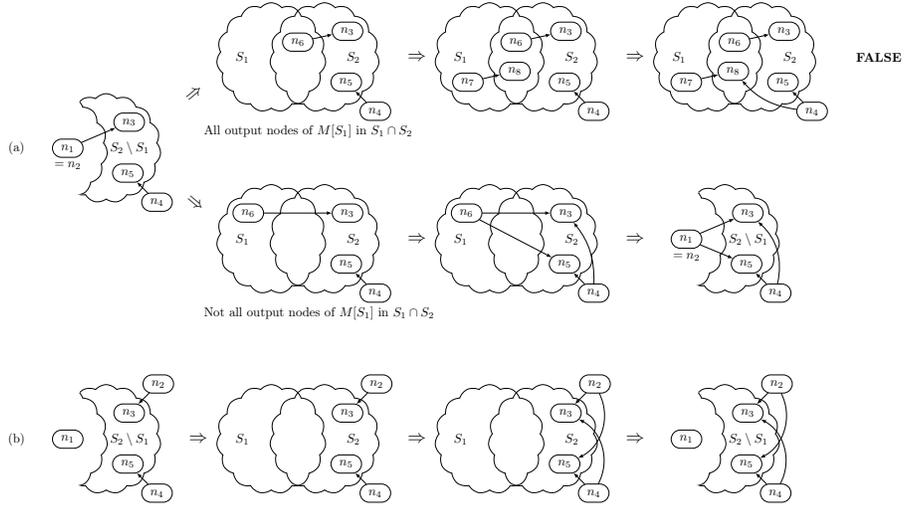

\begin{lem}[preservation of I/O type] \label{lem:IO-type-B}
The input nodes and output nodes of $M'[S_2 \setminus S_1]$ have the same type as the input nodes and output nodes of $M[S_2]$, respectively.
\end{lem}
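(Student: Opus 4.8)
The plan is to show directly that every input node, and every output node, of $M'[S_2\setminus S_1]$ has one and the same type, namely the I/O type of $M[S_2]$. Since $M[S_2]$ is of a basic AND-OR class it is a WF net and hence I/O consistent, so its input and output nodes indeed all share one type; call it $\tau_2$. Write $\tau_1$ for the I/O type of $M[S_1]$, so that the contracted node $n_1$ has type $\tau_1$, and recall that in case (B) we assume $\tau_1\neq\tau_2$; as there are only two node types, $\tau_2$ is exactly the type opposite to $\tau_1$. Note also that, since we work with internal contractions, $S_1$ and $S_2$ share no interface node of $M$, hence $I_{M'}=I_M$ and $O_{M'}=O_M$, and therefore $S_2\setminus S_1$ (which is non-empty because $S_2$ is not nested in $S_1$) is again an \emph{internal} subnet of $M'$. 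Consequently a node of $S_2\setminus S_1$ is an input node of $M'[S_2\setminus S_1]$ iff it has in $M'$ an incoming edge originating outside $S_2\setminus S_1$, and dually for output nodes.

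First I would treat the input nodes. Let $n$ be an input node of $M'[S_2\setminus S_1]$, and let $(m,n)$ be an edge of $M'$ with $m\notin S_2\setminus S_1$; note $n$ is an ordinary node of $M$ (it lies in $S_2$ but not in $S_1$), so it survives the contraction with its type unchanged. If $m=n_1$, then by the definition of contraction there is an edge $(m',n)$ in $M$ with $m'$ an output node of $M[S_1]$; since $m'$ then has type $\tau_1$ and $(m',n)$ is a flow edge of $M$, the node $n$ has the type opposite to $\tau_1$, which is $\tau_2$. If $m\neq n_1$, then $m$ is an ordinary node of $M$, and by the definition of contraction the edge $(m,n)$ is already present in $M$ with $m\notin S_1$; moreover $m\notin S_2$, for otherwise $m\in S_2\setminus S_1$, contradicting the choice of $m$. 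Hence $n$ has in $M$ an incoming edge from outside $S_2$, so $n$ is an input node of $M[S_2]$ and therefore has type $\tau_2$. In either case $n$ has type $\tau_2$.

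The argument for output nodes is completely symmetric: if $n$ is an output node of $M'[S_2\setminus S_1]$ with an edge $(n,m)$ in $M'$ and $m\notin S_2\setminus S_1$, then the case $m=n_1$ yields an edge $(n,m')$ in $M$ with $m'$ an input node of $M[S_1]$, whence $n$ has the type opposite to $\tau_1$, i.e.\ $\tau_2$; and the case $m\neq n_1$ gives $(n,m)\in F_M$ with $m\notin S_1$ and, as before, $m\notin S_2$, so $n$ is an output node of $M[S_2]$ and has type $\tau_2$. This shows that the input and output nodes of $M'[S_2\setminus S_1]$ all have type $\tau_2$, which is precisely the common type of the input and output nodes of $M[S_2]$.

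There is no serious obstacle here; the one point not to overlook is that this is exactly where the hypothesis of case (B)---that $M[S_1]$ and $M[S_2]$ have different I/O types---is used, namely to identify ``the type opposite to $\tau_1$'' with $\tau_2$ in the subcase where the external neighbour of $n$ in $M'$ turns out to be the contracted node $n_1$. A convenient by-product of the argument is that it simultaneously establishes that $M'[S_2\setminus S_1]$ is I/O consistent, which is property (i) in the list, so that property needs no separate lemma.
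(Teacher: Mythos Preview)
Your proof is correct and follows essentially the same approach as the paper's own proof: take an input (resp.\ output) node of $M'[S_2\setminus S_1]$, look at an incoming (resp.\ outgoing) edge in $M'$ from outside $S_2\setminus S_1$, and split into the two cases according to whether the external neighbour is the contracted node $n_1$ or not, using the opposite-I/O-type hypothesis in the first case. Your write-up is slightly more explicit than the paper's in justifying why, in the second case, the neighbour actually lies outside $S_2$, and your closing remark that I/O consistency comes along for free matches how the paper uses this lemma.
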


\begin{proof}
We first consider input nodes. Assume $n_2$ is an input node of  $M'[S_2 \setminus S_1]$, then it will have in $M'$ an incoming edge $(n_3, n_2)$ with $n_3 \not\in S_2 \setminus S_1$. Then either (i) $n_3 = n_1$ or (ii) $n_3 \neq n_1$:
\begin{enumerate}[(i)]

  \item Then in $M[S_1]$ there is an output node $n_4$ and an edge $(n_4, n_2)$.  So the type of $n_2$ in $M$ is the opposite of the I/O type of $M[S_1]$. Since by assumption the I/O type of $M[S_1]$ is the opposite of the I/O type of $M[S_2]$, it follows that the type of $n_2$ in $M$ is the I/O type of $M[S_2]$, and since the type of a node is not changed by contraction, the type of $n_2$ in $M'$ is also the I/O type of $M[S_2]$.
  
  \item It follows that this edge exists in $M$ and therefore $n_2$ is an input node of $M[S_2]$. Since the contraction does not change the type of nodes, the type of $n_2$ in $M'$ is the same as the type of the input nodes of $M[S_2]$.

\end{enumerate}
The argument for output nodes is similar, except that edges are reversed and the set of input nodes is replaced with the set of output nodes.
\end{proof}

\begin{lem}[preservation of the AND and the OR property] \label{lem:AND-OR-B}
The net $M'[S_2 \setminus S_1]$ has the AND (OR) property if $M[S_2]$ has it.
\end{lem}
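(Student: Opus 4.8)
The plan is to reduce the lemma to a purely local check at each place of $M'[S_2\setminus S_1]$ (for the AND part) and each transition (for the OR part), phrased in terms of presets, postsets and input/output status. First I would record that, reading the convention ``being an input node is equivalent to having an extra incoming edge'' literally, the AND property of a WF net $N=(P,T,F,I,O)$ is equivalent to: for every place $p$, $|\bullet_N p| + \mathbb{1}[p\in I_N] = 1$ and $|p\bullet_N| + \mathbb{1}[p\in O_N] = 1$; dually the OR property is the same statement with transitions in place of places. Since $S_1$ and $S_2$ are not nested, $S_2\setminus S_1\neq\emptyset$, and the node set of $M'[S_2\setminus S_1]$ is exactly $S_2\setminus S_1$, which is a subset of the node set of $M[S_2]$; moreover $M[S_2]$ is a WF net (by Theorem~\ref{thm:subnet-wf-char} and the fact that it is of a basic AND-OR class) and $M'$ is a WF net (Theorem~\ref{thm:contraction-correctness}). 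So it suffices to fix a place $p\in S_2\setminus S_1$, assume the two conditions above hold for $p$ in $M[S_2]$, and derive them for $p$ in $M'[S_2\setminus S_1]$, and symmetrically for the OR part with a transition $t\in S_2\setminus S_1$.

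Next I would isolate the two structural facts about the contraction of $S_1$ that drive the argument. (a) An edge of $M$ between two nodes outside $S_1$ is unchanged by the contraction, and conversely every edge of $M'$ not incident to $n_1$ arises this way; an edge $(n_1,a)$ of $M'$ (with $a\notin S_1$) exists iff some output node of $M[S_1]$ has in $M$ an edge to $a$, and $(a,n_1)$ exists iff some input node of $M[S_1]$ has in $M$ an edge from $a$. (b) If $q\in S_1$ and $(q,p)\in F_M$ with $p\notin S_1$, then $q$ is an output node of $M[S_1]$; by well-nestedness of $S_1$ applied to its output nodes (using $p\notin I_M$, as $S_1$ is internal), if one output node of $M[S_1]$ has an edge to $p$ then every output node does. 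Since $M[S_1]$ has the I/O type opposite to $M[S_2]$, the node $n_1$ is a transition exactly when $M[S_1]$ is a tWF net. This gives a dichotomy for a place $p\in S_2\setminus S_1$: either $\bullet_M p\cap S_1=\emptyset$ and $p\bullet_M\cap S_1=\emptyset$ --- which is automatic when $M[S_1]$ is a pWF net, since then the output/input nodes of $M[S_1]$ are places and cannot be adjacent to the place $p$ --- call this \emph{Case A}; or $M[S_1]$ is a tWF net and $p$ is adjacent in $M$ to output/input nodes of $M[S_1]$, \emph{Case B}.

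In Case A the preset, postset and input/output status of $p$ with respect to $M[S_2]$ and to $M'[S_2\setminus S_1]$ are literally the same (no edge incident to $p$ is touched, and no edge $(n_1,p)$ or $(p,n_1)$ is created), so the AND condition for $p$ transfers verbatim. In Case B I would combine fact (b) with the assumed AND condition for $p$ in $M[S_2]$: the preset of $p$ in $M[S_2]$ contains all of $O_{S_1}\cap S_2$, so the AND condition forces either (i) $p\notin I_{M[S_2]}$, hence $|\bullet_M p\cap S_2|=1$ and $\bullet_M p\subseteq S_2$, whence $M[S_1]$ has a single output node $q^{*}$, $q^{*}\in S_1\cap S_2$, and $\bullet_M p=\{q^{*}\}$; or (ii) $p\in I_{M[S_2]}$, hence $\bullet_M p\cap S_2=\emptyset$, whence every output node of $M[S_1]$ lies in $S_1\setminus S_2$ and $\bullet_M p$ is contained in $S_1$ together with the exterior of $S_1\cup S_2$. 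In either subcase, after contracting $S_1$ every incoming edge of $p$ coming from $S_1$ becomes the single edge $(n_1,p)$, while any remaining incoming edges (only possible in (ii)) come from outside $S_1\cup S_2$; since $n_1\notin S_2\setminus S_1$ and those exterior nodes are also outside $S_2\setminus S_1$, $p$ has empty preset in $M'[S_2\setminus S_1]$ and --- because $(n_1,p)\in F_{M'}$ --- is an input node there, so $|\bullet p|+\mathbb{1}[p\in I]=0+1=1$, as required. The claim for the postset of $p$ is proved identically after reversing all edges and exchanging input with output and $O_{S_1}$ with $I_{S_1}$; and the OR part is the exact dual, run for a transition $t\in S_2\setminus S_1$, with Case A now automatic when $M[S_1]$ is a tWF net and Case B invoking well-nestedness of $S_1$ and the assumed OR condition for $t$ in $M[S_2]$.

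The main obstacle is the bookkeeping in Case B: one must keep straight how the preset (and postset) of $p$ splits among $S_1\cap S_2$, $S_1\setminus S_2$ and the exterior of $S_1\cup S_2$, and how the input/output status of $p$ shifts as one passes from $M$ to $M[S_2]$ to $M'[S_2\setminus S_1]$. The two hypotheses on the subnets are exactly what makes this go through: well-nestedness of $S_1$ collapses the possibly many edges between $O_{S_1}$ and $p$ into one, consistently, and the AND (resp.\ OR) property of $M[S_2]$ forbids the ``bad'' configurations in which $p$ would end up with two genuine incoming edges inside $S_2$. The configurations that do arise are precisely those drawn in Figure~\ref{fig:well-nested-B}, so the write-up can be organised to follow that figure case by case.
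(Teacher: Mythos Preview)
Your argument is correct. Compared with the paper's proof, which runs by contradiction --- assume some place $p$ violates the AND condition in $M'[S_2\setminus S_1]$ and chase back to a violation in $M[S_2]$ --- you give a direct verification that for every place $p\in S_2\setminus S_1$ the quantity $|\bullet p|+\mathbb{1}[p\in I]$ (and dually for the postset) is preserved. Both routes hinge on the same structural observation: any edge into $p$ from $S_1$ must come from an output node of $M[S_1]$, and after contraction all such edges collapse to the single edge $(n_1,p)$, which lies outside $S_2\setminus S_1$. The paper's contrapositive is a bit leaner, since it only needs to exhibit one bad edge pair in $M[S_2]$ and never has to pin down, say, that $|O_{S_1}|=1$ in your subcase~(i); your direct approach establishes more structure than is strictly required there, but correctly. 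One small presentational point: your Case~A/Case~B dichotomy should really be made separately for the preset and the postset of $p$ (it can happen that $\bullet_M p\cap S_1=\emptyset$ while $p\bullet_M\cap S_1\neq\emptyset$); you clearly intend this when you write ``the claim for the postset of $p$ is proved identically after reversing all edges'', so just phrase the split as per-direction from the outset.
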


\begin{proof}
We first consider the AND property, and of this first the restriction on incoming edges. The proof proceeds by contradiction, so we start with assuming that $M[S_2]$ has the AND property and $M'[S_2 \setminus S_1]$ does not have the AND property. We will show that it then follows that $M[S_2]$ does \emph{not} have the AND property.

If $M'[S_2 \setminus S_1]$ does not have the AND property, then in $M'[S_2 \setminus S_1]$ there is a place $p$ such that (i) $p$ has at least two incoming edges and is not an input place or (ii) $p$ is an input place and has an incoming edge $(t, p)$. In case (i) these two edges already existed in $M$, and so $M[S_2]$ does not have the AND property. In case (ii) as an input place of an internally nested net $p$ has in $M'$ an incoming edge $(t', p)$ with $t' \not\in S_2 \setminus S_1$. It then holds that either (a) $t' \neq n_1$ or (b) $t' = n_1$. We consider these two cases:
  
    \begin{enumerate}[(a)]
    
      \item  Then $(t', p) $ exists in $M$ and $t' \not\in S_1 \cup S_2$. Since there was also an edge $(t, p)$ in $M'[S_2 \setminus S_1]$ with $t \in S_2 \setminus S_1$, which therefore also exists in $M[S_2]$, it follows that $p$ has in $M[S_2]$ at least two distinct incoming edges, and so $M[S_2]$ does not satisfy the AND property.
            
      \item Then there is in $M$ an edge $(t'', p)$ from an output transition $t''$ of $M[S_1]$ to $p$. It holds that (1) $t'' \in S_1 \cap S_2$ or (2) $t'' \in S_1 \setminus S_2$. If (1) then $(t'', p)$  exists in $M[S_2]$. However, since there is the edge $(t, p)$ in $M'[S_2 \setminus S_1]$ with $t \in S_2 \setminus S_1$, which necessarily also exists in $M[S_2]$, it follows that in $M[S_2]$ the place $p$ has two distinct incoming edges, and therefore does not have the AND property. If (2) then $p$ is an input node in $M[S_2]$, since it has an incoming edge in $M$ starting from a node outside $S_2$. However, as we just saw in case (1), there is also an edge $(t, p)$ in $M[S_2]$. So, in $M[S_2]$ the place $p$ is an input node with an incoming edge, and therefore $M[S_2]$ does not have the AND property.
      
    \end{enumerate}
    The proof for the restriction on the outgoing edges is similar, except that the direction of the edges is reversed and the sets of output places are replaced with the sets of input places, and vice versa.
    
    The proof for the preservation for the OR property is similar to that for the AND property, except that places are replaced with transitions and vice versa.
\end{proof}

\begin{lem}[preservation of acyclicity] \label{lem:acyclic-B}
The net $M'[S_2 \setminus S_1]$ is acyclic if $M[S_2]$ is acyclic.
\end{lem}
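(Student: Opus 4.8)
The plan is to show that, viewed as a directed graph, $M'[S_2 \setminus S_1]$ is a subgraph of $M[S_2]$ — on a subset of its nodes and with a subset of its edges — so that acyclicity transfers for free. Equivalently, I would run the same proof-by-contradiction scheme as in the preceding lemmas: assume $M'[S_2 \setminus S_1]$ contains a cycle and pull it back to a cycle in $M[S_2]$.

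First I would observe that the fresh node $n_1$ into which $S_1$ is contracted is not a node of $M$ at all, hence not a node of $S_2$; so $n_1 \notin S_2 \setminus S_1$. Consequently every node of $M'[S_2 \setminus S_1]$ is a node of $M$ that lies in $S_2$, and therefore is a node of $M[S_2]$. The heart of the argument is the corresponding statement for edges: every edge of $M'[S_2 \setminus S_1]$ is already an edge of $M[S_2]$. To see this, take an edge $(a,b)$ of $M'$ with $a, b \in S_2 \setminus S_1$. Since $a \neq n_1$ and $b \neq n_1$, inspection of the three clauses in the definition of $F_{M'}$ settles it: the second clause produces only edges whose target is $n_1$, and the third only edges whose source is $n_1$, so $(a,b)$ must arise from the first clause, i.e. $(a,b) \in F_M$ with $a \notin S_1$ and $b \notin S_1$. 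As $a,b \in S_2$ as well, we get $(a,b) \in F_M \cap (S_2 \times S_2) = F_{S_2}$, so $(a,b)$ is indeed an edge of $M[S_2]$.

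Combining the two observations, $M'[S_2 \setminus S_1]$ is obtained from $M[S_2]$ by deleting the nodes of $S_1 \cap S_2$ together with some edges, and removing nodes and edges cannot create a cycle; hence $M'[S_2 \setminus S_1]$ is acyclic whenever $M[S_2]$ is. Phrased as a contradiction: a cycle in $M'[S_2 \setminus S_1]$ uses only nodes of $S_2 \setminus S_1$, none of which is $n_1$, and by the edge observation each of its edges lies in $F_{S_2}$, so the same sequence of nodes and edges is a cycle in $M[S_2]$, contradicting acyclicity of $M[S_2]$.

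I do not expect a real obstacle here. The one point that genuinely needs to be checked — and the whole content of the lemma — is that contracting $S_1$ introduces no \emph{new} edge between two nodes of $S_2 \setminus S_1$; this holds precisely because every edge created by a contraction is incident to the new node $n_1$, and $n_1 \notin S_2 \setminus S_1$. (Note this is the opposite direction from Lemma~\ref{lem:contraction_keeps_paths}, which pushes paths forward from $M$ to $M'$; here we pull edges back from $M'$ to $M$, so that lemma is not needed.)
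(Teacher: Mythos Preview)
Your proposal is correct and takes essentially the same approach as the paper: the paper's proof is a single sentence observing that any cycle in $M'[S_2 \setminus S_1]$ has all its edges already present in $M[S_2]$, which is exactly your edge-pullback argument. You have simply spelled out in detail why no edge of $M'$ between two nodes of $S_2 \setminus S_1$ can be new (it must come from the first clause of $F_{M'}$ since $n_1$ is not involved), whereas the paper leaves this implicit.
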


\begin{proof}
If $M'[S_2 \setminus S_1]$ contains a cycle, then the edges of this cycle are also present in $M[S_2]$, and so it would also contain a cycle.
\end{proof}

\begin{lem}[preservation of one-input one-output property] \label{lem:one-input-one-output-B}
The net $M'[S_2 \setminus S_1]$ is a one-input one-output net if $M[S_2]$ is a 11tAND or a 11pOR net.
\end{lem}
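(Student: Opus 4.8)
The plan is to exploit the symmetry of the situation together with the lemmas already established for case~(B), and to reduce everything to a combinatorial claim about input transitions that is then settled using the well-nestedness of $S_1$ and the AND (respectively OR) property of $M[S_2]$. Since in case~(B) the two subnets have different I/O types and $M[S_2]$ here is a $\mathbf{11tAND}$ net (a tWF net) or a $\mathbf{11pOR}$ net (a pWF net), I would fix attention on the sub-case where $M[S_2]$ is $\mathbf{11tAND}$, so that $M[S_1]$ is a pWF net and the contraction node $n_1$ is a place; the $\mathbf{11pOR}$ sub-case then follows verbatim after interchanging the words ``place'' and ``transition'' and reversing all edges. By Lemmas~\ref{lem:AND-OR-B}, \ref{lem:acyclic-B}, \ref{lem:IO-type-B} and \ref{lem:well-nested-B}, together with Theorems~\ref{thm:subnet-wf-char} and \ref{thm:contraction-correctness}, the net $M'[S_2\setminus S_1]$ is already known to be a tAND net, and, being a subnet of the WF net $M'$, it is an I/O net by Lemma~\ref{lem:subnet-is-IOnet}, hence has at least one input transition and at least one output transition. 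It therefore suffices to show it has \emph{at most} one input transition; the statement for output transitions then follows by the dual argument, obtained by reversing all edges and interchanging the roles of input and output, of the input places and output places of $M[S_1]$, and of $t_I$ and $t_O$.

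The first key step is a structural description of the input transitions of $M'[S_2\setminus S_1]$. Since $M[S_1]$ is a pWF net, every $M$-edge with exactly one endpoint in $S_1$ has that endpoint an input place or an output place of $M[S_1]$, in particular a place. Unfolding the definition of contraction, a transition $t\in S_2\setminus S_1$ is an input transition of $M'[S_2\setminus S_1]$ iff either (a) $t$ is the unique input transition $t_I$ of $M[S_2]$, or (b) there is an output place $p$ of $M[S_1]$ with $p\in S_1\cap S_2$ whose unique outgoing edge inside $M[S_2]$ (unique by the AND property, since $p$ is not an interface node of $M[S_2]$) points to $t$. The second key step is that (b) can hold for at most one $t$: if output places $p,p'\in S_1\cap S_2$ of $M[S_1]$ witness (b) for $t$ and $t'$, then, since $t,t'\notin S_1$ and $S_1$ is well-nested, every output place of $M[S_1]$ — in particular $p$ — has an $M$-edge to both $t$ and $t'$; as $t,t'\in S_2$ and $p$ has only one outgoing edge inside $M[S_2]$, this forces $t=t'$.

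The main obstacle is to rule out the coexistence of a type-(a) transition $t_I$ and a \emph{distinct} type-(b) transition $t$, both genuinely lying in $S_2\setminus S_1$ (so in particular $t_I\notin S_1$). The plan here is a chain of three observations. From $t\neq t_I$ and the well-nestedness of $S_1$ one deduces that every output place of $M[S_1]$ lies in $S_1\cap S_2$: an output place outside $S_2$ would, by well-nestedness, also point to $t$, which would make $t$ an input node of $M[S_2]$ and force $t=t_I$. From $t_I\notin S_1$ one deduces that $M[S_1]$ has no edge from $S_1\setminus S_2$ to $S_1\cap S_2$: such an edge would make its $(S_1\cap S_2)$-endpoint an input node of $M[S_2]$, hence equal to $t_I$, contradicting $t_I\notin S_1$. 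But then, in $M[S_1]$, every node of the nonempty set $S_1\setminus S_2$ can only reach nodes of $S_1\setminus S_2$, so it can never reach an output place of $M[S_1]$ (all of which lie in $S_1\cap S_2$), contradicting the well-connectedness of $M[S_1]$.

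Putting the pieces together: every input transition of $M'[S_2\setminus S_1]$ is of type (a) or type (b), at most one type-(b) transition exists, and a type-(a) transition and a distinct type-(b) transition cannot coexist, so $M'[S_2\setminus S_1]$ has at most — hence exactly — one input transition; dually it has exactly one output transition, and the $\mathbf{11pOR}$ sub-case is identical after swapping places and transitions. I expect the routine part to be the careful unfolding of the contraction definition in the two ``type'' cases and keeping track of which well-nestedness condition is invoked where; the genuinely delicate point is the three-step argument of the previous paragraph, and in particular making sure the well-connectedness contradiction is stated for the correct side and genuinely uses that $S_1\setminus S_2$ is nonempty (which holds because $S_1$ and $S_2$ are not nested).
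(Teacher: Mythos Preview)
Your proof is correct, but it takes a genuinely different route from the paper's.

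The paper proceeds by a direct contradiction: assuming two distinct input nodes $n_2,n_3$ of $M'[S_2\setminus S_1]$ with witnessing edges $(n_4,n_2)$ and $(n_5,n_3)$ in $M'$, it case-splits on how many of $n_4,n_5$ equal $n_1$. The key shortcut is that in the cases where at least one of $n_4,n_5$ differs from $n_1$, the paper simply invokes Lemma~\ref{lem:well-nested-B} (well-nestedness of $S_2\setminus S_1$ in $M'$) to conclude that a single external node has edges to both $n_2$ and $n_3$, and these edges persist in $M$, immediately giving two input nodes of $M[S_2]$. Only the remaining case $n_4=n_5=n_1$ needs a direct look at an output node $n_6$ of $M[S_1]$ and a split on whether $n_6\in S_1\cap S_2$ (contradicting the AND/OR property of $M[S_2]$) or $n_6\in S_1\setminus S_2$ (contradicting one-input of $M[S_2]$).

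You instead classify input transitions of $M'[S_2\setminus S_1]$ into type~(a) (the original $t_I$) and type~(b) (targets of output places of $M[S_1]$ lying in $S_1\cap S_2$), show at most one of each, and then rule out the coexistence of a distinct pair via a three-step argument culminating in a well-connectedness contradiction for $M[S_1]$. This is more self-contained---you do not exploit the well-nestedness of $S_2\setminus S_1$ in $M'$ in the core argument---but it is longer, and the well-connectedness step is a heavier tool than what the paper needs. The paper's approach buys brevity by reusing Lemma~\ref{lem:well-nested-B}; your approach buys a more explicit structural picture of where the input nodes can come from, at the cost of the extra reachability argument.
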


\begin{proof}
We start with proving the one-input property. Let $M[S_2]$ be a 11tAND or a 11pOR net. The proof proceeds by contradiction, so we assume that $M'[S_2 \setminus S_1]$ does not have the one-input property, i.e., there are two distinct nodes $n_2$ and $n_3$ in $M'[S_2 \setminus S_1]$ with edges $(n_4, n_2)$ and $(n_5, n_3)$ in $M'$ such that $n_4, n_5 \not\in S_2 \setminus S_1$. We then consider the following three cases for the nodes $n_4$ and $n_5$: (i) both are equal to $n_1$, (ii) one of them is equal to $n_1$ and (ii) neither is equal to $n_1$.
\begin{enumerate}[(i)]

  \item This case is illustrated in Figure~\ref{fig:one-input-one-output-B} (a). In this case there must be in $M[S_1]$ an output node $n_6$ with edges $(n_6, n_2)$ and $(n_6, n_3)$ in $M$. Note that both edges must exist, since $S_1$ is assumed to be well-nested in $M$. We consider the following two cases: (1) $n_6 \in S_1 \cap S_2$ and (2) $n_6 \in S_1 \setminus S_2$:
  
    \begin{enumerate}[(1)]
    
      \item The WF net $M'[S_2 \setminus S_1]$ is either a pWF net or a tWF net. In the first case, it follows that $n_2$ and $n_3$ are places, since they are input nodes of $M'[S_2 \setminus S_1]$ and $n_6$ is a transition. By Lemma~\ref{lem:IO-type-B}, the I/O type of $M'[S_2 \setminus S_1]$ is equal to that of $M[S_2]$, and so $M[S_2]$ must be a 11pOR net, since by assumption it is either a 11pOR net or a 11tAND net. But this is contradicted by the two distinct edges $(n_6, n_2)$ and $(n_6, n_3)$ in $M[S_2]$, which violate the OR property in regard to the transition $t_6$. Similarly, if we assume that $M'[S_2 \setminus S_1]$ is a tWF net, then we can derive that $M[S_2]$ is a 11tAND net, but at the same time the edges $(n_6, n_2)$ and $(n_6, n_3)$ violate the AND property.
      
      
      \item In this case $n_2$ and $n_3$ have in $M$ incoming edges from outside $S_2$, and so are both input nodes of $M[S_2]$. However, this contradicts the assumption that $M[S_2]$ has the one-input property.
      
    \end{enumerate}

  \item This case is illustrated in Figure~\ref{fig:one-input-one-output-B} (b). Without loss of generality we can assume that $n_5 = n_1$ and $n_4 \neq n_1$. By Lemma~\ref{lem:well-nested-B} we know that $S_2 \setminus S_1$ is well-nested in $M'$, and so the edge $(n_4, n_2)$ implies the edge $(n_4, n_3)$ in $M'$. Since both edges will also exist in $M$, it follows that $n_2$ and $n_3$ are both input nodes of $M[S_2]$. However, this contradicts the assumption that $M[S_2]$ has the one-input property.
  
  \item This case is illustrated in Figure~\ref{fig:one-input-one-output-B} (c). It proceeds the same as in the previous case.
  
\end{enumerate}
 
The one-output property can be shown to be preserved in a similar way, but with the direction of the edges reversed and the sets of input places replaced by the set of output places and vice versa.
\end{proof}

\begin{figure}[htb]
\begin{center}
\resizebox{0.8\textwidth}{!}{%
\begin{tikzpicture}
    \tikzstyle{transition} = [rectangle,draw,minimum width=0.55cm, minimum height=0.55cm,fill=white]
    \tikzstyle{place} = [circle,draw,minimum width=0.55cm, minimum height=0.55cm,fill=white,inner sep=0.08cm]
    \tikzstyle{place-transition} = [rounded rectangle,draw,minimum width=1cm, minimum height=0.5cm,fill=white,inner sep=0.08cm]

\begin{scope}[shift={(0cm, -7cm)}] 

\begin{scope}[shift={(1cm, -2cm)}]
   \node (S1) [cloud,cloud puffs=15,cloud puff arc=100, aspect=1.5,
          text opacity=1,
          minimum width=3cm,minimum height=3cm]
    {};

   \node (S2) [cloud,draw,cloud puffs=15,cloud puff arc=100, aspect=1.5,
          text opacity=1,
          minimum width=3cm,minimum height=3cm,
          right=-1.5cm of S1]
    {};
   \node [right=-1.5cm of S2] {$S_2 \setminus S_1$};

   \node (ES1) [cloud, draw, fill=white, color=white, cloud puffs=15,cloud puff arc=100, aspect=1.5,
          text opacity=1,
          minimum width=3cm,minimum height=3cm] at (S1.center)
    {};
    
  \begin{scope}[]
    \pgftransformshift{\pgfpointanchor{S2}{center}}
    \pgfset{cloud puffs=15,cloud puff arc=100, aspect=1.5,
       minimum width=3cm,minimum height=3cm}
    \pgfnode{cloud}{center}{}{nodename}{\pgfusepath{clip}}

   \node [draw,cloud,cloud puffs=15,cloud puff arc=100, aspect=1.5,
       minimum width=3cm,minimum height=3cm] 
     at (S1.center) {};
  \end{scope}
     
   \node (n1) [place-transition, right=-1.5cm of S1] {$n_1$};  
   \node (n2) [place-transition, above right=-0.6cm and -0.6cm of S2] {$n_2$};  
   \node (n3) [place-transition, below right=-0.6cm and -0.6cm of S2] {$n_3$};  

   \node (n5) [place-transition, below right=0.2cm and 0.2cm of S2] {$n_5$};  
   \node (n4) [place-transition, above right=0.2cm and 0.2cm of S2] {$n_4$};  
   
   \path (n4) edge[-latex] (n2);
   \path (n5) edge[-latex] (n3);
\end{scope}

\begin{scope}[shift={(1cm, 2cm)}]
   \node (S1) [cloud,cloud puffs=15,cloud puff arc=100, aspect=1.5,
          text opacity=1,
          minimum width=3cm,minimum height=3cm]
    {};

   \node (S2) [cloud,draw,cloud puffs=15,cloud puff arc=100, aspect=1.5,
          text opacity=1,
          minimum width=3cm,minimum height=3cm,
          right=-1.5cm of S1]
    {};
   \node [right=-1.5cm of S2] {$S_2 \setminus S_1$};

   \node (ES1) [cloud, draw, fill=white, color=white, cloud puffs=15,cloud puff arc=100, aspect=1.5,
          text opacity=1,
          minimum width=3cm,minimum height=3cm] at (S1.center)
    {};
    
  \begin{scope}[]
    \pgftransformshift{\pgfpointanchor{S2}{center}}
    \pgfset{cloud puffs=15,cloud puff arc=100, aspect=1.5,
       minimum width=3cm,minimum height=3cm}
    \pgfnode{cloud}{center}{}{nodename}{\pgfusepath{clip}}

   \node [draw,cloud,cloud puffs=15,cloud puff arc=100, aspect=1.5,
       minimum width=3cm,minimum height=3cm] 
     at (S1.center) {};
  \end{scope}
     
   \node (n1) [place-transition, right=-1.5cm of S1] {$n_1$};  
   \node [below=0cm of n1] {$=n_5$};

   \node (n2) [place-transition, above right=-0.6cm and -0.6cm of S2] {$n_2$};  
   \node (n3) [place-transition, below right=-0.6cm and -0.6cm of S2] {$n_3$};  

   \node (n4) [place-transition, above right=0.2cm and 0.2cm of S2] {$n_4$};  
   
   \path (n4) edge[-latex] (n2);
   \path (n1) edge[-latex] (n3);
\end{scope}

\node at (0cm, 2cm) {(b)};

\node at (0cm, -2cm) {(c)};

\node [rotate=-30] at (5.5cm, 1cm) {\Large $\Rightarrow$};

\node [rotate=30] at (5.5cm, -1cm) {\Large $\Rightarrow$};

\begin{scope}[shift={(7cm, 0cm)}]
   \node (S1) [cloud,cloud puffs=15,cloud puff arc=100, aspect=1.5,
          text opacity=1,
          minimum width=3cm,minimum height=3cm]
    {};

   \node (S2) [cloud,draw,cloud puffs=15,cloud puff arc=100, aspect=1.5,
          text opacity=1,
          minimum width=3cm,minimum height=3cm,
          right=-1.5cm of S1]
    {};
   \node [right=-1.5cm of S2] {$S_2 \setminus S_1$};

   \node (ES1) [cloud, draw, fill=white, color=white, cloud puffs=15,cloud puff arc=100, aspect=1.5,
          text opacity=1,
          minimum width=3cm,minimum height=3cm] at (S1.center)
    {};
    
  \begin{scope}[]
    \pgftransformshift{\pgfpointanchor{S2}{center}}
    \pgfset{cloud puffs=15,cloud puff arc=100, aspect=1.5,
       minimum width=3cm,minimum height=3cm}
    \pgfnode{cloud}{center}{}{nodename}{\pgfusepath{clip}}

   \node [draw,cloud,cloud puffs=15,cloud puff arc=100, aspect=1.5,
       minimum width=3cm,minimum height=3cm] 
     at (S1.center) {};
  \end{scope}
     
   \node (n1) [place-transition, right=-1.5cm of S1] {$n_1$};  
   \node (n2) [place-transition, above right=-0.6cm and -0.6cm of S2] {$n_2$};  
   \node (n3) [place-transition, below right=-0.6cm and -0.6cm of S2] {$n_3$};  

   \node (n4) [place-transition, above right=0.2cm and 0.2cm of S2] {$n_4$};  
   
   \path (n4) edge[-latex, bend left=40] (n3);
   \path (n4) edge[-latex] (n2);
\end{scope}

\node at (11cm, 0cm) {\Large $\Rightarrow$};

\begin{scope}[shift={(13cm, 0cm)}]

   \node (S1) [cloud,draw,cloud puffs=15,cloud puff arc=100, aspect=1.5,
          text opacity=1,
          minimum width=3cm,minimum height=3cm]
    {};
   \node [left=-1cm of S1] {$S_1$};

   \node (S2) [cloud,draw,cloud puffs=15,cloud puff arc=100, aspect=1.5,
          text opacity=1,
          minimum width=3cm,minimum height=3cm,
          right=-1.5cm of S1]
    {};
   \node [right=-1cm of S2] {$S_2$};
     
   \node (n2) [place-transition, above right=-0.6cm and -0.6cm of S2] {$n_2$};  
   \node (n3) [place-transition, below right=-0.6cm and -0.6cm of S2] {$n_3$};  

   \node (n4) [place-transition, above right=0.2cm and 0.2cm of S2] {$n_4$};  
   
   \path (n4) edge[-latex, bend left=40] (n3);
   \path (n4) edge[-latex] (n2);
\end{scope}

\end{scope} 

\begin{scope}[shift={(0cm, 0cm)}] 

\begin{scope}[shift={(1cm, 0cm)}]
   \node (S1) [cloud,cloud puffs=15,cloud puff arc=100, aspect=1.5,
          text opacity=1,
          minimum width=3cm,minimum height=3cm]
    {};

   \node (S2) [cloud,draw,cloud puffs=15,cloud puff arc=100, aspect=1.5,
          text opacity=1,
          minimum width=3cm,minimum height=3cm,
          right=-1.5cm of S1]
    {};
   \node [right=-1.5cm of S2] {$S_2 \setminus S_1$};

   \node (ES1) [cloud, draw, fill=white, color=white, cloud puffs=15,cloud puff arc=100, aspect=1.5,
          text opacity=1,
          minimum width=3cm,minimum height=3cm] at (S1.center)
    {};
    
  \begin{scope}[]
    \pgftransformshift{\pgfpointanchor{S2}{center}}
    \pgfset{cloud puffs=15,cloud puff arc=100, aspect=1.5,
       minimum width=3cm,minimum height=3cm}
    \pgfnode{cloud}{center}{}{nodename}{\pgfusepath{clip}}

   \node [draw,cloud,cloud puffs=15,cloud puff arc=100, aspect=1.5,
       minimum width=3cm,minimum height=3cm] 
     at (S1.center) {};
  \end{scope}
     
   \node (n1) [place-transition, right=-1.5cm of S1] {$n_1$};  
   \node [below=0cm of n1] {$=n_4$};
   \node [below=0.3cm of n1] {$=n_5$};

   \node (n2) [place-transition, above right=-0.6cm and -0.6cm of S2] {$n_2$};  
   \node (n3) [place-transition, below right=-0.6cm and -0.6cm of S2] {$n_3$};  
   
   \path (n1) edge[-latex] (n2);
   \path (n1) edge[-latex] (n3);   
\end{scope}

\node at (0cm, 0cm) {(a)};

\node [rotate=40] at (4.9cm, 1.5cm) {\Large $\Rightarrow$};

\begin{scope}[shift={(7cm, -2cm)}]
   \node (S1) [cloud,draw,cloud puffs=15,cloud puff arc=100, aspect=1.5,
          text opacity=1,
          minimum width=3cm,minimum height=3cm]
    {};
   \node [left=-1cm of S1] {$S_1$};

   \node (S2) [cloud,draw,cloud puffs=15,cloud puff arc=100, aspect=1.5,
          text opacity=1,
          minimum width=3cm,minimum height=3cm,
          right=-1.5cm of S1]
    {};
   \node [right=-1cm of S2] {$S_2$};
     
   \node (n4) [place-transition, above left=-0.6cm and -0.6cm of S1] {$n_6$};  
   \node (n2) [place-transition, above right=-0.6cm and -0.6cm of S2] {$n_2$};  
   \node (n3) [place-transition, below right=-0.6cm and -0.6cm of S2] {$n_3$};  

   \path (n4) edge[-latex] (n2);
   \path (n4) edge[-latex] (n3);
 \end{scope}

\node [rotate=-40] at (4.9cm, -1.5cm) {\Large $\Rightarrow$};

\begin{scope}[shift={(7cm, 2cm)}]
   \node (S1) [cloud,draw,cloud puffs=15,cloud puff arc=100, aspect=1.5,
          text opacity=1,
          minimum width=3cm,minimum height=3cm]
    {};
   \node [left=-1cm of S1] {$S_1$};

   \node (S2) [cloud,draw,cloud puffs=15,cloud puff arc=100, aspect=1.5,
          text opacity=1,
          minimum width=3cm,minimum height=3cm,
          right=-1.5cm of S1]
    {};
   \node [right=-1cm of S2] {$S_2$};
     
   \node (n4) [place-transition, right=-1.2cm of S1] {$n_6$};  
   \node (n2) [place-transition, above right=-0.6cm and -0.6cm of S2] {$n_2$};  
   \node (n3) [place-transition, below right=-0.6cm and -0.6cm of S2] {$n_3$};  

   \path (n4) edge[-latex] (n2);
   \path (n4) edge[-latex] (n3);
\end{scope}

\node [rotate=30] at (11cm, 2.9cm) {\Large $\Rightarrow$};

\begin{scope}[shift={(13cm, 3.9cm)}]
   \node (S1) [cloud,draw,cloud puffs=15,cloud puff arc=100, aspect=1.5,
          text opacity=1,
          minimum width=3cm,minimum height=3cm]
    {};
   \node [left=-1cm of S1] {$S_1$};

   \node (S2) [cloud,draw,cloud puffs=15,cloud puff arc=100, aspect=1.5,
          text opacity=1,
          minimum width=3cm,minimum height=3cm,
          right=-1.5cm of S1]
    {};
   \node [right=-1cm of S2] {$S_2$};
     
   \node (n4) [transition, right=-1.2cm of S1] {$n_6$};  
   \node (n2) [place, above right=-0.6cm and -0.6cm of S2] {$n_2$};  
   \node (n3) [place, below right=-0.6cm and -0.6cm of S2] {$n_3$};  

   \path (n4) edge[-latex] (n2);
   \path (n4) edge[-latex] (n3);
       
   \node [below right=0.5cm and -2.3cm of S1] {$M'[S_2 \setminus S_1]$ is a pWF net};
    \node [right=1cm of S2] {\textbf{FALSE}};
\end{scope}

\node [rotate=-30] at (11cm, 1cm) {\Large $\Rightarrow$};

\begin{scope}[shift={(13cm, 0.1cm)}]
  \node (S1) [cloud,draw,cloud puffs=15,cloud puff arc=100, aspect=1.5,
          text opacity=1,
          minimum width=3cm,minimum height=3cm]
    {};
   \node [left=-1cm of S1] {$S_1$};

   \node (S2) [cloud,draw,cloud puffs=15,cloud puff arc=100, aspect=1.5,
          text opacity=1,
          minimum width=3cm,minimum height=3cm,
          right=-1.5cm of S1]
    {};
   \node [right=-1cm of S2] {$S_2$};
     
   \node (n4) [place, right=-1.2cm of S1] {$n_6$};  
   \node (n2) [transition, above right=-0.6cm and -0.6cm of S2] {$n_2$};  
   \node (n3) [transition, below right=-0.6cm and -0.6cm of S2] {$n_3$};  

   \path (n4) edge[-latex] (n2);
   \path (n4) edge[-latex] (n3);
   
    \node [below right=0.5cm and -2.3cm of S1] {$M'[S_2 \setminus S_1]$ is a tWF net};
    \node [right=1cm of S2] {\textbf{FALSE}};
\end{scope}

\end{scope} 

\end{tikzpicture}
}
\end{center}
\caption{\label{fig:one-input-one-output-B}Illustration of proof of preservation of one-input and one-output properties in case (B)}
\end{figure}
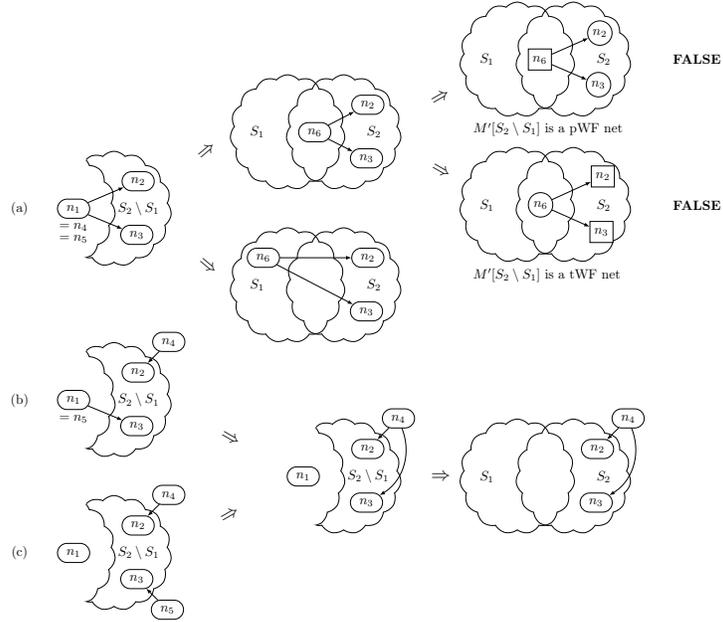

This concludes the lemmas that show that (1) the well-nestedness of $S_2$ in $M$ is preserved by $S_2 \setminus S_1$ in $M'$ and (2) all the defining properties of basic AND-OR nets are also preserved in $M'[S_2 \setminus S_1]$ if $M[S_2]$ had them. This leads us to the following lemma.

\begin{lem}[preservation of contractibility] \label{lem:contractibility-B}
The subnet $S_2 \setminus S_1$ is contractible in $M'$.
\end{lem}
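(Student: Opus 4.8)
The plan is to obtain this lemma purely by assembling the preceding results. Recall that a subnet $S$ is contractible in a WF net iff (C1) $S$ is well-nested and (C2) the associated net is of a basic AND-OR type, i.e., a pAND, 11tAND, 11pOR or tOR net. Since $S_1$ and $S_2$ are not nested in one another, $S_2 \setminus S_1$ is non-empty, so $M'[S_2 \setminus S_1]$ is a well-defined subnet. Property (C1) for $M'[S_2 \setminus S_1]$ is exactly Lemma~\ref{lem:well-nested-B}, so the remaining work is to establish (C2).

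First I would record that $M'$ itself is a WF net, by Theorem~\ref{thm:contraction-correctness}. Then, by Lemma~\ref{lem:IO-type-B}, all input and output nodes of $M'[S_2 \setminus S_1]$ have the type of the corresponding nodes of $M[S_2]$; in particular they are all of a single type, so $M'[S_2 \setminus S_1]$ is I/O consistent, and by Theorem~\ref{thm:subnet-wf-char} it is therefore a WF net --- a pWF net exactly when $M[S_2]$ is a pWF net, and a tWF net exactly when $M[S_2]$ is a tWF net.

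Next I would case on the basic AND-OR class of $M[S_2]$ and in each case invoke the matching preservation lemmas. If $M[S_2]$ is a pAND net (an acyclic pWF net with the AND property), then Lemma~\ref{lem:AND-OR-B} gives the AND property for $M'[S_2 \setminus S_1]$ and Lemma~\ref{lem:acyclic-B} gives acyclicity, so $M'[S_2 \setminus S_1]$ is again a pAND net. If $M[S_2]$ is a tOR net, Lemma~\ref{lem:AND-OR-B} applied to the OR property yields a tOR net. If $M[S_2]$ is an 11tAND net, I combine the AND property and acyclicity (Lemmas~\ref{lem:AND-OR-B} and~\ref{lem:acyclic-B}) with the one-input one-output property (Lemma~\ref{lem:one-input-one-output-B}) to obtain an 11tAND net; the 11pOR case is symmetric, using the OR property in place of the AND property. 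In all four cases $M'[S_2 \setminus S_1]$ belongs to the same basic AND-OR class as $M[S_2]$, which establishes (C2). Together with (C1) this shows that $S_2 \setminus S_1$ is AND-OR contractible, hence contractible, in $M'$.

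Since all the substantive arguments were already carried out in Lemmas~\ref{lem:well-nested-B}--\ref{lem:one-input-one-output-B}, there is no genuine obstacle in this lemma. The only point requiring care is to pair each basic AND-OR class with exactly the defining properties it demands and to invoke the corresponding preservation lemma for each --- and to observe, via Lemma~\ref{lem:IO-type-B} and Theorem~\ref{thm:subnet-wf-char}, that I/O consistency, and hence being a WF net at all, is likewise inherited by $M'[S_2 \setminus S_1]$ from $M[S_2]$.
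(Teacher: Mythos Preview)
Your proposal is correct and follows essentially the same approach as the paper: establish well-nestedness via Lemma~\ref{lem:well-nested-B}, obtain that $M'[S_2 \setminus S_1]$ is a WF net from Lemma~\ref{lem:IO-type-B} together with Theorems~\ref{thm:contraction-correctness} and~\ref{thm:subnet-wf-char}, and then invoke the preservation lemmas (\ref{lem:IO-type-B}, \ref{lem:AND-OR-B}, \ref{lem:acyclic-B}, \ref{lem:one-input-one-output-B}) to conclude that $M'[S_2 \setminus S_1]$ is of the same basic AND-OR class as $M[S_2]$. The only cosmetic difference is that you spell out the four basic classes case by case, whereas the paper simply asserts that all defining properties are preserved and lists the lemmas once.
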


\begin{proof}
We can show the following three claims: (1) $S_2 \setminus S_1$ is well-nested in $M'$, (2) $M'[S_2 \setminus S_1]$ is a WF net and (3) $M'[S_2 \setminus S_1]$ belongs to the same basic AND-OR class as $M[S_2]$.
\begin{enumerate}[(1)]

\item This follows from Lemma~\ref{lem:well-nested-B}. 

\item It follows from Lemma~\ref{lem:IO-type-B} (preservation of I/O type) that $M'[S_2 \setminus S_1]$ is I/O consistent. It then follows that $M'[S_2 \setminus S_1]$ is a WF net, by Theorem~\ref{thm:subnet-wf-char}, which states that every I/O-consistent subnet of a WF net is a WF net, and the fact $M'$ is indeed a WF net, as follows from Theorem~\ref{thm:contraction-correctness}, which states that the result of a contraction in a WF net is again a WF net. 

\item This claim follows from the observation that all defining properties of basic AND-OR nets are preserved by $M'[S_2 \setminus S_1]$ if they were satisfied by $M[S_2]$, as is shown by Lemma~\ref{lem:IO-type-B} (preservation of I/O type), Lemma~\ref{lem:AND-OR-B} (preservation of the AND and OR properties), Lemma~\ref{lem:acyclic-B} (preservation of acyclicity) and Lemma~\ref{lem:one-input-one-output-B} (preservation of one-input and one-output properties). 

\end{enumerate}
\end{proof}

We now turn to the question if the order of contractions of $S_1$ and $S_2$ influences the final result. Before we proceed to the lemma that states this, we present two auxiliary lemmas.

\begin{lem}[no edges from external nodes to nodes in intersection] \label{lem:no-edge-to-intersect}
 In $M$ there are no edges from a node outside $S_1 \cup S_2$ to a node in $S_1 \cap S_2$.
\end{lem}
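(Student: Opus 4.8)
The plan is to argue by contradiction. Suppose there were an edge $(n_3, n_4) \in F_M$ with $n_3 \notin S_1 \cup S_2$ and $n_4 \in S_1 \cap S_2$. The first step is to observe what this forces about $n_4$: since $n_4 \in S_1$ while $n_3 \notin S_1$, the definition of the associated net $M[S_1]$ places $n_4$ in its set $I_{S_1}$ of input nodes (the clause that collects nodes of $S_1$ with an incoming edge from outside $S_1$). By exactly the same reasoning applied to $S_2$, using $n_4 \in S_2$ and $n_3 \notin S_2$, we get $n_4 \in I_{S_2}$. Hence $n_4$ is simultaneously an input node of $M[S_1]$ and of $M[S_2]$.

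The second step is to use contractibility. Since $S_1$ and $S_2$ are contractible subnets, they are in particular well-nested WF nets, so $M[S_1]$ and $M[S_2]$ are WF nets and therefore I/O consistent. I/O consistency of $M[S_1]$ means $I_{S_1} \cup O_{S_1} \subseteq P_M$ or $I_{S_1} \cup O_{S_1} \subseteq T_M$, i.e.\ every input node of $M[S_1]$ has the I/O type of $M[S_1]$; likewise every input node of $M[S_2]$ has the I/O type of $M[S_2]$. Combining this with the previous step, the type of the single node $n_4$ equals the I/O type of $M[S_1]$ and also equals the I/O type of $M[S_2]$.

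The final step is to invoke the standing case assumption of Subsection on case (B), namely that $M[S_1]$ and $M[S_2]$ have \emph{different} I/O types. Then $n_4$ would have to be a place (because one of the two nets, say $M[S_1]$, is a pWF net) and at the same time a transition (because the other net is a tWF net), contradicting $P_M \cap T_M = \emptyset$. This contradiction establishes the lemma. I do not expect any genuine obstacle here; the only point requiring a little care is to read ``outside $S_1 \cup S_2$'' as ``outside $S_1$ \emph{and} outside $S_2$'', which is precisely what makes $n_4$ an input node of each of the two subnets and thus triggers the type clash. Everything else is a direct unfolding of the definitions of subnet and of I/O consistency together with the case-(B) hypothesis.
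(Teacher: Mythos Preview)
Your proof is correct and follows essentially the same argument as the paper's own proof: an edge from outside $S_1 \cup S_2$ into $S_1 \cap S_2$ makes the end node an input node of both $M[S_1]$ and $M[S_2]$, forcing it to have both I/O types simultaneously, which contradicts the case-(B) assumption that these types differ. The paper states this in two sentences; your version simply unfolds the definitions more explicitly.
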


\begin{proof}
If such an edge exists, then its end node is an input node of both $M[S_1]$ and $M[S_2]$, but this contradicts the assumption that the I/O types of $M[S_1]$ and $M[S_2]$ are different.
\end{proof}

\begin{lem}[shifting of edges] \label{lem:edge-shift}
 There is in $M$ an edge from a node in $S_1 \setminus S_2$ to a node in $S_2$ iff there is in $M$ an edge from a node in $S_1$ to a node in $S_2 \setminus S_1$.
\end{lem}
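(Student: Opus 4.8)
The plan is to prove the two implications separately, in each case reducing to a trivial sub-case by splitting on where one endpoint of the given edge lies in the partition of $S_1 \cup S_2$ into the three regions $S_1 \setminus S_2$, $S_1 \cap S_2$ and $S_2 \setminus S_1$. Throughout I would use that $M[S_1]$ and $M[S_2]$ are well-connected (Lemma~\ref{lem:subnet-is-well-connected}) and well-nested in $M$ (they are contractible), and that both $S_1 \setminus S_2$ and $S_2 \setminus S_1$ are non-empty, since the two subnets are not nested.

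For the \emph{if} direction I would assume $(a,b) \in F_M$ with $a \in S_1$ and $b \in S_2 \setminus S_1$. If $a \in S_1 \setminus S_2$ this edge already has the desired form, since $b \in S_2$. If $a \in S_1 \cap S_2$, then $a$ is an output node of $M[S_1]$ because $(a,b)$ leaves $S_1$; I would pick some $x \in S_1 \setminus S_2$ and, by well-connectedness of $M[S_1]$, take a path in $M[S_1]$ from $x$ to an output node of $M[S_1]$, all of whose nodes lie in $S_1$. If this path ever steps from $S_1 \setminus S_2$ into $S_1 \cap S_2$, that crossing edge is already an edge from $S_1 \setminus S_2$ into $S_2$. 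Otherwise the whole path stays inside $S_1 \setminus S_2$, so its terminal node is an output node of $M[S_1]$ lying in $S_1 \setminus S_2$; since it and $a$ are both output nodes of $M[S_1]$ and $b \notin S_1$, well-nestedness of $S_1$ yields an edge from that node to $b$, which again has the desired form.

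For the \emph{only if} direction I would assume $(a,b) \in F_M$ with $a \in S_1 \setminus S_2$ and $b \in S_2$. If $b \in S_2 \setminus S_1$ the edge already has the desired form. If $b \in S_1 \cap S_2$, then $b$ is an input node of $M[S_2]$ because $(a,b)$ enters $S_2$ from outside; I would pick some $y \in S_2 \setminus S_1$ and take a path in $M[S_2]$ from an input node of $M[S_2]$ to $y$, all of whose nodes lie in $S_2$. If the initial (input) node of this path lies in $S_1 \cap S_2$, the path runs from a node in $S_1$ to $y \notin S_1$ while confined to $S_2$, so it traverses an edge from $S_1 \cap S_2$ to $S_2 \setminus S_1$, the required edge. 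If the initial node lies in $S_2 \setminus S_1$, then it and $b$ are both input nodes of $M[S_2]$ while $a \notin S_2$, so well-nestedness of $S_2$ gives an edge from $a \in S_1$ to that node in $S_2 \setminus S_1$.

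I do not expect a real obstacle here; the only point that needs care is checking, at each appeal to well-nestedness, that the two compared nodes are genuinely interface nodes of the \emph{same} subnet and that the third node lies outside that subnet, so the relevant well-nestedness clause applies. The rest — tracking the three-region partition, and the elementary observation that a path leaving $S_1$ while staying inside $S_2$ must use an edge from $S_1 \cap S_2$ to $S_2 \setminus S_1$ (and symmetrically for the first part) — is routine.
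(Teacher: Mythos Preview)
Your proposal is correct and follows essentially the same approach as the paper. Both arguments reduce the non-trivial sub-case to two possibilities handled respectively by well-connectedness (finding a path that must cross between regions) and well-nestedness (transferring an edge from one interface node to another); the only cosmetic difference is that the paper splits on whether \emph{all} output nodes of $M[S_1]$ (respectively input nodes of $M[S_2]$) lie in the intersection, whereas you split on whether a particular path crosses regions or stays on one side, which amounts to the same dichotomy.
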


\begin{proof}
We first consider the \emph{if}-part of the claim. We start with assuming there is in $M$ an edge $(n_3, n_4)$ with $n_3 \in S_1$ and $n_4 \in S_2 \setminus S_1$. If $n_3 \not\in S_2$ then the edge $(n_3, n_4)$ is indeed an edge from $S_1 \setminus S_2$ to $S_2$. So we proceed under the assumption that $n_3 \in S_2$, and therefore $n_3 \in S_1 \cap S_2$. We now consider the cases (i) all output nodes of $M[S_1]$ are in $S_1 \cap S_2$ and (ii) there is an output node of $M[S_1]$ in $S_1 \setminus S_2$:  
\begin{enumerate}[(i)]

  \item Since $M[S_1]$ is well-connected, and because there is at least one node in $S_1 \setminus S_2$, as $S_1$ is not nested in $S_2$, there must be in $M[S_1]$ a path from that node to an output node in $S_1 \cap S_2$. It follows that there is in that path an edge from a node in $S_1 \setminus S_2$ to a node in $S_1 \cap S_2$.
  
  \item Let this output node of $M[S_1]$ in $S_1 \setminus S_2$ be $n_5$. Since $n_3$ is also an output node of $S_1$ with the edge $(n_3, n_4)$ to a node outside $S_1$, it follows from the well-nestedness of $S_1$ in $M$ that there is also the edge $(n_5, n_4)$, which is from a node in $S_1 \setminus S_2$ to $S_2$.

\end{enumerate}

The proof for the \emph{only-if}-part of the claim is similar, but with the directions of the edges reversed and the roles of input nodes and output nodes interchanged.
\end{proof}

\begin{lem}[commutativity of contraction] \label{lem:commutativity-B}
Let $M''_1$ be the result of first contracting $S_1$ into node $n_1$ resulting in $M'_1$ followed by contracting $S_2 \setminus S_1$ into node $n_2$, and let $M''_2$ be the result of first contracting $S_2$ into node $n_2$ resulting in $M'_2$ followed by contracting $S_1 \setminus S_2$ into node $n_1$. Then $M''_1 = M''_2$.
\end{lem}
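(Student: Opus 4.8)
The plan is to prove $M''_1 = M''_2$ by the same three-step strategy as in Lemma~\ref{lem:commutativity-A}: show that the two nets have (1) the same node set, (2) the same type assignment on nodes, and (3) the same edge set. For (1), tracing the two sequences of contractions shows that in both cases the resulting node set is $(N \setminus (S_1 \cup S_2)) \cup \{ n_1, n_2 \}$, since contracting $S_1$ (resp.\ $S_1 \setminus S_2$) removes exactly the part of $S_1$ that is not already gone and inserts $n_1$, and symmetrically for $n_2$. For (2), the nodes inherited from $M$ keep their type, $n_2$ has in both nets the I/O type of $M[S_2]$, and $n_1$ has the I/O type of $M[S_1]$: in $M''_1$ because $n_1$ is created by contracting $S_1$ directly, and in $M''_2$ because $n_1$ is created by contracting $S_1 \setminus S_2$ in $M'_2$, whose I/O type equals that of $M[S_1]$ by Lemma~\ref{lem:IO-type-B} applied with the roles of $S_1$ and $S_2$ interchanged.

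For (3), I would split the edges of $M''_1$ and $M''_2$ into four kinds and characterise each purely in terms of $F_M$, obtaining the same characterisation for both orders: (a) edges between two nodes of $M$ outside $S_1 \cup S_2$, which survive every contraction unchanged and so agree; (b) the edge $(n_1, n_2)$ and, symmetrically, $(n_2, n_1)$; (c) edges between $n_1$ and a node $n_3 \notin S_1 \cup S_2$; (d) edges between $n_2$ and such an $n_3$. For (b), unfolding the definition of contraction twice shows that $(n_1, n_2) \in M''_1$ iff there is in $M$ an edge from a node of $S_1$ to a node of $S_2 \setminus S_1$, whereas $(n_1, n_2) \in M''_2$ iff there is in $M$ an edge from a node of $S_1 \setminus S_2$ to a node of $S_2$; these two conditions coincide by Lemma~\ref{lem:edge-shift}, and the orientation $(n_2, n_1)$ is handled by the \emph{only-if} part of the same lemma. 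For (c), unfolding the contractions gives that $(n_1, n_3) \in M''_1$ iff some node of $S_1$ has an edge to $n_3$ in $M$, while $(n_1, n_3) \in M''_2$ iff some node of $S_1 \setminus S_2$ has an edge to $n_3$ in $M$; these coincide because, by the output-side analogue of Lemma~\ref{lem:no-edge-to-intersect} (no edge runs from $S_1 \cap S_2$ to a node outside $S_1 \cup S_2$, for otherwise its source would be an output node of both $M[S_1]$ and $M[S_2]$, contradicting their differing I/O types), no node of $S_1 \cap S_2$ has an edge to $n_3$ at all. The incoming edges $(n_3, n_1)$ use Lemma~\ref{lem:no-edge-to-intersect} itself, and (d) is the mirror image of (c) with $S_1$ and $S_2$ swapped.

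The routine but delicate part, and the main obstacle, is the bookkeeping in step (3): each single edge incident to $n_1$ or $n_2$ in $M''_1$ or $M''_2$ may arise from an existential statement over several edges of $M$ (for instance $(n_1,n_2)$ comes from ``there exist an output node $z$ of $M[S_1]$ and a node $y \in S_2 \setminus S_1$ with $(z,y) \in F_M$''), and one must check that these existential conditions are genuinely order-independent. This is precisely what Lemma~\ref{lem:edge-shift} supplies for the $n_1$--$n_2$ edges, whereas the edges between $\{n_1, n_2\}$ and the external part of the net survive only one of the two intermediate contractions, so matching them up requires knowing, via the two orientations of Lemma~\ref{lem:no-edge-to-intersect}, that no edge connects the overlap $S_1 \cap S_2$ directly to the outside. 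Once these observations are in place, the four cases close by straightforward unfolding of the definition of contraction.
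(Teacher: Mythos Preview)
Your proposal is correct and follows essentially the same strategy as the paper's proof: identical node sets, identical types via Lemma~\ref{lem:IO-type-B}, and the same case split on edges, invoking Lemma~\ref{lem:edge-shift} for the $n_1$--$n_2$ edges and Lemma~\ref{lem:no-edge-to-intersect} (together with its output-side dual) for the edges between $n_1,n_2$ and external nodes. The only minor slip is that the $(n_2,n_1)$ edge is handled not by the ``only-if'' direction of Lemma~\ref{lem:edge-shift} but by that lemma with the roles of $S_1$ and $S_2$ interchanged, which holds by symmetry and is exactly how the paper treats it as well.
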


\begin{proof}
The WF nets $M''_1$ and $M''_2$ have the same sets of nodes, since in both cases all the nodes of $S_1$ and $S_2$ are removed, and the new nodes $n_1$ and $n_2$ are added. The types and interconnections of the preserved nodes are not changed by contractions. In addition, the types of $n_1$ and $n_2$ will in both cases be the same, which can be shown as follows. Consider the type of $n_1$, which is either equal to the I/O type of $M[S_1]$, if this is contracted first, or the I/O type of $M'[S_1 \setminus S_2]$, if $M[S_2]$ is contracted first, but by Lemma~\ref{lem:IO-type-B} it follows that these are the same. By a similar argument, with $S_1$ and $S_2$ interchanged, it can be shown that the type of $n_2$ is in both cases the same.

Since we are only considering internal contractions, the sets of input and output nodes will not change and therefore be in both $M''_1$ and $M''_2$ identical to those in $M$.

So what remains to be shown is that in $M''_1$ and $M''_2$ the nodes $n_1$ and $n_2$ have (1) the same incoming and outgoing edge and (2) the same edges between them.

\begin{enumerate}[(1)]

\item We first consider incoming edges. Assume that in $M''_1$ there is an edge $(n_3, n_1)$ with $n_3 \neq n_2$. It follows there is in $M$ an edge $(n_3, n_4)$ with $n_4$ an input node of $M[S_1]$. By Lemma~\ref{lem:no-edge-to-intersect} we know that $n_4 \not\in S_1 \cap S_2$. It follows that $n_4 \in S_1 \setminus S_2$, and so the edge $(n_3, n_1)$ will be created when $S_1 \setminus S_2$ is contracted. It follows that this edge exists in $M''_2$. By symmetry it also holds that every incoming edge of $n_2$ in $M''_2$ also exists in $M''_1$.

The proof for outgoing edges is similar, except that the edges are reversed an the sets of input nodes are interchanged with sets of output nodes.

\item We first consider edges from $n_1$ to $n_2$. Assume there is an edge $(n_1, n_2)$ in $M''_1$. Then there is in $M$ an edge from a node in $S_1$ to a node in $S_2 \setminus S_1$. By Lemma~\ref{lem:edge-shift} it follows that there is in $M$ an edge from a node in $S_1 \setminus S_2$ to a node in $S_2$. Then there will be an edge $(n_1, n_2)$ in $M''_2$. By symmetry it also holds that if there is an edge  $(n_1, n_2)$ in $M''_2$, then there is also an edge  $(n_1, n_2)$ in $M''_1$.

The proof for edges from $n_2$ to $n_1$ proceeds analogously.

\end{enumerate}
\end{proof}

\subsection{Overlapping, but not nested, subnets with identical I/O types}

In this subsection we consider case (C) in Figure~\ref{fig:cases-of-proof}.  Therefore, we assume that $S_1$ and $S_2$ are contractible subnets in a WF net $M$ such that (i) they share nodes but not so that one is entirely nested inside the other and (ii) their associated WF nets $M[S_1]$ and $M[S_2]$ have the same I/O type. Let us also assume that $M'$ is the result of contracting $S_1$ into the node $n_1$ and $M''$ the result of contracting $S_2$ into the node $n_2$. Recall that we need to show that (1) after contraction of $S_1$ into $n_1$, the subnet $(S_2 \setminus S_1) \cup \{ n_1 \}$ is contractible and (2) if we contract first $S_1$ into $n_1$ and then $(S_2 \setminus S_1) \cup \{ n_1 \}$ into $n_3$, the result is the same as when we contract first $S_2$ into $n_2$ and then $(S_1 \setminus S_2  ) \cup \{ n_2 \}$ into $n_3$. 

Like in the previous subsection, the proof of (1) will consist of a list of lemmas that show that the different properties that define contractibility are all preserved for $(S_2 \setminus S_1) \cup \{ n_1 \}$ by the contraction of $S_1$. We start with the property of well-nestedness, and then move on to the defining properties of WF nets and basic AND-OR nets, and close of with a lemma showing (2). 

\begin{lem}[preservation of well-nestedness] \label{lem:well-nested-C}
The subnet $(S_2 \setminus S_1) \cup \{ n_1 \}$ is well-nested in $M'$.
\end{lem}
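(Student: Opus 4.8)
The plan is to work throughout in the internal-contraction setting of this section. First I observe that the membership clauses 1(b) and 2(b) of well-nestedness are vacuous here: since $S_1$ and $S_2$ are internal, the set $S := (S_2\setminus S_1)\cup\{n_1\}$ is disjoint from $I_{M'}=I_M$ and from $O_{M'}=O_M$ (contracting an internal subnet produces a node that is neither an input nor an output node), so no node of $M'[S]$ lies in $I_{M'}$ or $O_{M'}$. Hence only clauses 1(a) and 2(a) — any two input (output) nodes of $M'[S]$ have the same neighbours outside $S$ — remain, and I will treat the input case; the output case is symmetric, reversing all edges and swapping the roles of input and output nodes. A useful preliminary is that the nodes of $M'$ lying outside $S$ are exactly the nodes of $M$ lying outside $S_1\cup S_2$; call this set $E$. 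Consequently a node of $S$ is an input node of $M'[S]$ precisely when it has in $M'$ an incoming edge from a node of $E$.

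Next I would compute the input nodes of $M'[S]$ together with their $E$-predecessors. For $p\in S_2\setminus S_1$, the $M'$-predecessors of $p$ are its $M$-predecessors outside $S_1$, together with $n_1$ when some output node of $M[S_1]$ has an edge to $p$ in $M$; hence $p$ is an input node of $M'[S]$ iff $p$ has an $M$-edge from some node of $E$, in which case its $E$-predecessors in $M'$ coincide with those in $M$ and, since $E$ lies outside $S_2$, $p$ is also an input node of $M[S_2]$. For $n_1$, its $M'$-predecessors are exactly the nodes outside $S_1$ having in $M$ an edge to an input node of $M[S_1]$; so $n_1$ is an input node of $M'[S]$ iff some input node of $M[S_1]$ receives an $M$-edge from $E$, and by well-nestedness of $S_1$ the set $D_E$ of its $E$-predecessors in $M'$ equals the common set of $E$-predecessors of all input nodes of $M[S_1]$.

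Then, given two input nodes $a,b$ of $M'[S]$ and an arbitrary $c\in E$, I must verify $(c,a)\in F_{M'}\iff(c,b)\in F_{M'}$. If $a=b=n_1$ this is immediate, and if $a,b\in S_2\setminus S_1$ then both are input nodes of $M[S_2]$ and well-nestedness of $S_2$ (with $c$ outside $S_2$) gives it at once. The only substantial case is $a=n_1$ and $b=p\in S_2\setminus S_1$: then $p$ is an input node of $M[S_2]$ with at least one $E$-predecessor, so by well-nestedness of $S_2$ the common set $C_E$ of $E$-predecessors of all input nodes of $M[S_2]$ is non-empty, and it suffices to show $C_E=D_E$ (note $D_E\neq\emptyset$, since $n_1$ is an input node of $M'[S]$). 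For this I would exhibit a node $v\in S_1\cap S_2$ that is an input node of both $M[S_1]$ and $M[S_2]$; then the actual set of $E$-predecessors of $v$ in $M$ equals $D_E$ by well-nestedness of $S_1$ and equals $C_E$ by well-nestedness of $S_2$, hence $C_E=D_E$.

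The existence of such a $v$ is the heart of the argument, and the step I expect to be the main obstacle. I would argue by contradiction: if no input node of $M[S_2]$ lay in $S_1\cap S_2$, then, picking any $u\in S_1\cap S_2$ and using well-connectedness of $M[S_2]$ (Lemma~\ref{lem:subnet-is-well-connected}), there would be a path inside $M[S_2]$ from an input node of $M[S_2]$ — which would lie in $S_2\setminus S_1$ — to $u\in S_1\cap S_2$; letting $(a',b')$ be the first edge of that path whose head $b'$ lies in $S_1$, its tail $a'$ lies in $S_2\setminus S_1$, so $b'$ receives an incoming edge from outside $S_1$ and is an input node of $M[S_1]$; by well-nestedness of $S_1$ its $E$-predecessor set is $D_E\neq\emptyset$, so $b'$ receives an edge from a node of $E$, which is outside $S_2$, making $b'$ an input node of $M[S_2]$ inside $S_1\cap S_2$, a contradiction. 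Hence some input node $v$ of $M[S_2]$ lies in $S_1\cap S_2$, and since $C_E\neq\emptyset$ it receives an edge from $E$, which is outside $S_1$, so $v$ is also an input node of $M[S_1]$, as required. Everything outside this coupling of well-connectedness and the two well-nestedness hypotheses is routine edge-chasing.
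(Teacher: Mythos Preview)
Your proof is correct and follows essentially the same strategy as the paper's. Both arguments reduce the non-trivial mixed case (one input node equals $n_1$, the other lies in $S_2\setminus S_1$) to exhibiting a node in $S_1\cap S_2$ that is simultaneously an input node of $M[S_1]$ and of $M[S_2]$, and both obtain such a node by a well-connectedness path argument that finds a crossing edge and then bootstraps via well-nestedness. The only difference is organizational and dual: the paper case-splits on whether all input nodes of $M[S_1]$ lie in $S_1\setminus S_2$ and, if so, takes a path inside $M[S_1]$; you case-split on whether all input nodes of $M[S_2]$ lie in $S_2\setminus S_1$ and take a path inside $M[S_2]$. Your abstraction to the predecessor sets $C_E$ and $D_E$ packages the argument a bit more cleanly than the paper's explicit edge-chasing, but the underlying idea is identical.
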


\begin{proof}
This proof is similar to that of Lemma~\ref{lem:well-nested-B}. We first show the well-nestedness for input nodes of $M'[(S_2 \setminus S_1) \cup \{ n_1 \}]$. For that we need to show that if there are two edges $(n_2, n_3)$ and $(n_4, n_5)$ in $M'$ such that $n_2, n_4 \not\in (S_2 \setminus S_1) \cup \{ n_1 \}$ and $n_3, n_5 \in (S_2 \setminus S_1) \cup \{ n_1 \}$, then there are edges $(n_2, n_5)$ and $(n_4, n_3)$ in $M'$. Let us assume that such  $(n_2, n_3)$ and $(n_4, n_5)$ exist. We consider the following four cases: (i) both $n_3$ and $n_5$ are equal to $n_1$, (ii)  only one of $n_3$ and $n_5$ is equal to $n_1$, (iii) neither $n_3$ nor $n_5$ is equal to $n_1$.
\begin{enumerate}[(i)]

\item Since $n_3 = n_1 = n_5$, it holds that $(n_4, n_5) = (n_4, n_1) = (n_4, n_3)$ and that $(n_2, n_3) = (n_2, n_1) = (n_2, n_5)$. So it indeed follows that $(n_2, n_5)$ and $(n_4, n_3)$ exist in $M'$.

\item We can assume, without loss of generality, that $n_3 = n_1$ and $n_5 \neq n_1$. The reasoning that follows is illustrated in Figure~\ref{fig:well-nested-C}. Then, by definition of contraction, there must be an input node $n_6$ of $M[S_1]$ and an edge $(n_2, n_6)$ in $M$. We now distinguish two sub-cases, namely that (1) all input nodes of $M[S_1]$ are in $S_1 \setminus S_2$ and (2) there is at least one input node of $M[S_1]$ in $S_1 \cap S_2$. We consider the two cases.

\begin{enumerate}[(1)]

\item If all input nodes of  $M[S_1]$ are in $S_1 \setminus S_2$, then so must $n_6$. Moreover, it also then holds for all nodes in $S_1 \cap S_2$ that there is a path in $M[S_1]$ to them from a node in $S_1 \setminus S_2$. Take one such path, and take the first edge $(n_7, n_8)$ in that path where $n_7 \in S_1 \setminus S_2$ and $n_8 \in S_1 \cap S_2$. Note that such an edge must exist if the path goes from $S_1 \setminus S_2$ to $S_1 \cap S_2$ and contains only nodes from $S_1$. Then, by the well-nestedness of $S_2$ in $M$, there is an edge $(n_4, n_8)$ in $M$, since $n_8$ and $n5$ are both input nodes of $M[S_2]$. Then, it follows that $n_8$ is an input node of $M[S_1]$, but this contradicts the assumption that all input nodes of $M[S_1]$ are in $S_1 \setminus S_2$, so leads to a contradiction.

\item Because of the well-nestedness of $S_2$ in $M$, there are also edges $(n_2, n_5)$ and $(n_4, n_6)$ in $M$. By the definition of contraction it then follows that there are edges $(n_2, n_5)$ and $(n_4, n_1)$ in $M'$. Since $n_3 = n_1$, there is then also an edge $(n_4, n_3)$ in $M'$.

\end{enumerate}

\item So we assume that $n_3 \neq n_1$ and $n_5 \neq n_1$. The reasoning that follows is identical to that for the corresponding case (iii) in the proof for Lemma~\ref{lem:well-nested-B} and is illustrated equally in Figure~\ref{fig:well-nested-B} (b).

\end{enumerate}  
The proof for well-nestedness of the output nodes of $M'[(S_2 \setminus S_1) \cup \{ n_1 \}]$ is similar to that of the input nodes, except that the direction of the edges is reversed and the set of input nodes is replaced with the set of output nodes.
\end{proof}

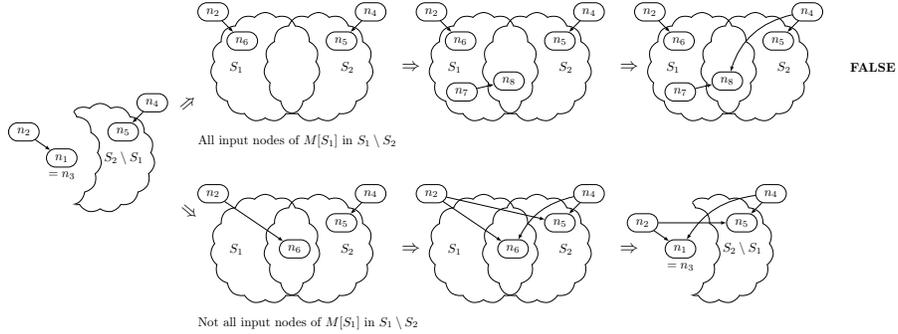
\begin{figure}[htb]
\begin{center}
\resizebox{\textwidth}{!}{%
\begin{tikzpicture}
    \tikzstyle{transition} = [rectangle,draw,minimum width=0.55cm, minimum height=0.55cm,fill=white]
    \tikzstyle{place} = [circle,draw,minimum width=0.55cm, minimum height=0.55cm,fill=white,inner sep=0.08cm]
    \tikzstyle{place-transition} = [rounded rectangle,draw,minimum width=1cm, minimum height=0.5cm,fill=white,inner sep=0.08cm]

\begin{scope}[shift={(1cm, 0cm)}]
   \node (S1) [cloud,cloud puffs=15,cloud puff arc=100, aspect=1.5,
          text opacity=1,
          minimum width=3cm,minimum height=3cm]
    {};

   \node (S2) [cloud,draw,cloud puffs=15,cloud puff arc=100, aspect=1.5,
          text opacity=1,
          minimum width=3cm,minimum height=3cm,
          right=-1.5cm of S1]
    {};
   \node [right=-1.5cm of S2] {$S_2 \setminus S_1$};

   \node (ES1) [cloud, draw, fill=white, color=white, cloud puffs=15,cloud puff arc=100, aspect=1.5,
          text opacity=1,
          minimum width=3cm,minimum height=3cm] at (S1.center)
    {};
    
  \begin{scope}[]
    \pgftransformshift{\pgfpointanchor{S2}{center}}
    \pgfset{cloud puffs=15,cloud puff arc=100, aspect=1.5,
       minimum width=3cm,minimum height=3cm}
    \pgfnode{cloud}{center}{}{nodename}{\pgfusepath{clip}}

   \node [draw,cloud,cloud puffs=15,cloud puff arc=100, aspect=1.5,
       minimum width=3cm,minimum height=3cm] 
     at (S1.center) {};
  \end{scope}
     
   \node (n1) [place-transition, right=-1.5cm of S1] {$n_1$};  
   \node [below=0cm of n1] {$=n_3$};

   \node (n5) [place-transition, above right=-0.6cm and -0.6cm of S2] {$n_5$};  
   \node (n2) [place-transition, above left=-0.6cm and -0.6cm of S1] {$n_2$};  

   \node (n4) [place-transition, above right=0.2cm and 0.2cm of S2] {$n_4$};  
   
   \path (n2) edge[-latex] (n1);
   \path (n4) edge[-latex] (n5);
\end{scope}

\node [rotate=40] at (4.9cm, 1.5cm) {\Large $\Rightarrow$};

\begin{scope}[shift={(7cm, -2.5cm)}]
   \node (S1) [cloud,draw,cloud puffs=15,cloud puff arc=100, aspect=1.5,
          text opacity=1,
          minimum width=3cm,minimum height=3cm]
    {};
   \node [left=-1cm of S1] {$S_1$};

   \node (S2) [cloud,draw,cloud puffs=15,cloud puff arc=100, aspect=1.5,
          text opacity=1,
          minimum width=3cm,minimum height=3cm,
          right=-1.5cm of S1]
    {};
   \node [right=-1cm of S2] {$S_2$};
     
   \node (n6) [place-transition, right=-1.1cm of S1] {$n_6$};  
   \node (n2) [place-transition, above left=0.2cm and 0.2cm of S1] {$n_2$};  
   \node (n5) [place-transition, above right=-0.6cm and -0.6cm of S2] {$n_5$};  

   \node (n4) [place-transition, above right=0.2cm and 0.2cm of S2] {$n_4$}; 

   \path (n2) edge[-latex] (n6);
   \path (n4) edge[-latex] (n5);
   
   \node [below right=0.7 cm and -3 cm of S1] {Not all input nodes of $M[S_1]$ in $S_1 \setminus S_2$};
\end{scope}

\node at (11cm, -2.5cm) {\Large $\Rightarrow$};

\begin{scope}[shift={(13cm, -2.5cm)}]
   \node (S1) [cloud,draw,cloud puffs=15,cloud puff arc=100, aspect=1.5,
          text opacity=1,
          minimum width=3cm,minimum height=3cm]
    {};
   \node [left=-1cm of S1] {$S_1$};

   \node (S2) [cloud,draw,cloud puffs=15,cloud puff arc=100, aspect=1.5,
          text opacity=1,
          minimum width=3cm,minimum height=3cm,
          right=-1.5cm of S1]
    {};
   \node [right=-1cm of S2] {$S_2$};
     
   \node (n6) [place-transition, right=-1.1cm of S1] {$n_6$};  
   \node (n2) [place-transition, above left=0.2cm and 0.2cm of S1] {$n_2$};  
   \node (n5) [place-transition, above right=-0.6cm and -0.6cm of S2] {$n_5$};  

   \node (n4) [place-transition, above right=0.2cm and 0.2cm of S2] {$n_4$}; 

   \path (n2) edge[-latex] (n6);
   \path (n4) edge[-latex] (n5);
   \path (n2) edge[-latex] (n5);
   \path (n4) edge[-latex, bend right=25] (n6);   
\end{scope}

\begin{scope}[shift={(18cm, -2.5cm)}]
   \node (S1) [cloud,cloud puffs=15,cloud puff arc=100, aspect=1.5,
          text opacity=1,
          minimum width=3cm,minimum height=3cm]
    {};

   \node (S2) [cloud,draw,cloud puffs=15,cloud puff arc=100, aspect=1.5,
          text opacity=1,
          minimum width=3cm,minimum height=3cm,
          right=-1.5cm of S1]
    {};
   \node [right=-1.5cm of S2] {$S_2 \setminus S_1$};

   \node (ES1) [cloud, draw, fill=white, color=white, cloud puffs=15,cloud puff arc=100, aspect=1.5,
          text opacity=1,
          minimum width=3cm,minimum height=3cm] at (S1.center)
    {};
    
  \begin{scope}[]
    \pgftransformshift{\pgfpointanchor{S2}{center}}
    \pgfset{cloud puffs=15,cloud puff arc=100, aspect=1.5,
       minimum width=3cm,minimum height=3cm}
    \pgfnode{cloud}{center}{}{nodename}{\pgfusepath{clip}}

   \node [draw,cloud,cloud puffs=15,cloud puff arc=100, aspect=1.5,
       minimum width=3cm,minimum height=3cm] 
     at (S1.center) {};
  \end{scope}
     
   \node (n1) [place-transition, right=-1.5cm of S1] {$n_1$};  
   \node [below=0cm of n1] {$=n_3$};

   \node (n5) [place-transition, above right=-0.6cm and -0.6cm of S2] {$n_5$};  
   \node (n2) [place-transition, above left=-0.6cm and -0.6cm of S1] {$n_2$};  

   \node (n4) [place-transition, above right=0.2cm and 0.2cm of S2] {$n_4$};  
   
   \path (n2) edge[-latex] (n1);
   \path (n4) edge[-latex] (n5);
   \path (n2) edge[-latex] (n5);
   \path (n4) edge[-latex, bend right=25] (n1);   
\end{scope}

\node at (17cm, -2.5cm) {\Large $\Rightarrow$};

\node [rotate=-40] at (4.9cm, -1.5cm) {\Large $\Rightarrow$};

\begin{scope}[shift={(7cm, 2.5cm)}]
   \node (S1) [cloud,draw,cloud puffs=15,cloud puff arc=100, aspect=1.5,
          text opacity=1,
          minimum width=3cm,minimum height=3cm]
    {};
   \node [left=-1cm of S1] {$S_1$};

   \node (S2) [cloud,draw,cloud puffs=15,cloud puff arc=100, aspect=1.5,
          text opacity=1,
          minimum width=3cm,minimum height=3cm,
          right=-1.5cm of S1]
    {};
   \node [right=-1cm of S2] {$S_2$};
     
   \node (n6) [place-transition, above left=-0.6cm and -0.6cm of S1] {$n_6$};  

   \node (n2) [place-transition, above left=0.2cm and 0.2cm of S1] {$n_2$};  
   \node (n5) [place-transition, above right=-0.6cm and -0.6cm of S2] {$n_5$};  

   \node (n4) [place-transition, above right=0.2cm and 0.2cm of S2] {$n_4$};  

   \path (n2) edge[-latex] (n6);
   \path (n4) edge[-latex] (n5);
   
   \node [below right=0.7 cm and -3 cm of S1] {All input nodes of $M[S_1]$ in $S_1 \setminus S_2$};
\end{scope}

\node at (11cm, 2.5cm) {\Large $\Rightarrow$};

\begin{scope}[shift={(13cm, 2.5cm)}]
   \node (S1) [cloud,draw,cloud puffs=15,cloud puff arc=100, aspect=1.5,
          text opacity=1,
          minimum width=3cm,minimum height=3cm]
    {};
   \node [left=-1cm of S1] {$S_1$};

   \node (S2) [cloud,draw,cloud puffs=15,cloud puff arc=100, aspect=1.5,
          text opacity=1,
          minimum width=3cm,minimum height=3cm,
          right=-1.5cm of S1]
    {};
   \node [right=-1cm of S2] {$S_2$};
     
   \node (n6) [place-transition, above left=-0.6cm and -0.6cm of S1] {$n_6$};  

   \node (n2) [place-transition, above left=0.2cm and 0.2cm of S1] {$n_2$};  
   \node (n5) [place-transition, above right=-0.6cm and -0.6cm of S2] {$n_5$};  

   \node (n4) [place-transition, above right=0.2cm and 0.2cm of S2] {$n_4$};  

   \path (n2) edge[-latex] (n6);
   \path (n4) edge[-latex] (n5);
      
   \node (n8) [place-transition, below right=-0.9cm and -0.5cm of S1] {$n_8$};  
   \node (n7) [place-transition, below left=-0.6cm and -0.6cm of S1] {$n_7$};  

   \path (n7) edge[-latex] (n8);   
\end{scope}

\node at (17cm, 2.5cm) {\Large $\Rightarrow$};

\begin{scope}[shift={(19cm, 2.5cm)}]
   \node (S1) [cloud,draw,cloud puffs=15,cloud puff arc=100, aspect=1.5,
          text opacity=1,
          minimum width=3cm,minimum height=3cm]
    {};
   \node [left=-1cm of S1] {$S_1$};

   \node (S2) [cloud,draw,cloud puffs=15,cloud puff arc=100, aspect=1.5,
          text opacity=1,
          minimum width=3cm,minimum height=3cm,
          right=-1.5cm of S1]
    {};
   \node [right=-1cm of S2] {$S_2$};
     
   \node (n6) [place-transition, above left=-0.6cm and -0.6cm of S1] {$n_6$};  

   \node (n2) [place-transition, above left=0.2cm and 0.2cm of S1] {$n_2$};  
   \node (n5) [place-transition, above right=-0.6cm and -0.6cm of S2] {$n_5$};  

   \node (n4) [place-transition, above right=0.2cm and 0.2cm of S2] {$n_4$};  

   \path (n2) edge[-latex] (n6);
   \path (n4) edge[-latex] (n5);
      
   \node (n8) [place-transition, below right=-0.9cm and -0.5cm of S1] {$n_8$};  
   \node (n7) [place-transition, below left=-0.6cm and -0.6cm of S1] {$n_7$};  

   \path (n7) edge[-latex] (n8);
   \path (n4) edge[-latex, bend right=30] (n8);
   
   \node [right=1cm of S2] {\textbf{FALSE}};
\end{scope}

\end{tikzpicture}
}
\end{center}
\caption{\label{fig:well-nested-C}Illustration of proof of preservation of well-nestedness in case (C)}
\end{figure}

\begin{lem}[preservation of I/O type] \label{lem:IO-type-C}
The input nodes and output nodes of $M'[(S_2 \setminus S_1) \cup \{ n_1 \}]$ have the same type as the input nodes and output nodes of $M[S_2]$, respectively.
\end{lem}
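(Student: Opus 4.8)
The plan is to follow the pattern of the proof of Lemma~\ref{lem:IO-type-B}, exploiting that we work only with internal subnets — so a node is an input node of a subnet precisely when it has an incoming edge from a node outside the subnet — and that contraction never changes the type of a node. First I would check that $(S_2 \setminus S_1) \cup \{ n_1 \}$ really is an internal subnet of $M'$: since $S_1$ is internal in $M$ we have $I_{M'} = I_M$ and $O_{M'} = O_M$, and since $S_2$ is internal in $M$, neither $S_2 \setminus S_1$ nor the fresh node $n_1$ lies in $I_{M'}$ or $O_{M'}$.

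For the input nodes I would argue as follows. Let $n$ be an input node of $M'[(S_2 \setminus S_1) \cup \{ n_1 \}]$, so there is an edge $(n_3, n)$ in $M'$ with $n_3 \notin (S_2 \setminus S_1) \cup \{ n_1 \}$; in particular $n_3 \neq n_1$. If $n = n_1$, then by the definition of contraction the type of $n_1$ is the I/O type of $M[S_1]$, which by the standing assumption of this subsection equals the I/O type of $M[S_2]$, as required. If instead $n \in S_2 \setminus S_1$, then since $n_3 \neq n_1$ the node $n_3$ was not created by contracting $S_1$, hence $n_3 \notin S_1$, so the edge $(n_3, n)$ already exists in $M$; moreover $n_3 \notin S_2 \setminus S_1$ and $n_3 \notin S_1$ (so $n_3 \notin S_1 \cap S_2$), and therefore $n_3 \notin S_2$. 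Thus $n$ has in $M$ an incoming edge from outside $S_2$, so $n$ is an input node of $M[S_2]$, and since contraction preserves node types the type of $n$ is the I/O type of $M[S_2]$. The claim for output nodes is the dual: reverse the direction of all edges and interchange the roles of the sets of input and output nodes.

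I expect this to be routine: the only point needing a little care is that $n_3 \notin S_1 \cap S_2$ is not assumed directly but follows because every node of $S_1$ has been removed from $M'$, together with the reuse of the case-(C) hypothesis that $M[S_1]$ and $M[S_2]$ share their I/O type, which is exactly what makes the $n = n_1$ case go through. I do not anticipate any genuine obstacle.
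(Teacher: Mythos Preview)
Your proof is correct and follows essentially the same approach as the paper's: both split on whether the input node in question is $n_1$ or lies in $S_2 \setminus S_1$, use the case-(C) assumption that $M[S_1]$ and $M[S_2]$ share their I/O type for the first case, and trace the edge back to $M$ for the second. Your version is in fact slightly more careful than the paper's, since you make explicit why $n_3 \notin S_1 \cap S_2$ (all of $S_1$ has been removed in $M'$) and hence why $n_3 \notin S_2$, a point the paper glosses over when it simply asserts that the edge already existed in $M$ and that $n_2$ is an input node of $M[S_2]$.
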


\begin{proof}
We first consider input nodes. Assume $n_2$ is an input node of  $M'[(S_2 \setminus S_1) \cup \{ n_1 \}]$, then it will have in $M'$ an incoming edge $(n_3, n_2)$ with $n_3 \not\in (S_2 \setminus S_1) \cup \{ n_1 \}$. Then either (i) $n_2 \in S_2 \setminus S_1$ or (ii) $n_2 = n_1$:
\begin{enumerate}[(i)]

  \item Then this edge already existed in $M$, and so $n_2$ is an input node in $M[S_2]$. Since the type of a node is not changed by contraction, it follows that the type of $n_2$ is equal to the I/O type of $M[S_2]$.
  
   \item By the definition of contraction the type of $n_1$ is the I/O type of $M[S_1]$, which is by assumption equal to the I/O type of $M[S_2]$. It follows that the type of $n_2 = n_1$ is equal to the I/O type of $M[S_2]$.
   
\end{enumerate}
The argument for output nodes is similar, except that edges are reversed and the set of input nodes is replaced with the set of output nodes.
\end{proof}

Before we proceed with the lemma for the preservation of the AND and OR properties, we introduce an auxiliary lemma about the relationship between $S_1$ and $S_2$.

\begin{lem}[edges only via intersection] \label{lem:edge-via-intersection}
There cannot be edges in $M$ between nodes in $S_1 \setminus S_2$ and nodes in $S_2 \setminus S_1$.
\end{lem}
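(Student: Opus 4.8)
The plan is to argue by contradiction, exploiting the fact that in a Petri net an edge can only connect a place to a transition or a transition to a place, i.e. $F \subseteq (P\times T)\cup(T\times P)$. So suppose there is in $M$ an edge between a node in $S_1\setminus S_2$ and a node in $S_2\setminus S_1$. There are two cases according to the direction of this edge, and I would treat them symmetrically.

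First, suppose the edge is $(n,n')$ with $n\in S_1\setminus S_2$ and $n'\in S_2\setminus S_1$. Since $n\in S_1$ but $n'\notin S_1$, the definition of $M[S_1]$ shows that $n$ is an output node of $M[S_1]$, hence $n$ has the I/O type of $M[S_1]$. Since $n'\in S_2$ but $n\notin S_2$, the definition of $M[S_2]$ shows that $n'$ is an input node of $M[S_2]$, hence $n'$ has the I/O type of $M[S_2]$. By the standing assumption of case (C) the I/O types of $M[S_1]$ and $M[S_2]$ coincide, so $n$ and $n'$ are of the same type (both places or both transitions). But $(n,n')\in F_M$ and $M$ is a Petri net, so $n$ and $n'$ must be of different types; contradiction. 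The case where the edge is $(n',n)$ with $n'\in S_2\setminus S_1$ and $n\in S_1\setminus S_2$ is entirely analogous: now $n'$ is an output node of $M[S_2]$ and $n$ an input node of $M[S_1]$, so again both endpoints carry the common I/O type of $M[S_1]$ and $M[S_2]$, contradicting the Petri-net edge condition.

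This is a short argument and I do not expect a genuine obstacle; the only points that need a little care are (1) correctly invoking the definition of $M[S_i]$ to see that an endpoint leaving $S_i$ is an output node of $M[S_i]$ and an endpoint entered from outside $S_i$ is an input node of $M[S_i]$, and (2) recalling that by I/O consistency the input and output nodes of a WF net share a single type, namely its I/O type. Note that, unlike the preceding lemmas of this subsection, this proof uses neither well-nestedness nor acyclicity nor any AND/OR property — it relies solely on I/O consistency of $M[S_1]$, $M[S_2]$ and the case-(C) hypothesis that their I/O types agree.
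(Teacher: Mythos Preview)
Your proposal is correct and follows essentially the same approach as the paper: the paper's proof likewise observes that the endpoints of such an edge would be an output node of one subnet and an input node of the other, hence both of the common I/O type, contradicting the Petri-net alternation constraint on edges. Your version is slightly more explicit in treating the two edge directions separately, but the underlying argument is identical.
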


\begin{proof}
If there is such an edge, then its starting node is an output node of one net, and the end node an input node of the other net. Since these nodes must have different types in a Petri net, it follows that the I/O types of two nets are different, but this contradicts the assumption that they are the same.
\end{proof}

\begin{lem}[preservation of the AND and the OR property] \label{lem:AND-OR-C}
The net $M'[(S_2 \setminus S_1) \cup \{ n_1 \}]$ has the AND (OR) property if $M[S_2]$ has the AND (OR) property.
\end{lem}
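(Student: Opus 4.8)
The plan is to prove the contrapositive by contradiction, doing the AND property in detail and obtaining the OR property by the symmetric exchange of places and transitions. So I would assume that $M[S_2]$ has the AND property while $M'[(S_2 \setminus S_1) \cup \{n_1\}]$ does not, and produce a violation of the AND property already in $M[S_2]$. A place $p$ witnessing the failure in $M'[(S_2 \setminus S_1) \cup \{n_1\}]$ lies in $(S_2 \setminus S_1) \cup \{n_1\}$, where $n_1$ is a place only when $M[S_1]$ (and hence $n_1$) has I/O type place; by the definition of the AND property, $p$ either (a) has at least two incoming edges inside the subnet while not being an input node of the subnet, or (b) is an input node of the subnet and receives an incoming edge inside the subnet (the outgoing-edge restriction is handled symmetrically at the end). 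The two structural facts I will rely on throughout are the well-nestedness of $S_1$ in $M$, so that all input nodes of $M[S_1]$ share exactly the same incoming edges from outside $S_1$, and Lemma~\ref{lem:edge-via-intersection}, so that any edge into $S_1$ whose source lies in $S_2 \setminus S_1$ actually enters $S_1 \cap S_2$ (and dually for edges leaving $S_1$). I will also use the bookkeeping observation that a node of $M'$ lying outside the subnet $(S_2 \setminus S_1) \cup \{n_1\}$ necessarily lies outside $S_1 \cup S_2$, and that the contraction introduces no self-loop at $n_1$.

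First I would dispose of the sub-case $p \in S_2 \setminus S_1$, which closely follows Lemma~\ref{lem:AND-OR-B}. Every incoming edge of $p$ in $M'$ either already exists in $M$ (when its source is not $n_1$), in which case both endpoints lie in $S_2$ and the edge is present in $M[S_2]$, or it stems from the contraction of $S_1$ (source $n_1$), in which case $M$ contains an edge into $p$ from an output node of $M[S_1]$ which, by Lemma~\ref{lem:edge-via-intersection}, lies in $S_1 \cap S_2 \subseteq S_2$ and hence again gives an edge in $M[S_2]$. Likewise, an incoming edge of $p$ from outside the subnet comes (since its source is outside $S_1 \cup S_2$) from $M$ and witnesses that $p$ is an input node of $M[S_2]$. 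Hence in case (a) we obtain two distinct incoming edges of the place $p$ in $M[S_2]$, and in case (b) we obtain that $p$ is an input place of $M[S_2]$ carrying an incoming edge; either way $M[S_2]$ fails the AND property, contradiction.

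Next I would treat the sub-case $p = n_1$, where $M[S_1]$ is a pWF net. Any incoming edge of $n_1$ inside the subnet comes from a node $n' \in S_2 \setminus S_1$ that in $M$ has an edge into some input node of $M[S_1]$; by well-nestedness of $S_1$ this forces $n'$ to have in $M$ an edge into \emph{every} input node of $M[S_1]$, and by Lemma~\ref{lem:edge-via-intersection} each such input node lies in $S_1 \cap S_2$, hence is a place of $S_2$ receiving that internal edge from $n'$. In case (a), two distinct incoming edges of $n_1$, arriving from distinct $n', n'' \in S_2 \setminus S_1$, therefore produce two distinct incoming edges of a common input-place $q$ of $M[S_1]$ inside $M[S_2]$. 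In case (b), the edge into $n_1$ from a node outside the subnet (thus outside $S_1 \cup S_2$) reaches, again by well-nestedness, such a place $q$ from outside $S_2$, so $q$ is an input node of $M[S_2]$ that still receives an internal incoming edge. In both cases $M[S_2]$ violates the AND property. Finally, the restriction on outgoing edges follows by reversing every edge and swapping the roles of input and output nodes, and the entire argument for the OR property follows by interchanging places and transitions (so that the contracted-node sub-case now concerns $n_1$ when $M[S_1]$ is a tWF net). The main delicate point — and the reason Lemma~\ref{lem:edge-via-intersection} is needed here, where Lemma~\ref{lem:no-edge-to-intersect} served in case (B) — is exactly this tracking of whether an edge incident to $n_1$ re-enters $S_2$ through the intersection $S_1 \cap S_2$ or leaves $S_1 \cup S_2$ altogether.
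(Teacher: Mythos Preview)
Your proposal is correct and follows essentially the same approach as the paper: proof by contradiction, a case split on whether the violating place $p$ lies in $S_2\setminus S_1$ or equals $n_1$, and the same two key tools (well-nestedness of $S_1$ and Lemma~\ref{lem:edge-via-intersection}). The paper's presentation is slightly more granular---it enumerates four sub-cases for $p\in S_2\setminus S_1$ and two for $p=n_1$, whereas you compress these by uniformly tracking each incoming edge back into $M[S_2]$---but the underlying argument is the same.
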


\begin{proof}
We first consider the AND property, and of this first the restriction on incoming edges. The proof proceeds by contradiction, so we start with assuming that $M[S_2]$ has the AND property and $M'[(S_2 \setminus S_1) \cup \{ n_1 \}]$ does not have the AND property. We will show that it then follows that $M[S_2]$ does \emph{not} have the AND property.

If $M'[(S_2 \setminus S_1) \cup \{ n_1 \}]$ does not have the AND property, then in $M'[(S_2 \setminus S_1) \cup \{ n_1 \}]$ there is a place $p$ that violates the AND property. We distinguish two cases: (i)  $p \in S_2 \setminus S_1$ or (ii) $p = n_1$.

\begin{enumerate}[(i)]

\item The node $p$ can violate the AND property in four ways: (a) it has two distinct incoming edges from nodes in $S_2 \setminus S_1$, (b) it has in $M'[(S_2 \setminus S_1) \cup \{ n_1 \}]$ an incoming edge from a node in $S_2 \setminus S_1$ and an incoming edge from  $n_1$,  (c) it is an input node of $M'[(S_2 \setminus S_1) \cup \{ n_1 \}]$ and has in $M'[(S_2 \setminus S_1) \cup \{ n_1 \}]$ an incoming edge from a node in $S_2 \setminus S_1$ and (d) it is an input node of $M'[(S_2 \setminus S_1) \cup \{ n_1 \}]$ and has in $M'[(S_2 \setminus S_1) \cup \{ n_1 \}]$ an incoming edge from $n_1$.

\begin{enumerate}[(a)]

\item In this case these two edges also exist in $M[S_2]$, and so $M[S_2]$ does not have the AND property.

\item Let $(t', p)$ be the incoming edge from $S_2 \setminus S_1$. Since there is an edge $(n_1, p)$ in $M'$, there is in $M$ an edge $(t'', p)$ from an output transition $t''$ of $M[S_1]$ to $p$. By Lemma~\ref{lem:edge-via-intersection} it holds that $t'' \in S_1 \cap S_2$. It follows that $(t'', p)$  exists in $M[S_2]$. However, since there is the edge $(t', p)$ in $M'[S_2 \setminus S_1]$ with $t' \in S_2 \setminus S_1$, which necessarily also exists in $M[S_2]$, it follows that in $M[S_2]$ the place $p$ has two distinct incoming edges, and therefore $M[S_2]$ does not have the AND property.

\item If $p$ is an input node of $M'[(S_2 \setminus S_1) \cup \{ n_1 \}]$ and $p in S_2 \setminus S_1$, then it is also an input node of $M[S_2]$, since we consider internal contractions and an incoming edge in $M'$ from outside $(S_2 \setminus S_1) \cup \{ n_1 \}$ is necessarily also present in $M$. The postulated incoming edge in $M'$ from a node in $S_2 \setminus S_1$ is also necessarily present in $M[S_2]$. It follows that in $M[S_2]$ the node $p$ is an input node and has an incoming edge and so the subnet does not have the AND property.

\item Since there is an edge $(n_1, p)$ in $M'$, there is in $M$ an edge $(t'', p)$ from an output transition $t''$ of $M[S_1]$ to $p$. By Lemma~ \ref{lem:edge-via-intersection} the node $t''$ must be in $S_1 \cap S_2$. It follows that in $M[S_2]$ the node $p$ is an input node and has an incoming edge and so the subnet does not have the AND property.

\end{enumerate}

\item The node $p = n_1$ can violate the AND property in two ways: (a) it has two distinct incoming edges from nodes in $S_2 \setminus S_1$ and (b) it is an input node of $M'[(S_2 \setminus S_1) \cup \{ n_1 \}]$ and has in $M'[(S_2 \setminus S_1) \cup \{ n_1 \}]$ an incoming edge from a node in $S_2 \setminus S_1$.

\begin{enumerate}[(a)]

\item Let the two distinct incoming edges from nodes in $S_2 \setminus S_1$ be $(n_2, p)$ and $(n_3, p)$. Then, by the definition of contraction, and since $S_1$ is well-nested in $M$, there is an input node $n_4$ of $M[S_1]$ with two edges $(n_2, n_4)$ and $(n_3, n_4)$ in $M$. By Lemma~ \ref{lem:edge-via-intersection} the node $n_4$ must be in $S_1 \cap S_2$. But then $n_4$ has two distinct incoming edges in $M[S_2]$, and so $M[S_2]$ would not have the AND property.    

\item Let $(n_2, n_1)$ be an edge that makes $n_1$ an input node of $M'[(S_2 \setminus S_1) \cup \{ n_1 \}]$, which mus exist since we consider internal contractions, and so $n_2 \not\in (S_2 \setminus S_1) \cup \{ n_1 \}$. Let $(n_3, n_1)$ be the edge in $M'[(S_2 \setminus S_1) \cup \{ n_1 \}]$ such that $n_3 \in S_2 \setminus S_1$. From the definition of contraction it follows that there is an input node $n_4$ of $M[S_1]$, with an edge $(n_3, n_4)$ in $M$. Because $S_1$ is well-nested in $M$, there is also an edge $(n_2, n_4)$ in $M$. By Lemma~ \ref{lem:edge-via-intersection}, the node $n_4$ must be in $S_1 \cap S_2$. But then $n_4$ is both an input node of $M[S_2]$, because of $(n_2, n_4)$, and has in $M[S_2]$ an incoming edge, namely $(n_3,n_4)$. Therefore $M[S_2]$ does not have the AND property.

\end{enumerate}

\end{enumerate}
The proof for the restriction on the outgoing edges is similar, except that the direction of the edges is reversed and the sets of output places are replaced with the sets of input places, and vice versa.
    
The proof for the preservation for the OR property is similar to that for the AND property, except that places are replaced with transitions and vice versa.
\end{proof}

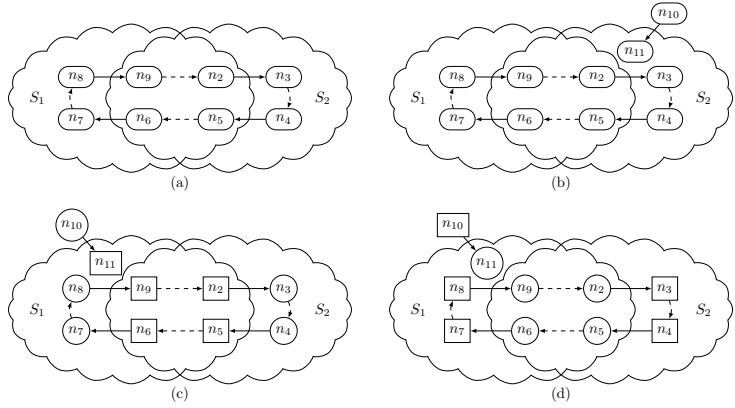
\begin{figure}[htb]
\begin{center}
\resizebox{0.8\textwidth}{!}{%
\begin{tikzpicture}
    \tikzstyle{transition} = [rectangle,draw,minimum width=0.55cm, minimum height=0.55cm,fill=white]
    \tikzstyle{place} = [circle,draw,minimum width=0.55cm, minimum height=0.55cm,fill=white,inner sep=0.08cm]
    \tikzstyle{place-transition} = [rounded rectangle,draw,minimum width=1cm, minimum height=0.5cm,fill=white,inner sep=0.08cm]


\begin{scope}[shift={(0cm, 0cm)}]
\node (a-cl-left) [cloud, draw, cloud puffs=20, cloud puff arc=100, 
  minimum width=5.8cm, minimum height=3.5cm] at (2.1, 1) {};
\node (a-cl-right) [cloud, draw, cloud puffs=20, cloud puff arc=100, 
  minimum width=5.8cm, minimum height=3.5cm] at (4.4, 1) {};

\node (n3) [place-transition] at (5.7, 1.5) {$n_{3}$};
\node (n2) [place-transition] at (4.1, 1.5) {$n_{2}$};
\node (n9) [place-transition] at (2.4, 1.5) {$n_{9}$};
\node (n8) [place-transition] at (0.8, 1.5) {$n_{8}$};
\node (n7) [place-transition] at (0.8, 0.5) {$n_{7}$};
\node (n6) [place-transition] at (2.4, 0.5) {$n_{6}$};
\node (n5) [place-transition] at (4.1, 0.5) {$n_{5}$};
\node (n4) [place-transition] at (5.7, 0.5) {$n_{4}$};

\node [left=-1cm of a-cl-left] {$S_1$};
\node [right=-1cm of a-cl-right] {$S_2$};

\path (n9) edge[-latex, dashed] (n2);
\path (n5) edge[-latex, dashed] (n6);
\path (n4) edge[-latex] (n5);
\path (n6) edge[-latex] (n7);
\path (n7) edge[-latex, dashed, bend left=20] (n8);
\path (n3) edge[-latex, dashed, bend left=20] (n4);
\path (n8) edge[-latex] (n9);
\path (n2) edge[-latex] (n3);

\node at (3.25, -1) {(a)};
\end{scope}


\begin{scope}[shift={(9cm, 0cm)}]
\node (a-cl-left) [cloud, draw, cloud puffs=20, cloud puff arc=100, 
  minimum width=5.8cm, minimum height=3.5cm] at (2.1, 1) {};
\node (a-cl-right) [cloud, draw, cloud puffs=20, cloud puff arc=100, 
  minimum width=5.8cm, minimum height=3.5cm] at (4.4, 1) {};

\node (n3) [place-transition] at (5.7, 1.5) {$n_{3}$};
\node (n2) [place-transition] at (4.1, 1.5) {$n_{2}$};
\node (n9) [place-transition] at (2.4, 1.5) {$n_{9}$};
\node (n8) [place-transition] at (0.8, 1.5) {$n_{8}$};
\node (n7) [place-transition] at (0.8, 0.5) {$n_{7}$};
\node (n6) [place-transition] at (2.4, 0.5) {$n_{6}$};
\node (n5) [place-transition] at (4.1, 0.5) {$n_{5}$};
\node (n4) [place-transition] at (5.7, 0.5) {$n_{4}$};

\node (n10) [place-transition] at (5.8, 3.0) {$n_{10}$};
\node (n11) [place-transition] at (5.0, 2.1) {$n_{11}$};

\node [left=-1cm of a-cl-left] {$S_1$};
\node [right=-1cm of a-cl-right] {$S_2$};

\path (n9) edge[-latex, dashed] (n2);
\path (n5) edge[-latex, dashed] (n6);
\path (n4) edge[-latex] (n5);
\path (n6) edge[-latex] (n7);
\path (n7) edge[-latex, dashed, bend left=20] (n8);
\path (n3) edge[-latex, dashed, bend left=20] (n4);
\path (n8) edge[-latex] (n9);
\path (n2) edge[-latex] (n3);
\path (n10) edge[-latex] (n11);

\node at (3.25, -1) {(b)};
\end{scope}


\begin{scope}[shift={(0cm, -5cm)}]
\node (a-cl-left) [cloud, draw, cloud puffs=20, cloud puff arc=100, 
  minimum width=5.8cm, minimum height=3.5cm] at (2.1, 1) {};
\node (a-cl-right) [cloud, draw, cloud puffs=20, cloud puff arc=100, 
  minimum width=5.8cm, minimum height=3.5cm] at (4.4, 1) {};
  
\node (n3) [place] at (5.7, 1.5) {$n_{3}$};
\node (n2) [transition] at (4.1, 1.5) {$n_{2}$};
\node (n9) [transition] at (2.4, 1.5) {$n_{9}$};
\node (n8) [place] at (0.8, 1.5) {$n_{8}$};
\node (n7) [place] at (0.8, 0.5) {$n_{7}$};
\node (n6) [transition] at (2.4, 0.5) {$n_{6}$};
\node (n5) [transition] at (4.1, 0.5) {$n_{5}$};
\node (n4) [place] at (5.7, 0.5) {$n_{4}$};

\node (n10) [place] at (0.7, 3.0) {$n_{10}$};
\node (n11) [transition] at (1.5, 2.1) {$n_{11}$};

\node [left=-1cm of a-cl-left] {$S_1$};
\node [right=-1cm of a-cl-right] {$S_2$};

\path (n9) edge[-latex, dashed] (n2);
\path (n5) edge[-latex, dashed] (n6);
\path (n4) edge[-latex] (n5);
\path (n6) edge[-latex] (n7);
\path (n7) edge[-latex, dashed, bend left=20] (n8);
\path (n3) edge[-latex, dashed, bend left=20] (n4);
\path (n8) edge[-latex] (n9);
\path (n2) edge[-latex] (n3);
\path (n10) edge[-latex] (n11);

\node at (3.25, -1) {(c)};
\end{scope}


\begin{scope}[shift={(9cm, -5cm)}]
\node (a-cl-left) [cloud, draw, cloud puffs=20, cloud puff arc=100, 
  minimum width=5.8cm, minimum height=3.5cm] at (2.1, 1) {};
\node (a-cl-right) [cloud, draw, cloud puffs=20, cloud puff arc=100, 
  minimum width=5.8cm, minimum height=3.5cm] at (4.4, 1) {};

\node (n3) [transition] at (5.7, 1.5) {$n_{3}$};
\node (n2) [place] at (4.1, 1.5) {$n_{2}$};
\node (n9) [place] at (2.4, 1.5) {$n_{9}$};
\node (n8) [transition] at (0.8, 1.5) {$n_{8}$};
\node (n7) [transition] at (0.8, 0.5) {$n_{7}$};
\node (n6) [place] at (2.4, 0.5) {$n_{6}$};
\node (n5) [place] at (4.1, 0.5) {$n_{5}$};
\node (n4) [transition] at (5.7, 0.5) {$n_{4}$};

\node (n10) [transition] at (0.7, 3.0) {$n_{10}$};
\node (n11) [place] at (1.5, 2.1) {$n_{11}$};

\node [left=-1cm of a-cl-left] {$S_1$};
\node [right=-1cm of a-cl-right] {$S_2$};

\path (n9) edge[-latex, dashed] (n2);
\path (n5) edge[-latex, dashed] (n6);
\path (n4) edge[-latex] (n5);
\path (n6) edge[-latex] (n7);
\path (n7) edge[-latex, dashed, bend left=20] (n8);
\path (n3) edge[-latex, dashed, bend left=20] (n4);
\path (n8) edge[-latex] (n9);
\path (n2) edge[-latex] (n3);
\path (n10) edge[-latex] (n11);

\node at (3.25, -1) {(d)};
\end{scope}

\end{tikzpicture}
}
\end{center}
\caption{\label{fig:acyclic-C}Illustration of proof of preservation of acyclicity}
\end{figure}

We continue with the lemma that shows the preservation of acyclicity. Note that the premise of this lemma is stronger then in its analogue in case (B), Lemma~\ref{lem:acyclic-B}. It not only requires $M[S_2]$ to be acyclic, but to be either a 11tAND net or a pAND net.

\begin{lem}[preservation of acyclicity] \label{lem:acyclic-C}
The net $M'[(S_2 \setminus S_1) \cup \{ n_1 \}]$ is acyclic if $M[S_2]$ is a 11tAND net or a pAND net.
\end{lem}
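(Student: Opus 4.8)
The plan is to argue by contradiction. Suppose $M[S_2]$ is a \textbf{11tAND} or \textbf{pAND} net, hence in particular acyclic, but $M'[(S_2 \setminus S_1) \cup \{n_1\}]$ contains a cycle $C$. If $C$ does not visit $n_1$, then all its nodes lie in $S_2 \setminus S_1$ and, since by the definition of contraction an edge of $M'$ between two nodes outside $S_1$ is already an edge of $M$, all edges of $C$ are edges of $M[S_2]$; thus $C$ is a cycle of $M[S_2]$, contradicting acyclicity. So we may assume $C = (n_1, m_1, m_2, \ldots, m_k, n_1)$ with $k \ge 1$ and $m_1, \ldots, m_k \in S_2 \setminus S_1$ (a self-loop at $n_1$ cannot occur, as the contraction never creates one).

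Next I would translate the two edges of $C$ incident to $n_1$ back into $M$: from $(n_1, m_1) \in F_{M'}$ there is an output node $o$ of $M[S_1]$ with $(o, m_1) \in F_M$, and from $(m_k, n_1) \in F_{M'}$ there is an input node $i$ of $M[S_1]$ with $(m_k, i) \in F_M$. By Lemma~\ref{lem:edge-via-intersection} (there are no edges between $S_1 \setminus S_2$ and $S_2 \setminus S_1$) we get $o, i \in S_1 \cap S_2$; applying the well-nestedness of $S_1$ in $M$ to these two edges shows that \emph{every} output node of $M[S_1]$ has an edge to $m_1$ in $M$ and \emph{every} input node of $M[S_1]$ has an edge from $m_k$ in $M$, and Lemma~\ref{lem:edge-via-intersection} then places all input and output nodes of $M[S_1]$ in $S_1 \cap S_2$. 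Consequently, for every input node $i'$ and every output node $o'$ of $M[S_1]$ the walk $o' \to m_1 \to m_2 \to \cdots \to m_k \to i'$ uses only nodes of $S_2$ and only edges of $M$, hence is a walk of $M[S_2]$. It therefore suffices to exhibit a single path of $M[S_2]$ from an input node of $M[S_1]$ to an output node of $M[S_1]$ (if some node of $M[S_1]$ is both, we are already done): concatenating it with the above walk yields a closed walk, hence a cycle, in $M[S_2]$, which is the desired contradiction.

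Producing such a path is the core of the argument, and it is here that the hypothesis that $M[S_2]$ is an \emph{AND} net, together with the case-(C) assumption that $M[S_1]$ and $M[S_2]$ have the \emph{same} I/O type, must be used; for a merely acyclic $M[S_2]$ the statement fails, as Figure~\ref{fig:acyclic-C}(a),(b) shows, because a path from $i$ to $o$ inside the well-connected net $M[S_1]$ may escape $S_2$ through $S_1 \setminus S_2$. The plan to close this gap: take, by well-connectedness of $M[S_1]$, a path $\pi$ inside $M[S_1]$ from $i$ to an output node of $M[S_1]$; if all nodes of $\pi$ lie in $S_1 \cap S_2$ we are done, and otherwise let $(a,b)$ be the first edge of $\pi$ leaving $S_1 \cap S_2$ and $(c,d)$ the first subsequent edge re-entering it. Then $a$ is an output node and $d$ an input node of $M[S_2]$, so by the well-nestedness of $S_2$ in $M$ every output (resp. input) node of $M[S_2]$ has an edge to $b$ (resp. from $c$), and Lemma~\ref{lem:edge-via-intersection} forces all input and output nodes of $M[S_2]$ into $S_1 \cap S_2$ as well. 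Feeding this structural information into the AND property of $M[S_2]$ (for \textbf{pAND} directly, pinning down the in- and out-degrees of the places among $m_1, m_k$; for \textbf{11tAND} through its place-completion) should force $M[S_1]$ to be one-input one-output and the connecting path to stay inside $S_1 \cap S_2$, delivering the cycle in $M[S_2]$. I expect this reconciliation of the degree constraints with the confinement of the path, split into the \textbf{pAND} and \textbf{11tAND} sub-cases according to whether $m_1, m_k$ are places or transitions, to be the main obstacle; the earlier translations through the definition of contraction and through well-nestedness are routine given Theorem~\ref{thm:contraction-correctness} and Lemma~\ref{lem:edge-via-intersection}.
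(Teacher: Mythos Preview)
Your setup is sound and closely parallels the paper's: reducing to a cycle through $n_1$, pulling the two incident edges back to $M$ via output/input nodes of $M[S_1]$, using Lemma~\ref{lem:edge-via-intersection} to place these in $S_1\cap S_2$, and then analysing a path in $M[S_1]$ that may or may not leave $S_1\cap S_2$. Where your argument stops is exactly where the difficulty lies, and your closing paragraph is not yet a proof: the hope that the AND property will ``force $M[S_1]$ to be one-input one-output and the connecting path to stay inside $S_1\cap S_2$'' does not go through as stated. In particular, your suggestion for the \textbf{pAND} case to look at ``the in- and out-degrees of the places among $m_1,m_k$'' cannot work: since $M[S_1]$ and $M[S_2]$ have I/O type \emph{place} (case~C), the node $n_1$ is a place, so both $m_1$ and $m_k$ are transitions, and the AND property says nothing about them.

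The missing structural ingredient, which the paper supplies, is an edge $(n_{10},n_{11})$ from a node \emph{outside} $S_1\cup S_2$ into $S_1\cup S_2$; such an edge exists because $M$ is well-connected and both subnets are internal. Case-splitting on whether $n_{11}\in S_2$ or $n_{11}\in S_1$ and using well-nestedness then shows that your input node $i$ of $M[S_1]$ (the paper's $n_5$) is actually an input node of $M[S_2]$. This is the decisive fact your argument never reaches: from it, in the \textbf{pAND} case the incoming edge $(m_k,i)$ inside $M[S_2]$ immediately violates the AND property at the input place $i$, and in the \textbf{11tAND} case one-input forces $i$ to coincide with the re-entry node $d$ (your notation) or $n_9$ (the paper's), closing the cycle inside $M[S_2]$. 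Without this external edge there is no way to promote $i$ from ``input of $M[S_1]$'' to ``input of $M[S_2]$''---its only known incoming edge is from $m_k\in S_2$---and the argument stalls.
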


\begin{proof}
The proof proceeds by contradiction, so we assume that there is a cycle in $M'[(S_2 \setminus S_1) \cup \{ n_1 \}]$, and then show that it follows that there is a cycle in $M[S_2]$. So let us assume that there is a cycle in $M'[(S_2 \setminus S_1) \cup \{ n_1 \}]$. It is clear that if the cycle does not contain $n_1$, then this cycle is also present in $M[S_2]$, so we continue under the assumption that $n_1$ is contained in the cycle. We can assume that $n_1$ appears exactly once in the cycle, since if a cycle with multiple occurrences of $n_1$ exist, then there is also one with exactly one occurrence. In this cycle there must be an edge that leaves $n_1$, say $(n_1, n_3)$, and an edge that arrives in $n_1$, say $(n_4, n_1)$. Note that there is in $M'[S_2 \setminus S_1]$ a path from $n_3$ to $n_4$, or that these nodes are the same node.

Let us now consider the situation in $M$. Since there are edges $(n_1, n_3)$ and $(n_4, n_1)$ in $M'$, there must in $M$ be edges $(n_2, n_3)$ and $(n_4, n_5)$, with $n_2$ an output node of $M[S_1]$ and $n_5$ an input node of $M[S_1]$. By Lemma~\ref{lem:edge-via-intersection} it holds that $n_2, n_5 \in S_1 \cap S_2$. Moreover, since $S_1$ is well-nested in $M$ there are edges to $n_3$ from every output node of $M[S_1]$ and from $n_4$ to every input node of $M[S_1]$. We can pick $n_2$ and $n_5$ such that there is in $M[S_1]$ a path from $n_5$ to $n_2$, which must possible since $M[S_1]$ is well-connected. If this path contains only nodes from $S_1 \cap S_2$, then there is a cycle in $M[S_2]$ that consists of the edge $(n_2, n_3)$, the path from $n_3$ to $n_4$, the edge from $n_4$ to $n_5$ and finally the path from $n_5$ to $n_2$. So we continue under the assumption that the path from $n_5$ to $n_2$ contains at least one node from $S_1 \setminus S_2$. This situation is illustrated in Figure~\ref{fig:acyclic-C} (a). In the path from $n_5$ to $n_2$ we consider two special edges: the edge $(n_6, n_7)$ that is the first edge that goes from a node in $S_1 \cap S_2$ to a node in $S_1 \setminus S_2$, and the edge $(n_8, n_9)$ that is the last edge that goes from a node in $S_1 \setminus S_2$ to a node in $S_1 \cap S_2$. These edges must exist, since by Lemma~\ref{lem:edge-via-intersection} the nodes $n_2$ and $n_5$ are in $S_1 \cap S_2$. Note that the section of the path between $n_7$ and $n_8$ contains only nodes from $S_1$, but some of those might be in $S_1 \cap S_2$. Also note that the paths from $n_9$ to $n_2$, and from $n_5$ to $n_6$, contain only nodes in $S_1 \cap S_2$.

Since $M$ is a well-connected, there must for every node in $S_1 \cup S_2$ be a path from an input node of $M$ to that node. Moreover, since $S_1$ and $S_2$ are internal subnets, that path contains at least one edge. Taken one such path and let $(n_{10}, n_{11})$ be the first edge in that path such that $n_{10} \not\in S_1 \cup S_2$ and $n_{11} \in S_1 \cup S_2$. It then must hold that (i) $n_{11} \in S_2$ or (ii) $n_{11} \in S_1$. We consider the two cases:

\begin{enumerate}[(i)]

\item This case is illustrated in Figure~\ref{fig:acyclic-C} (b). Since $S_2$ is well-nested in $M$, there is an edge $(n_{10}, n_9)$ in $M$ as $n_9$ is an input node of $M[S_2]$. But then $n_9$ is also an input node of $M[S_1]$ since it has an incoming edge from outside $S_1$. Since $S_1$ is well-nested in $M$, and $n_5$ is an input node of $M[S_1]$, there must then be an edge $(n_4, n_9)$ in $M$. It follows that there must be cycle in $M[S_2]$ through the nodes $n_2$, $n_3$, $n_4$, $n_5$ and $n_9$.

\item This case is  illustrated in Figure~\ref{fig:acyclic-C} (c) and (d), where (c) shows the sub-case where $M[S_2]$ is a 11tAND net and (d) shows the sub-case where $M[S_2]$ is a pAND net. We consider these two sub-cases:

\begin{description}

\item[{$M[S_2]$} is a 11tAND net:] Since $S_1$ is well-nested in $M$, and $n_5$ is an input node of $M[S_1]$, there is an edge $(n_{10}, n_5)$ in $M$. Because of this edge, $n_5$ is also an input node of $S_2$. Since $n_9$ is also an input node of $S_2$, and $M[S_2]$ has exactly one input node, it follows that $n_5 = n_9$. Then, the path in $S_2$ that goes through the nodes $n_2$, $n_3$, $n_4$, $n_5$ and $n_9$, is in fact a cycle.

\item[{$M[S_2]$} is a pAND net:] As in the previous sub-case, we derive that $n_5$ is an input node of $S_2$. But then $n_5$ violates the AND property since it is both an input node of $M[S_2]$ and has in $M[S_2]$ an incoming edge, which contradicts the assumption that $M[S_2]$ is a pAND net.

\end{description}

\end{enumerate}
\end{proof}

\begin{lem}[preservation of one-input one-output property] \label{lem:one-input-one-output-C}
The net $M'[(S_2 \setminus S_1) \cup \{ n_1 \}]$ is a one-input one-output net if $M[S_2]$ is a one-input one-output net.
\end{lem}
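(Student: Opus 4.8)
The plan is to follow the template of the proof of Lemma~\ref{lem:one-input-one-output-B}, adjusted for the two features that distinguish case~(C): the subnet under scrutiny, $T := (S_2 \setminus S_1) \cup \{ n_1 \}$, contains the fresh node $n_1$, and $M[S_1]$ and $M[S_2]$ share their I/O type. I will prove the one-input property; the one-output property then follows by the usual dualisation, reversing every edge and interchanging the roles of input and output nodes. First I note that $T$ is internal in $M'$: since $S_1$ is internal, $n_1 \notin I_{M'} \cup O_{M'}$, and since no node of $S_2 \setminus S_1$ lies in $I_M \cup O_M$ while contracting an internal subnet leaves $I_M$ and $O_M$ unchanged, no node of $S_2 \setminus S_1$ lies in $I_{M'} \cup O_{M'}$ either. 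Hence a node of $T$ is an input node of $M'[T]$ exactly when it has in $M'$ an incoming edge from a node outside $T$.

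Arguing by contradiction, suppose $M'[T]$ has two distinct input nodes $a$ and $b$, witnessed by edges $(c, a), (d, b) \in F_{M'}$ with $c, d \notin T$. Being nodes of $M'$, $c$ and $d$ lie outside $S_1$; being outside $T$ they are neither in $S_2 \setminus S_1$ nor equal to $n_1$; hence $c, d \notin S_1 \cup S_2$. Since $a \ne b$, at most one of $a, b$ equals $n_1$. If neither does, then $a, b \in S_2 \setminus S_1$, so the edges $(c,a)$ and $(d,b)$ are present already in $M$, and as $c, d \notin S_2$ both $a$ and $b$ are input nodes of $M[S_2]$, contradicting its one-input property. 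Otherwise, say $a = n_1$ and $b \in S_2 \setminus S_1$; since $(d, b)$ is also an edge of $M$ and $d \notin S_2$, the node $b$ is an input node of $M[S_2]$, so it suffices to produce a second one. By the definition of contraction, the edge $(c, n_1)$ arises from an edge $(c, m) \in F_M$ with $m$ an input node of $M[S_1]$, and since $c \notin S_1$, the well-nestedness of $S_1$ in $M$ forces $(c, m') \in F_M$ for \emph{every} input node $m'$ of $M[S_1]$. If some input node $m'$ of $M[S_1]$ lies in $S_1 \cap S_2$, then $c \notin S_2$ makes $m'$ an input node of $M[S_2]$, and $m' \neq b$ because $m' \in S_1$ while $b \notin S_1$; this contradicts the one-input property of $M[S_2]$. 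If, on the other hand, all input nodes of $M[S_1]$ lie in $S_1 \setminus S_2$, I pick any node $w$ of the nonempty set $S_1 \cap S_2$ and a path in the well-connected net $M[S_1]$ from an input node of $M[S_1]$ to $w$; the first node $v_j$ of this path that lies in $S_2$ then belongs to $S_1 \cap S_2$, its predecessor $v_{j-1}$ lies in $S_1 \setminus S_2$, and so $(v_{j-1}, v_j) \in F_M$ with $v_{j-1} \notin S_2$ makes $v_j$ an input node of $M[S_2]$, again distinct from $b$; once more we contradict the one-input property of $M[S_2]$.

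I do not expect this to be a deep argument --- everything needed is already at hand: the definition of contraction, the well-nestedness of $S_1$ in $M$, and the well-connectedness of $M[S_1]$, together with the standing assumption that $S_1$ and $S_2$ genuinely overlap, so $S_1 \cap S_2 \ne \emptyset$. The one step that calls for care is the sub-case where all input nodes of $M[S_1]$ sit in $S_1 \setminus S_2$: there one must run the same ``push an entry edge into $S_2$'' maneuver already used in the proof of Lemma~\ref{lem:well-nested-C}, exploiting well-connectedness to reach $S_1 \cap S_2$ from $S_1 \setminus S_2$ inside $M[S_1]$. Beyond that the only real risk is bookkeeping: tracking, for each auxiliary node, whether it lives in $S_1 \setminus S_2$, $S_1 \cap S_2$, $S_2 \setminus S_1$, or outside $S_1 \cup S_2$, and carrying edges correctly across the contraction.
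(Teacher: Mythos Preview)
Your proof is correct and follows essentially the same approach as the paper: the same contradiction setup, the same case split on whether one of the two putative input nodes is $n_1$, and in the interesting sub-case the same further split on whether some input node of $M[S_1]$ lies in $S_1 \cap S_2$ (using well-nestedness) or all lie in $S_1 \setminus S_2$ (using well-connectedness to push an edge into the intersection). Your bookkeeping is in fact a little more explicit than the paper's --- you spell out that $T$ is internal in $M'$ and that $c,d \notin S_1 \cup S_2$ --- but the argument is the same.
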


\begin{proof}
We start with proving the one-input property. The proof proceeds by contradiction, so we start with assuming that $M'[(S_2 \setminus S_1) \cup \{ n_1 \}]$ does not have the one-input property, i.e., there are two distinct nodes $n_3$ and $n_5$ in $M'[(S_2 \setminus S_1) \cup \{ n_1 \}]$ with edges $(n_2, n_3)$ and $(n_4, n_5)$ in $M'$ such that $n_2, n_4 \not\in (S_2 \setminus S_1) \cup \{ n_1 \}$. We then consider the following two cases for the distinct nodes $n_3$ and $n_5$: (i) one of them is equal to $n_1$ and (ii) neither is equal to $n_1$.
\begin{enumerate}[(i)]
  
  \item This case is illustrated in Figures~\ref{fig:one-input-one-output-C} (a). Without loss of generality, we can assume that $n_5 = n_1$ and $n_3 \neq n_1$. By definition of contraction, there is then in $M$ a node $n_6$ that is an input node of $M[S_1]$ and an edge $(n_4, n_6)$. We consider the cases (1) all input nodes of $M[S_1]$ are in $S_1 \setminus S_2$ and (2) there is at least one input node of $M[S_1]$ that is in $S_1 \cap S_2$:  
  
    \begin{enumerate}[(1)]

      \item Because $M[S_1]$ is well-connected, the nodes in $S_1 \cap S_2$ must be reachable via a path from an input node of $M[S_1]$ in $S_1 \setminus S_2$. Take one such path, and consider the first edge $(n_7, n_8)$ on that path such that $n_7 \in S_1 \setminus S_2$ and $n_8 \in S_1 \cap S_2$. It follows that $n_8$ is an input node of $M[S_2]$, because it has an incoming edge from outside $S_2$. The node $n_3$ is also an input node of $M[S_2]$ because of the edge $(n_2, n_3)$. Since $n_8 \in S_1 \cap S_2$ and $n_3 \in S_2 \setminus S_1$, they are distinct nodes, and so $M[S_2]$ does not have the one-input property.

      \item Because $S_1$ is well-nested in $M$, we can reassign $n_6$ such that it is in $S_1 \cap S_2$ and still has an incoming edge from $n_4$. It then follows that $n_6$ is an input node of $M[S_2]$, as well as $n_3$. Since $n_6 \in S_1 \cap S_2$ and $n_3 \in S_2 \setminus S_1$, they are distinct nodes, and so $M[S_2]$ does not have the one-input property.    
      
    \end{enumerate}
  
    \item This case is illustrated in Figures~\ref{fig:one-input-one-output-C} (b). It follows that $n_3$ and $n_5$ are input nodes of $M[S_2]$, but then $M[S_2]$ does not have the one-input property.

\end{enumerate}

The one-output property can be shown to be preserved in a similar way, but with the direction of the edges reversed and the sets of input places replaced by the sets of output places and vice versa.
\end{proof}

\begin{figure}[htb]
\begin{center}
\resizebox{0.7\textwidth}{!}{%
\begin{tikzpicture}
    \tikzstyle{transition} = [rectangle,draw,minimum width=0.55cm, minimum height=0.55cm,fill=white]
    \tikzstyle{place} = [circle,draw,minimum width=0.55cm, minimum height=0.55cm,fill=white,inner sep=0.08cm]
    \tikzstyle{place-transition} = [rounded rectangle,draw,minimum width=1cm, minimum height=0.5cm,fill=white,inner sep=0.08cm]

\begin{scope}[shift={(0cm, -8cm)}] 

\begin{scope}[shift={(1cm, 0cm)}]
   \node (S1) [cloud,cloud puffs=15,cloud puff arc=100, aspect=1.5,
          text opacity=1,
          minimum width=3cm,minimum height=3cm]
    {};

   \node (S2) [cloud,draw,cloud puffs=15,cloud puff arc=100, aspect=1.5,
          text opacity=1,
          minimum width=3cm,minimum height=3cm,
          right=-1.5cm of S1]
    {};
   \node [right=-1.5cm of S2] {$S_2 \setminus S_1$};

   \node (ES1) [cloud, draw, fill=white, color=white, cloud puffs=15,cloud puff arc=100, aspect=1.5,
          text opacity=1,
          minimum width=3cm,minimum height=3cm] at (S1.center)
    {};
    
  \begin{scope}[]
    \pgftransformshift{\pgfpointanchor{S2}{center}}
    \pgfset{cloud puffs=15,cloud puff arc=100, aspect=1.5,
       minimum width=3cm,minimum height=3cm}
    \pgfnode{cloud}{center}{}{nodename}{\pgfusepath{clip}}

   \node [draw,cloud,cloud puffs=15,cloud puff arc=100, aspect=1.5,
       minimum width=3cm,minimum height=3cm] 
     at (S1.center) {};
  \end{scope}
     
   \node (n1) [place-transition, right=-1.5cm of S1] {$n_1$};  
   \node (n3) [place-transition, above right=-0.6cm and -0.6cm of S2] {$n_3$};  
   \node (n5) [place-transition, below right=-0.6cm and -0.6cm of S2] {$n_5$};  

   \node (n4) [place-transition, below right=0.2cm and 0.2cm of S2] {$n_4$};  
   \node (n2) [place-transition, above right=0.2cm and 0.2cm of S2] {$n_2$};  
   
   \path (n2) edge[-latex] (n3);
   \path (n4) edge[-latex] (n5);
\end{scope}

\node at (0cm, 0cm) {(b)};

\node at (5cm, 0cm) {\Large $\Rightarrow$};

\begin{scope}[shift={(7cm, 0cm)}]
   \node (S1) [cloud,draw,cloud puffs=15,cloud puff arc=100, aspect=1.5,
          text opacity=1,
          minimum width=3cm,minimum height=3cm]
    {};
   \node [left=-1cm of S1] {$S_1$};

   \node (S2) [cloud,draw,cloud puffs=15,cloud puff arc=100, aspect=1.5,
          text opacity=1,
          minimum width=3cm,minimum height=3cm,
          right=-1.5cm of S1]
    {};
   \node [right=-1cm of S2] {$S_2$};
     
   \node (n3) [place-transition, above right=-0.6cm and -0.6cm of S2] {$n_3$};  
   \node (n5) [place-transition, below right=-0.6cm and -0.6cm of S2] {$n_5$};  

   \node (n4) [place-transition, below right=0.2cm and 0.2cm of S2] {$n_4$};  
   \node (n2) [place-transition, above right=0.2cm and 0.2cm of S2] {$n_2$};  
   
   \path (n2) edge[-latex] (n3);
   \path (n4) edge[-latex] (n5);
\end{scope}

\end{scope} 

\begin{scope}[shift={(0cm, 0cm)}] 

\begin{scope}[shift={(1cm, 0cm)}]

   \node (S1) [cloud,cloud puffs=15,cloud puff arc=100, aspect=1.5,
          text opacity=1,
          minimum width=3cm,minimum height=3cm]
    {};

   \node (S2) [cloud,draw,cloud puffs=15,cloud puff arc=100, aspect=1.5,
          text opacity=1,
          minimum width=3cm,minimum height=3cm,
          right=-1.5cm of S1]
    {};
   \node [right=-1.5cm of S2] {$S_2 \setminus S_1$};

   \node (ES1) [cloud, draw, fill=white, color=white, cloud puffs=15,cloud puff arc=100, aspect=1.5,
          text opacity=1,
          minimum width=3cm,minimum height=3cm] at (S1.center)
    {};
    
  \begin{scope}[]
    \pgftransformshift{\pgfpointanchor{S2}{center}}
    \pgfset{cloud puffs=15,cloud puff arc=100, aspect=1.5,
       minimum width=3cm,minimum height=3cm}
    \pgfnode{cloud}{center}{}{nodename}{\pgfusepath{clip}}

   \node [draw,cloud,cloud puffs=15,cloud puff arc=100, aspect=1.5,
       minimum width=3cm,minimum height=3cm] 
     at (S1.center) {};
  \end{scope}
     
   \node (n1) [place-transition, right=-1.5cm of S1] {$n_1$};  
   \node [below=0cm of n1] {$=n_5$};

   \node (n4) [place-transition, above left=0.6cm and 0.6cm of n1] {$n_4$};  
   \node (n3) [place-transition, above right=-0.6cm and -0.6cm of S2] {$n_3$};  

   \node (n2) [place-transition, above right=0.2cm and 0.2cm of S2] {$n_2$};  
   
   \path (n2) edge[-latex] (n3);
   \path (n4) edge[-latex] (n1);

\end{scope}

\node at (0cm, 0cm) {(a)};

\node [rotate=40] at (4.9cm, 1.5cm) {\Large $\Rightarrow$};

\begin{scope}[shift={(7cm, -2.5cm)}]
   \node (S1) [cloud,draw,cloud puffs=15,cloud puff arc=100, aspect=1.5,
          text opacity=1,
          minimum width=3cm,minimum height=3cm]
    {};
   \node [left=-1cm of S1] {$S_1$};

   \node (S2) [cloud,draw,cloud puffs=15,cloud puff arc=100, aspect=1.5,
          text opacity=1,
          minimum width=3cm,minimum height=3cm,
          right=-1.5cm of S1]
    {};
   \node [right=-1cm of S2] {$S_2$};
     
   \node (n6) [place-transition, right=-1.2cm of S1] {$n_6$};  

   \node (n4) [place-transition, above left=0.2cm and 0.2cm of S1] {$n_4$};  
   \node (n3) [place-transition, above right=-0.6cm and -0.6cm of S2] {$n_3$};  

   \node (n2) [place-transition, above right=0.2cm and 0.2cm of S2] {$n_2$};  
   
   \path (n2) edge[-latex] (n3);
   \path (n4) edge[-latex] (n6);
   
   \node [below right=0.7 cm and -3 cm of S1] {Not all input nodes of $M[S_1]$ in $S_1 \setminus S_2$};
\end{scope}

\node [rotate=-40] at (4.9cm, -1.5cm) {\Large $\Rightarrow$};

\begin{scope}[shift={(7cm, 2.5cm)}]
   \node (S1) [cloud,draw,cloud puffs=15,cloud puff arc=100, aspect=1.5,
          text opacity=1,
          minimum width=3cm,minimum height=3cm]
    {};
   \node [left=-1cm of S1] {$S_1$};

   \node (S2) [cloud,draw,cloud puffs=15,cloud puff arc=100, aspect=1.5,
          text opacity=1,
          minimum width=3cm,minimum height=3cm,
          right=-1.5cm of S1]
    {};
   \node [right=-1cm of S2] {$S_2$};
     
   \node (n6) [place-transition, above left=-0.6cm and -0.6cm of S1] {$n_6$};  

   \node (n3) [place-transition, above right=-0.6cm and -0.6cm of S2] {$n_3$};  
   \node (n2) [place-transition, above right=0.2cm and 0.2cm of S2] {$n_2$};  

   \node (n4) [place-transition, above left=0.2cm and 0.2cm of S1] {$n_4$};  

   \path (n2) edge[-latex] (n3);
   \path (n4) edge[-latex] (n6);
   
   \node [below right=0.7 cm and -3 cm of S1] {All input nodes of $M[S_1]$ in $S_1 \setminus S_2$};
\end{scope}

\node at (11cm, 2.5cm) {\Large $\Rightarrow$};

\begin{scope}[shift={(13cm, 2.5cm)}]
   \node (S1) [cloud,draw,cloud puffs=15,cloud puff arc=100, aspect=1.5,
          text opacity=1,
          minimum width=3cm,minimum height=3cm]
    {};
   \node [left=-1cm of S1] {$S_1$};

   \node (S2) [cloud,draw,cloud puffs=15,cloud puff arc=100, aspect=1.5,
          text opacity=1,
          minimum width=3cm,minimum height=3cm,
          right=-1.5cm of S1]
    {};
   \node [right=-1cm of S2] {$S_2$};
     
   \node (n6) [place-transition, above left=-0.6cm and -0.6cm of S1] {$n_6$};  
   \node (n3) [place-transition, above right=-0.6cm and -0.6cm of S2] {$n_3$};  
   \node (n2) [place-transition, above right=0.2cm and 0.2cm of S2] {$n_2$};  
   \node (n4) [place-transition, above left=0.2cm and 0.2cm of S1] {$n_4$};  

   \node (n7) [place-transition, below left=-0.6cm and -0.6cm of S1] {$n_7$};  
   \node (n8) [place-transition, below right=-0.9cm and -0.5cm of S1] {$n_8$};     

   \path (n2) edge[-latex] (n3);
   \path (n4) edge[-latex] (n6);
   \path (n7) edge[-latex] (n8);
\end{scope}

\end{scope} 

\end{tikzpicture}
}
\end{center}
\caption{\label{fig:one-input-one-output-C}Illustration of proof of preservation of the one-input and one-output properties in case (C)}
\end{figure}
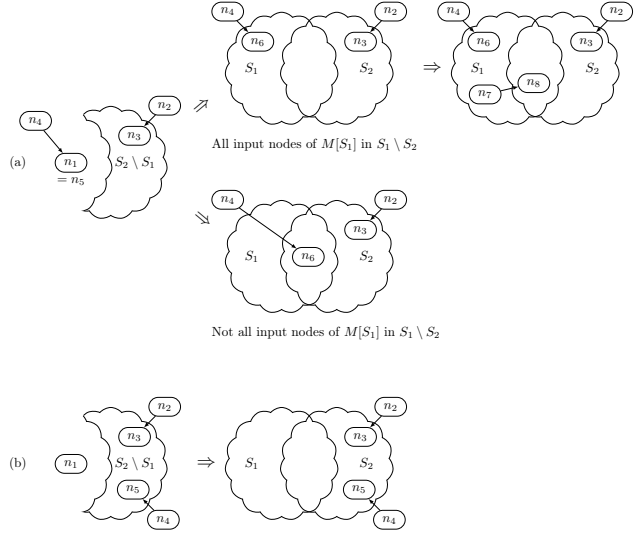

This concludes the lemmas that show that (1) the well-nestedness of $S_2$ in $M$ is preserved by $S_2 \setminus S_1$ in $M'$ and (2) all the defining properties of basic AND-OR nets are also preserved in $M'[S_2 \setminus S_1]$ if $M[S_2]$ had them. This leads us to the following lemma.

\begin{lem}[preservation of contractibility] \label{lem:contractibility-C}
The subnet $(S_2 \setminus S_1) \cup \{ n_1 \}$ is contractible in $M'$.
\end{lem}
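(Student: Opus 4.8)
The plan is to mirror, step for step, the proof of Lemma~\ref{lem:contractibility-B}, substituting the case-(C) auxiliary results for their case-(B) counterparts. First I would note that $(S_2 \setminus S_1) \cup \{ n_1 \}$ is a non-empty set of nodes of $M'$ (it contains $n_1$), so $M'[(S_2 \setminus S_1) \cup \{ n_1 \}]$ is defined, and then establish three claims: (1) $(S_2 \setminus S_1) \cup \{ n_1 \}$ is well-nested in $M'$; (2) $M'[(S_2 \setminus S_1) \cup \{ n_1 \}]$ is a WF net; and (3) $M'[(S_2 \setminus S_1) \cup \{ n_1 \}]$ belongs to the same basic AND-OR class as $M[S_2]$. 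Claims (1) and (3) together say precisely that the subnet is well-nested and of a basic AND-OR type, i.e., contractible, while (2) makes sure it is legitimate to speak of its AND-OR class at all.

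Claim (1) is exactly Lemma~\ref{lem:well-nested-C}, so nothing further is needed there. For claim (2) I would invoke Lemma~\ref{lem:IO-type-C}: since $M[S_2]$, being contractible, is a WF net, its input nodes all have one and the same type and likewise its output nodes, and these agree; Lemma~\ref{lem:IO-type-C} transfers this to $M'[(S_2 \setminus S_1) \cup \{ n_1 \}]$, which is therefore I/O consistent. Then, as $M'$ is a WF net by Theorem~\ref{thm:contraction-correctness}, and every I/O-consistent subnet of a WF net is itself a WF net by Theorem~\ref{thm:subnet-wf-char}, claim (2) follows, and the I/O type of $M'[(S_2 \setminus S_1) \cup \{ n_1 \}]$ is that of $M[S_2]$.

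For claim (3) I would perform a case split on which of the four basic AND-OR classes $M[S_2]$ lies in, and in each case verify that every defining property of that class is inherited by $M'[(S_2 \setminus S_1) \cup \{ n_1 \}]$: the I/O type by Lemma~\ref{lem:IO-type-C}; the AND (respectively OR) property by Lemma~\ref{lem:AND-OR-C}; acyclicity, in the cases where $M[S_2]$ is a pAND or 11tAND net, by Lemma~\ref{lem:acyclic-C}; and the one-input one-output property, in the cases where $M[S_2]$ is a 11tAND or 11pOR net, by Lemma~\ref{lem:one-input-one-output-C}. Since these are all the conditions defining pAND, 11tAND, 11pOR and tOR, the subnet lands in the same class as $M[S_2]$.

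The step to watch — and the reason this case is not a verbatim copy of case (B) — is acyclicity. Because the contracted subnet now contains the new node $n_1$, a cycle of $M'[(S_2 \setminus S_1) \cup \{ n_1 \}]$ passing through $n_1$ need not reduce to a cycle already present in $M[S_2]$, so the one-line argument used in Lemma~\ref{lem:acyclic-B} no longer applies; this is precisely why Lemma~\ref{lem:acyclic-C} must assume $M[S_2]$ is a 11tAND or pAND net rather than merely acyclic. Fortunately that stronger hypothesis is available exactly when acyclicity is part of the class definition, so the case analysis of claim (3) closes with no gap, and the remainder of the argument is routine assembly.
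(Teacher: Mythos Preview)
Your proposal is correct and follows essentially the same approach as the paper: establish the three claims (well-nestedness, WF-net, same basic AND-OR class) by invoking Lemmas~\ref{lem:well-nested-C}, \ref{lem:IO-type-C}, \ref{lem:AND-OR-C}, \ref{lem:acyclic-C}, \ref{lem:one-input-one-output-C} together with Theorems~\ref{thm:subnet-wf-char} and \ref{thm:contraction-correctness}. Your additional remark explaining why Lemma~\ref{lem:acyclic-C} needs the stronger hypothesis (pAND or 11tAND rather than mere acyclicity) is a nice piece of exposition that the paper leaves implicit.
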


\begin{proof}
The proof proceeds analogous to that of Lemma~\ref{lem:contractibility-B}, which show preservation of contractibility for case (B).  We show the following three claims: (1) $M'[(S_2 \setminus S_1) \cup \{ n_1 \}]$ is well-nested, (2) $M'[(S_2 \setminus S_1) \cup \{ n_1 \}]$ is a WF net and (3) $M'[(S_2 \setminus S_1) \cup \{ n_1 \}]$ belongs to the same basic AND-OR class as $M[S_2]$. Claim (1) follows from Lemma~\ref{lem:well-nested-C}. Claim (2) follows from Lemma~\ref{lem:IO-type-C} (preservation of I/O type), Theorem~\ref{thm:subnet-wf-char} (every I/O-consistent subnet of a WF net is a WF net) and Theorem~\ref{thm:contraction-correctness} (the result of a contraction in a WF net is a WF net). Claim (3) follows from the preservation of the AND-OR defining properties of $M[S_2]$ by $M'[(S_2 \setminus S_1) \cup \{ n_1 \}]$, as stated by Lemma~\ref{lem:IO-type-C} (preservation of I/O type), Lemma~\ref{lem:AND-OR-C} (preservation of the AND and OR properties), Lemma~\ref{lem:acyclic-C} (preservation of acyclicity) and Lemma~\ref{lem:one-input-one-output-C} (preservation of one-input and one-output properties). 
\end{proof}

We now turn to the question if the order of contractions of $S_1$ and $S_2$ influences the final result.

\begin{lem}[commutativity of contraction] \label{lem:commutativity-C}
Let $M''_1$ be the result of first contracting $S_1$ into node $n_1$ resulting in $M'_1$ followed by contracting $(S_2 \setminus S_1) \cup \{ n_1 \}$ into node $n_3$, and let $M''_2$ be the result of first contracting $S_2$ into node $n_2$ resulting in $M'_2$ followed-by contracting $(S_1 \setminus S_2) \cup \{ n_2 \}$ into node $n_3$. Then $M''_1 = M''_2$.
\end{lem}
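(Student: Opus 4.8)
Following the pattern of the proof of Lemma~\ref{lem:commutativity-B}, the plan is to first check that $M''_1$ and $M''_2$ share the same nodes, node types and I/O interface, and then compare their edges by giving, for each potential edge, a condition for its presence expressed purely in terms of $M$, and observing that this condition is invariant under interchanging $S_1$ and $S_2$.

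First I would dispose of the structural bookkeeping. In either contraction order all nodes of $S_1 \cup S_2$ are removed and a single node $n_3$ is added, so both $M''_1$ and $M''_2$ have node set $(N \setminus (S_1 \cup S_2)) \cup \{ n_3 \}$. Contraction never alters the type of a node it keeps, and the type of $n_3$ equals in $M''_1$ the I/O type of $M'_1[(S_2 \setminus S_1) \cup \{ n_1 \}]$, which by Lemma~\ref{lem:IO-type-C} is the I/O type of $M[S_2]$, and equals in $M''_2$ the I/O type of $M'_2[(S_1 \setminus S_2) \cup \{ n_2 \}]$, which by the same lemma with $S_1$ and $S_2$ interchanged is the I/O type of $M[S_1]$; these agree because we are in case (C). All contractions in both orders are internal, so $I_{M''_1} = I_{M''_2} = I_M$ and $O_{M''_1} = O_{M''_2} = O_M$. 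That $(S_2 \setminus S_1) \cup \{ n_1 \}$ is a legitimate subnet to contract in $M'_1$, and likewise on the other side, is exactly Lemma~\ref{lem:contractibility-C}.

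Then I would compare edges. Edges of $M$ between two preserved nodes are carried through both contractions unchanged, and the three clauses defining a contraction never create an edge between two preserved nodes nor a loop on the new node, so $M''_1$ and $M''_2$ have the same edges among preserved nodes and neither has a loop on $n_3$. It remains to treat the edges incident to $n_3$. Unfolding the two contractions producing $M''_1$, an edge $(n_4, n_3)$ with $n_4$ a preserved node is present in $M''_1$ iff in $M'_1$ there is an edge from $n_4$ to some node of $(S_2 \setminus S_1) \cup \{ n_1 \}$, which, by the definition of contraction applied to $S_1 \mapsto n_1$, happens iff $(n_4, b) \in F_M$ for some input node $b$ of $M[S_1]$, or $(n_4, a) \in F_M$ for some $a \in S_2 \setminus S_1$. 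I would next show this is equivalent to the symmetric condition ``$(n_4, c) \in F_M$ for some $c \in S_1 \cup S_2$'': one direction is immediate since $I_{M[S_1]} \subseteq S_1$ and $S_2 \setminus S_1 \subseteq S_1 \cup S_2$; for the converse, if $(n_4, c) \in F_M$ with $c \in S_1 \cup S_2$ then either $c \in S_2 \setminus S_1$, or $c \in S_1$ and then $c$ receives the edge from $n_4 \notin S_1$, hence $c \in I_{M[S_1]}$. Performing the same unfolding on the $M''_2$ side yields the mirror condition ``$(n_4, b') \in F_M$ for some input node $b'$ of $M[S_2]$, or $(n_4, a') \in F_M$ for some $a' \in S_1 \setminus S_2$'', which by the same equivalence is again ``$(n_4, c) \in F_M$ for some $c \in S_1 \cup S_2$''. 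Hence $(n_4, n_3)$ is an edge of $M''_1$ iff it is an edge of $M''_2$; the treatment of outgoing edges $(n_3, n_4)$ is dual, with edges reversed and input nodes replaced by output nodes. Combined with the structural facts above, this gives $M''_1 = M''_2$.

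I expect the main obstacle to be the careful unfolding of the composed contraction: one has to separate the contribution that passes through the intermediate node $n_1$ (resp.\ $n_2$) from the contribution that comes directly from $S_2 \setminus S_1$ (resp.\ $S_1 \setminus S_2$), and one must verify that the stated condition over $M$ really captures all and only the edges produced, using that we only perform internal contractions so that membership in the input set of a subnet is witnessed by any incoming edge from outside. Once the equivalence with the genuinely symmetric condition over $S_1 \cup S_2$ is in place, the rest is symmetry and routine verification.
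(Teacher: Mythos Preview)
Your proposal is correct and follows essentially the same approach as the paper: same node-set and type bookkeeping via Lemma~\ref{lem:IO-type-C}, same observation that internal contractions preserve $I$ and $O$, and the same characterisation of edges incident to $n_3$ in terms of an edge in $M$ from $n_4$ to some node of $S_1 \cup S_2$. The only difference is that you actually carry out the two-step unfolding of the composed contraction that the paper dismisses with ``it is easy to see''; your argument that any $c \in S_1$ receiving an edge from the preserved node $n_4$ is automatically in $I_{M[S_1]}$ is exactly the point that makes the symmetric condition work.
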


\begin{proof}
The WF nets $M''_1$ and $M''_2$ have the same sets of nodes, since in both cases all the nodes of $S_1$ and $S_2$ are removed, and the new node $n_3$ is added. The types of the preserved nodes are not changed by contractions. In addition, the type of $n_3$ will in both cases be the same, which can be shown as follows. In $M''_1$ the type of $n_3$ is the I/O type of $M'_1[(S_2 \setminus S_1) \cup \{ n_1 \}]$, which by 
Lemma~\ref{lem:IO-type-C} is the I/O type of $S_2$ in $M$. Similarly, in $M''_2$ the type of $n_3$ is equal to the I/O type of $S_1$. Since the I/O types of $S_1$ and $S_2$ are assumed to be the same, the type of $n_3$ in $M''_1$ and in $M''_2$ is also the same. 

Since we are only considering internal contractions, the sets of input and output nodes will not change, and therefore be in both $M''_1$ and $M''_2$ identical to those in $M$.

So what remains to be shown is that in $M''_1$ and $M''_2$ the node $n_3$ has the same incoming and outgoing edges. For incoming edges it is easy to see that in both $M''_1$ and $M''_2$ it holds that an edge $(n_4, n_3)$ exists iff $n_4 \not\in S_1 \cup S_2$ and there is a node $n_5 \in S_1 \cup S_2$ such that the edge $(n_4, n_5)$ exists in $M$. A similar argument, but with the direction of the edges reversed, can be made for outgoing edges.
\end{proof}

\subsection{Nested subnets}

In this subsection we consider case (D) in Figure~\ref{fig:cases-of-proof}.  Therefore, we assume that $S_1$ and $S_2$ are contractible subnets in a WF net $M$ such that $S_2 \subseteq S_1$. Let us also assume that $M'$ is the result of contracting $S_2$ into the node $n_2$ and $M''$ is the result of contracting $S_1$ into the node $n_1$. Recall that we need to show that (1) after contraction of $S_2$ into $n_2$, the subnet $(S_1 \setminus S_2) \cup \{ n_2 \}$ is contractible in $M'$ and (2) if we contract first $S_2$ into $n_2$ and then $(S_1 \setminus S_2) \cup \{ n_2 \}$ into $n_1$, the result is the same as when we contract first $S_1$ into $n_1$. 

Like in the previous subsection, the proof of (1) will consists of a list of lemmas that show that the different properties that define contractibility are all preserved for subnet $(S_1 \setminus S_2) \cup \{ n_2 \}$ by the contraction of $S_2$. We start with the property of well-nestedness, and then move on to the defining properties of WF nets and basic AND-OR nets, and close of with a lemma showing (2).

\begin{lem}[preservation of well-nestedness] \label{lem:well-nested-D}
The subnet $(S_1 \setminus S_2) \cup \{ n_2 \}$ is well-nested in $M'$.
\end{lem}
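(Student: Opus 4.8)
The plan is to mirror the proofs of Lemma~\ref{lem:well-nested-B} and Lemma~\ref{lem:well-nested-C}, exploiting the fact that, since $S_2 \subseteq S_1$, we have $S_1 \cap S_2 = S_2$ and $S_2 \setminus S_1 = \emptyset$, so the earlier case analysis collapses (no sub-case split on where the input/output nodes of $M[S_1]$ sit is needed). First I would record two preliminary observations. Since $M'$ is obtained from $M$ by an \emph{internal} contraction of $S_2$, the nodes of $M'$ lying outside $R := (S_1 \setminus S_2) \cup \{ n_2 \}$ are precisely the nodes of $M$ outside $S_1$, and the edges among them are unchanged; moreover $n_2$ and the nodes of $S_1 \setminus S_2$ are not interface nodes of $M'$, so clauses 1(b) and 2(b) of well-nestedness hold vacuously and it suffices to check clause (a) for the input nodes of $M'[R]$ and, symmetrically, for its output nodes. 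Also, since $R$ and $S_1$ are internal subnets, a node is an input (output) node of the associated net exactly when it has an incoming (outgoing) edge from (to) outside the subnet.

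For the input-node part I would take edges $(a,b)$ and $(c,d)$ of $M'$ with $a,c \notin R$ and $b,d \in R$, so that $a,c$ are nodes outside $S_1$ in $M$, and show $(a,d),(c,b) \in F_{M'}$, splitting on how many of $b,d$ equal $n_2$. If both equal $n_2$ the required edges coincide with the given ones. If neither does, then $b,d \in S_1 \setminus S_2$, both edges already exist in $M$, and $b,d$ are input nodes of $M[S_1]$ (each receives an edge from outside $S_1$), so well-nestedness of $S_1$ in $M$ gives $(a,d),(c,b) \in F_M$, and these survive the contraction of $S_2$ since $a,b,c,d \notin S_2$. The only substantive case is $b = n_2$, $d \in S_1 \setminus S_2$: by the definition of contraction the edge $(a,n_2) \in F_{M'}$ comes from an edge $(a,n_6) \in F_M$ with $n_6 \in I_{S_2}$ and $a \notin S_2$; this edge, having its source $a$ outside $S_1$, already witnesses $n_6 \in I_{S_1}$, and likewise $d \in I_{S_1}$ because of $(c,d)$ with $c \notin S_1$. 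Applying well-nestedness of $S_1$ to the pair $n_6, d$ of input nodes of $M[S_1]$ yields $(a,d),(c,n_6) \in F_M$; the first survives into $M'$ (as $a,d \notin S_2$) and the second, since $n_6 \in I_{S_2}$ and $c \notin S_2$, becomes the edge $(c,n_2) = (c,b)$ in $M'$. The output-node part is the mirror image, with edge directions reversed and the roles of input and output nodes interchanged.

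The hard part --- really the only non-mechanical point --- is the observation in the mixed case that a node which is an input node of the \emph{inner} net $M[S_2]$ and additionally receives an edge from outside $S_1$ is automatically an input node of the \emph{outer} net $M[S_1]$; this is exactly what lets the well-nestedness of $S_1$ do the work and makes the argument substantially shorter than its analogues in cases (B) and (C). The remaining effort is the routine bookkeeping of checking, for each $M$-edge produced, that it is disjoint from $S_2$ (so that it persists in $M'$) or that it targets/originates in $I_{S_2}$ or $O_{S_2}$ (so that the contraction redirects it through $n_2$), which is straightforward because in $M$ the sets $S_2$, $S_1 \setminus S_2$ and the nodes outside $S_1$ partition the node set, and the contraction of $S_2$ merely replaces the first of these by the single node $n_2$.
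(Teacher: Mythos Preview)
Your proof is correct and takes essentially the same approach as the paper: the paper simply states that the proof of Lemma~\ref{lem:well-nested-C} goes through verbatim with the roles of $S_1$ and $S_2$ interchanged (and $n_1$ replaced by $n_2$), since that argument nowhere uses the non-nesting or equal-I/O-type hypotheses. You have effectively unwound that argument and made explicit the simplification that arises in the nested case---namely that $S_2 \setminus S_1 = \emptyset$ forces every input node of $M[S_2]$ to lie in $S_1 \cap S_2$, so the sub-case split of Lemma~\ref{lem:well-nested-C}~(ii) collapses to a single branch---which is a nice clarification but not a different idea.
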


\begin{proof}
This proof proceeds the same as for the analogous Lemma~\ref{lem:well-nested-C} for case (C), except that $S_1$ and $S_2$ reverse roles and $n_1$ is replaced with $n_2$. This is because that proof does in fact not use the assumptions that the nets are not nested and that their I/O types are the same.
\end{proof}

\begin{lem}[preservation of I/O type] \label{lem:IO-type-D}
The input nodes and output nodes of $M'[(S_1 \setminus S_2) \cup \{ n_2 \}]$ have the same type as the input nodes and output nodes of $M[S_2]$, respectively.
\end{lem}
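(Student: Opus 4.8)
The plan is to mirror the proof of the analogous case-(C) result, Lemma~\ref{lem:IO-type-C}, since the present configuration is its mirror image: the roles of $S_1$ and $S_2$ are interchanged and the new node $n_1$ is replaced by $n_2$ (the same transposition already exploited for well-nestedness in Lemma~\ref{lem:well-nested-D}). As there, I would establish the claim for input nodes and then obtain the claim for output nodes by the symmetric argument in which all edge directions are reversed and the input and output sets are swapped, so I would concentrate on input nodes. Because we perform only internal contractions, a node $v$ is an input node of $M'[(S_1 \setminus S_2) \cup \{n_2\}]$ precisely when it has in $M'$ an incoming edge $(w, v)$ whose source $w$ lies outside $(S_1 \setminus S_2) \cup \{n_2\}$; since the nodes of $M'$ outside this set are exactly the nodes of $M$ lying outside $S_1$, this is equivalent to $w$ being a node of $M$ outside $S_1$.

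I would then split on whether $v = n_2$ or $v \in S_1 \setminus S_2$. The case $v = n_2$ is immediate and is the half that directly delivers the stated conclusion: by the definition of contraction $n_2$ is introduced as a place when $M[S_2]$ is a pWF net and as a transition when $M[S_2]$ is a tWF net, and contraction never alters the type of a node, so $n_2$ carries exactly the I/O type of $M[S_2]$. In the case $v \in S_1 \setminus S_2$ both endpoints of the witnessing edge $(w, v)$ lie outside $S_2$ (indeed $w$ lies outside all of $S_1$, so in particular $w \neq n_2$), hence this edge is untouched by the contraction of $S_2$ and is already present in $M$; therefore $v$ has in $M$ an incoming edge originating outside $S_1$, so $v$ is an input node of $M[S_1]$ and its type equals the I/O type of $M[S_1]$.

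The main obstacle is the reconciliation forced by this second case. For a surviving input node $v \in S_1 \setminus S_2$ the computation yields the I/O type of $M[S_1]$, whereas the statement demands the I/O type of $M[S_2]$, so the argument is complete only once one shows that at such surviving interface nodes the I/O types of the two nested nets coincide. In case (C) this coincidence was available for free, the equality of the I/O types of $M[S_1]$ and $M[S_2]$ being the defining hypothesis of that case; in the nested setting of case (D) it is precisely the delicate point that must be secured, and I expect it to be the heart of the proof rather than a formality (one must either derive the coincidence from well-nestedness and the alternation of place- and transition-types across Petri-net edges, or suitably restrict to the configurations where it holds). I would therefore isolate this coincidence as a separate claim and dispose of it first; with it in hand, both the $v = n_2$ and the $v \in S_1 \setminus S_2$ subcases conclude with the I/O type of $M[S_2]$, and the output-node half then follows verbatim by reversing all edges and interchanging the input and output sets.
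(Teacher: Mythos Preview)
Your approach mirrors the paper's exactly: the paper's proof simply reuses the argument of Lemma~\ref{lem:IO-type-C} with the roles of $S_1$ and $S_2$ swapped and $n_1$ replaced by $n_2$, asserting that that argument does not rely on the case-(C) hypotheses. You have carried this out in detail and, in fact, more carefully than the paper, since you noticed that the case-(C) argument \emph{does} appeal to the equality of the two I/O types in its second sub-case.

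The obstacle you flag, however, is an artifact of a typo in the lemma's statement: it should read $M[S_1]$, not $M[S_2]$. This is evident from how the lemma is actually used (the proof of Lemma~\ref{lem:commutativity-D} invokes it to conclude the I/O type equals that of $M[S_1]$; Lemma~\ref{lem:contractibility-D} needs the new subnet in the same basic class as $M[S_1]$) and from the parallel Lemmas~\ref{lem:AND-OR-D}--\ref{lem:one-input-one-output-D}, all of which inherit properties from $M[S_1]$. With $M[S_1]$ in place your case $v \in S_1 \setminus S_2$ is immediate, and the reconciliation goes the \emph{other} way: if $n_2$ is an input node of the new subnet, an incoming edge $(w, n_2)$ in $M'$ with $w \notin S_1$ comes from an edge $(w, u)$ in $M$ with $u$ an input node of $M[S_2]$; since $u \in S_2 \subseteq S_1$ and $w \notin S_1$, the node $u$ is simultaneously an input node of $M[S_1]$, forcing the two I/O types to agree. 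As literally written with $M[S_2]$ the lemma is in fact false---take $S_2$ a tWF subnet strictly interior to a pWF subnet $S_1$ with all edges into $S_2$ originating in $S_1 \setminus S_2$; then $n_2$ is not an interface node of $M'[(S_1\setminus S_2)\cup\{n_2\}]$ and every interface node carries the I/O type of $M[S_1]$, not of $M[S_2]$---so your proposed coincidence claim in that direction cannot be established in general.
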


\begin{proof}
This proof proceeds the same as for the analogous Lemma~\ref{lem:IO-type-C} for case (C), except that $S_1$ and $S_2$ reverse roles and $n_1$ is replaced with $n_2$. This is because that proof does not use the assumptions that the nets are not nested and that their I/O types are the same.
\end{proof}

\begin{lem}[preservation of the AND and the OR property] \label{lem:AND-OR-D}
The net $M'[(S_1 \setminus S_2) \cup \{ n_2 \}]$ has the AND (OR) property if $M[S_1]$ has the AND (OR) property.
\end{lem}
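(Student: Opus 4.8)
The plan is to obtain this lemma by adapting the proof of Lemma~\ref{lem:AND-OR-C} (preservation of the AND and OR properties in case (C)), interchanging the roles of $S_1$ and $S_2$ and replacing $n_1$ by $n_2$, and then checking that every step still goes through under the hypotheses of case (D), namely $S_2 \subseteq S_1$ with no assumption on the I/O types. Under this role swap, the net contracted first is the inner net $M[S_2]$, the ``remaining'' net is the outer net $M[S_1]$, and the claim ``$M'[(S_1\setminus S_2)\cup\{n_2\}]$ has the AND (OR) property if $M[S_1]$ has it'' is exactly the image of the case-(C) statement.

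Concretely I would argue by contradiction, following the template of Lemma~\ref{lem:AND-OR-C}: assume $M[S_1]$ has the AND property but $M'[(S_1\setminus S_2)\cup\{n_2\}]$ does not, so some place $p$ of $M'[(S_1\setminus S_2)\cup\{n_2\}]$ violates it, and derive that $M[S_1]$ already violates it. I would split on whether (i) $p\in S_1\setminus S_2$ or (ii) $p=n_2$ (the latter only when $M[S_2]$ is a pWF net). In case (i) there are four ways $p$ can offend — two incoming edges from $S_1\setminus S_2$; an incoming edge from $S_1\setminus S_2$ together with one from $n_2$; being an input node with an incoming edge from $S_1\setminus S_2$; being an input node with an incoming edge from $n_2$ — and each is pulled back through the definition of contraction to edges of $M$; here the point $S_2\subseteq S_1$ is what guarantees that the relevant output/input node of $M[S_2]$ lies in $S_1$, so these become edges of $M[S_1]$, and since $S_1$ is internal ($p\notin I_M$) ``$p$ is an input node of the subnet'' always unwinds to ``$p$ has in $M$ an incoming edge from outside $S_1$'', i.e. $p$ is an input node of $M[S_1]$. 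Each sub-case then exhibits a place of $M[S_1]$ with two distinct incoming edges, or an input place of $M[S_1]$ with an incoming edge, contradicting the AND property. In case (ii), using the definition of contraction and the well-nestedness of $S_2$ in $M$, the offending configuration on $n_2$ lifts to an input node of $M[S_2]$ — which lies in $S_2\subseteq S_1$ — carrying two distinct incoming edges, or an incoming edge while being an input node of $M[S_1]$; again this contradicts the AND property. The restriction on outgoing edges is symmetric (reverse all edges, interchange input and output roles), and the OR property follows by the verbatim argument with places and transitions interchanged.

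The one place where genuine care is needed is that the proof of Lemma~\ref{lem:AND-OR-C} invokes Lemma~\ref{lem:edge-via-intersection} (and hence the ``same I/O type'' hypothesis) precisely at the steps where one must conclude that a certain output or input node of the first-contracted net lies in $S_1\cap S_2$. After the role swap that node is always an output or input node of the \emph{inner} net $M[S_2]$, so it automatically belongs to $S_2 = S_1\cap S_2$, and the appeal to Lemma~\ref{lem:edge-via-intersection} can simply be deleted; equivalently, the analogue of Lemma~\ref{lem:edge-via-intersection} holds vacuously in case (D) since $S_2\setminus S_1=\emptyset$. So the main obstacle is not a new idea but the bookkeeping of confirming that this is the only use of the case-(C) hypotheses in that proof and that all remaining deductions — the lifting of edges through the contraction of $S_2$ and the uses of well-nestedness of $S_2$ in $M$ — require only $S_2\subseteq S_1$ together with internality, both of which hold here. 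With that check done, the argument transfers and yields the stated preservation.
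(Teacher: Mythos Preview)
Your proposal is correct and follows essentially the same approach as the paper's own proof: swap the roles of $S_1$ and $S_2$, replace $n_1$ by $n_2$, and observe that each appeal to Lemma~\ref{lem:edge-via-intersection} becomes superfluous since $S_2\setminus S_1=\emptyset$, while the remaining steps of the Lemma~\ref{lem:AND-OR-C} argument do not use the ``not nested'' or ``same I/O type'' hypotheses. Your write-up is in fact more explicit than the paper's, which simply points to Lemma~\ref{lem:AND-OR-C} and records these two modifications.
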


\begin{proof}
This proof proceeds the same as for the analogous Lemma~\ref{lem:AND-OR-C} for case (C), except that (1) $S_1$ and $S_2$ reverse roles and $n_1$ is replaced with $n_2$ and (2) wherever it uses Lemma~\ref{lem:edge-via-intersection}, which states that there are no edges between nodes in $S_1 \setminus S_2$ and nodes in $S_2 \setminus S_1$, this is replaces with the observation that by assumption there are no nodes in $S_2 \setminus S_1$. This works because that proof does not use the assumptions that the nets are not nested and that their I/O types are the same.
\end{proof}

\begin{lem}[preservation of acyclicity] \label{lem:acyclic-D}
The net $M'[(S_1 \setminus S_2) \cup \{ n_2 \}]$ is acyclic if $M[S_1]$ is acyclic.
\end{lem}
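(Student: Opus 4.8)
The plan is to argue by contradiction, following the overall shape of the proof of Lemma~\ref{lem:acyclic-C} for case (C), but taking advantage of the nesting $S_2 \subseteq S_1$ to sidestep the case analysis that forced the stronger hypothesis there. So assume $M[S_1]$ is acyclic but that $M'[(S_1 \setminus S_2) \cup \{ n_2 \}]$ contains a cycle. First I would dispose of the trivial subcase: if the cycle avoids $n_2$, then all of its nodes lie in $S_1 \setminus S_2$; since nodes outside $S_2$ and the edges between them are unaffected by the contraction of $S_2$, this cycle already occurs in $M$ and hence in $M[S_1]$, contradicting acyclicity.

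Hence the cycle passes through $n_2$, and we may assume it does so exactly once. Then it contains an edge $(n_2, n_3)$ leaving $n_2$ and an edge $(n_4, n_2)$ entering $n_2$, together with a path from $n_3$ to $n_4$ inside $M'[(S_1 \setminus S_2) \cup \{ n_2 \}]$ that avoids $n_2$ (with $n_3 = n_4$ allowed), where $n_3, n_4 \in S_1 \setminus S_2$. By the definition of contraction, $(n_2, n_3) \in F_{M'}$ comes from an edge $(n_5, n_3) \in F_M$ with $n_5$ an output node of $M[S_2]$, and $(n_4, n_2) \in F_{M'}$ comes from an edge $(n_4, n_6) \in F_M$ with $n_6$ an input node of $M[S_2]$.

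Next I would invoke the well-connectedness of $M[S_2]$ (it is a WF net, being contractible) to obtain an output node $n'_5$ of $M[S_2]$ that is reachable inside $M[S_2]$ from some input node $n'_6$ of $M[S_2]$; fix such a path $Q$ from $n'_6$ to $n'_5$ entirely within $M[S_2]$. Since $S_2$ is well-nested in $M$ and $n_3, n_4 \notin S_2$, the edges $(n_5, n_3)$ and $(n_4, n_6)$ force the edges $(n'_5, n_3)$ and $(n_4, n'_6)$ in $M$ as well. Concatenating the path from $n_3$ to $n_4$ (whose edges, lying between nodes outside $S_2$, are edges of $M$) with $(n_4, n'_6)$, then $Q$, then $(n'_5, n_3)$, yields a closed walk through $n_3$; all nodes on it lie in $S_1$ (those on $Q$ because $S_2 \subseteq S_1$, the others because they lie in $S_1 \setminus S_2$), and it uses the edge $(n_4, n'_6)$ entering $S_2$ and the edge $(n'_5, n_3)$ leaving $S_2$, hence has length at least two, so it is a genuine cycle in $M[S_1]$ --- a contradiction.

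I expect the proof itself to be routine; the point worth emphasising is precisely why the weaker hypothesis suffices here. In Lemma~\ref{lem:acyclic-C} the obstruction was that the connecting path inside the tracked net could stray out of $S_1 \cap S_2$, and ruling that out needed $M[S_2]$ to be a 11tAND or a pAND net. Under the nesting $S_2 \subseteq S_1$ this cannot happen, since the path $Q$ through $M[S_2]$ automatically stays in $S_1$, so acyclicity of $M[S_1]$ alone closes the argument. The only loose ends to mention are the degenerate configurations ($n_3 = n_4$, or $M[S_2]$ consisting of a single node so that $Q$ is trivial), which still produce a cycle of length at least two.
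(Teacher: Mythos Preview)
Your proof is correct and follows essentially the same approach as the paper: contradiction, dispose of cycles avoiding $n_2$, then for a cycle through $n_2$ lift the edges $(n_2,n_3)$ and $(n_4,n_2)$ back to $M$, take an input--to--output path inside $M[S_2]$, use well-nestedness to connect it to $n_3$ and $n_4$, and close a cycle in $M[S_1]$. The paper is slightly more streamlined in that it directly picks a single input--output path in $M[S_2]$ and applies well-nestedness once, rather than first extracting the specific $n_5,n_6$ from the contraction and then passing to $n'_5,n'_6$, but this is a cosmetic difference only.
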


\begin{proof}
The proof proceeds by contradiction, so we assume that there is a cycle in $M'[(S_1 \setminus S_2) \cup \{ n_2 \}]$, and then show that it follows that there is a cycle in $M[S_1]$. So let us assume that there is a cycle in $M'[(S_1 \setminus S_2) \cup \{ n_2 \}]$. It is clear that if the cycle does not contain $n_2$, this cycle is also present in $M[S_1]$, so we continue under the assumption that $n_2$ is contained in the cycle. We can assume that $n_2$ appears exactly once in the cycle, since if a cycle with multiple occurrences of $n_2$ exist, then there is also one with exactly one occurrence. In this cycle there must be an edge that leaves $n_2$, say $(n_2, n_3)$, and an edge that arrives in $n_2$, say $(n_4, n_2)$. Note that there is in $M'[S_1 \setminus S_2]$ a path from $n_3$ to $n_4$, possibly consisting of only one node, in which case these nodes are one and the same.

Since $M[S_2]$ is a WF net, it will contain a path from an input node, say $n_5$, to an output node, say $n_6$. By the definition of contraction, and since $S_2$ is well-nested in $M$, it follows that there are edges $(n_4, n_5)$ and $(n_6, n_3)$ in $M$. It then follows that there is a cycle in $M[S_1]$ through the nodes $n_3$, $n_4$, $n_5$ and $n_6$.
\end{proof}

\begin{lem}[preservation of one-input one-output property] \label{lem:one-input-one-output-D}
The net $M'[(S_1 \setminus S_2) \cup \{ n_2 \}]$ is a one-input one-output net if $M[S_1]$ is a one-input one-output net.
\end{lem}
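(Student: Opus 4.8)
The plan is to adapt the proof of the analogous Lemma~\ref{lem:one-input-one-output-C} of case~(C), with the roles of $S_1$ and $S_2$ interchanged and $n_1$ replaced by $n_2$, exactly as was done for the preceding case-(D) lemmas, Lemma~\ref{lem:well-nested-D} through Lemma~\ref{lem:acyclic-D}. This is legitimate because the case-(C) argument never uses the hypotheses specific to that case --- namely that $M[S_1]$ and $M[S_2]$ have the same I/O type, or that the two subnets are not nested; it only uses that both subnets are well-nested in $M$ and well-connected and that the contractions are internal, all of which still hold here, where now $S_2$ is the inner subnet contracted first into $n_2$.

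Concretely, I would argue by contradiction for the one-input property. Suppose $M'[(S_1 \setminus S_2) \cup \{ n_2 \}]$ has two distinct input nodes $n_3$ and $n_5$, witnessed by edges $(n'_3, n_3)$ and $(n'_5, n_5)$ in $M'$ with $n'_3, n'_5 \not\in (S_1 \setminus S_2) \cup \{ n_2 \}$. First I would note that any $n'_i$ different from $n_2$ is a node of $M$ outside $S_2$, and, not lying in $S_1 \setminus S_2$, is hence outside $S_1$. If neither $n_3$ nor $n_5$ equals $n_2$, then both witnessing edges already exist in $M$ with their source outside $S_1$, so $n_3$ and $n_5$ are two distinct input nodes of $M[S_1]$, contradicting its one-input property. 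If, say, $n_5 = n_2$, then by the definition of contraction there is in $M$ an edge $(n'_5, n_6)$ with $n_6$ an input node of $M[S_2]$; since $S_2 \subseteq S_1$, the sub-case analysis of the case-(C) proof collapses to the single possibility $n_6 \in S_2$, and because $n'_5 \not\in S_1$ this already makes $n_6$ an input node of $M[S_1]$. Since $n_3$ is likewise an input node of $M[S_1]$ (from the edge $(n'_3, n_3)$ with $n'_3 \not\in S_1$), and $n_6 \in S_2$ while $n_3 \in S_1 \setminus S_2$ so that $n_6 \ne n_3$, we again contradict the one-input property of $M[S_1]$. The one-output property then follows by the symmetric argument, reversing edge directions and exchanging input with output nodes throughout.

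The main obstacle is really just the bookkeeping of confirming that the case-(C) proof uses nothing beyond well-nestedness, well-connectedness and internality --- so that the role swap is sound --- and of checking that in the nested setting the former sub-case split degenerates without leaving a gap. The one step worth pausing on is that the distinctness of $n_6$ and $n_3$ in the last case comes precisely from the disjointness $S_2 \cap (S_1 \setminus S_2) = \emptyset$. Once this is settled, the adaptation is entirely mechanical.
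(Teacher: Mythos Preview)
Your proposal is correct and takes essentially the same approach as the paper, which simply states that the proof proceeds as in Lemma~\ref{lem:one-input-one-output-C} with the roles of $S_1$ and $S_2$ reversed and $n_1$ replaced by $n_2$, noting that that proof does not use the assumptions that the nets are not nested or that their I/O types coincide. Your version actually supplies more detail than the paper does, and your observation that the sub-case split of case~(C) degenerates (because $S_2 \setminus S_1 = \emptyset$) is exactly the right way to see that the adaptation goes through cleanly.
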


\begin{proof}
This proof proceeds the same as for the analogous Lemma~\ref{lem:one-input-one-output-C} for case (C), except that $S_1$ and $S_2$ reverse roles and $n_1$ is replaced with $n_2$. This is because that proof does not use the assumptions that the nets are not nested and that their I/O types are the same.
\end{proof}

This concludes the lemmas that show that (1) the well-nestedness of $M[S_2]$ is preserved by $M'[S_2 \setminus S_1]$ and (2) all the defining properties of basic AND-OR nets are also preserved in $M'[S_2 \setminus S_1]$ if $M[S_2]$ had them. This leads us to the following lemma.

\begin{lem}[preservation of contractibility] \label{lem:contractibility-D}
The subnet $(S_1 \setminus S_2) \cup \{ n_2 \}$ is contractible in $M'$.
\end{lem}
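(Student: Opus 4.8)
The plan is to follow the same template as in the proofs of the analogous lemmas for cases (B) and (C), namely Lemma~\ref{lem:contractibility-B} and Lemma~\ref{lem:contractibility-C}. Recall that a subnet $S$ is contractible in a WF net exactly when it is well-nested there and its associated net is of one of the basic AND-OR classes. Accordingly, I would establish three claims: (1) $(S_1 \setminus S_2) \cup \{ n_2 \}$ is well-nested in $M'$; (2) $M'[(S_1 \setminus S_2) \cup \{ n_2 \}]$ is a WF net; and (3) $M'[(S_1 \setminus S_2) \cup \{ n_2 \}]$ belongs to the same basic AND-OR class as $M[S_1]$. Note that, in contrast with cases (B) and (C), the net whose properties are inherited is the \emph{outer} subnet $M[S_1]$ rather than the one contracted second, because here $S_2 \subseteq S_1$ and the inner subnet $S_2$ has already collapsed into the single node $n_2$.

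Claim (1) is precisely Lemma~\ref{lem:well-nested-D}. For claim (2), I would combine Lemma~\ref{lem:IO-type-D}, which shows that the input nodes and output nodes of $M'[(S_1 \setminus S_2) \cup \{ n_2 \}]$ all have the type of the interface nodes of $M[S_1]$ and hence that the subnet is I/O consistent, with Theorem~\ref{thm:subnet-wf-char} (every I/O-consistent subnet of a WF net is a WF net) and Theorem~\ref{thm:contraction-correctness} (which guarantees that $M'$, the result of contracting $S_2$ in $M$, is itself a WF net). For claim (3), I would note that each defining property of the basic AND-OR classes is transferred from $M[S_1]$: the I/O type by Lemma~\ref{lem:IO-type-D}, the AND and OR properties by Lemma~\ref{lem:AND-OR-D}, acyclicity by Lemma~\ref{lem:acyclic-D}, and the one-input one-output property by Lemma~\ref{lem:one-input-one-output-D}; together these cover the characterisations of pAND, 11tAND, 11pOR and tOR nets, so if $M[S_1]$ lies in one of these classes then so does $M'[(S_1 \setminus S_2) \cup \{ n_2 \}]$.

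I do not anticipate a genuine obstacle in this lemma: all the substantive work has been carried out in the preservation lemmas Lemma~\ref{lem:well-nested-D}--Lemma~\ref{lem:one-input-one-output-D}, and what remains is their assembly. The one point that deserves a moment's attention is that the premises of these D-versions are tailored to the nested case -- in particular acyclicity of $M[S_1]$ alone suffices (Lemma~\ref{lem:acyclic-D}), unlike case (C), because with $S_2 \subseteq S_1$ there are no nodes in $S_2 \setminus S_1$ and the problematic configurations of Lemma~\ref{lem:acyclic-C} cannot occur -- so one should just check that the list of cited lemmas matches exactly the definition of each basic AND-OR class. (The degenerate subcase $S_1 = S_2$ is trivial, since then $(S_1 \setminus S_2) \cup \{ n_2 \} = \{ n_2 \}$ is a one-node subnet, which is always contractible.)
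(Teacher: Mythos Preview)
Your proposal is correct and follows essentially the same approach as the paper: establish the three claims (well-nestedness, being a WF net, and belonging to the same basic AND-OR class as $M[S_1]$) by invoking Lemmas~\ref{lem:well-nested-D}--\ref{lem:one-input-one-output-D} together with Theorems~\ref{thm:subnet-wf-char} and~\ref{thm:contraction-correctness}. Your remark that the inherited properties come from the outer net $M[S_1]$ is exactly right, and your side observations about the acyclicity premise and the degenerate $S_1=S_2$ case are helpful clarifications not spelled out in the paper.
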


\begin{proof}
The proof proceeds analogous to that of Lemma~\ref{lem:contractibility-D}, which shows preservation of contractibility for case (C).  We show the following three claims: (1) $M'[(S_1 \setminus S_2) \cup \{ n_2 \}]$ is well-nested, (2) $M'[(S_1 \setminus S_2) \cup \{ n_2 \}]$ is a WF net and (3) $M'[(S_1 \setminus S_2) \cup \{ n_2 \}]$ belongs to the same basic AND-OR class as $M[S_1]$. Claim (1) follows from Lemma~\ref{lem:well-nested-D}. Claim (2) follows from Lemma~\ref{lem:IO-type-D} (preservation of I/O type), Theorem~\ref{thm:subnet-wf-char} (every I/O-consistent subnet of a WF net is a WF net) and Theorem~\ref{thm:contraction-correctness} (the result of a contraction in a WF net is a WF net). Claim (3) follows from the preservation of the AND-OR defining properties of $M[S_2]$ by $M'[(S_1 \setminus S_2) \cup \{ n_2 \}]$, as stated by Lemma~\ref{lem:IO-type-D} (preservation of I/O type), Lemma~\ref{lem:AND-OR-D} (preservation of the AND and OR properties), Lemma~\ref{lem:acyclic-D} (preservation of acyclicity) and Lemma~\ref{lem:one-input-one-output-D} (preservation of one-input and one-output properties). 
\end{proof}

We now turn to the question if the order of contractions of $S_1$ and $S_2$ influences the final result.

\begin{lem}[commutativity of contraction] \label{lem:commutativity-D}
Let $M''_1$ be the result of contracting $S_1$ into node $n_1$, and let $M''_2$ be the result of first contracting $S_2$ into node $n_2$ resulting in $M'$ followed by contracting $(S_1 \setminus S_2) \cup \{ n_2 \}$ into node $n_1$. Then $M''_1 = M''_2$.
\end{lem}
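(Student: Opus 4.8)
The plan is to follow the same template as the commutativity lemmas for the previous cases, Lemma~\ref{lem:commutativity-A}, Lemma~\ref{lem:commutativity-B} and Lemma~\ref{lem:commutativity-C}: show that $M''_1$ and $M''_2$ agree on (i) their node sets, (ii) the type of every node, (iii) their input and output sets, and (iv) their edge sets. Points (i)--(iii) are quick bookkeeping; the only real work is (iv), and there only for the edges incident to the freshly created node $n_1$, since any edge between two nodes of $M$ that survive all contractions is untouched and hence identical in both nets.

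For (i), both constructions delete exactly the nodes of $S_1$ and add $n_1$: in $M''_2$ the first contraction removes $S_2$ and adds the auxiliary node $n_2$, and the second removes $(S_1\setminus S_2)\cup\{n_2\}$ and adds $n_1$, so (using $S_2\subseteq S_1$, hence $S_2\cup(S_1\setminus S_2)=S_1$) the net effect is the removal of $S_1$ together with $n_2$ and the addition of $n_1$; thus both node sets equal $((P_M\cup T_M)\setminus S_1)\cup\{n_1\}$. For (ii), a contraction never changes the type of a preserved node, so it suffices to compare the type of $n_1$: in $M''_1$ this is the I/O type of $M[S_1]$, and in $M''_2$ it is the I/O type of $M'[(S_1\setminus S_2)\cup\{n_2\}]$, which is again the I/O type of $M[S_1]$ by Lemma~\ref{lem:IO-type-D} (preservation of I/O type). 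For (iii), all contractions here are internal, so $I_{M''_1}=I_{M''_2}=I_M$ and $O_{M''_1}=O_{M''_2}=O_M$.

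For (iv) the key claim is a uniform description of the edges into $n_1$ (the edges out of $n_1$ being symmetric, with the direction of edges and the roles of input and output nodes interchanged): for every node $n_4\notin S_1$, the edge $(n_4,n_1)$ is present in $M''_1$ iff it is present in $M''_2$ iff there is some $n_5\in S_1$ with $(n_4,n_5)\in F_M$. For $M''_1$ this is immediate from the definition of contraction, together with the fact that, since $S_1$ is internal, any $n_5\in S_1$ with an incoming edge from a node outside $S_1$ is an input node of $M[S_1]$. For $M''_2$ one traces through the two steps: an edge $(n_4,n_1)$ in $M''_2$ arises from an edge $(n_4,n_5')$ in $M'$ with $n_5'$ an input node of $M'[(S_1\setminus S_2)\cup\{n_2\}]$, and one distinguishes $n_5'\in S_1\setminus S_2$ (then $(n_4,n_5')$ is already an edge of $M$ and $n_5'\in S_1$) from $n_5'=n_2$ (then $(n_4,n_2)\in F_{M'}$ forces an edge $(n_4,n_6)\in F_M$ with $n_6$ an input node of $M[S_2]$, and $n_6\in S_2\subseteq S_1$). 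Conversely, given $n_4\notin S_1$ and $(n_4,n_5)\in F_M$ with $n_5\in S_1$, one pushes the edge forward: if $n_5\in S_2$ it becomes the edge $(n_4,n_2)$ in $M'$, making $n_2$ an input node of the second contracted subnet and hence yielding $(n_4,n_1)$ in $M''_2$; if $n_5\in S_1\setminus S_2$ it survives the first contraction, lies in the second contracted subnet, and becomes $(n_4,n_1)$ there. Combining (i)--(iv) gives $M''_1=M''_2$.

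The step that requires the most care is (iv): through the two successive contractions in $M''_2$ one must keep straight which nodes of $M'$ are input (resp.\ output) nodes of the second contracted subnet $(S_1\setminus S_2)\cup\{n_2\}$, and in particular verify that $n_2$ is such an input (resp.\ output) node precisely when some node of $S_2$ receives (resp.\ emits) the corresponding edge in $M$. Once the uniform edge description above is set up this is a routine case analysis; notably, in contrast with case (C), it uses neither acyclicity nor the well-nestedness of $S_1$ and $S_2$ --- only the containment $S_2\subseteq S_1$ and the internality of the contractions.
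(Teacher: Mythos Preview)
Your proof is correct and follows essentially the same approach as the paper's own proof: verify equality of node sets, node types (using Lemma~\ref{lem:IO-type-D} for $n_1$), input/output sets (via internality), and then give the uniform characterisation of edges incident to $n_1$ as exactly those arising from an edge of $M$ between a node outside $S_1$ and a node inside $S_1=S_1\cup S_2$. Your version is in fact more explicit than the paper's, which simply asserts this last characterisation is ``easy to show,'' whereas you spell out the two-step case analysis for $M''_2$.
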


\begin{proof}
The WF nets $M''_1$ and $M''_2$ have the same sets of nodes, since in both cases all the nodes of $S_1$ and $S_2$ are removed, and the new node $n_1$ is added. The types of the preserved nodes are not changed by contractions. In addition, the type of $n_1$ will in both cases be the same, which can be shown as follows. In $M''_1$ the type of $n_1$ is the I/O type of $S_1$. Similarly, in $M''_2$ the type of $n_1$ is equal to the I/O type of $M'[(S_1 \setminus S_2) \cup \{ n_2\}]$, which by Lemma~\ref{lem:IO-type-D} has the same I/O type as $M[S_1]$.

Since we are only considering internal contractions, the sets of input and output nodes will not change, and therefore be in both $M''_1$ and $M''_2$ identical to those in $M$.

So what remains to be shown is that in $M''_1$ and $M''_2$ the node $n_1$ has the same incoming and outgoing edges. For incoming edges it is easy to show that in both $M''_1$ and $M''_2$ it holds that an edge $(n_3, n_1)$ exists iff $n_3 \not\in S_1 \cup S_2$ and there is a node $n_4 \in S_1 \cup S_2$ such that the edge $(n_3, n_4)$ exists in $M$. A similar argument, but with the direction of the edges reversed, can be made for outgoing edges.
\end{proof}

\subsection{Combining all cases}

In the preceding subsections it was shown for the cases (A), (B), (C) and (D) from Figure~\ref{fig:cases-of-proof} that for internal contractions we have local confluence. We now combine these results to show that this holds in general for every two internal AND-OR contractions.



\begin{thm}[local confluence of the internal AND-OR contraction relation] \label{thm:local-confluence-internal}
Given a WF net $M$ and two AND-OR contractions $M \intcontr M'_1$ and $M \intcontr M'_2$, there is a WF net $M''$ such that $M'_1 \intcontr^* M''$ and $M'_2 \intcontr^* M''$.
\end{thm}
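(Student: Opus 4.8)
The plan is to derive local confluence of $\intcontr$ from the preservation and commutativity lemmas established in the preceding subsections, by a case analysis on how the two contracted subnets meet. First I would unfold the hypotheses: $M\intcontr M'_1$ supplies an internal AND-OR contractible subnet $S_1$ of $M$ and a new node $n_1$ with $M'_1$ the contraction of $S_1$ into $n_1$, and likewise $M'_2$ comes from an internal AND-OR contractible $S_2$ contracted into $n_2$. Since every trivial subnet $\{n\}$ is AND-OR contractible --- its associated net is a one-node net that is vacuously a $\mathbf{11pOR}$ or $\mathbf{11tAND}$ net and trivially well-nested --- contracting $\{n_1\}$ in $M'_1$ merely renames $n_1$; prepending one such renaming step if necessary, I may assume $n_1\neq n_2$ and $n_1,n_2\notin P_M\cup T_M$. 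I would also record once that, as $S_1$ and $S_2$ are internal, all intermediate nets have $I=I_M$, $O=O_M$, so every contraction used below is again internal and the chains produced are $\intcontr^*$-chains, not merely $\contr^*$-chains.

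Then I would split, up to interchanging $S_1$ and $S_2$, into the cases: (0) $S_1\cap S_2=\emptyset$ with no edge joining them; (A) $S_1\cap S_2=\emptyset$ but joined by an edge; (B) they overlap, neither contains the other, and $M[S_1],M[S_2]$ have different I/O types; (C) the same but equal I/O types; (D) one contains the other, say $S_2\subseteq S_1$, which also covers $S_1=S_2$. This is visibly exhaustive. Case (0) is immediate: $S_2$ is untouched in $M'_1$ (and $S_1$ in $M'_2$) and still contractible there, and contracting in either order produces the literally identical net $M''$ on node set $(P_M\cup T_M\setminus(S_1\cup S_2))\cup\{n_1,n_2\}$, so $M'_1\intcontr M''$ and $M'_2\intcontr M''$. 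Cases (A)--(D) each reduce to quoting one preservation lemma and one commutativity lemma:
\begin{enumerate}[(A)]
\item Lemma~\ref{lem:contractibility-A} makes $S_2$ contractible in $M'_1$ and (symmetrically) $S_1$ in $M'_2$; Lemma~\ref{lem:commutativity-A} says contracting $S_2$ into $n_2$ in $M'_1$ and contracting $S_1$ into $n_1$ in $M'_2$ give the same net $M''$.
\item Likewise with Lemmas~\ref{lem:contractibility-B} and~\ref{lem:commutativity-B}, contracting the remainders $S_2\setminus S_1$ in $M'_1$ and $S_1\setminus S_2$ in $M'_2$ (into $n_2$, resp.\ $n_1$) to a common $M''$.
\item Likewise with Lemmas~\ref{lem:contractibility-C} and~\ref{lem:commutativity-C}, contracting $(S_2\setminus S_1)\cup\{n_1\}$ in $M'_1$ and $(S_1\setminus S_2)\cup\{n_2\}$ in $M'_2$ into a common fresh node $n_3\notin P_M\cup T_M\cup\{n_1,n_2\}$.
\item By Lemma~\ref{lem:contractibility-D}, $(S_1\setminus S_2)\cup\{n_2\}$ is contractible in $M'_2$, and by Lemma~\ref{lem:commutativity-D} contracting it into $n_1$ returns exactly $M'_1$; so $M'':=M'_1$ works, with $M'_1\intcontr^* M''$ in zero steps and $M'_2\intcontr M''$. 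The subcase $S_1\subseteq S_2$ is symmetric.
\end{enumerate}
In every case the new node names are fresh for the relevant intermediate net because $n_1\neq n_2$ and $n_1,n_2,n_3\notin P_M\cup T_M$, and the subnet being contracted consists of nodes that survived the first contraction, so it is genuinely a subnet there.

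I do not expect a real obstacle, since the substance is entirely in the per-case lemmas, which I may assume; what remains is bookkeeping. The points that need care are: verifying the case split is exhaustive up to the $S_1\leftrightarrow S_2$ symmetry; performing the preliminary normalization of node identities so that the commutativity lemmas yield an actual common net $M''$ rather than only isomorphic ones; and checking that all steps stay internal so the conclusion is stated for $\intcontr^*$. If I had to name the single trickiest point, it is precisely this node-identity normalization: without first forcing $n_1\neq n_2$ via a trivial renaming contraction, the commutativity lemmas would not directly produce one $M''$.
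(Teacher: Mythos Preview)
Your proposal is correct and follows essentially the same case analysis and lemma citations as the paper's own proof. Your extra care about node-identity normalization (forcing $n_1\neq n_2$ via a trivial renaming contraction so that the commutativity lemmas yield a literally common $M''$) addresses a minor bookkeeping point the paper glosses over; otherwise the two arguments coincide.
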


\begin{proof}
Let $M[S_1]$ and $M[S_2]$ be the subnets of $M$ that were contracted for $M \intcontr M'_1$ and $M \intcontr M'_2$, respectively. The easiest case to consider is the case where these nets to do not share nodes and are not connected by edges. In that case, the contraction of one net does not influence the other subnet or its connected edges, and so each of $M[S_1]$ and $M[S_2]$ will stay contractible if the other is contracted, and the final result after contracting both nets does not depend upon the order of contraction.

The remaining cases where defined in the beginning of this section, and for convenience we recall them here. In each case the following holds for $M[S_1]$ and $M[S_2]$:
\begin{enumerate}[(A)]

\item \emph{They do not share nodes, but there are edges from nodes in one subnet to nodes in another.} By Lemma~\ref{lem:contractibility-A} it holds that after contracting $S_1$ into $n_1$ resulting in $M'_1$ the subnet $S_2$ is contractible in $M'_1$. By symmetry the same holds for the subnet $S_1$ if we contract $S_2$ into $n_2$ resulting in $M'_2$. Moreover, by Lemma~\ref{lem:commutativity-A} it holds that the result of contracting first $S_1$ into $n_1$ and then $S_2$ into $n_2$ is the same as when we first contract $S_2$ into $n_2$ and then $S_1$ into $n_1$.

\item \emph{The subnets share nodes, the corresponding WF nets have different I/O types, and each subnet has nodes that are not contained in the other.} By Lemma~\ref{lem:contractibility-B} it holds that after contracting $S_1$ into $n_1$ resulting in $M'_1$ the subnet $S_2 \setminus S_1$ is contractible in $M'_1$. By symmetry the same holds for the subnet $S_1 \setminus S_2$ in $M'_2$ if we contract $S_2$ into $n_2$ resulting in $M'_2$. Moreover, by Lemma~\ref{lem:commutativity-B} it holds that the result of contracting first $S_1$ into $n_1$ and then $S_2 \setminus S_1$ into $n_2$, is the same as when we first contract $S_2$ into $n_2$ and then $S_1 \setminus S_2$ into $n_1$.

\item \emph{The subnets share nodes, the corresponding WF nets have the same I/O type, and each subnet has nodes that are not contained in the other.} By Lemma~\ref{lem:contractibility-C} it holds that after contracting $S_1$ into $n_1$ resulting in $M'_1$ the subnet $(S_2 \setminus S_1) \cup \{ n_1 \}$ is contractible in $M'_1$. By symmetry the same holds for the subnet $(S_1 \setminus S_2) \cup \{ n_2 \}$ in $M'_2$ if we contract $S_2$ into $n_2$ resulting in $M'_2$. Moreover, by Lemma~\ref{lem:commutativity-C} it holds that the result of contracting first $S_1$ into $n_1$ and then $(S_2 \setminus S_1) \cup \{ n_1 \}$ into $n_3$ is the same as when we first contract $S_2$ into $n_2$ and then $(S_1 \setminus S_2) \cup \{ n_2 \}$ into $n_3$.

\item \emph{One subnet is contained in the other.} Without loss of generality we assume that $S_2 \subseteq S_1$. By Lemma~\ref{lem:contractibility-D} it holds that after contracting $S_2$ into $n_2$ resulting in $M'_2$ the subnet $(S_1 \setminus S_2) \cup \{ n_2 \}$ is contractible in $M'_2$. Moreover, by Lemma~\ref{lem:commutativity-D} it holds that the result of contracting $S_1$ into $n_1$ is the same as when we first contract $S_2$ into $n_2$ and then $(S_1 \setminus S_2) \cup \{ n_2 \}$ into $n_1$.

\end{enumerate}
Summarising, we have shown that in cases (A), (B) and (C) there is a WF net $M''$ such that $M'_1 \intcontr M''$ and $M'_2 \intcontr M''$, and so it follows that $M'_1 \intcontr^* M''$ and $M'_2 \intcontr^* M''$. Moreover, in case (D) we have shown that $M'_2 \intcontr M'_1$, and so there is indeed a WF net $M'' = M'_1$ such that $M'_1 \intcontr^* M''$ and $M'_2 \intcontr^* M''$.

\end{proof}



\begin{lem}[simulation with internal AND-OR contractions] \label{lem:simulation_with_internal}
	If $M_1$ and $M_2$ are tWF nets and $M'_1$ and $M'_2$ their place completions, it holds that $M_1 \contr M_2$ iff $M'_1 \intcontr M'_2$, and the same holds if $M_1$ and $M_2$ are pWF nets and $M'_1$ and $M'_2$ their transition completions.
\end{lem}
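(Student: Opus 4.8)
The plan is to exhibit a bijection between the AND-OR contractions that can be performed on $M_1$ and the internal AND-OR contractions that can be performed on $M'_1=\mathbf{pc}(M_1)$: a contraction that collapses a subnet on node set $S$ into a new node $n$ corresponds to the contraction that collapses the \emph{same} set $S$ into the \emph{same} node $n$. Write the flow relation of $M'_1$ as $F_{M_1}$ together with the added edges $(p_i,m)$ for $m\in I_{M_1}$ and $(m,p_o)$ for $m\in O_{M_1}$, and observe at the outset that any node set $S\subseteq P_{M_1}\cup T_{M_1}$ is automatically disjoint from $\{p_i,p_o\}$.

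First I would establish three facts about such an $S$. (i) $M'_1[S]=M_1[S]$: the places, transitions and internal edges coincide because every added edge touches $p_i$ or $p_o$, and the interface coincides because the edges out of $p_i$ contribute to the input nodes of $M'_1[S]$ exactly the nodes of $I_{M_1}\cap S$, i.e.\ exactly what membership in $I_{M_1}$ contributes in $M_1[S]$, while the remaining contributions come from $F_{M_1}$ and are unchanged; symmetrically for output nodes via $p_o$. Consequently $M'_1[S]$ lies in a basic AND-OR class iff $M_1[S]$ does. (ii) $S$ is well-nested in $M_1$ iff it is well-nested in $M'_1$: for predecessors lying in $P_{M_1}\cup T_{M_1}$ the requirement is verbatim the same in the two nets; the only extra requirement in $M'_1$ comes from the common predecessor $p_i$ of the input nodes of $M'_1[S]$ and states precisely that these input nodes are either all in $I_{M_1}$ or all outside it, which is the second well-nestedness clause for $M_1$; and the $M'_1$-clause concerning nodes of $S$ in $I_{M'_1}=\{p_i\}$ is vacuous; symmetrically via $p_o$. (iii) any contraction of $S$ inside $M'_1$ is internal, since $S$ avoids $I_{M'_1}\cup O_{M'_1}=\{p_i,p_o\}$.

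Next I would check that if $M_2$ results from contracting $S$ in $M_1$ into $n$ and $M''$ results from contracting $S$ in $M'_1$ into the same $n$, then $M''=\mathbf{pc}(M_2)$. The node sets agree because removing $S$ and adding $n$ commutes with adjoining $p_i,p_o$. For the edges: those of $F_{M_1}$ that survive the contraction are common to both nets; the edges of $M''$ incident with $p_i$ are $\{(p_i,m)\mid m\in I_{M_1}\setminus S\}$, plus $(p_i,n)$ exactly when $S\cap I_{M_1}\neq\emptyset$, which by the definition of the input set of a contraction is precisely $\{(p_i,m)\mid m\in I_{M_2}\}$; symmetrically for $p_o$; and the remaining edges of $M''$ are exactly the edges of $M_2$. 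Assembling everything: for the forward direction, an AND-OR contraction $M_1\contr M_2$ via $S\mapsto n$ becomes, by (i)--(iii), an internal AND-OR contraction in $M'_1$, and by the edge computation it produces $\mathbf{pc}(M_2)=M'_2$, so $M'_1\intcontr M'_2$. For the backward direction, an internal AND-OR contraction $M'_1\intcontr M'_2$ uses a subnet on a set $S'$ disjoint from $\{p_i,p_o\}$, hence a node set of $M_1$; by (i)--(ii) it is AND-OR contractible in $M_1$, contracting it into the same new node produces a tWF net $M_3$ (Theorem~\ref{thm:contraction-correctness}) with $\mathbf{pc}(M_3)=M'_2=\mathbf{pc}(M_2)$, and stripping off $p_i,p_o$ gives $M_3=M_2$, so $M_1\contr M_2$. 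The statement for pWF nets and transition completions is the mirror image, obtained by swapping ``place'' with ``transition'' throughout.

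The routine parts are (i)--(iii), which amount to unfolding the definitions of $M[S]$ and of well-nestedness. The step I expect to be the main obstacle is the edge-by-edge comparison in the third paragraph: it has to be split according to whether an endpoint is inside or outside $S$ and according to whether $S$ meets $I_{M_1}$ or $O_{M_1}$, because in $\mathbf{pc}(M_2)$ the roles of the extra $p_i$- and $p_o$-edges are exactly to record the interface data of $M_2$ that the contraction rule synthesises, so the two presentations of the flow relation must be reconciled case by case rather than by a single uniform identity.
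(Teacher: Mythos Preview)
Your proposal is correct and follows essentially the same approach as the paper's proof: both establish that the same node set $S$ defines the same subnet and is well-nested in $M_1$ iff it is (internally) well-nested in $M'_1$, and then verify that contracting $S$ commutes with taking the place completion. Your presentation is in fact more explicit than the paper's, particularly in the edge-by-edge verification of the commutation step, which the paper handles in a single sentence.
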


\begin{proof}
	We start by showing that if $M_1$ and $M_2$ are tWF nets and $M'_1$ and $M'_2$ their place completions, it holds that $M_1 \contr M_2$ iff $M'_1 \intcontr M'_2$. Let the special input and output nodes in a place completion be $p_i$ and $p_o$.

	It is easy to observe that the following two statements are equivalent: (a) $n$ is an input node in $M_1$ and (b) there is an edge $(p_i, n)$ in $M'_1$. Similarly, the following two are also equivalent: (c) $n$ is an output node in $M_1$ and (d) there is an edge $(n, p_o)$ in $M'_1$.

	It follows that a subnet $S$ is well-nested in $M_1$ iff $S$ is internal well-nested in $M'_1$. After all, a set of nodes in $M_1$ have all the same incoming edges in $M'_1$ from nodes outside $S$ iff (1) they have all the same incoming edges in $M_1$ from outside $S$ and (2) either are all input nodes of $M_1$ or are all not input nodes of $M_1$. Similarly, they have all the same outgoing edges in $M'_1$ to nodes outside $S$ iff (1) they have all the same outgoing edges in $M_1$ to nodes outside $S$ and (2) either are all output nodes of $M_1$ or are all not output nodes of $M_1$.

	Moreover, a subnet $S$ defines a basic AND-OR net in $M_1$ iff it does in $M'_1$. This follows from the observation that in the definitions of the AND and the OR property, having an incoming edge from outside $S$ and being an input node of $M_1$ are treated equivalently. So we can conclude that subnet $S$ is contractible in $M_1$ iff $S$ is internally contractible in $M'_1$.

	If we then consider what happens if we contract $S$ in $M_1$ and $S$ in $M'_1$, then the same happens in terms of changes to edges and nodes, except that where in $M_1$ the new node $n_1$ becomes an input or output node, it holds in $M'_1$ that the edge $(p_i, n_1)$ or the edge $(n_1, p_o)$ is added, respectively. This indeed commutes with taking the place completion.

	The proof for the case where $M_1$ and $M_2$ are pWF nets, with $M'_1$ and $M'_2$ their transition completions, is similar, but with the roles of transitions and places interchanged.
\end{proof}

This leads us to the proof of Theorem~\ref{thm:local-confluence}, which states that the AND-OR contraction relation has the local confluence property. We recall its content here before giving its proof.

\medskip


\noindent
\textbf{Theorem \ref{thm:local-confluence}. (local confluence of the AND-OR contraction relation)} \\
For all WF nets $M$, $M_1$ and $M_2$ it holds that if $M \contr M_1$ and $M \contr M_2$, then there is a WF net $M_3$ such that $M_1 \contr^* M_3$ and $M_2 \contr^* M_3$.

\medskip

\begin{proof}
	Let $M$, $M_1$ and $M_2$ be WF nets such that it holds $M \contr M_1$ and $M \contr M_2$. We are going to first deal with the case where $M$ is a tWF net. By Theorem~\ref{thm:contraction-correctness} it follows that $M_1$ and $M_2$ are also tWF nets. By Lemma~\ref{lem:simulation_with_internal} it follows that $\textbf{pc}(M) \intcontr \textbf{pc}(M_1)$ and $\textbf{pc}(M) \intcontr \textbf{pc}(M_2)$. Now by Theorem~\ref{thm:local-confluence-internal} we know that there is a WF net $M''$ such that $\textbf{pc}(M_1) \intcontr^* M''$ and $\textbf{pc}(M_2) \intcontr^* M''$. Again by Theorem~\ref{thm:contraction-correctness} and induction on the number of contractions, we know that $M''$ is a pWF net. Furthermore, as it is a result of internal AND-OR contractions applied to a one-input one-output net, it is a one-input one-output net and as there were distinct input and output nodes and some other nodes ($M$ cannot be empty) before the contraction, this is still the case in $M''$. Thus $M'' = \textbf{pc}(M_3)$ for some net $M_3$. Using the Lemma~\ref{lem:simulation_with_internal} again, from $\textbf{pc}(M_1) \intcontr^* M'' = \textbf{pc}(M_3)$ and $\textbf{pc}(M_2) \intcontr^* M'' = \textbf{pc}(M_3)$, we get with induction on the number of contraction steps that $M_1 \contr^* M_3$ and $M_2 \contr^* M_3$.

	The proof for the case were $M$ is a pWF net proceeds similar, but with taking transition completion rather that place completion.
\end{proof}

\section{A polynomial-time reduction algorithm\label{sect:Algorithm-AND-OR-nets}}

In this section we present a concrete AND-OR reduction algorithm for computing 
the net-based AND-OR reduction procedure. In addition we show
how this algorithm can be used to verify if a certain WF net is an AND-OR net. The algorithm is based
on the result of Theorem~\ref{thm:unique-result-net-reduction} which states
that the result of this reduction is unique up to isomorphism, no matter how
we select the subnets to contract. This allows the algorithm to proceed without
backtracking and essentially just repeat a process where it continues to 
look for contractible non-trivial subnets, and contract them, until
no more such subnets can be found. The efficiency of this algorithm is improved
further by an algorithm for identifying such subnets within polynomial time. As a result
the whole reduction algorithm runs within polynomial time.

\begin{algorithm}[htb]
\label{alg:reduce}
\SetAlgoVlined
\LinesNumbered
\KwIn{a net WF net $M = ( P, T, F, I, O )$}
\KwOut{a net-based AND-OR reduction of $M$}
\BlankLine
\ForEach{$(n_1, n_2) \in (P \times P) \cup (T \times T)$}{ \label{alg1:forloop}
     \If{$n_1=n_2$ and $n_1 \in P$ and there exists $t \in T$ s.t. $t \bullet_M = \bullet_M t = \{n_1\}$} { \label{alg1:loop-test}
       \Return reduce(contractSubnet($M$, $\{n_1, t\}$))\;
     }
	 \If{$n_1 \neq n_2$ and $\bullet_M n_1 = \bullet_M n_2$ and $n_1 \bullet_M = n_2 \bullet_M$} {\label{alg1:parallel-test}
       \Return reduce(contractSubnet($M$, $\{n_1, n_2 \}$))\;
     }
     \If{$n_1 \neq n_2$ and $n_2$ is reachable from $n_1$} { \label{alg1:reach-test}
       $N \leftarrow$ expand($M$, $n_1$, $n_2$)\;
       \If{$N \neq$ null}{
         \Return reduce(contractSubnet($M$, $N$))\;
       }
     }
}
\Return $M$\;
\caption{reduce($M$)}
\end{algorithm}

\subsection{The reduction algorithm}

Non-trivial subnets, i.e., subnets with more than one node, for contraction can be found in the following way. We iterate in line \ref{alg1:forloop} of Algorithm~\ref{alg:reduce} over all the pairs of nodes $(n_1,n_2)$ of the same type, i.e., pairs of places or pairs of transitions. We distinguish three cases. We start in
line \ref{alg1:loop-test} with contracting loops, where by a loop we mean a place with
a transition attached to it such that this transition has exactly two
edges, one incoming from the place and one outgoing to the place. Then in line \ref{alg1:parallel-test} we contract pairs of distinct nodes if these share
the same incoming and outgoing edges. Finally, in line \ref{alg1:reach-test}, for the remaining
pairs of nodes we treat one node as an input and the other as an output.
We expand forward from the input node and backward from the output node. While
expanding we remember that we can discover new interface nodes. If
we end up with a subnet that is of a basic AND-OR class then we contract it.

It can be shown that if any non-trivial subnet can be contracted,
then at least one of our three cases will also find a subnet. This follows
from the observation that for a contractible subnet it holds either
(1) it has an input node and an output node that is distinct from
the input node but can be reached from it or (2) all input nodes are
also output nodes and vice versa. Case (1) is covered by the test
on line \ref{alg1:reach-test} where we assume that $n_1$ is an input node and $n_2$ an output node of the subnet we are looking for. For case (2) we can distinguish the sub-cases (a) there
are two or more input/output nodes or (b) there is exactly one input/output
node. In case (a) the test on line \ref{alg1:parallel-test} will apply. In case (b) it can
be observed that the nested net must be an 11pOR net if it is non-trivial. This is because it then contains a cycle,
and so cannot be an 11tAND or pAND net, and it also
cannot be a tOR net since the input/output node has incoming and outgoing edges. If it is an 11pOR net then it either satisfies the test on line \ref{alg1:loop-test}, or the net minus
the input/output place defines a well-nested tOR net containing more then one node and which is covered by either
the tests on line \ref{alg1:parallel-test} or \ref{alg1:reach-test}. So we can conclude that if there is a contractible
subnet then there is a pair $(n_1,n_2)$ that satisfies at lease one of the tests
on line \ref{alg1:loop-test}, \ref{alg1:parallel-test}, or \ref{alg1:reach-test}.

The loop runs in quadratic time. Each time we do a contraction
the number of nodes decreases, so it suffices to show that we can do
the expansion in polynomial time. Observe first that we can easily
check in polynomial time if the output node is reachable from an input
node or even enumerate the nodes in such a way that for each input
node we iterate only over the output nodes reachable from it.

The expansion can be done as in Algorithm~\ref{alg:expand}. Given
an input and output node pair we traverse the net as in a breadth
or depth-first graph search algorithm. The initial nodes for the traversal
are the input and output nodes provided as parameters, yet we make
sure that we traverse forward only if the currently inspected node
is not an output node (see lines \ref{alg2:input-test} through \ref{alg2:add-pred}) and that we traverse backward
only if the currently inspected node is not an input node (see lines \ref{alg2:output-test} through \ref{alg2:add-foll}). The newly encountered nodes are tested to check if they can be input (output) nodes of the subnet we are looking for. 
This is done by testing if (1) they have incoming (outgoing) edges
from (to) exactly the same nodes as the initial input (output) node and (2) are input node of $M$ iff the initial input node is an input node of $M$.
When this test succeeds, the node is added to the input (output) set of the constructed net (see lines
\ref{alg2:add-inputs} and \ref{alg2:add-outputs}).

\begin{algorithm}[htb]
\label{alg:expand}
\small
\SetCommentSty{mycommfont} 
\SetAlgoVlined
\LinesNumbered
\SetSideCommentRight
\KwIn{a WF net $M = ( P, T, F, I, O )$ and nodes $i, o \in P$ or $i, o \in T$}
\KwOut{a set $S$ of nodes of $M$ such that $M[S]$ is a well-nested subnet in $M$ that is a WF net of a basic AND-OR class with input node $i$ and output node $o$, or $null$ if such a subnet does not exist}
\BlankLine
$( S, I_S, O_S ) \leftarrow ( \{ i, o \},  \{ i \}, \{ o \} )$\tcp*{init.\ $S$ and input/output sets of $M[S]$}
$TBA \leftarrow \{ i,  o \}$\tcp*{nodes to be analysed}
\lIf(\tcp*[f]{init. set of possible}){$ i, o \in P $}{$PC \leftarrow \{ \textrm{11pOR}, \textrm{pAND} \}$}\label{alg2:init-pc}
\lIf(\tcp*[f]{basic AND-OR classes}){$i, o \in T$}{$PC \leftarrow \{ \textrm{11tAND}, \textrm{tOR} \}$}\label{alg2:end-init-pc}
\While{$TBA \neq \emptyset$ and $PC \neq \emptyset$} {
       $n' \leftarrow$ pick and remove from $TBA$\;
			\uIf(\tcp*[f]{is $n'$ input node?}){${\bullet_M n'} = {\bullet_M i}$ and $n' \in I \Leftrightarrow i \in I$}{ \label{alg2:input-test}
				\lIf(\tcp*[f]{new input}){$n' \not\in I_S$}{$PC \leftarrow PC \setminus \{ \textrm{11pOR}, \textrm{11tAND} \}$} \label{alg2:extra-input}
				$I_S \leftarrow I_S \cup \{ n' \}$\; \label{alg2:add-inputs}
			}
			\Else{
				$TBA \leftarrow TBA \cup ({\bullet_M n'} \setminus S)$\tcp*{add predecessors}
				$S \leftarrow S \cup {\bullet_M n'}$\; \label{alg2:add-pred}
			}
			\uIf(\tcp*[f]{is $n'$ output node?}){${n' \bullet_M} = {o \bullet_M}$ and $n' \in O \Leftrightarrow o \in O$}{\label{alg2:output-test}
				\lIf(\tcp*[f]{new output}){$n' \not\in O_S$}{$PC \leftarrow PC \setminus \{ \textrm{11pOR}, \textrm{11tAND} \}$} \label{alg2:extra-output}
				$O_S \leftarrow O_S \cup \{ n' \}$\; \label{alg2:add-outputs} 
			}
			\Else{
				$TBA \leftarrow TBA \cup ({n' \bullet_M} \setminus S)$\tcp*{add followers}
				$S \leftarrow  S \cup {n' \bullet_M}$\;  \label{alg2:add-foll}
			}
			\If{	$(n'  \not\in I_S \wedge |{\bullet_{M[S]} n'}| \neq 1)$ or 
					$(n' \in I_S \wedge |{\bullet_{M[S]} n'}| \neq 0)$ or \\
					$(n' \not\in O_S \wedge |{n' \bullet_{M[S]}}| \neq 1)$ or 
					$(n' \in O_S \wedge |{n' \bullet_{M[S]}}| \neq 0)$}{
				\lIf(\tcp*[f]{not AND net}){$n' \in P$}{ $PC \leftarrow PC \setminus \{ \textrm{pAND}, \textrm{11tAND} \}$} \label{alg2:no-and-prop}
				\lIf(\tcp*[f]{not OR net}){$n' \in T$}{ $PC \leftarrow PC \setminus \{ \textrm{tOR}, \textrm{11pOR} \}$} \label{alg2:no-or-prop}
			}
}
\lIf(\tcp*[f]{AND is acyclic}){$M[S]$ cyclic}{$PC \leftarrow PC \setminus \{ \textrm{pAND}, \textrm{11tAND}\}$} \label{alg2:no-cycle}
\lIf(\tcp*[f]{no more possible basic classes?}){$PC = \emptyset$}{\Return null} \label{alg2:no-class-left}
\Return $S$\; \label{alg2:return}
\caption{expand($M$,$i$, $o$)}
\end{algorithm}

It is easy to see, that during the traversal we discover
only nodes that must be in any well-nested subnet with the provided
interface nodes. This holds because, if there is in $M$ an
edge $(n_1, n_2)$, and $n_1$ is in a subnet that is a WF net,
but cannot be an output node of that WF net, then $n_2$ is also in that
subnet. So we argue that the algorithm finds only non-trivial subnets
that are well-nested and of a basic AND-OR class. Otherwise it returns \emph{null}.

First, observe that in lines \ref{alg2:input-test} and \ref{alg2:output-test} we check if a node has incoming
(outgoing) edges from (to) exactly the same nodes as the initial input
(output) node. However, for a well-nested subnet $S$ it is required that
all nodes in $I_{S}$ ($O_{S}$) must have incoming (outgoing) edges
from (to) the same nodes \emph{outside} $S$. We argue that our check
is correct, which can be reasoned by examining all the basic AND-OR
classes. For pAND and 11tAND nets no cycles are allowed,
so for input (output) nodes all incoming (outgoing) edges must come
from outside. In tOR nets transitions have at most one incoming
(outgoing) edges internally and none if they are input (output) node,
so all incoming edges come from outside. Finally, for 11pOR
nets only the initial single input and single output nodes are possible
and no other internal node can have edges from outside. Now we know
that if a subnet is found, it must be a WF net, since (1) $M$ is assumed
to be a WF net, (2) all the input (output) nodes are guaranteed to be connected
to the same nodes outside and either all input (output) node of $M$ or none, and (3) we include all the nodes reachable
forwards (backwards) from any of the input (output) nodes.

Next, during traversal, we maintain the set of possible basic AND-OR
classes that are still compatible with the discovered net. It is initiated
in the lines \ref{alg2:init-pc} through \ref{alg2:end-init-pc}, can be updated
in lines \ref{alg2:extra-input} and \ref{alg2:extra-output} if the expanded
net is observed to not to be one input one output net, respectively,
and in lines \ref{alg2:no-and-prop} and \ref{alg2:no-or-prop} if the AND or OR properties, respectively,
stop to be satisfied. Note that if $n'$ here satisfies the AND
and OR properties it will continue to do so, since all its predecessors
and successors have already been added to $S$ in the preceding steps
on lines \ref{alg2:add-pred} and \ref{alg2:add-foll} unless it is an input or output node. 
Finally, on line \ref{alg2:no-cycle} we examine
the acyclicity property, which is required for AND nets. This guaranties
that we know if the subnet is of a basic AND-OR class. If no compatible
basic AND-OR classes are left (see line \ref{alg2:no-class-left}), a $null$ is returned.

At the end of the algorithm, on line \ref{alg2:return} the found subnet is returned, 
which is spanned by $S$ and is a well-nested, internal subnet of $M$ that is in
at least one basic AND-OR class.
This concludes the argument for correctness of Algorithm~\ref{alg:expand}.

\subsection{Using the reduction algorithm for verifying AND-OR nets}

The presented algorithm can also be used to determine if a WF net is an AND-OR net or not. This is because the result of the algorithm is a one-node WF net iff the input is an AND-OR net. The show this, we first present a characterisation of AND-OR nets in terms of AND-OR reductions.

\begin{thm}[AND-OR net bottom-up characterisation] \label{thm:and-or_bottom-up} 
A WF net $M$ is and AND-OR net iff there is a one-node WF net $M'$ such that $M \contr^* M'$.
\end{thm}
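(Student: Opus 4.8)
The plan is to prove the biconditional in both directions, using the machinery already developed. For the ``only if'' direction, I would proceed by structural induction on the construction of $M$ as an AND-OR net, i.e., on the number of substitutions used to build $M$ from the basic AND-OR classes. In the base case $M$ is itself a net from a basic AND-OR class; I would note that $M$ together with the trivial subnet consisting of all of $M$ is well-nested (it has no edges leaving or entering, and its interface coincides with that of $M$), so $M$ is AND-OR contractible and $M \contr M'$ for a one-node net $M'$. For the inductive step, write $M = N \otimes_n K$ where $N$ is an AND-OR net built with fewer substitutions and $K$ is a (smaller) AND-OR net of the appropriate I/O type. The key observation is that the copy of $K$ sitting inside $M$ forms a well-nested subnet whose associated net is (isomorphic to) $K$ --- this is exactly the ``contraction is the inverse of substitution'' remark made just before Definition~\ref{defn:well-nested} in the text. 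By confluence (Theorem~\ref{thm:unique-result-reduction}, via Theorem~\ref{thm:unique-result-net-reduction}) it does not matter in which order we contract, so I can contract all the subnets corresponding to $K$ and its own nested pieces, eventually reaching a net isomorphic to $N$; applying the inductive hypothesis to $N$ then finishes the reduction to a one-node WF net.

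For the ``if'' direction I would argue by induction on the length of the reduction sequence $M = M_0 \contr M_1 \contr \cdots \contr M_k = M'$, where $M'$ is a one-node WF net, proving the contrapositive-flavoured statement that each $M_i$ is an AND-OR net, starting from $M_k$ and working backwards. A one-node WF net is a pWF or tWF net consisting of a single place or transition that is simultaneously the unique input and output node; such a net is trivially a member of, say, $\mathbf{pAND}$ or $\mathbf{11tAND}$ (a single node satisfies the AND property vacuously and is acyclic), hence an AND-OR net. For the inductive step, suppose $M_{i+1}$ is an AND-OR net and $M_i \contr M_{i+1}$ via the contraction of an AND-OR contractible subnet $S$ into a node $n$. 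Then $M_i$ is obtained from $M_{i+1}$ by substituting $n$ with $M_i[S]$ --- again by the inverse relationship between contraction and substitution for well-nested subnets. Since $M_{i+1}$ is an AND-OR net and $M_i[S]$ is a net of a basic AND-OR class, the substitution closure definition (Definition~\ref{defn:subst-closure}) immediately gives that $M_i$ is an AND-OR net. Chaining back from $i = k$ to $i = 0$ yields that $M = M_0$ is an AND-OR net.

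The step I expect to be the main obstacle is making the ``contraction undoes substitution'' claim fully rigorous in the presence of the isomorphism subtleties and the interface-node bookkeeping. The paper states informally that ``if a contraction is followed by a substitution that replaces the new node with the WF net associated with the contracted well-nested subnet, then the result is the original WF net,'' and dually that substitution followed by contraction recovers the original net up to the identity of the new node; I would need to invoke this as a lemma. In the ``only if'' direction the concern is that after the substitution $N \otimes_n K$, the copy of $K$ inside $M$ might not look syntactically like $K$ --- but the definition of place/transition substitution guarantees precisely the well-nestedness conditions (all input nodes of $K$ get the same incoming edges from outside, namely $\bullet n$, and similarly for outputs), so $M[\text{copy of }K]$ is well-nested and isomorphic to $K$, hence of the same basic AND-OR class. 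The one genuinely delicate point is that contracting the $K$-copy inside $M$ need not return $N$ on the nose but only a net isomorphic to $N$ (the new node may have a different name than $n$); here I would appeal to Lemma~\ref{lem:contract-genericity} on the genericity of contraction, together with the fact that being an AND-OR net is invariant under isomorphism (clear from the definitions), so that the inductive hypothesis still applies. Everything else is routine once this lemma and the isomorphism-invariance of the AND-OR class are in hand.
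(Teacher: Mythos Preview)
Your \emph{if}-direction is correct and is essentially the paper's argument spelled out in detail: reverse induction along the reduction sequence, using that a one-node net is trivially basic AND-OR and that each contraction step is undone by a substitution that stays inside the substitution closure.

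Your \emph{only-if}-direction has a genuine gap. When you write $M = N \otimes_n K$ with $K$ an arbitrary AND-OR net, the copy of $K$ inside $M$ is well-nested but \emph{not} in general AND-OR contractible, since $K$ need not belong to a basic AND-OR class. You therefore cannot contract $K$ in one step; you must first reduce $K$ internally. Your appeal to confluence here does not do the work you want: confluence (Theorems~\ref{thm:unique-result-reduction} and~\ref{thm:unique-result-net-reduction}) says that \emph{if} two reduction sequences from $M$ exist then they end at isomorphic nets; it does not tell you that any particular contraction step is legal, nor that contractions valid in $K$ remain valid when $K$ is embedded in $N \otimes_n K$. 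To repair your argument you would need a separate lifting lemma: if $S$ is AND-OR contractible in $K$, then $S$ is AND-OR contractible in $N \otimes_n K$, and the resulting net is $N \otimes_n K'$ where $K'$ is $K$ with $S$ contracted. This is true but requires checking the interface bookkeeping (in particular, what happens when $S$ meets the input/output nodes of $K$).

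The paper avoids this entirely via associativity. Since $(A \otimes_n B) \otimes_m C = A \otimes_n (B \otimes_m C)$, every AND-OR net can be rewritten as a fully left-associated expression $((\cdots(A_1 \otimes_{n_1} A_2) \cdots) \otimes_{n_{k-1}} A_k)$ with every $A_i$ a \emph{basic} AND-OR net. Now the outermost substituted net $A_k$ is basic, hence its copy in $M$ is directly AND-OR contractible; contracting it yields the shorter left-associated expression, and you iterate. This gives the reduction to a single node with no lifting lemma and no appeal to confluence.
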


\begin{proof}
The \emph{if}-part follows straightforwardly from the definition of AND-OR nets
and the observation that contractions are the inverse of substitutions.

The \emph{only-if}-part can be shown as follows. If $M$ is an AND-OR net there is substitution expression, e.g., $(A \otimes_{n_1} B) \otimes_{n_2} (C \otimes_{n_3} D)$, with all mentioned nets being basic AND-OR nets. Since substitution is associative this will be equivalent with a substitution expression where all brackets are moved to the left, e.g., $((A \otimes_{n_1} B) \otimes_{n_2} C) \otimes_{n_3} D$. It follows that we can reverse the substitutions  and perform them as contractions in reverse, e.g., we contract $D$, $C$, $B$ and $A$ in that order, and then obtain a one-node WF net.
\end{proof}

Finally, we show that this characterisation coincides with the net-based AND-OR reduction, which is what the algorithm computes, returning a one-node WF net.

\begin{thm}[equivalence of the reductions for AND-OR net verification]
Given a WF net $M$, the following statements are equivalent:
\begin{enumerate}[(i)]

  \item The application of the net-based AND-OR reduction to $M$ results in a one-node WF net.
  
  \item The application of the class-based AND-OR reduction to $M$ results in a class of one-node WF nets.
  
  \item The WF net $M$ can be reduced to a one-node WF net by zero or more contractions.
\end{enumerate}
\end{thm}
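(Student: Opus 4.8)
The plan is to prove the three statements equivalent by establishing the cycle (i) $\Rightarrow$ (iii) $\Rightarrow$ (ii) $\Rightarrow$ (i), leaning heavily on the machinery already developed. The implication (iii) $\Rightarrow$ (i) will essentially use Theorem~\ref{thm:and-or_bottom-up} together with Theorem~\ref{thm:unique-result-net-reduction}, while (i) $\Leftrightarrow$ (ii) will follow from Theorem~\ref{thm:similarity-reduction}. So the bulk of the argument is really about relating ``some sequence of contractions reaches a one-node net'' to ``the canonical net-based reduction reaches a one-node net'', and the key fact that makes this work is confluence.

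First I would prove (i) $\Rightarrow$ (iii): if the net-based AND-OR reduction of $M$ is a one-node WF net $M'$, then by definition of net-based reduction $M$ can be reduced to $M'$ by a sequence of (non-trivial) contractions, which is in particular a reduction to a one-node WF net by zero or more contractions; hence (iii). Next, (iii) $\Rightarrow$ (i): suppose $M \contr^* M'$ with $M'$ a one-node WF net. By Theorem~\ref{thm:and-or_bottom-up}, $M$ is an AND-OR net. Now let $M''$ be any result of the net-based AND-OR reduction applied to $M$ (one exists by well-foundedness, Proposition~\ref{prop:well-founded}, transferred via Theorem~\ref{thm:similarity-reduction}, or simply because each non-trivial contraction strictly decreases the node count). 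We must argue $M''$ is a one-node WF net. Here is where confluence enters: the chain $M \contr^* M'$ can be turned into a chain of class-based contractions $[M] \clcontr^* [M']$ by dropping the trivial (one-node) contractions (which only rename a node and leave the isomorphism class fixed). Similarly $[M] \clcontr^* [M'']$. By global confluence of the class-based contraction relation (Theorem~\ref{thm:class-based-global-confluence}) there is $M_3$ with $[M'] \clcontr^* [M_3]$ and $[M''] \clcontr^* [M_3]$. But $M'$ is a one-node WF net, and the only subnet of a one-node net is the trivial subnet, so there is no non-trivial contractible subnet in $M'$; hence $[M']$ is $\clcontr$-maximal, which forces $[M_3] = [M']$. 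Thus $[M''] \clcontr^* [M']$, and since $M''$ is itself a net-based reduction result it is also $\clcontr$-maximal (by the argument of Theorem~\ref{thm:similarity-reduction}), so $[M''] = [M']$, i.e. $M'' \sim M'$, and therefore $M''$ is a one-node WF net. This gives (i).

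For the equivalence with (ii): by Theorem~\ref{thm:similarity-reduction}, $[M']$ is the result of the class-based AND-OR reduction of $M$ iff there is a net-based AND-OR reduction $M''$ of $M$ with $M'' \sim M'$. If (i) holds, the net-based reduction yields a one-node WF net $M''$, so the class-based reduction yields $[M'']$, which is the class of all one-node WF nets (all one-node WF nets of a given I/O type are isomorphic, and actually a one-node net is trivially a pWF or tWF net with the single node as input and output), giving (ii). Conversely if (ii) holds, the class-based reduction result is $[M']$ for a one-node $M'$, so by Theorem~\ref{thm:similarity-reduction} the net-based reduction produces some $M'' \sim M'$, which is then also a one-node WF net, giving (i). To close the cycle cleanly one can instead simply note (iii) $\Rightarrow$ (ii) directly: $M \contr^* M'$ one-node gives $[M] \clcontr^* [M']$ (stripping trivial steps), and $[M']$ is $\clcontr$-maximal as argued above, so $[M']$ is the unique class-based reduction result by Theorem~\ref{thm:unique-result-reduction}; and then (ii) $\Rightarrow$ (i) via Theorem~\ref{thm:similarity-reduction} as just described. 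Either routing works; I would pick whichever reads most smoothly.

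The main obstacle, and the step that needs the most care, is the passage between ``net-level'' contraction sequences that may include trivial (one-node) contractions and ``class-level'' sequences that exclude them — i.e. making precise that trivial contractions are exactly the ones that do not change the isomorphism class, so that an arbitrary $\contr^*$-chain descends to a $\clcontr^*$-chain and vice versa. This is morally Theorem~\ref{thm:similarity-reduction} and Lemma~\ref{lem:contract-genericity}, so the work is mostly bookkeeping: checking that a one-node WF net really is $\clcontr$-maximal (no non-trivial subnets exist, and a trivial contraction is irreflexive at the class level by definition of $\clcontr$), and that a net-based reduction result is likewise $\clcontr$-maximal. Everything else — well-foundedness guaranteeing a reduction result exists, uniqueness up to isomorphism, and the bottom-up characterisation of AND-OR nets — is already available and used as a black box.
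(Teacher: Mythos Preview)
Your proof is correct, and the ``alternative routing'' you mention at the end --- (i) $\Rightarrow$ (iii) trivially, (iii) $\Rightarrow$ (ii) by stripping trivial contractions and noting that a one-node net is $\clcontr$-maximal, and (ii) $\Rightarrow$ (i) via Theorem~\ref{thm:similarity-reduction} --- is exactly the paper's proof. Your primary argument for (iii) $\Rightarrow$ (i), which passes through global confluence (Theorem~\ref{thm:class-based-global-confluence}) and compares two $\clcontr$-maximal classes, is also valid but more roundabout: once you have $[M] \clcontr^* [M']$ with $[M']$ maximal, you already have (ii) directly, without needing to bring in a separate net-based reduction result $M''$ and invoke confluence to match them up. The appeal to Theorem~\ref{thm:and-or_bottom-up} in that argument is entirely superfluous --- you never use the fact that $M$ is an AND-OR net in what follows. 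So your shorter route is the right one to write up.
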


\begin{proof}
It is easy to see that (i) implies (iii), since a net-based AND-OR reduction is in fact a series of contractions. That (ii) implies (i) follows from   Theorem~\ref{thm:similarity-reduction}. So it remains to be shown that (iii) implies (ii).

Assume there is a list of contractions $M \contr M_1 \contr \ldots \contr M_n$ such that $M_n$ is a one-node WF net. For each contraction $M_i \contr M_{i+1}$ it holds that either it contracts a subnet of one node, in which case $M_i \sim M_{i+1}$ and therefore $[M_i] = [M_{i+1}]$, or it contracts a non-trivial subnet, in which case $[M_i] \contr [M_{i+1}]$. It follows that there is a series of class-based contractions $[M] \clcontr [M'_1] \clcontr \ldots \clcontr [M'_m]$ such that $M'_m \sim M_n$. It then follows that $M'_m$ is a one-node net and therefore contains no contractible non-trivial subnet, and so $[M'_m]$ is indeed the result of the class-based AND-OR reduction.
\end{proof}

\section{Summary and Future Research}
\label{sect:summary}

In this paper we introduce a notion of AND-OR reduction,
which reduces a WF net to a smaller net by iteratively contracting  certain
well-formed subnets into single nodes until no more such contractions
are possible. This reduction is interesting for several reasons. The first reason
is that it preserves the soundness and unsoundness of the net, so can be
used to help users understand why a WF net is problematic. It might
also give some valuable
insights for determining the different possible decompositions, e.g., how parts
of the workflow can be best distributed to independent organisational units or, in
case of workflows representing computations, e.g., as in scientific
workflows, to different servers.
The second reason is that it can provide as a side-effect a hierarchical structure of
parts, or the whole, of the WF net, which can help user to understand
the structure or large WF nets. 

Finally, the reduction can be used to show
that a certain WF net is an AND-OR net, because in that case the net
is reduced to a one-node WF net. This class of WF nets was introduced
in earlier work, and arguably describes nets that follow good
hierarchical design principles which can be compared to structured design of programs
and using well nested procedures and functions, rather than unrestricted
goto statements. As was shown in earlier work, these nets have the desired
soundness property.

It is shown that  the AND-OR reduction is confluent up to isomorphism, which
means that despite the inherent non-determinism that comes from
the choice of subnets that are contracted, the final result of the reduction is always
the same up to the choice of the identity of the nodes. Based on this result, an algorithm
is presented that computes the AND-OR reduction, and runs in polynomial time.

As a byproduct of the reduction procedure, a refinement tree for
the hierarchical structure of the net can be constructed, like has been done for similar
classes of nets~\cite{wachtel2003,wachtel2006,PChPGBL13}. It is worth to
investigate how such refinement trees can be used to determine efficiently sound
markings and to model recovery regions. Moreover, it is also interesting to
investigate to what extent this hierarchy is unique, or could be made unique by
normalising it in a certain way, since that could make
it more effective as a tool for understanding the structure of a net. For example,
in a refinement tree, if a parent and a child both represent contractions of AND nets,
then they can be merged into a single contraction of a
larger AND net.  Another source of ambiguity comes from the observation that linear nets
are simultaneously AND and OR nets. It would be interesting to investigate
if these types of ambiguity can describe all the ambiguity, and if a suitable normal form
can be defined that always gives a unique and meaningful refinement hierarchy.
A possible normalisation procedure could, for example, consist of the following
rules: (1) if a parent and a child in the refinement tree represent both the contraction of an AND net or both
and OR net, then they should be merged, (2) if a parent represents the contraction of an
AND net and it has a child that represents the contraction of an OR net that is 
is equal to the result of the substitution of several OR nets into an AND net
(which must be a linear net, since it is also part of an OR net), then
the AND net of the child is substituted into the AND net of the parent and the OR nets
become the new children that replace the old child, 
and (3) the same as rule (2) but with the roles of AND and OR interchanged.
We conjecture that this leads to a unique normal form for the refinement tree that will
be the same if the result of the refinement tree is the same. 

Finally, another interesting direction for future research is to extend the class
of sound free-choice nets that can be generated as AND-OR nets by
having additional substitution rules for edges, and to check if a similar verification
procedure by reduction would be possible. We could define ptAND, tpAND, ptOR and
tpOR nets where the small letters indicate the type of input and output
nodes, respectively. So an edge from a place to a transition could
be replaced with a tpAND or tpOR net. Note that the original place
and transition remains present. However, this could turn a free-choice
net into a non-free-choice net, and indeed create a non-sound net.
However, for edges from transitions to places this does not happen
as long as we require the ptAND net to be a one-output net, and the
ptOR net to be a one-input net. It is also not hard to see that such
substitutions preserve soundness. It would be interesting to investigate
to what extent this would come close to generating all choice-free substitution-sound
WF nets.

\section*{Acknowledgements}

This research was sponsored by National Science Centre based on decision
DEC-2012/07/D/ST6/02492.

\bibliographystyle{abbrv}
\bibliography{hierarchical}

\begin{thebibliography}{10}

\bibitem{conf/stacs/BestDE93}
E.~Best, R.~R. Devillers, and J.~Esparza.
\newblock General refinement and recursion operators for the {P}etri box
  calculus.
\newblock In P.~Enjalbert, A.~Finkel, and K.~W. Wagner, editors, {\em STACS 93,
  Proc. of the 10th Annual Symposium on Theoretical Aspects of Computer
  Science}, volume 665 of {\em Lecture Notes in Computer Science}, pages
  130--140. Springer, February 25-27 1993.

\bibitem{wachtel2003}
P.~Chrzastowski-Wachtel, B.~Benatallah, R.~Hamadi, M.~O'Dell, and A.~Susanto.
\newblock A top-down {P}etri net-based approach for dynamic workflow modeling.
\newblock In {\em Proceedings of BPM'03}, pages 336--353. Springer-Verlag,
  2003.

\bibitem{PChPGBL13}
P.~Chrzastowski{-}Wachtel, P.~Golab, and B.~Lewinski.
\newblock Sound recoveries of structural workflows with synchronization.
\newblock In M.~S. Szczuka, L.~Czaja, and M.~Kacprzak, editors, {\em
  Proceedings of the 22nd International Workshop on Concurrency, Specification
  and Programming, Warsaw, Poland}, volume 1032 of {\em {CEUR} Workshop
  Proceedings}, pages 73--87. CEUR-WS.org, 2013.

\bibitem{wachtel2006}
P.~Chrz\c{a}stowski-Wachtel.
\newblock Determining sound markings in structured nets.
\newblock {\em Fundam. Inf.}, 72(1-3):65--79, Apr. 2006.

\bibitem{myExperiment}
D.~De~Roure, C.~Goble, and R.~Stevens.
\newblock The design and realisation of the virtual research environment for
  social sharing of workflows.
\newblock {\em Future Generation Computer Systems}, 25(5):561 -- 567, 2009.

\bibitem{relaxed_soundness}
J.~Dehnert and P.~Rittgen.
\newblock Relaxed soundness of business processes.
\newblock In K.~Dittrich, A.~Geppert, and M.~Norrie, editors, {\em Advanced
  Information Systems Engineering}, volume 2068 of {\em Lecture Notes in
  Computer Science}, pages 157--170. Springer Berlin Heidelberg, 2001.

\bibitem{Devillers:1997me}
R.~Devillers, H.~Klaudel, and R.~Riemann.
\newblock General refinement for high level {P}etri nets.
\newblock {\em Foundations of Software Technology and Theoretical Computer
  Science}, pages 297--311, 1997.

\bibitem{conf/apn/EsparzaS90}
J.~Esparza and M.~Silva.
\newblock Top-down synthesis of live and bounded free choice nets.
\newblock In G.~Rozenberg, editor, {\em Advances in Petri Nets 1991, Papers
  from the 11th International Conference on Applications and Theory of Petri
  Nets}, volume 524 of {\em Lecture Notes in Computer Science}, pages 118--139.
  Springer, 1990.

\bibitem{Flender_visualisation_of_soundness}
C.~Flender and T.~Freytag.
\newblock Visualizing in the soundness of workflow nets.
\newblock In {\em Bericht 267, Tagungsband des 13. Workshops Algorithmen und
  Werkzeuge f{\"u}r Petri-Netze, AWPN'06, FBI-HH-B-267/06}, pages 47--53, Sep,
  06.

\bibitem{Huet:1980}
G.~Huet.
\newblock Confluent reductions: Abstract properties and applications to term
  rewriting systems: Abstract properties and applications to term rewriting
  systems.
\newblock {\em J. ACM}, 27(4):797--821, Oct. 1980.

\bibitem{keller:1992}
G.~Keller, M.~Nuttigens, and A.-W. Scheer.
\newblock Semantische prozessmodellierung auf der grundlage:
  Ereignisgesteuerter prozessketten ({EPK}).
\newblock {\em Ver\"offentlichungen des Instituts f\"ur Wirtschaftsinformatik},
  89, Januar 1992.

\bibitem{Martens_oncompatibility}
A.~Martens.
\newblock On compatibility of web services.
\newblock {\em Petri Net Newsletter}, 65:12--20, 2003.

\bibitem{Martens:2005:AWS:2136587.2136592}
A.~Martens.
\newblock Analyzing web service based business processes.
\newblock In {\em Proceedings of the 8th International Conference, Held As Part
  of the Joint European Conference on Theory and Practice of Software
  Conference on Fundamental Approaches to Software Engineering}, FASE'05, pages
  19--33, Berlin, Heidelberg, 2005. Springer-Verlag.

\bibitem{murata_reductions}
T.~Murata.
\newblock Petri nets: Properties, analysis and applications.
\newblock {\em Proceedings of the IEEE}, 77(4):541--580, Apr 1989.

\bibitem{BPMN}
{Object Management Group (OMG)}.
\newblock {Business Process Model and Notation (BPMN) Version 2.0}.
\newblock Technical report, \url{http://www.omg.org/spec/BPMN/2.0/}, January
  2011.

\bibitem{Reisig:2008}
C.~A. Petri and W.~Reisig.
\newblock Petri net.
\newblock {\em Scholarpedia}, 3(4):64--77, 2008.

\bibitem{Polyvyanyy2012518}
A.~Polyvyanyy, L.~Garc{\'\i}a-Ba{\~n}uelos, and M.~Dumas.
\newblock Structuring acyclic process models.
\newblock {\em Information Systems}, 37(6):518 -- 538, 2012.

\bibitem{puhlmann_BPM2006}
F.~Puhlmann and M.~Weske.
\newblock {Investigations on Soundness Regarding Lazy Activities}.
\newblock In S.~Dustdar, J.~Fiadeiro, and A.~Sheth, editors, {\em
  {International Conference on Business Process Management (BPM 2006)}}, volume
  4102 of {\em Lecture Notes in Computer Science}, pages 145--160.
  Springer-Verlag, Berlin, 2006.

\bibitem{10.1109/ACSD.2011.26}
J.~Sroka, P.~Chrzastowski-Wachtel, and J.~Hidders.
\newblock On generating *-sound nets with substitution.
\newblock {\em Application of Concurrency to System Design, International
  Conference on}, 0:3--12, 2011.

\bibitem{DBLP:journals/is/SrokaH14}
J.~Sroka and J.~Hidders.
\newblock On generating *-sound nets with substitution.
\newblock {\em Inf. Syst.}, 40:32--46, 2014.

\bibitem{Suzuki198351}
I.~Suzuki and T.~Murata.
\newblock A method for stepwise refinement and abstraction of {P}etri nets.
\newblock {\em Journal of Computer and System Sciences}, 27(1):51--76, 1983.

\bibitem{Trcka:2009:DAD:1573487.1573532}
N.~Tr\v{c}ka, W.~M. Aalst, and N.~Sidorova.
\newblock Data-flow anti-patterns: Discovering data-flow errors in workflows.
\newblock In {\em Proceedings of the 21st International Conference on Advanced
  Information Systems Engineering}, CAiSE '09, pages 425--439, Berlin,
  Heidelberg, 2009. Springer-Verlag.

\bibitem{reviewer2b}
W.~van~der Aalst, G.~D. Michelis, and C.~Ellis, editors.
\newblock {\em {Structural Analysis of Workflow Nets with Shared Resources}},
  volume volume 98/7 of Computer science reports, 1998.

\bibitem{Aalst1998workflow}
W.~M.~P. van~der Aalst.
\newblock The application of {P}etri nets to workflow management.
\newblock {\em Journal of Circuits, Systems, and Computers}, 8(1):21--66, 1998.

\bibitem{VanDerAalst:2003}
W.~M.~P. Van Der~Aalst, A.~H.~M. Ter~Hofstede, B.~Kiepuszewski, and A.~P.
  Barros.
\newblock Workflow patterns.
\newblock {\em Distrib. Parallel Databases}, 14(1):5--51, July 2003.

\bibitem{journals/fac/AalstHHSVVW11}
W.~M.~P. van~der Aalst, K.~M. van Hee, A.~H.~M. ter Hofstede, N.~Sidorova,
  H.~M.~W. Verbeek, M.~Voorhoeve, and M.~T. Wynn.
\newblock Soundness of workflow nets: classification, decidability, and
  analysis.
\newblock {\em Formal Asp. Comput.}, 23(3):333--363, 2011.

\bibitem{toornThesis}
R.~van~der Toorn.
\newblock {\em Component-based software design with Petri nets: an approach
  based on inheritance of behavior}.
\newblock PhD thesis, Eindhoven University of Technology, Eindhoven, The
  Netherlands, 2004.
\newblock PhD thesis.

\bibitem{Hee04}
K.~van Hee, N.~Sidorova, and M.~Voorhoeve.
\newblock {Generalised Soundness of Workflow Nets Is Decidable}.
\newblock In J.~Cortadella and W.~Reisig, editors, {\em {Application and Theory
  of Petri Nets 2004}}, volume 3099 of {\em Lecture Notes in Computer Science},
  pages 197--215. Springer-Verlag, Berlin, 2004.

\bibitem{DBLP:conf/apn/HeeSV03}
K.~M. van Hee, N.~Sidorova, and M.~Voorhoeve.
\newblock Soundness and separability of workflow nets in the stepwise
  refinement approach.
\newblock In W.~M.~P. van~der Aalst and E.~Best, editors, {\em ICATPN}, volume
  2679 of {\em Lecture Notes in Computer Science}, pages 337--356. Springer,
  2003.

\bibitem{graph_decomposition}
J.~Vanhatalo, H.~V{\"o}lzer, and J.~Koehler.
\newblock The refined process structure tree.
\newblock In M.~Dumas, M.~Reichert, and M.-C. Shan, editors, {\em Business
  Process Management}, volume 5240 of {\em Lecture Notes in Computer Science},
  pages 100--115. Springer Berlin Heidelberg, 2008.

\bibitem{verification_thesis}
H.~M.~W. Verbeek.
\newblock {\em Verification of WF-nets}.
\newblock PhD thesis, Eindhoven University of Technology, Eindhoven, The
  Netherlands, 2004.
\newblock PhD thesis.

\bibitem{Verbeek01a}
H.~M.~W. Verbeek, T.~Basten, and W.~M.~P. van~der Aalst.
\newblock Diagnosing workflow processes using woflan.
\newblock {\em The Computer Journal}, 44(4):246--279, 2001.

\bibitem{WangWei2009}
Z.~Wang and D.~Wei.
\newblock Modeling complex system using {T}-subnet based hierarchical {P}etri
  nets.
\newblock {\em Journal of Computers}, 4(9):829--836, Sep 2009.

\end{thebibliography}

\end{document}